\def\*#1{\mathbf{#1}}
\def\+#1{\mathbb{#1}}
\newcommand{\T}{\top}
\newcommand{\Lp}{\left(}
\newcommand{\Rp}{\right)}
\newcommand{\tvec}{\mathrm{vec}}
\newcommand{\tr}{\mathrm{trace}}
\newcommand{\diag}{\mathrm{diag}}
\newcommand{\trace}{\mathrm{trace}}
\newcommand{\COV}{\mathrm{Cov}}
\newcommand{\RNum}[1]{\uppercase\expandafter{\romannumeral #1\relax}}
\newcommand*{\addFileDependency}[1]{
	\typeout{(#1)}
	\@addtofilelist{#1}
	\IfFileExists{#1}{}{\typeout{No file #1.}}
}
\let\ab\allowbreak
\setlist{itemsep=.01em}
\setlist{topsep=.5em}
\newcommand{\norm}[1]{\left\lVert#1\right\rVert}
\newcommand\und{\underline}
\providecommand{\customgenericname}{}
\newcommand{\newcustomtheorem}[2]{%
	\newenvironment{#1}[1]
	{%
		\renewcommand\customgenericname{#2}%
		\renewcommand\theinnercustomgeneric{##1}%
		\innercustomgeneric
	}
	{\endinnercustomgeneric}
}
\newtheorem{assumption}{Assumption}
\newtheorem{theorem}{Theorem}
\newtheorem{lemma}{Lemma}
\newtheorem{proposition}{Proposition}
\newtheorem{corollary}{Corollary}
\DeclareMathOperator*{\argmin}{arg\,min}
\DeclareMathOperator*{\argmax}{arg\,max}
\let\ab\allowbreak
\newcolumntype{L}[1]{>{\raggedright\arraybackslash}p{#1}}
\newcommand\tcaptab[1]{\captionsetup{position=top, font=normalsize, labelfont=bf, textfont=normalfont, justification=centering, margin=0mm, aboveskip=1mm, belowskip=0mm, labelsep=colon, singlelinecheck=false}\caption{#1}}
\newcommand\bnotetab[1]{\captionsetup{position=bottom, font=scriptsize,  textfont=normalfont, margin=1mm,aboveskip=1mm, belowskip=-1mm, justification=justified, singlelinecheck=false}\caption*{#1}}
\newcommand\tcapfig[1]{\captionsetup{position=top, font=normalsize, labelfont=bf, textfont=normalfont, justification=centering, margin=0mm, aboveskip=1mm, belowskip=0mm, labelsep=colon, singlelinecheck=false}\caption{#1}}
\newcommand\bnotefig[1]{\captionsetup{position=bottom, font=scriptsize,  textfont=normalfont, margin=1mm, aboveskip=1mm, belowskip=-1mm, justification=justified, singlelinecheck=false}\caption*{#1}}
\begin{document}
	
	\title{State-Varying Factor Models of Large Dimensions\thanks{\scriptsize We thank Yacine A\"it-Sahalia, Daniele Bianchi, Frank Diebold, Kay Giesecke, Eric Ghysels, Lisa Goldberg and seminar and conference participants at Stanford, the NBER-NSF Time-Series Conference, the Society for Financial Econometrics, the North American Summer Meeting of the Econometric Society, the European Meeting of the Econometric Society, the Western Mathematical Finance Conference, the SIAM Conference on Financial Mathematics \& Engineering and INFORMS for helpful comments.}}
\date{\today}
	\author{
		Markus Pelger\thanks{\scriptsize Stanford University, Department of Management Science \& Engineering, Email: \url{mpelger@stanford.edu}.}
		\and
		Ruoxuan Xiong\thanks{ \scriptsize Stanford University, Department of Management Science \& Engineering, Email: \url{rxiong@stanford.edu}.}
	}

	\begin{onehalfspacing}
		\begin{titlepage}
		\maketitle
                \thispagestyle{empty}
			\vspace{0.5cm}
			
			\begin{abstract}
				This paper develops an inferential theory for state-varying factor models of large dimensions. Unlike constant factor models, loadings are general functions of some recurrent state process. We develop an estimator for the latent factors and state-varying loadings under a large cross-section and time dimension. Our estimator combines nonparametric methods with principal component analysis. We derive the rate of convergence and limiting normal distribution for the factors, loadings and common components. In addition, we develop a statistical test for a change in the factor structure in different states. We apply the estimator to U.S. Treasury yields and S\&P500 stock returns. The systematic factor structure in treasury yields differs in times of booms and recessions as well as in periods of high market volatility. State-varying factors based on the VIX capture significantly more variation and pricing information in individual stocks than constant factor models.
				\vspace{1cm}
				
				\noindent\textbf{Keywords:} Factor Analysis, Principal Components, State-Varying, Nonparametric, Kernel-Regression, Large-Dimensional Panel Data, Large $N$ and $T$

				\noindent\textbf{JEL classification:} C14, C38, C55, G12
			\end{abstract}

		\end{titlepage}

		\section{Introduction}
		
		Factor models provide an appealing way to summarize information from large data sets. In factor models, a small number of latent common factors explain a large portion of the co-movements. They have been successfully used in finance, e.g. \cite{Ross1976}, \cite{Chamberlain1983} and \cite{ludvigson2009factor}, and in macro-economics, e.g. \cite{Stock2002} and \cite{jurado2015measuring}. Large dimensional factor models typically assume that the underlying factor structure does not change over time, that is, the factor loadings capturing the exposure to factors are assumed to be constant over time as for example in \cite{bai2002determining}, \cite{bai2003inferential} and \cite{fan2013large}.\footnote{Extensions of the constant loading model include sparse and interpretable latent factors in \cite{pelgerxiong2020}, estimation from incomplete data sets in \cite{xiongpelger2020} and including additional moments to estimate weak factors as in \cite{lettaupelger2019,lettaupelger2018}.} However, since financial and macroeconomic data sets often span a long time period, it can be overly restrictive to assume a constant exposure to factors. Over a long time horizon, domestic and foreign policies change, business cycles occur, technology progresses, and agents' preferences switch \citep{stock2009}. Ignoring these changes can lead to a misspecified model with false inference and prediction \citep{Breitung2011}.
		
		This paper presents an inferential theory for state-varying factor models of large dimensions. Unlike constant-loading factor models, the loadings are general functions of some recurrent state process. We develop an estimator for the latent factors and state-varying loadings for a large number of cross-sectional and time observations. Our estimator combines nonparametric methods with principal component analysis (PCA). We derive the rate of convergence and asymptotic normal distribution for the estimated factors, loadings, and common components. We also develop a statistical test for the change of the loadings in different states.

		Our state-varying model achieves two important goals: First, we can estimate the systematic factor structure conditioned on the outcome of a state variable. For example, we can estimate how the factor structure in asset prices depends on the business cycle.\footnote{\cite{pelger2019} provides empirical evidence for time-variation in latent factor models that is related to recessions.} In particular, we can obtain the loadings as a general function of the state variable and use this insight for building economic models. Second, we can capture time variation in the systematic factor structure. The loadings in our model are time-varying because the state process changes over time. The dynamics of the state process and the functional form of loadings as a function of the state process jointly determine the dynamics of loadings over time. We allow for very general dynamics of the state process that include smooth processes but also discontinuous processes. Hence, our approach allows the loadings to change many times rapidly, but also covers the cases of a small number of large changes or many gradual changes. As we consider a very general functional form for the loading function, the loadings can vary more for particular state outcomes or even be constant for other outcomes of the state process.

		Our approach combines kernel methods with PCA. The underlying idea is to estimate a large-dimensional covariance matrix conditioned on a recurrent state process and to analyze its spectral decomposition. For this purpose, we apply a kernel projection in the time dimension on a particular state value.\footnote{\cite{fan2016projected} model loadings as non-linear functions of time-varying features of the cross-sectional units. Their estimation approach applies PCA to the data matrix that is projected in the cross-section on the subject-specific covariates. We also apply PCA to a projected data matrix, but our projection is applied in the time dimension.}  PCA is then applied to the projected data. Our estimator is easy to use and simple to implement. The inferential theory depends in a complex way on the kernel approximation and the number of cross-sectional and time-series observations.\footnote{In this paper, we consider a scalar state process and leave the extension to multivariate state processes to future research.  We expect multivariate state processes to lead to lower convergence rates due to the ``curse of dimensionality'' inherent in higher-dimensional kernel projections. An additional challenge is that many multivariate state processes do not have the recurrence property.} The theoretical framework for the factor estimation is closely related to \cite{bai2003inferential}'s and \cite{Su2017}'s inferential theory of the PCA estimator. While \cite{bai2003inferential} applies PCA to the unconditional covariance matrix, \cite{Su2017}  use the spot covariance matrix conditioned on a particular point in time. Our approach conditions on a particular realization of the state process. Conditioning on a state with a kernel projection significantly complicates the analysis as it leads to additional bias terms and a complex interplay between the number of time and cross-sectional observations and the kernel bandwidth. We characterize the general conditions that are sufficient for asymptotic consistency and a conditional normal distribution of the loadings, factors, and common components under the assumptions of an approximate factor model that has a similar level of generality as \cite{bai2003inferential}'s framework.\footnote{\cite{wang2019} also study a state-dependent latent factor model. Their focus is the estimation of a large dimensional state-dependent covariance matrix, while we derive the inferential theory for the factors and conditional loadings.}
		
		We develop a novel test for changes in the loadings. Our test statistic allows us to answer the important question in which states loadings are different. The challenge comes from the fact that factor models can only be identified up to invertible linear transformations, and hence we cannot directly compare the loadings estimated for different states with each other. Our test statistic is based on a generalized correlation statistic, which measures how close the two vector spaces spanned by loadings in two states are.\footnote{Generalized correlations or canonical correlations have been studied in \cite{anderson1958introduction}, \cite{yuan2000three}, \cite{bai2006}, \cite{pelger2017} and \cite{andreou2017industrial}} Testing the null hypothesis of the same loadings in two different state realizations turns out to be a ``corner case'' similar to unit root test statistics with a faster convergence rate than under the alternative hypothesis. The test statistic is non-standard and requires a novel bias correction, which we provide. We believe that the novel technique that we develop for our test statistic can also be easily adopted to the gradual change or high-frequency PCA models and will encourage further developments.\footnote{Our test differs from the existing tests, such as those of \cite{Breitung2011}, \cite{Chen2014}, \cite{Han2015}, and \cite{Yamamoto2015}, which check the stability of the moments of factor loadings or common factors, but do not take invertible transformations into account. Our test takes a ``micro'' view to compare loadings in any two states, while \cite{Su2017} takes a ``global'' view to test whether loadings change in the whole time dimension.}

		Our method can estimate the functional relationship between a time-varying state process and structural changes in the loadings, which adds additional economic interpretability to the model.
		We do not require a parametric form for the loadings as a function of the state process, which is potentially misspecified, but allow for a general functional relationship. Our approach allows us to study questions such as how a macroeconomic factor structure depends on the business cycle, while other approaches are more limited to study these types of questions: First, the magnitude of the changes might be too smooth to be captured as a large structural break. Second, there might not be sufficient local observations that are required in the local smoothing framework, which ignores the information that is contained in similar states of the business cycle.

		Our framework generalizes the conventional constant loading factor models and allows for a more parsimonious representation of the data. Under certain assumptions, a factor model with state-dependent loadings can be approximated by a constant loading model with a larger number of latent factors. A more complex functional form of the state-dependent loadings typically requires more basis functions to approximate them and results in a larger number of latent factors in the constant loading approximation. Our state-varying factor model can require significantly fewer factors than a constant loading model to explain the same or more variation in the data. In this sense, our model provides a more parsimonious model. Furthermore, our estimator can be valid even if we observe the state process with noise or omit a relevant state. Our inferential theory is robust to moderate noise contamination in the state process. Even if we miss a relevant state or the noise contamination in the state process is more severe, our estimator can still dominate the conventional factor model approach. We only need to condition on a state process that depends on the source of variation in the loadings. We confirm this result in our simulation and empirical studies.
		
		We show a strong state-dependent time variation in the factor structure of U.S. Treasury yields and S\&P 500 stock returns. The yields of bonds with different maturities are well-explained by the three PCA factors commonly labeled as level, slope, and curvature factors.\footnote{See \cite{diebold2005modeling}, \cite{Diebold2006}, \cite{Cochrane2005} and \cite{Cochrane2009}.} 
		We show that this factor structure depends on state variables such as recession and boom indicators, the stock market's expectation of volatility (VIX), or the unemployment rate. We show that during recessions, in times of high volatility, or in times of a high unemployment rate, the first PCA factor, typically labeled as a level factor, becomes less dominant and shifts to longer-term bonds. However, the second and third PCA factors, labeled as slope and curvature factors, both shift more towards shorter-term bonds. These changes are statistically and economically significant and show that the economic interpretation of ``level'', ''slope'' and ``curvature'' has to be used with caution, as for different economic states, the PCA factors will be different. In the second application on individual stock returns, we show that a state-varying factor model with the VIX as the state variable is more parsimonious in explaining variation and captures more pricing information than the constant loading model. A constant loading model requires five more factors to explain the same amount of variation as our state-dependent factor model. At the same time, an optimal portfolio based on the state-varying factors earns out-of-sample a five times higher risk-adjusted return than the corresponding portfolio based on a constant loading model. Hence, even if we might not capture all time-variation in the loadings with the proposed state variable, we still obtain a model that explains the correlations structure and mean returns better than a constant loading model.
		
Our paper is complementary to the literature on structural breaks and local PCA estimation that pursue a related but different objective.
Our goal is to provide a parsimonious model that allows for time variation due to an observable time-varying state process. 
The literature on structural breaks focuses on detecting and modeling a small number of large breaks in the latent factor structure. It includes \cite{andrews1993tests}, \cite{Chen2014}, \cite{Breitung2011}, \cite{Cheng2016}, \cite{baltagi2020estimating}, \cite{bai2020estimation},
 \cite{ma2018estimation} and \cite{barigozzi2018simultaneous}.\footnote{\cite{Breitung2011} develop three test statistics for structural breaks. \cite{Chen2014} study the detection of large breaks in loadings through a two-stage procedure. \cite{Han2015}  test for structural breaks by studying the stability in second moments. \cite{Yamamoto2015} generalize \cite{Breitung2011}'s test. \cite{Cheng2016} propose a test where both the factor loadings and the number of factors change simultaneously. \cite{baltagi2020estimating} and \cite{bai2020estimation} estimate structural breaks with pseudo factors, \cite{ma2018estimation} propose a three-step approach with local estimates and \cite{barigozzi2018simultaneous} estimate structural breaks with wavelets.} A factor structure, that depends on a state process that jumps, exhibits structural breaks that we can detect reliably. Modeling a one-time large structural break may be inappropriate if changes happen smoothly, for example, policy changes or business cycles can lead to gradual changes. Moreover, the large structural break models are limited to a small number of changes. Re-occurring events, for example, changes in the business cycle and economic conditions, can lead to a large number of changes in the factor structure. The alternative approach is to model changes smoothly with a local estimator. \cite{Su2017} and \cite{eichler2011fitting} use a local kernel estimator in the time dimension to study gradual changes. This approach excludes sudden large changes and exploits only the data in a local neighborhood of a particular time observation. This problem can be overcome by using high-frequency data as in \cite{pelger2017,pelger2019}, \cite{ait2017principal,ait2017using}, \cite{kong2017number,kong2018systematic} and \cite{kong2018testing}, that allows to make general statements about time-variation in large dimensional latent factor models. The idea is similar to the local PCA estimators, but the high-frequency data provides sufficient information to detect more rapid changes. However, appropriate high-frequency data is only available for a limited number of applications, e.g., high-frequency trading data for U.S. equity in the recent past. We show that if we have additional information about a state process, we can reliably model structural breaks or a local time-variation in a uniform framework. In fact, in this case, we can explain more structure in the data than the models based only on structural breaks or a local time-variation that are not taking advantage of this additional information. Importantly, neither the large structural break models nor the gradual change models provide a direct economic link of the change to underlying economic variables.

		\section{Model Overview} \label{model}
		
		\subsection{Setup}\label{model_overview}
		Assume a panel data set of $T$ time-series observations and $N$ cross-sectional observations, denoted as $X \in \+R^{N \times T}$, has a factor structure with $r$ common factors. Let $S_t$ be the value of a state process at time $t$, $X_{it} \in \mathbb{R}$ the cross-sectional observation $i$ at time $t$, $F_t \in \mathbb{R}^{r \times 1}$ the latent factors at time $t$, and $\Lambda_i(S_t) \in \mathbb{R}^{r \times 1}$ the factor loadings of the cross-sectional unit $i$ when the state value is $S_t$:
		
		\[X_{it} = \Lambda_i(S_t)^\top F_t + e_{it} \quad \text{for } i = 1, 2, \cdots, N \text{ and } t = 1, 2, \cdots, T\]
		or in vector notation,
		\[\underbrace{X_t}_{N \times 1}  = \underbrace{\Lambda(S_t)}_{N \times r} \underbrace{F_t}_{r \times 1} + \underbrace{e_t}_{N \times 1} \qquad \text{for $ t = 1, 2, \cdots, T$}.\]

		We observe $X_t$ and $S_t$ and want to estimate $F_t$ and $\Lambda(.)$ in an asymptotic setup where $N$ and $T$ are both large.
		This model generalizes the large dimensional factor model in \cite{bai2002determining} and \cite{bai2003inferential} and allows factor loadings to change over time. The loadings in our model are time-varying because the state process changes over time. The loadings are deterministic general functions of the state process that satisfy some smoothness conditions. The random state process itself has a continuous distribution.\footnote{In contrast to other time-varying factor models, e.g., \cite{Bates2013}, \cite{Cheng2016}, and \cite{Su2017}, our model directly incorporates the driving forces for the changes in loadings. \cite{Park2009} study a similar semiparametric factor model but require cross-sectional variation in the loadings to come from observable covariates, this means they estimate a function $\Lambda(C_{it})$ of observable covariates $C_{it}$ where $\Lambda(.) \in \mathbb R^{r}$ is the same function for all $i$. They apply B-splines to estimate the unknown loading function and estimate the factors with a Newton-Raphson algorithm. Our approach is based on a simple-to-implement PCA method, which allows us to derive an inferential theory.}

		\subsection{Estimation Problem}
		We want to estimate the factor model conditioned on the state outcome $S_t=s$. Before providing formal arguments, we will describe the intuition behind our approach. 
		If the idiosyncratic component is conditionally uncorrelated with the factors, then the conditional second moment matrix equals
		\begin{align*}
		\+E[X_t X_t^\T |S_t=s] = \Lambda(s) \+E[F_t F_t^\T|S_t=s]\Lambda(s)^{\top} + \COV(e_t|S_t=s).
		\end{align*}
		We will assume that the factors are systematic in the sense that they affect many cross-sectional units captured by a full rank assumption of $\frac{1}{N}\Lambda(s)^{\top}\Lambda(s)$ for $N \rightarrow \infty$. Furthermore, the idiosyncratic component has conditionally only a weak dependency structure modeled by a sparsity assumption on the conditional residual covariance matrix. Hence, the largest eigenvalues of $\+E[X_t X_t^{\top}|S_t=s]$ should come from the systematic part and the corresponding eigenvectors will be linked to the loadings $\Lambda(s)$. This motivates the application of PCA to the conditional second moment matrix to estimate the factor and loadings.\footnote{If $X$ has a conditional mean of zero, PCA is applied to the conditional covariance matrix.} The essential identification condition is the full rank of the conditional second moment matrix of the factors $\+E[F_t F_t^\T|S_t=s]$ and of the limit loading matrix $\lim_{N\rightarrow \infty} \frac{1}{N}\Lambda(s)^{\top}\Lambda(s)$.

		The conditional second moment matrix is estimated by a kernel projection of the data that puts higher weights on time observations where the state process takes values in a neighborhood of $s$,
		i.e. we analyze $\frac{1}{NT} K_s^{1/2}X^{\top}X K_s^{1/2}$ with an appropriate $T \times T$ diagonal matrix $K_s$ of kernel weights $\frac{1}{h}K\left(\frac{S_t-s}{h} \right)$ and bandwidth $h$.\footnote{The PCA estimation can either be applied to $\frac{1}{NT} K_s^{1/2}X^{\top}X K_s^{1/2}$, in which case the eigenvectors are related to the transformed factors, or to $\frac{1}{NT} X K_s X^{\top}$, which relates the eigenvectors to the transformed loadings.}
		The analysis is inherently complicated by the bias arising in any kernel estimation from using observations from nearby states. In more detail, the observations can be written as
		\begin{align*}
		X_{it}=\Lambda_i(s)^{\top}F_t + (\Lambda_i(S_t)-\Lambda_i(s))^{\top}F_t + e_{it}= \Lambda_i(s)^{\top}F_t + \omega_{it}+e_{it}.
		\end{align*}
		The bias term $\omega_{it}$ behaves like an additional error term that requires a different treatment than the idiosyncratic error.

		\subsection{Identification Assumption}\label{sec:Identification}

		Unlike the conventional constant loading factor model there are two sources of time-variation in our model: the time-series of the factors and of the state process. This poses the inherent identification problem what is a factor and what is a state? We impose the identification assumption that the second factor moment does not depend on the state process to separate the state process from the latent factors. We denote the conditional and unconditional second factor moment by $\Sigma_{F|s}=\+E[F_t F_t^\top |S_t=s]$ and $\Sigma_F=\+E[F_t F_t^\top]$ respectively. $\Sigma_{F|S_t}=\+E[F_t F_t^\top |S_t]$ is the implied stochastic process.
		\begin{assumption}{Identification Assumption:}\label{Ass:Ident}
			For all $s$ in the support of $S_t$ we assume that
			the conditional second factor moment does not depend on the state process and is positive definite: $\Sigma_{F|s}:=\+E[F_t F_t^{\top} |S_t=s] = \+E[F_t F_t^{\top}]=: \Sigma_F$. \label{Ass:Ident1}
		\end{assumption}
		Note that the identification assumption is not limiting the generality of our model. Replacing the factors $F_t$ by $\tilde F_t = F_t \Sigma_{F|S_t}^{-1/2} \Sigma^{1/2}_F$ and loadings by $\tilde \Lambda (S_t) = \Lambda(S_t) \Sigma_{F|S_t}^{1/2} \Sigma_F^{-1/2}$ we can ensure that for any conditional factor model the assumption is satisfied. It is possible to relax the full-rank assumption, i.e. the number of systematic factors in different states can differ.

		As in any PCA estimation problem the factors and state-varying loadings are only identified up to an invertible matrix $H(S_t)$ which in our case can be state-varying, i.e. $X_t =\Lambda(S_t)F_t + e_t = \Lambda(S_t)H(S_t) H(S_t)^{-1} F_t + e_t$. We estimate the factors as the eigenvectors of the conditional second moment matrix. Our estimates coincide with the factors if $\Sigma_F=I_r$ and $\lim_{N \rightarrow \infty} \frac{1}{N}\Lambda(s)^{\top} \Lambda(s)$ is diagonal. In the general case, the conditional eigenvectors $\hat F_t$ estimate $H(S_t) F_t$ where $H(S_t)$ is uniquely determined. This is a generalization of the standard assumption in the unconditional factor models that the estimated factors are the eigenvectors of the unconditional second moment and the estimated loadings are orthogonal.

		We illustrate how Assumption \ref{Ass:Ident} identifies the factors with a number of examples. For simplicity we consider in these examples only a one-factor model, but the extensions to more factors are straightforward. Assume that the cross-section is modeled by
		\begin{align*}
		X_t = \beta_1 S_t \tilde F_t + e_t.    
		\end{align*}
		We can either view this as a constant loading model ($\Lambda= \beta_1 , F_t=S_t \tilde F_t$) or a state-varying factor model.
Both formulations are equally valid and will explain the same amount of variation. 
		With the identification Assumption \ref{Ass:Ident} we obtain the model ($\Lambda= \beta_1  S_t \sqrt{\frac{\+E[\tilde F^2|S_t]}{\+E[\tilde F^2]}} , F_t=\tilde F_t \sqrt{\frac{\+E[\tilde F^2]}{\+E[\tilde F^2|S_t]}}$). It simplifies to ($\Lambda= \beta_1  S_t, F_t=\tilde F_t$) if $\tilde F$ is independent of $S_t$.

		If the systematic component of $X$ is a function of $F_t$ and $S_t$, we can formulate it as
		\begin{align*}
		X_t = g(\tilde F_t,S_t) + \epsilon_t
		\end{align*}
		for some function $g$. Assume that $g$ can be approximated well by a second-order Taylor approximation, i.e. we will model $g$ as a second-order polynomial function:
		\begin{align*}
		X_t =&  \beta_0 + \beta_1 S_t + \beta_2 S_t^2 + \beta_3 \tilde F_t + \beta_4 \tilde F_t^2 + \beta_5 \tilde F_s S_t  + \epsilon_t\\
		=& \underbrace{\left(\begin{pmatrix} \beta_0 + \beta_1 S_t + \beta_2 S_t^2 \\ \beta_3 + \beta_5 S_t \\ \beta_4  \end{pmatrix}^{\top} \Sigma^{1/2}_{F|S_t} \Sigma_F^{-1/2}\right)}_{\Lambda(S_t)}  \underbrace{\left( \Sigma^{1/2}_F \Sigma^{-1/2}_{F|S_t} \begin{pmatrix}  1 \\ \tilde F_2 \\ \tilde F_t^2  \end{pmatrix} \right)}_{F_t} + \epsilon_t.
		\end{align*}
		Thus, we can either formulate the model as a constant loading model with 6 factors or a state-varying loading model with 3 factors. Both formulations are equivalent in terms of explaining variation, but the state-varying model is more parsimonious. Assumption \ref{Ass:Ident} separates the state process from the latent factor in the above formulation. 
		
		Next, we consider the same model but assume that $g$ is a third-order polynomial function based on a Taylor expansion:
		\begin{align*}
		X_t 
		=& \left(  \beta_0 + \beta_1 S_t + \beta_2 S_t^2 + \beta_3 S_t^3  \right) 1 + \left( \beta_4 + \beta_7 S_t + \beta_8 S_t^2\right) \tilde F_t  + \left( \beta_5 + \beta_9 S_t \right)\tilde F_t^2  + \beta_6 \tilde F_t^3   + \epsilon_t.
		\end{align*}
		Assumption \ref{Ass:Ident} identifies a state-varying four-factor model which can equivalently be written as a 10 factor model with constant loadings. Hence, for a more complex functional relationship, the state-varying model is more parsimonious compared to the constant loading model. Note, that in the examples the loadings are a linear function of a finite number of transformations of the state process. Our focus is on the relevant model where we have a continuum of state outcomes $S_t$ and a non-linear loading function $\Lambda(S_t)$ that requires a large number of basis functions for its approximation. In this case there exists in general no multi-factor representation with constant loadings.%
		
		\subsection{Robustness to Misspecification}\label{sec:robust}
		Our framework allows us to study conditional latent factors. The purpose can be to understand how the systematic dependency structure changes with some specific state variable or to have a parsimonious factor model that allows for time variation. In the second case, it is important to find the state variable that is the source of the time variation in the factor model. In practice, we might not know which state process attributes to changes in the loadings. In the Internet Appendix, we show that our results are still valid if we use a noisy approximation of the source of change:
		\[X_{it} = (\Lambda_i(S_t) + \varepsilon_{it})^\top F_t + e_{it} \quad i = 1, 2, \cdots, N \text{ and } t = 1, 2, \cdots, T.\]
		The term $\epsilon_{it}$ is the time-varying component of the loading coefficient that cannot be explained by the state process $S_t$. It can, for example, be due to a measurement error in the state process $S_t$ or an omitted additional state process that affects only a small number of loadings. We show that under mild assumptions, the term $\varepsilon_{it}^{\top} F_t$ can be treated like an additional non-systematic error term that will not affect our results. In this sense, our model is robust to mild misspecification.

		More generally, missing a relevant systematic state can be accounted for by including more latent factors. In this sense our model is also robust to more severe misspecification as we illustrate with the following example. Assume there are two state processes $S_t$ and $\tilde S_t$ that have a systematic effect, i.e. $X_t =  g(\tilde F,S_t,\tilde S_t) + \epsilon_t$.
		If $g$ is a second order polynomial we have
		\begin{align*}
		X_t =& \left(  \beta_0 + \beta_1 S_t + \beta_2 S_t^2 \right) + \left( \beta_3 + \beta_7 S_t \right) \tilde S_t + \left( \beta_5 + \beta_8 S_t  \right) \tilde F_t + \beta_4 \tilde S_t^2 \\ &+ \beta_6\tilde  F_t^2 + \beta_9 \tilde S_t \tilde F_t + \epsilon_t,
		\end{align*}
		which has equivalent representations as a 10-factor constant loading model, a 7-factor model conditioned on $S_t$ and a 3-factor model conditional on $S_t$ and $\tilde S_t$. Hence, even if we do not condition on all relevant states or use a noisy approximation, the state-varying factor model can provide a more parsimonious representation than the constant loading version. We formalize this idea in the Internet Appendix.

		\section{Estimation} \label{estimation}

		We estimate the factor model conditional on the realization of the state process $S_t = s$. Our approach generalizes the conventional PCA estimator by using projected data. We first apply a kernel projection to calculate the second moment matrix conditioned on the state outcome $s$. Second, we use PCA on the conditional second moment matrix to obtain the estimated factors and loadings for the state outcome $s$. The estimated factors are the eigenvectors of the conditional second moment matrix. Loadings in state $s$ are the regression coefficients of the projected data on the estimated factors. 
		
		We estimate factors and loadings by minimizing the following criterion function
		\begin{align}\label{loss}
		\hat{F}^s, \hat{\Lambda}(s)  &= \argmin_{F, \Lambda(s)} \underbrace{\frac{1}{NT(s)} \sum_{i = 1}^N \sum_{t = 1}^T K_s(S_t)(X_{it} - \Lambda_i(s)^\top F_t)^2}_{V_s} \nonumber,
		\end{align}
		where $T(s) = \sum_{t=1}^T K_s(S_t)$ and $K_s(S_t) = \frac{1}{h}K\left(\frac{S_t - s}{h}\right)$. $T(s)$ can be interpreted as the effective number of time observations used to estimate the factor structure conditioned on a state value. $K(\cdot)$ denotes a kernel function. $h$ is a bandwidth parameter, which depends on how much information we want to use from the neighboring states and our prior knowledge about the smoothness of the loadings as a the function of the state process. 
		We can think of $V_s$ as the average loss function conditioned on the state outcome $s$. We reformulate the problem as a conventional least squares problem by projecting the data, factors and idiosyncratic components in the time dimension on the kernel: $X_{it}^s = K_s^{1/2}(S_t) X_{it}$, $F^s_t =  K_s^{1/2}(S_t) F_t$ and $e_{it}^s = K_s^{1/2}(S_t) e_{it}$. The objective function can then be expressed as
		\begin{align*}
		V_s &= \frac{1}{NT(s)} \sum_{i = 1}^N \sum_{t = 1}^T (X^s_{it} - \Lambda_i(s)^\T F^s_t)^2 
		= \frac{1}{NT(s)} \trace \{(X^s - \Lambda(s)(F^s)^\top) (X^s - \Lambda(s)(F^s)^\top)^\top  \},
		\end{align*}
		where $K_s = \diag(K_s(S_1), K_s(S_2), \cdots, K_s(S_T))$, $K_s^{1/2} = \diag(K_s^{1/2}(S_1), K_s^{1/2}(S_2), \cdots, K_s^{1/2}(S_T))$,
		\begin{eqnarray*}
			X^s &=& X K_s^{1/2} =  \begin{bmatrix}
				X^s_{1} & X^s_{2} & \cdots & X^s_{T}
			\end{bmatrix} \in \mathbb{R}^{N \times T} \\
			(F^s)^\top &=& F^\T K_s^{1/2} = \begin{bmatrix}
				F^s_{1} & F^s_{2} & \cdots & F^s_{T}
			\end{bmatrix} \in \mathbb{R}^{r \times T} \\
			\Lambda(s)^\top &=& \begin{bmatrix}
				\Lambda_1(s) & \Lambda_2(s) & \cdots & \Lambda_N(s)
			\end{bmatrix} \in \mathbb{R}^{r \times N}.
		\end{eqnarray*}

		$V_s$ is a quadratic and convex loss function. Factors and loadings can be estimated up to some invertible rotation.\footnote{If $\hat{\Lambda}(s)$ and $\hat{F}^s$ are a solution minimizing $V_s$, then $\forall G \neq I_r \in \mathbb{R}^{r \times r}$ which are invertible, $\hat{\Lambda}(s) G$ and $\hat{F}^s G^{-1}$ also minimize $V_s$.} 
		After normalizing $(\hat{F}^s)^\T   \hat{F}^s/T(s) = I_r$ and concentrating out the loadings, the objective function becomes a conventional PCA problem: 
		\begin{align*}
		\hat{F}^s = \underset{F^s}{\argmax} \,\, \trace \left\lbrace (F^s)^\top \left( \frac{1}{NT(s)} (X^s)^\T X^s \right) F^s \right\rbrace.
		\end{align*}
		
		The estimator $\hat{F}^s$ equals $\sqrt{T(s)}$ times the eigenvectors of the $r$ largest eigenvalues of the matrix $\frac{1}{NT(s)} (X^s)^\T X^s$. $V^s_r$ is the diagonal matrix with diagonal elements equal to the $r$ largest eigenvalues in decreasing order of the matrix $\frac{1}{NT(s)} (X^s)^\T X^s$. The conditional loadings are estimated as $\hat{\Lambda}(s) = X^s \hat{F}^s/T(s)$, and the unconditional factors can be estimated for each state outcome as $\hat{F} = K^{-\frac{1}{2}}_s \hat{F}^s$.

		\section{Assumptions} \label{assumption}
		Let $M<\infty$ denote a generic constant. The matrix norm below is the Frobenius norm $\norm{A} = \trace(A^\top A)^{1/2}$. We condition on the state outcome $s$ which is in the support of the distribution of the state process $S_t$.

		\begin{assumption} \label{ass_state}
			State and kernel function: 
			\begin{enumerate}
				\item The state process $S_t$ at time $t$ is observed and positive recurrent. $\pi(s)$ is the stationary probability density function (PDF) of $S_t$. $\pi(s)$ is continuous and has first order bounded derivative.
				\item The kernel function $K(\cdot)$ is a symmetric, continuously differentiable and nonnegative function that has a compact support and $\int u^4 k(u) du$ exists.\footnote{Many common kernels satisfy this Assumption: 1. Gaussian kernel $K(u) = \frac{1}{\sqrt{2\pi}}exp(-\frac{u^2}{2})$. 2. Uniform kernel $K(u) = \frac{1}{2} \mathbbm{1}(|u| \leq 1)$. 3. Epanechnikov kernel $k(u) = \frac{3}{4}(1-u^2)\mathbbm{1}(|u| \leq 1)$. 4. Biweight kernel ($k(u) = \frac{15}{16} (1-u^2)^2 \mathbbm{1} (|u| \leq 1) $). 5. Triweight kernel ($k(u) = \frac{35}{32} (1-u^2)^3 \mathbbm{1} (|u| \leq 1) $)). 6. Many higher order kernels obtained by multiplying them by a higher order polynomial in $u^2$.}
			\end{enumerate}
			
		\end{assumption}

		Under Assumption \ref{ass_state}, 
		$S_t$ is positive recurrent, implying the existence of a stationary distribution. The assumptions that $\pi(s)$ is continuous and has first-order bounded derivative, implies a continuous state space for $S_t$. For simplicity we assume that $S_t$ follows its stationary distribution for all $t$, but it is straightforward to relax this assumption. It is sufficient that there exists a $T_0$ with $T_0/T \rightarrow 0$ such that $S_{T_0} \sim \pi$. As a result it holds for all $t \geq T_0$ that $S_t \sim \pi$.		
		 In the remainder of this paper, we estimate the factor model in the state with a stationary density greater than zero. Intuitively, this means that a neighborhood of any state can be visited infinitely many times in an infinite time horizon. Under this assumption, we show consistency for $N$ and $T$ jointly going to infinity. If the stationary distribution $\pi(s)$ is continuous and has bounded first-order derivative, together with the kernel's property, we can estimate the stationary distribution nonparametrically, which is $\hat{\pi}(s) = T(s)/T = \frac{1}{T}\sum_{t = 1}^T K_s(S_t) \xrightarrow{p} \pi(s)$ as $h \rightarrow 0$ and $Th \rightarrow \infty$.
		
		This assumption implies that the state process can take infinitely many values, but it does not make an assumption about the dynamics of the state process.\footnote{Many relevant state processes in economics and finance can be modeled with a continuous state space, e.g., inflation rates, growth rates, volatility processes or return processes. In this case, the state process takes infinitely many values.} The state process can be a slowly changing process, or it can be an abruptly changing process, even with many jumps.  
		
		On the other hand, if the state process takes only finitely many values, we can separate the data by state values and estimate a factor model for each state with stationary probability greater than 0. This would be a special case of estimating the factor model from the data projected by a kernel, which has an indicator term, such as the uniform kernel, and picks an appropriate bandwidth $h$.  With some modifications of the proofs, the theorems hold. 
		
		Because the kernel function is symmetric and continuously differentiable, the bias in the nonparametric density estimator is of order $O_p(h^2)$, which is smaller than $O_p(h)$ obtained from a non-symmetric kernel. In addition, the existence of a fourth moment of the kernel implies that the tail in the kernel cannot be too heavy. When we estimate the factor model in $s$, we also use data in other states weighted by the kernel, which results in some biases in the estimator. Assumption \ref{ass_state}.2 ensures that the bias can be controlled by the kernel weight and will not dominate in the limiting distribution of the estimators.

		\begin{assumption} \label{ass_factor}
			Conditional Factors: 
				 It holds $\max_t \, \+E[\norm{F_t}^4]  \leq \bar{F} < \infty$, $\max_t \, \+E[\norm{F_t}^4 |\mathcal{F}_S ]  \leq \bar{F} < \infty$ $\forall s $, where $\mathcal{F}_S$ is the filtration of the state process, and $ \frac{1}{T(s)} \sum_{t = 1}^T K_s(S_t) F_t F_t^\T  \xrightarrow{p} \Sigma_{F|s} $ as $T \rightarrow \infty$ for some positive definite matrix $\Sigma_{F|s}$.
		\end{assumption}
		
		Assumption \ref{ass_factor} is the state-conditional version of assumption A in \cite{bai2003inferential}. The fourth moment of the factor is bounded both without and with the state filtration, which makes it possible to have asymptotic results for both unconditional and conditional estimated factors. This assumption implies that given any realization of the state process, the fourth moment of the factor cannot explode. The conditional covariance matrix of the factors $\Sigma_{F|s} $ needs to be positive definite, which implies that no factor is degenerated after being projected on a particular state outcome.

		\begin{assumption} \label{ass_loading}
			Factor Loadings: 
			\begin{enumerate}
				\item Factor loadings are deterministic functions of the state process. Furthermore, $\forall s$ and $\forall i$, $\norm{\Lambda(s)^\T \Lambda(s)/N - \Sigma_{\Lambda(s)}} \rightarrow 0$ for some positive definite matrix $\Sigma_{\Lambda(s)}$. 
				\item  $\Lambda(s)$ is {deterministic and} Lipschitz continuous in $s$: There exist some constant $C$, $\norm{\Lambda_i(s + \Delta s) - \Lambda_i(s)} \leq C |\Delta s|$, $\forall s, \Delta s$ and $i$. 
			\end{enumerate}
		\end{assumption}
		
		Assumption \ref{ass_loading} ensures that, in every state, each factor has a nontrivial contribution to the variance of the data. 
		{The loadings are deterministic functions of the state process, which is a stochastic process. Therefore, the unconditional loadings are random, but conditional on the outcome of the state process, they are deterministic.}
		The loadings are Lipschitz continuous with respect to the state, e.g., a differentiable function of the state {with} bounded first-order derivative. If the {state process} is bounded, most {differentiable} loading functions can satisfy this assumption. This assumption, together with the kernel assumption, guarantee that the bias generated from using data in neighboring states is not a leading term in the limiting distribution of the estimators.

		\begin{assumption} \label{ass_err}
			Time and Cross Sectional Dependence and Heteroskedasticity:\\
			There exists a positive constant $M < \infty$ such that for all $N$ and $T$:
			\begin{enumerate}
				\item \label{eandSindep} \sloppy $\+E\left[ e_{it} \right] = 0$. $\+E\left[ e_{it}^8 \right] \leq M$. $e$ and $S$ are independent. 
				\item Weak time-series dependence: \sloppy $\+E\left[e_t^\T e_u/N \right] = \+E\left[\frac{1}{N} \sum_{i=1}^N e_{it}e_{iu}\right] = \gamma_N(t,u)$. $|\gamma_N(t,t)| \leq M, \, \forall t$, $|\gamma_N(t,u)| \leq M, \, \forall t,u$, and $\sum_{u=1}^T |\gamma_N(t,u)| \leq M, \, \forall t$. 
				\item Weak cross-sectional dependence: \sloppy $\+E\left[e_{it} e_{lt} \right] = \tau_{il, t}$, with $|\tau_{il, t}| \leq |\tau_{il}|$ for some $\tau_{il}$ and $ \sum_{l = 1}^N |\tau_{il}| \leq M$ for all $i$. 

				\item Weak total dependence: $\+E\left[ e_{it} e_{lu} \right] = \tau_{il, tu}$ and $\frac{1}{NT} \sum_{i = 1}^N \sum_{l = 1}^N \sum_{t = 1}^T \sum_{u = 1}^T  |\tau_{il,tu}| \leq M$. 

				\item Bounded cross-sectional fourth moment correlation: \\ For every (t,u), $\+E |N^{-1/2} \sum_{i=1}^N [e_{it}e_{iu} - \+E(e_{it}e_{iu})]|^4 \leq M$. 
				\item Weak dependence between factors and idiosyncratic components: 
				\begin{enumerate}
					\item Define $\+E\left[ F_u e_u^\T  e_t/N \right] = \gamma_{N, F}(t,u)$. Then $\norm{\gamma_{N, F}(t,t) }  \leq M$ $\forall t$, and \\$\sum_{u=1}^T \norm{\gamma_{N, F}(t,u)} \leq M$ and $\sum_{t=1}^T \norm{\gamma_{N, F}(t,u)} \leq M$ $\forall t, \, \forall u$. 
					\item Define $\+E\left[ F_u e_u^\T  e_t/N | S_t, S_u \right] = \gamma^s_{N, F}(t,u)$. Then $\norm{\gamma^s_{N, F}(t,t) }  \leq M$ $\forall t$ and \\$\sum_{u=1}^T \norm{\gamma^s_{N, F}(t,u)} \leq M$ and $\sum_{t=1}^T \norm{\gamma^s_{N, F}(t,u)} \leq M$ $\forall t, \, \forall u$.
				\end{enumerate}
				
			\end{enumerate}
		\end{assumption}
		
		Assumption \ref{ass_err} allows the unconditional idiosyncratic components to have weak time-series and cross-sectional dependences. Our model is an approximate static factor model similar to \cite{bai2002determining} and \cite{bai2003inferential}. For simplicity, we also assume the state process is independent of the idiosyncratic components. This assumption, together with weak time-series and cross-sectional unconditional dependence in the idiosyncratic components, implies that the idiosyncratic components can have weak time-series and cross-sectional dependence conditional on the state process. The last part assumes weak unconditional and conditional correlation between factors and idiosyncratic components. We state them separately because the factors and state process may be dependent.

		\begin{assumption} \label{ass_mom}
			Moments and Central Limit Theorem (CLT): There exists an $M \leq \infty$,  such that for any $s$ and for all $N$, $T$, and $h$ that satisfy $h \rightarrow 0$, $Nh \rightarrow \infty$,
			$Th \rightarrow \infty$, $Nh^2 \rightarrow 0$ and $Th^3 \rightarrow 0$,
			\begin{enumerate}
				\item Projected factors and idiosyncratic components: \\
$\max_t \frac{1}{N T}  \sum_{i = 1}^N \sum_{j = 1}^{N} \left(\sum_{u = 1}^{T} \frac{1}{\pi(s)} \gamma^{s,t}_{F,e}(i,j,u,u)  +  h \sum_{p \neq u} \gamma^{s,t}_{F,e}(i,j,u,p) \right) \leq M$,	\\			
				where $\gamma^{s,t}_{F,e}(i,j,u,p)  = \left \| \+E [F_{u} F_{p}^{\top}  (e_{iu}e_{it}- \+E[e_{iu}e_{it}]) (e_{jp}e_{jt}- \+E[e_{jp}e_{jt}]) |S_u = s, S_{p} = s ] \right \|^2$.

				\item Projected factors, loadings and idiosyncratic components:\\
				$\frac{1}{N T}  \sum_{i = 1}^N \sum_{j = 1}^{N} \left(\sum_{u = 1}^{T} \frac{1}{\pi(s)} \gamma^{s,t}_{F,\Lambda,e}(i,j,u,u)  +  h \sum_{p \neq u} \gamma^{s,t}_{F,\Lambda,e}(i,j,u,p) \right) \leq M$,  \\
				where $\gamma^{s,t}_{F, \Lambda, e}(i,j,u,p)  = \left \| \+E \left[ F_u \Lambda_i(s)^{\top}  F_p \Lambda_j(s)^{\top}      e_{iu} e_{jp}  |S_u = s, S_{p} = s \right] \right \|^2$.
				\item CLT for loadings and idiosyncratic components: 
				$$ \frac{1}{\sqrt{N}}\sum_{i = 1}^N \Lambda_i(s) e_{it} \ab  \xrightarrow{d} N(0, \Gamma_{t}^s),$$
				where $\Gamma_{t}^s = \lim_{N \rightarrow \infty} \sum_{i = 1}^N\sum_{l = 1}^N \Lambda_i(s) \Lambda_l(s)' \+E[e_{it} e_{lt}]$.
				\item CLT for projected factors and idiosyncratic components: 
				$$\frac{\sqrt{Th}}{T(s)} \sum_{t = 1}^T K_s(S_t) F_t e_{it} \xrightarrow{d} N(0, \Phi_{i}^s),$$
				where
				$\Phi_i^s = \lim_{T \rightarrow \infty} \left( \frac{1}{T} \sum_{t = 1}^T \left( \frac{R_K}{\pi(s)} \gamma_{FF}^s(t, t)   + h \sum_{t \neq u} \gamma_{FF}^s(t, u)  \right) \right)$, $R_K = \int K^2(u) du$\\ $\gamma_{FF}^s(t, t) = \+E\left[ F_tF_t^\T e_{it}^2 | S_t = s \right]$ and $\gamma_{FF}^s(t, u) = \+E\left[ F_tF_u^\T  e_{it} e_{iu} | S_t = s, S_u = s \right].$
				\item Loadings and projected idiosyncratic components: \\ 
				$\max_t \frac{1}{N T}   \sum_{i = 1}^N \sum_{j = 1}^{N} \left(\sum_{u = 1}^{T} \frac{1}{\pi(s)} \gamma^{s,t}_{\Lambda,e}(i,j,u,u)  +  h \sum_{p \neq u} \gamma^{s,t}_{\Lambda,e}(i,j,u,p) \right) \leq M$,
				where\\ $\gamma^{s,t}_{\Lambda,e}(i,j,u,p)  = \left \| \+E \left[\Lambda_{i}(s) \Lambda_{j}(s)^{\top}   (e_{iu}e_{tu}- \+E[e_{iu}e_{tu}]) (e_{jp}e_{tp}- \+E[e_{ip}e_{tp}]) |S_u = s, S_{p} = s \right] \right \|^2$.
			\end{enumerate}
		\end{assumption}

		Assumption \ref{ass_mom} are moment conditions and central limit theorems, which are satisfied by mixing processes of factors, loadings, and idiosyncratic components projected by the kernel function of the state process. This assumption is required only for asymptotic distribution results. The CLT for loadings and idiosyncratic components will be used to show the limiting distribution for estimated factors and common components. The CLT for projected factors and idiosyncratic components will be used to show the limiting distribution for estimated loadings and common components. Assumption \ref{ass_mom}.1, \ref{ass_mom}.2 and \ref{ass_mom}.5 allow for a weak conditional dependency between the idiosyncratic component and the factors and loadings. It is trivially satisfied if the idiosyncratic components are independent of the factors and loadings. The bandwidth parameter $h$ appears in the assumptions, which is used to balance the $h$ term in the denominator of the second moment of the kernel function. Intuitively, for a smaller bandwidth $h$, we are using information from a smaller portion of the data, which lowers the convergence rate. We will discuss the rate conditions on $h$, $N$, and $T$ in more detail in the next section.
		
		\begin{assumption} \label{ass_eigen}
			The eigenvalues of the $r \times r$ matrix $\Sigma_{\Lambda(s)} \Sigma_{F|s}$ are distinct.
		\end{assumption}

		Factors and loadings can be estimated up to some rotation matrix, and this matrix can be uniquely determined by Assumption \ref{ass_eigen}.\footnote{In general, our approach allows the relative importance of factors to switch in the support of the state process. $\Sigma_{\Lambda(s)} \Sigma_{F|s}$ and its eigenvalues are continuous in $s$. Thus, Assumption \ref{ass_eigen} does not allow factors to switch their relative importance in the neighborhood of the state that we condition on. In this case, there exists some state value for which $\Sigma_{\Lambda(s)} \Sigma_{F|s}$ has repeated eigenvalues. The individual factors are not identified for this particular state outcome. Nevertheless, the common component is still identified.} 

		\section{Asymptotic Results} \label{results}
		Under appropriate rate conditions, we can consistently estimate the factors, loadings, and common components and obtain an asymptotic normal distribution. The rate conditions are similar to the results in \cite{bai2003inferential}, but replace $T$ by the effective number of time observations $Th$. However, the kernel bias term requires additional rate restrictions.
		
		We assume that the number of factors $r$ has been consistently estimated. A possible consistent estimator for the number of factors is proposed in \cite{bai2002determining} and based on an information criterion. 
In the Internet Appendix, we generalize this estimator to our setup, which allows us to estimate the number of factors conditional on a specific state outcome. This estimator selects the number of factors by trading off the amount variation explained in a specific state and a penalty function based on an information criterion. This penalty function essentially replaces the rate $T$ by $Th$ in the estimator of \cite{bai2002determining}. In the Internet Appendix, we also discuss how to use a cross-validation approach to determine the number of factors.

		\begin{theorem}\label{thm_consistency}
			Consistency of Estimated Factors: \\
			Under Assumptions \ref{Ass:Ident}-\ref{ass_err}, $N, Th \rightarrow \infty$, $\delta_{NT, h} = min(\sqrt{N}, \sqrt{Th})$ and $\delta_{NT, h}h \rightarrow 0$, we have 
			\begin{eqnarray}
			\delta_{NT, h}^2 \left( \frac{1}{T} \sum_{t=1}^T \norm{\hat{F}^s_t - (H^s)^\T  F^s_t}^2\right) = O_p(1)
			\end{eqnarray}
			and 
			\begin{eqnarray}
			\delta_{NT, h}^2 \left( \frac{1}{N} \sum_{i=1}^N \norm{\hat{\Lambda}_i(s) - (H^s)^{-1} \Lambda_i(s)}^2\right) = O_p(1)
			\end{eqnarray}
			with $H^s = \frac{\Lambda(s)^\T  \Lambda(s)}{N} \frac{(F^s)^\T  \hat{F}^s}{T(s)}  (V^s_r)^{-1}$ and $V^s_r$ is the diagonal matrix consisting of the $r$ largest eigenvalues of  $\frac{1}{NT(s)} (X^s)^\T  X^s$.
		\end{theorem}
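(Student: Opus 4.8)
The plan is to follow the strategy of \cite{Bai2003}'s consistency proof, adapted to the kernel-projected data and augmented to control the additional bias term $\omega_{it}^s = K_s^{1/2}(S_t)(\Lambda_i(S_t)-\Lambda_i(s))^\top F_t$ that has no analogue in the constant-loading model. First I would write down the identity defining $\hat F^s$ as an eigenvector problem: since $\frac{1}{NT(s)}(X^s)^\top X^s\,\hat F^s = \hat F^s V_r^s$ with the normalization $(\hat F^s)^\top \hat F^s/T(s)=I_r$, we have for each $t$
\[
\hat F_t^s = (V_r^s)^{-1}\frac{1}{T(s)}\sum_{u=1}^T \hat F_u^s \Big(\textstyle\frac{1}{N}\sum_{i=1}^N X_{iu}^s X_{it}^s\Big).
\]
Then I would substitute the decomposition $X_{it}^s = \Lambda_i(s)^\top F_t^s + \omega_{it}^s + e_{it}^s$ into $\frac1N\sum_i X_{iu}^s X_{it}^s$. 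Expanding the product of the three components yields the leading systematic term $(F_u^s)^\top\frac{\Lambda(s)^\top\Lambda(s)}{N}F_t^s$, which reproduces the rotation $H^s$, together with eight remainder terms: the pure idiosyncratic term $\frac1N\sum_i e_{iu}^s e_{it}^s$, the factor/error cross products (as in Bai), and four new $\omega$-containing terms (factor/bias, bias/error, and pure bias). This gives a representation $\hat F_t^s-(H^s)^\top F_t^s = (V_r^s)^{-1}\frac{1}{T(s)}\sum_u \hat F_u^s\, d_{ut}$ generalizing Bai's four-term expansion.

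Next I would bound each remainder's contribution to $\frac1T\sum_t\|\hat F_t^s-(H^s)^\top F_t^s\|^2$. As preliminaries I would establish $(V_r^s)^{-1}=O_p(1)$ by showing the $r$ largest eigenvalues of $\frac{1}{NT(s)}(X^s)^\top X^s$ are bounded away from zero, using $\frac{\Lambda(s)^\top\Lambda(s)}{N}\to\Sigma_{\Lambda(s)}\succ0$ and $\frac{1}{T(s)}\sum_t K_s(S_t)F_tF_t^\top\xrightarrow{P}\Sigma_{F|s}\succ0$ (Assumptions \ref{ass_loading} and \ref{ass_factor}) together with $T(s)/T\xrightarrow{P}\pi(s)>0$ (Assumption \ref{ass_state}), and I would use the normalization $\frac{1}{T(s)}\sum_t\|\hat F_t^s\|^2=r$ with Cauchy--Schwarz to reduce each term to a weighted moment bound. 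The idiosyncratic and factor/error cross terms are handled exactly as in Bai but with the effective sample size replacing $T$ by $Th$: since $\frac1T\sum_t K_s(S_t)\to\pi(s)$ while $\frac1T\sum_t K_s(S_t)^2\to R_K\pi(s)/h$, the squared kernel weights inflate the variance by $1/h$, so the weak time-series dependence conditions of Assumption \ref{ass_err} deliver the $(Th)^{-1}$ piece, while cross-sectional averaging of the errors delivers the $N^{-1}$ piece and a cross term of order $(N\,Th)^{-1/2}$; their maximum is $\delta_{NT,h}^{-2}=\max(N^{-1},(Th)^{-1})$.

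The main obstacle, and the genuinely new part relative to Bai, is controlling the bias terms built from $\omega_{it}^s$. The key is that $K_s(S_t)=\frac1h K((S_t-s)/h)$ has compact support (Assumption \ref{ass_state}), so $\omega_{it}^s$ is nonzero only when $|S_t-s|\le Ch$, and there Lipschitz continuity of $\Lambda_i(\cdot)$ (Assumption \ref{ass_loading}) gives $\|\Lambda_i(S_t)-\Lambda_i(s)\|\le C|S_t-s|\le C'h$. Each $\omega$-containing term therefore carries at least one factor of $h$; the dominant contribution to the averaged squared error comes from the loading/bias cross products and is $O_p(h^2)$ (the pure-bias product is even smaller), because $\frac1N\sum_i\Lambda_i(s)\omega_{it}^s=O_p(h)\,K_s^{1/2}(S_t)\|F_t\|$ and $\frac1T\sum_t K_s(S_t)\|F_t\|^2=O_p(1)$. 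Collecting terms yields $\frac1T\sum_t\|\hat F_t^s-(H^s)^\top F_t^s\|^2=O_p(\delta_{NT,h}^{-2})+O_p(h^2)$; multiplying by $\delta_{NT,h}^2$ and invoking $(\delta_{NT,h}h)^2\to0$, which is exactly the role of the rate restriction $\delta_{NT,h}h\to0$, makes the bias asymptotically negligible and gives the first claim.

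Finally, for the loadings I would use $\hat\Lambda_i(s)=\frac{1}{T(s)}\sum_t X_{it}^s \hat F_t^s$ together with the factor rate just established. Writing $\hat\Lambda_i(s)-(H^s)^{-1}\Lambda_i(s)$ by substituting $X_{it}^s=\Lambda_i(s)^\top F_t^s+\omega_{it}^s+e_{it}^s$ and $\hat F_t^s=(H^s)^\top F_t^s+(\hat F_t^s-(H^s)^\top F_t^s)$, the leading term reproduces $(H^s)^{-1}\Lambda_i(s)$ up to $o_p(1)$ via the limit of $\frac{1}{T(s)}\sum_t K_s(S_t)F_tF_t^\top$ and the definition of $H^s$, while the remaining pieces are controlled by the same moment conditions of Assumption \ref{ass_err}, the bias bound of the previous paragraph, and $\frac1T\sum_t\|\hat F_t^s-(H^s)^\top F_t^s\|^2=O_p(\delta_{NT,h}^{-2})$. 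This yields $\frac1N\sum_i\|\hat\Lambda_i(s)-(H^s)^{-1}\Lambda_i(s)\|^2=O_p(\delta_{NT,h}^{-2})$, completing the proof.
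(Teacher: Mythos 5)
Your factor half is essentially the paper's own proof: the same eigenvector identity, the same decomposition $X^s_{it}=\Lambda_i(s)^\top F^s_t+\omega^s_{it}+e^s_{it}$ (the paper's $\Delta X^s_{it}$ is your $\omega^s_{it}$), the same grouping into an expected-covariance term of order $(Th)^{-1}$, idiosyncratic and cross terms of order $N^{-1}$, and bias terms of order $h^2$ (pure bias $h^4$), with $\delta_{NT,h}h\to 0$ killing the bias; your plan to establish $(V^s_r)^{-1}=O_p(1)$ up front is if anything cleaner than the paper, which asserts it and defers the eigenvalue analysis to its Lemma 2.

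The loading half is where you genuinely diverge. The paper never substitutes into $\hat\Lambda(s)=X^s\hat F^s/T(s)$; it instead runs the mirror-image eigenvector argument on the $N\times N$ matrix $\frac{1}{NT(s)}X^s(X^s)^\top$, gets the rate for the orthonormalized eigenvector loadings $\bar\Lambda(s)$ "by similar approach" with the roles of $N$ and $Th$ swapped, and then converts via the algebraic identity $\hat\Lambda(s)=\bar\Lambda(s)(V^s_r)^{1/2}$ together with a rotation-matching step (its Lemma 4.1). The payoff of that dual route is that every remainder term is literally the transpose of one already bounded in the factor proof, so the Cauchy--Schwarz and fourth-moment tricks carry over verbatim under the symmetric moment conditions. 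Your direct route (substitute the decomposition and reuse the factor rate, as in Bai's 2003 loading expansion) is legitimate, but two steps need more than your sketch provides. First, the leading term is $\frac{(\hat F^s)^\top F^s}{T(s)}\Lambda_i(s)$, and identifying it with $(H^s)^{-1}\Lambda_i(s)$ "up to $o_p(1)$" is too weak for a $\delta_{NT,h}^{2}$-rate claim: you need the conversion error to be $O_p(\delta_{NT,h}^{-1})$, which requires the normalization $(\hat F^s)^\top\hat F^s/T(s)=I_r$, the factor rate, and invertibility of $H^s$ with $\norm{(H^s)^{-1}}=O_p(1)$ — none of which is free (the paper obtains the analogous facts through its Lemmas 2 and 4). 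Second, the cross term $\frac{1}{N}\sum_{i}\norm{\frac{1}{T(s)}\sum_t K_s(S_t)F_t e_{it}}^2$ is not controlled by "the same" conditions used in the factor half: writing $\frac1N\sum_i e_{it}e_{iu}=\gamma_N(t,u)+\zeta_{ut}$, the $\gamma_N$ part is $O_p(1/(Th))$, but a naive Cauchy--Schwarz on the $\zeta$ part only yields $O_p(N^{-1/2})$, which is larger than $\delta_{NT,h}^{-2}$; getting the right order requires the time-series analogues of the cross-sectional conditions (the mirror image of Assumption \ref{ass_err}.5 and of the paper's $b_t$ argument), which is exactly what the paper's symmetric dual argument is implicitly invoking. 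So your plan works, but at these two junctures you should either import the paper's dual-problem symmetry or state explicitly the additional mirrored moment bounds you are using.
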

		
		$\hat{F}^s_t$ are estimates of the projected factors $F^s_t$. The factors $F_t$ can be identified up to some rotation matrix, $H^s$, which depends on the state.
		The convergence rate is the smaller of $N$ and $Th$, denoted as $\delta^2_{NT, h}$, as $N, Th \rightarrow \infty$ and $h \rightarrow 0$. As expected, the bandwidth parameter interacts with $T$, but not with $N$, as the kernel projection is equivalent to weighting data differently in the time dimension. 
		
		The additional restriction $\delta_{NT, h}h \rightarrow 0$ is due to the kernel bias. In more detail, the projected observations can be written as
		\begin{eqnarray}\label{eqn:decom-X_t}
		X^s_t = \Lambda(S_t) F^s_t + e^s_t = \underbrace{\Lambda(s) F^s_t + e^s_t}_{\bar X^s_t}  + \underbrace{(\Lambda(S_t) - \Lambda(s))F^s_t}_{\Delta X^s_t} . 
		\end{eqnarray}
		Factors are estimated as eigenvectors from the projected data, i.e.  
		\begin{eqnarray}\label{eqn:factor-est}
		\left(\frac{1}{NT(s)} (X^s)^\T X^s \right) \hat{F}^s = \hat{F}^s V^s_r.
		\end{eqnarray}
		Plugging Equation (\ref{eqn:decom-X_t}) into Equation (\ref{eqn:factor-est}) we obtain
		\begin{eqnarray*}
			\frac{1}{NT(s)} [ F^s \Lambda(s)^\T  \Lambda(s) (F^s)^\T  \hat{F}^s + F^s \Lambda(s)^\T  e^s \hat{F}^s + (e^s)^\T  \Lambda(s) (F^s)^\T  \hat{F}^s  && \\
			+ (e^s)^\T  e^s \hat{F}^s + (\Delta X^s)^\T \bar{X}^s \hat{F}^s + (\bar{X}^s)^\T  \Delta X^s\hat{F}^s  + (\Delta X^s)^\T \Delta X^s \hat{F}^s]   &=& \hat{F}^s V^s_r,
		\end{eqnarray*}
		where $\bar X^s = [\bar X^s_1, \cdots, \bar X^s_T]$ and $\Delta X^s = [\Delta X^s_1, \cdots, \Delta X^s_T]$.
		The three terms, $\frac{1}{NT(s)} (\Delta X^s)^\T \bar{X}^s \hat{F}^s$, $\frac{1}{NT(s)} (\bar{X}^s)^\T  \Delta X^s\hat{F}^s$, $\frac{1}{NT(s)} (\Delta X^s)^\T \Delta X^s \hat{F}^s$ are bias terms from using observations in nearby states.\footnote{Although \cite{Su2017} also uses a PCA estimator with a kernel method to estimate a time-varying factor model, they do not take these bias terms into consideration.} The bias terms are controlled by the Lipschitz condition in Assumption \ref{ass_loading} and the assumptions on the kernel function. We show that the bias terms are negligible for $\delta_{NT, h} h \rightarrow 0$.  
		{In particular, a candidate bandwidth to satisfy the rate assumptions is $h = 1/\sqrt{T}$.}

		\begin{theorem}\label{thm_factor}
			Limiting Distribution of Estimated Factors: \\
			Under Assumptions \ref{Ass:Ident}-\ref{ass_eigen}, if $\sqrt{Nh}/(Th) \rightarrow 0$, $Nh \rightarrow \infty$ and $Nh^2 \rightarrow 0$, we have for the time $t$ conditioned on $S_t=s$:
			\begin{align}
			\sqrt{N}\left( \hat{F}_t - (H^s)^\T  F_t \right) \xrightarrow{d} N(0, (V^s)^{-1}Q^s \Gamma_t^s (Q^s)^\T    (V^s)^{-1}),
			\end{align}
			\sloppy $Q^s$ is the limit $ \frac{(\hat{F}^s)^\T   F^s}{T(s)} \xrightarrow{p} Q^s$ and $V^s = \diag(v_1^s, v_2^s, \cdots, v_r^s)$, $v_1^s > v_2^s > \cdots > v_r^s > 0$ are the eigenvalues of $\Sigma_{\Lambda(s)}^{1/2}\Sigma_{F|s} \Sigma_{\Lambda(s)}^{1/2}$ with $V^s_r \xrightarrow{p} V^s$.
			
		\end{theorem}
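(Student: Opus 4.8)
The plan is to adapt the asymptotic-normality argument of \cite{Bai2003} to the kernel-projected data, keeping careful track of how the kernel weights $K_s^{1/2}(\cdot)$ enter and, crucially, cancel. Since the limiting statement is conditional on $S_t=s$, I would first record that $\hat F^s_t=K_s^{1/2}(s)\hat F_t$ and $F^s_t=K_s^{1/2}(s)F_t$, so that
\[
\hat F_t-(H^s)^\top F_t = K_s^{-1/2}(s)\bigl(\hat F^s_t-(H^s)^\top F^s_t\bigr).
\]
It therefore suffices to expand the projected residual $\hat F^s_t-(H^s)^\top F^s_t$ and premultiply by $K_s^{-1/2}(s)=O(h^{1/2})$. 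The whole argument hinges on the observation that the one surviving stochastic term carries a compensating factor $K_s^{1/2}(s)$ inherited from $e^s_t=K_s^{1/2}(s)e_t$, so that the two powers of $h$ cancel and leave a clean cross-sectional $\sqrt N$-rate. A simplification specific to the conditional statement is that $\Delta X^s_t=(\Lambda(S_t)-\Lambda(s))F^s_t=0$ because $S_t=s$; hence every bias term attached to the index $t$ drops out, and only bias contributions from the summation index $u$ remain.

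Next I would start from the eigenvector equation~(\ref{eqn:factor-est}), take its $t$-th row, and insert the decomposition~(\ref{eqn:decom-X_t}), $X^s_u=\Lambda(s)F^s_u+e^s_u+\Delta X^s_u$ together with $X^s_t=\Lambda(s)F^s_t+e^s_t$, into $\frac{1}{NT(s)}\sum_u(X^{s\top}_t X^s_u)\hat F^s_u=V^s_r\hat F^s_t$. The product $X^{s\top}_t X^s_u$ generates six scalar terms. The leading one, $\frac{1}{NT(s)}\sum_u\hat F^s_u F^{s\top}_u\Lambda(s)^\top\Lambda(s)F^s_t=\frac{(\hat F^s)^\top F^s}{T(s)}\frac{\Lambda(s)^\top\Lambda(s)}{N}F^s_t$, equals $V^s_r(H^s)^\top F^s_t$ by the definition of $H^s$ and cancels, leaving
\[
V^s_r\bigl(\hat F^s_t-(H^s)^\top F^s_t\bigr)=\frac{1}{NT(s)}\sum_{u=1}^T\hat F^s_u\Bigl[e^{s\top}_t\Lambda(s)F^s_u+F^{s\top}_t\Lambda(s)^\top e^s_u+e^{s\top}_t e^s_u+F^{s\top}_t\Lambda(s)^\top\Delta X^s_u+e^{s\top}_t\Delta X^s_u\Bigr].
\]
The third step isolates the term carrying the limiting distribution, namely the one in which the idiosyncratic index is held at the conditioning time $t$: $\frac{1}{NT(s)}\sum_u\hat F^s_u\,e^{s\top}_t\Lambda(s)F^s_u=K_s^{1/2}(s)\frac{(\hat F^s)^\top F^s}{T(s)}\frac{\Lambda(s)^\top e_t}{N}$. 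Premultiplying by $(V^s_r)^{-1}$ and by the external $K_s^{-1/2}(s)$ cancels the kernel weights, and after scaling by $\sqrt N$ I would write
\[
\sqrt N\bigl(\hat F_t-(H^s)^\top F_t\bigr)=(V^s_r)^{-1}\frac{(\hat F^s)^\top F^s}{T(s)}\cdot\frac{1}{\sqrt N}\sum_{i=1}^N\Lambda_i(s)e_{it}+o_p(1).
\]
Assumption~\ref{ass_mom}.3 gives $\frac{1}{\sqrt N}\sum_i\Lambda_i(s)e_{it}\xrightarrow{d}N(0,\Gamma_t^s)$, while $V^s_r\xrightarrow{P}V^s$ and $\frac{(\hat F^s)^\top F^s}{T(s)}\xrightarrow{P}Q^s$; Slutsky's theorem then yields the stated $N\bigl(0,(V^s)^{-1}Q^s\Gamma_t^s(Q^s)^\top(V^s)^{-1}\bigr)$.

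The bulk of the work, and the main obstacle, is to show that the four remaining bracketed terms are $o_p(N^{-1/2})$ after the $K_s^{-1/2}(s)$ rescaling; this is where the rate conditions are consumed and where the genuinely new difficulties relative to \cite{Bai2003} lie. The two bias terms $F^{s\top}_t\Lambda(s)^\top\Delta X^s_u$ and $e^{s\top}_t\Delta X^s_u$ are controlled by the Lipschitz bound of Assumption~\ref{ass_loading}.2: on the kernel support $|S_u-s|\lesssim h$, so $\frac{1}{N}(\Lambda(S_u)-\Lambda(s))^\top\Lambda(s)=O(h)$, each such term is $O_p(K_s^{1/2}(s)h)$, and after $K_s^{-1/2}(s)$ and $\sqrt N$ this is $O_p(\sqrt N h)$, which vanishes precisely under $Nh^2\to0$; because $\Lambda(\cdot)$ is only Lipschitz and not differentiable, one cannot recover an extra power of $h$ from kernel symmetry, so $Nh^2\to0$ is the binding restriction here. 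The subdominant factor–idiosyncratic term $F^{s\top}_t\Lambda(s)^\top e^s_u$ and the idiosyncratic product $e^{s\top}_t e^s_u$ — split into its conditional mean, governed by the weak-dependence bounds of Assumption~\ref{ass_err}, and its demeaned part, governed by the projected moment conditions of Assumption~\ref{ass_mom} — are where the bookkeeping of the extra $1/h$ produced by $K_s(S_u)=K_s^{1/2}(S_u)^2$ inside the time sums matters; carrying these factors through and requiring the resulting pieces to be $o_p(N^{-1/2})$ is what forces the effective-sample-size conditions $Nh\to\infty$ and $\sqrt{Nh}/(Th)\to0$. Throughout I would invoke Theorem~\ref{thm_consistency} to replace $\hat F^s_u$ by $(H^s)^\top F^s_u$ up to negligible error in each of these sums and to justify the probability limits $V^s_r\to V^s$ and $\frac{(\hat F^s)^\top F^s}{T(s)}\to Q^s$.
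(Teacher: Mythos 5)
Your proposal is correct and follows essentially the same route as the paper's proof: the same eigenvector-equation expansion with the cancellation $V^s_r(H^s)^\top F^s_t = \frac{(\hat F^s)^\top F^s}{T(s)}\frac{\Lambda(s)^\top\Lambda(s)}{N}F^s_t$, the same identification of $\frac{1}{NT(s)}\sum_u \hat F^s_u F_u^{s\top}\Lambda(s)^\top e^s_t$ as the CLT-carrying term (the paper's $A_3$), the same cancellation of kernel weights followed by Slutsky with $V^s_r \xrightarrow{P} V^s$ and $(\hat F^s)^\top F^s/T(s)\xrightarrow{P} Q^s$, and the same allocation of the rate conditions $Nh^2\to 0$, $Nh\to\infty$, $\sqrt{Nh}/(Th)\to 0$ to the bias and error terms (the content of the paper's Lemma \ref{lemma3}). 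Your only deviation is cosmetic: you use the exact conditioning $S_t=s$ to set $\Delta X^s_t=0$ and kill the paper's terms $A_6$ and $A_7$ at the outset, whereas the paper bounds them under the weaker requirement $\frac{1}{\sqrt{N}}\sum_{i=1}^N \norm{\Lambda_i(S_t)-\Lambda_i(s)}=o_p(1)$ noted in its footnote.
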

		
		This theorem shows the asymptotic normality of the estimated factors $\hat{F}^s_t/K^{1/2}_s(S_t)$ up to the some of rotation of true factors $F_t$ for the times when the state process is equal to the target outcome $s$.\footnote{Instead of conditioning only on the times $t$ when $S_t=s$ we could allow for the times $t$ when $S_t$ satisfies $\frac{1}{\sqrt{N}} \sum_{i = 1}^N \norm{\Lambda_i(S_t)-\Lambda_i(s)}  = o_p(1)$.}
Since $\hat{F}^s_t$ is an estimate of the projected $F^s_t$, dividing both sides by $K^{1/2}_s(S_t)$, $\hat{F}_t = \hat{F}^s_t/K^{1/2}_s(S_t)$ results in an estimate of $F_t$.
A valid bandwidth for the case $N \asymp T$ is $h = 1/T^{1/2+\epsilon}$ for some small $\epsilon >0$.

		Note that the convergence rate $\sqrt{N}$ is the same as in the constant loading factor model (see Theorem 1 in \cite{bai2003inferential}). The variance is equal to that of an ordinary least square regression (OLS) of the panel on the unknown population loadings. 
		
		\begin{theorem} \label{thm_loading}
			Limiting Distribution of Estimated Factor Loadings: \\ 
			Under Assumptions \ref{Ass:Ident}-\ref{ass_eigen}, if $\sqrt{Th}/N \rightarrow 0$, $Th \rightarrow \infty$, and $Th^3 \rightarrow 0$, then for each $i$, 
			\begin{eqnarray}
			\sqrt{Th}(\hat{\Lambda}_i(s) - (H^s)^{-1} \Lambda_i(s)) \xrightarrow{d} N(0, ((Q^s)^\top)^{-1} \Phi^s_i (Q^s)^{-1}).
			\end{eqnarray}
		\end{theorem}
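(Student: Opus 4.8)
The plan is to adapt \cite{Bai2003}'s loading–distribution argument to the kernel–projected data, while isolating the extra kernel–bias term produced by $\Lambda_i(S_t)\neq\Lambda_i(s)$. Throughout I use that $T(s)=T\hat\pi(s)\asymp T$, since $\hat\pi(s)\xrightarrow{P}\pi(s)>0$ under Assumption \ref{ass_state}, so rates expressed in $T(s)$ and in $T$ coincide. Starting from the closed form $\hat\Lambda_i(s)=\frac{1}{T(s)}\sum_{t=1}^T\hat F^s_t X^s_{it}$ and inserting $X^s_{it}=(F^s_t)^\top\Lambda_i(s)+\omega^s_{it}+e^s_{it}$, where $\omega^s_{it}=K_s^{1/2}(S_t)(\Lambda_i(S_t)-\Lambda_i(s))^\top F_t$ is the projection bias, I would write
\begin{align*}
\hat\Lambda_i(s)-(H^s)^{-1}\Lambda_i(s)
=\underbrace{\Big(\tfrac{(\hat F^s)^\top F^s}{T(s)}-(H^s)^{-1}\Big)\Lambda_i(s)}_{\text{(I) signal}}
+\underbrace{\tfrac{1}{T(s)}\sum_{t}\hat F^s_t\omega^s_{it}}_{\text{(II) bias}}
+\underbrace{\tfrac{1}{T(s)}\sum_{t}\hat F^s_t e^s_{it}}_{\text{(III) noise}},
\end{align*}
and show that after scaling by $\sqrt{Th}$ only term (III) survives.

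For the signal term (I), the key is the identity $Q^s=(H^s)^{-1}$ in the limit. This follows because Theorem \ref{thm_consistency} gives $\hat F^s\approx F^s H^s$, so the normalization $(\hat F^s)^\top\hat F^s/T(s)=I_r$ forces $(H^s)^\top\Sigma_{F|s}H^s=I_r$, i.e.\ $(H^s)^\top\Sigma_{F|s}=(H^s)^{-1}$; combined with $\frac{(F^s)^\top F^s}{T(s)}\xrightarrow{P}\Sigma_{F|s}$ (Assumption \ref{ass_factor}) and the definition of $H^s$, this yields $\lim\frac{(\hat F^s)^\top F^s}{T(s)}=\lim(H^s)^{-1}=Q^s$. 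I would then establish, as in \cite{Bai2003}, the sharper statement $\frac{(\hat F^s)^\top F^s}{T(s)}-(H^s)^{-1}=O_p(\delta_{NT,h}^{-2})$, so that $\sqrt{Th}\cdot(\text{I})=O_p(\sqrt{Th}\,\delta_{NT,h}^{-2})=O_p(\max(\sqrt{Th}/N,(Th)^{-1/2}))=o_p(1)$ under $\sqrt{Th}/N\to0$ and $Th\to\infty$. For the bias term (II), replacing $\hat F^s_t$ by $(H^s)^\top F^s_t$ and using the Lipschitz bound $\norm{\Lambda_i(S_t)-\Lambda_i(s)}\le C|S_t-s|$ (Assumption \ref{ass_loading}) together with the compact support of $K$ (Assumption \ref{ass_state}), which confines the sum to $|S_t-s|=O(h)$, gives $(\text{II})=O_p(h)$; hence $\sqrt{Th}\cdot(\text{II})=O_p(\sqrt{Th^3})=o_p(1)$ precisely under $Th^3\to0$. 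This is the step that pins down the bandwidth restriction, and the $O_p(h)$ rate (rather than $O_p(h^2)$) reflects that only Lipschitz, not twice-differentiable, loadings are assumed.

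For the noise term (III), I split $\hat F^s_t=(H^s)^\top F^s_t+d^s_t$ with $d^s_t=\hat F^s_t-(H^s)^\top F^s_t$. Since $F^s_t e^s_{it}=K_s(S_t)F_t e_{it}$, the leading piece is $(H^s)^\top\frac{\sqrt{Th}}{T(s)}\sum_t K_s(S_t)F_t e_{it}$, which by the central limit theorem in Assumption \ref{ass_mom} converges to $N(0,(H^s)^\top\Phi^s_i H^s)$; identifying $(H^s)^\top\to((Q^s)^\top)^{-1}$ from (I) turns the variance into $((Q^s)^\top)^{-1}\Phi^s_i(Q^s)^{-1}$, the claimed limit.

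The main obstacle is the remainder $\sqrt{Th}\,\frac{1}{T(s)}\sum_t d^s_t e^s_{it}$. A crude Cauchy--Schwarz bound using $\frac{1}{T(s)}\sum_t\norm{d^s_t}^2=O_p(\delta_{NT,h}^{-2})$ from Theorem \ref{thm_consistency} only yields $O_p(\sqrt{Th}/\delta_{NT,h})$, which does not vanish, so a sharper accounting is unavoidable. As in Bai-type loading proofs, one must substitute the explicit expansion of $d^s_t$ obtained in the proof of Theorem \ref{thm_consistency} (the four terms built from $(V^s_r)^{-1}$, $\hat F^s$, the projected residual second moments, and the factor--residual cross products) and analyze each resulting double sum separately. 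After interchanging sums, the delicate pieces reduce to exactly the two averages whose $\sqrt{Th/N}$–scaled second moments are assumed bounded in Assumption \ref{ass_mom}.1 (projected factors against centered residual products $e_{iu}e_{it}$) and Assumption \ref{ass_mom}.5 (projected loadings against centered residual products $e_{lt}e_{it}$), while the remaining cross terms are controlled by the weak–dependence conditions of Assumption \ref{ass_err}. This shows the remainder is $O_p(\sqrt{Th}/N)+O_p(N^{-1/2})=o_p(1)$ under $\sqrt{Th}/N\to0$ and $N\to\infty$, so that term (III) alone determines the asymptotic distribution and the theorem follows by Slutsky's theorem.
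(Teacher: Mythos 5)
Your route is essentially the paper's own: the same split of $\hat\Lambda_i(s)-(H^s)^{-1}\Lambda_i(s)$ into a signal term, a kernel-bias term and a noise term, the same reduction of the noise term to $(H^s)^\top\frac{\sqrt{Th}}{T(s)}\sum_t K_s(S_t)F_t e_{it}$ plus a remainder, the same CLT from Assumption \ref{ass_mom}.4 combined with $H^s=(Q^s)^{-1}+o_p(1)$ and Slutsky, and the same recognition that the remainder $\frac{\sqrt{Th}}{T(s)}\sum_t d^s_t e^s_{it}$ cannot be killed by Cauchy--Schwarz but must be handled by substituting the expansion of $d^s_t=\hat F^s_t-(H^s)^\top F^s_t$ term by term. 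This is exactly what the paper does via its Lemma \ref{lemma4} and the moment conditions in Assumptions \ref{ass_mom}.1 and \ref{ass_mom}.5.

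There is, however, one concrete bookkeeping error: you describe the expansion of $d^s_t$ as consisting of the four Bai-type noise terms only, and on that basis claim the ``sharper'' rate $\frac{(\hat F^s)^\top F^s}{T(s)}-(H^s)^{-1}=O_p(\delta_{NT,h}^{-2})$ for term (I) and $O_p(\sqrt{Th}/N)+O_p(N^{-1/2})$ for the remainder in (III). In the kernel-projected setting the expansion of $d^s_t$ contains three \emph{additional} terms built from $\Delta X^s_t=(\Lambda(S_t)-\Lambda(s))F^s_t$, and these contribute $O_p(h)$ to both quantities; the correct bounds are $O_p\left(\max\left(\delta_{NT,h}^{-2},h\right)\right)$, which is precisely why the paper's Lemmas \ref{lemma4}.3--\ref{lemma4}.5 carry the $\max(\cdot,h)$. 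In other words, the kernel bias does not live only in your term (II): it re-enters terms (I) and (III) through $\hat F^s$ itself, and it is not controlled by the weak-dependence conditions of Assumption \ref{ass_err} but by the Lipschitz/kernel conditions. Your conclusion survives because after scaling by $\sqrt{Th}$ these extra pieces are $O_p(\sqrt{Th^3})=o_p(1)$ under the same condition $Th^3\to0$ that you already invoke for term (II); but the intermediate claims as you state them would fail, and a complete write-up must redo the analysis of (I) and (III) with all seven terms of the expansion.
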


		This theorem shows the asymptotic normality of the estimated conditional loadings up to some rotation. $\hat{\Lambda}_i(s) - (H^s)^{-1} \Lambda_i(s)$ has some error and bias terms, including a leading bias term for the time average of $\Delta X^s_i = [(\Lambda_i(S_t) - \Lambda_i(s))F^s_t]_{t = 1, 2, \cdots, T}$. We show in the appendix that $\frac{1}{T(s)} \sum_{t=1}^T \Delta X^s_{it} = O_p(h)$. Therefore, when $Th^3 \rightarrow 0$, the bias terms are sufficiently small relative to the error term. A candidate bandwidth to satisfy the assumptions is $h = 1/\sqrt{T}$ when $\sqrt[4]{T}/N \rightarrow 0$.
		
		As expected the convergence rate is $\sqrt{Th}$, which is slower than the convergence rate $\sqrt{T}$ in the constant loading factor model (see Theorem 2 in \cite{bai2003inferential}). The variance is equal to an OLS regression of the projected data on the projected unknown population factors. The smaller the bandwidth $h$, the slower the convergence rate and the smaller the bias. The variance of $\hat{\Lambda}_i(s) - (H^s)^{-1} \Lambda_i(s)$ is $O_p\left(\frac{1}{Th} \right)$ and the bias of $\hat{\Lambda}_i(s) - (H^s)^{-1} \Lambda_i(s)$ is $O_p(h)$. The optimal bandwidth to balance variance and bias and satisfy the assumptions in the asymptotic distribution is $h \asymp 1/\sqrt[3+\epsilon]{T}$ for some small $\epsilon > 0$.

		We denote the common component by $C_{it,s} = F_t^\top \Lambda_i(s)$ and its estimator by $\hat{C}_{it,s} = \hat{F}_t \hat{\Lambda}_i(s) = \left(\frac{\hat{F}^s_t}{K^{1/2}_s(S_t)}\right)^\T  \hat{\Lambda}_i(s)$.
		\begin{theorem}\label{thm_common}
			Limiting Distribution of Common Components: \\ 
			Under Assumptions \ref{Ass:Ident}-\ref{ass_eigen} as $Nh \rightarrow \infty$,
			$Th \rightarrow \infty$, $Nh^2 \rightarrow 0$ and $Th^3 \rightarrow 0$, we have for each $i$ and the time $t$ conditioned on $S_t=s$: 
			\begin{gather}
			\left(\frac{1}{N} V_{it,s} + \frac{1}{Th} W_{it,s}\right)^{-1/2} \left( \hat{C}_{it,s} - C_{it,s}\right) \xrightarrow{d} N(0, 1),
			\end{gather}
			where $V_{it,s} = \Lambda_i(s)^\T   \Sigma_{\Lambda(s)}^{-1} \Gamma_{t}^s \Sigma_{\Lambda(s)}^{-1}  \Lambda_i(s)$ and $W_{it,s} = F_t^\T  \Sigma_{F|s}^{-1} \Phi_{i} ^s\Sigma_{F|s}^{-1} F_t$. 
		\end{theorem}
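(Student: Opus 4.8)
The plan is to reduce everything to the two limiting distributions already established in Theorems \ref{thm_factor} and \ref{thm_loading}, exploiting the fact that the state-varying rotation matrix $H^s$ cancels in the common component. Writing $C_{it,s}=F_t^\top\Lambda_i(s)=\big((H^s)^\top F_t\big)^\top\big((H^s)^{-1}\Lambda_i(s)\big)$ and adding and subtracting, I would decompose
\[
\hat{C}_{it,s}-C_{it,s}=\big(\hat{F}_t-(H^s)^\top F_t\big)^\top (H^s)^{-1}\Lambda_i(s)+\big((H^s)^\top F_t\big)^\top\big(\hat{\Lambda}_i(s)-(H^s)^{-1}\Lambda_i(s)\big)+\big(\hat{F}_t-(H^s)^\top F_t\big)^\top\big(\hat{\Lambda}_i(s)-(H^s)^{-1}\Lambda_i(s)\big).
\]
The first term isolates the factor estimation error (contracted against the loading), the second isolates the loading estimation error (contracted against the factor), and the third is a cross term.

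First I would handle the two leading terms by direct substitution. By Theorem \ref{thm_factor}, $\sqrt{N}\big(\hat{F}_t-(H^s)^\top F_t\big)$ is asymptotically $N\big(0,(V^s)^{-1}Q^s\Gamma_t^s (Q^s)^\top(V^s)^{-1}\big)$, so the first term is $O_p(N^{-1/2})$ with asymptotic variance $\tfrac{1}{N}\Lambda_i(s)^\top\big((H^s)^{-1}\big)^\top(V^s)^{-1}Q^s\Gamma_t^s(Q^s)^\top(V^s)^{-1}(H^s)^{-1}\Lambda_i(s)$. By Theorem \ref{thm_loading}, $\sqrt{Th}\big(\hat{\Lambda}_i(s)-(H^s)^{-1}\Lambda_i(s)\big)$ is asymptotically $N\big(0,((Q^s)^\top)^{-1}\Phi_i^s(Q^s)^{-1}\big)$, so the second term is $O_p((Th)^{-1/2})$ with asymptotic variance $\tfrac{1}{Th}F_t^\top H^s((Q^s)^\top)^{-1}\Phi_i^s(Q^s)^{-1}(H^s)^\top F_t$. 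The cross term is the product of an $O_p(N^{-1/2})$ and an $O_p((Th)^{-1/2})$ factor, hence $o_p$ of either leading term and negligible. I would also carry the kernel bias contributions hidden inside each error expansion — which scale like $O_p(h)$ in both cases after projection — and confirm they are dominated precisely once $Nh^2\to0$ and $Th^3\to0$, the two bias-controlling restrictions in the statement; the remaining non-leading error terms vanish exactly as in the proofs of Theorems \ref{thm_factor} and \ref{thm_loading}.

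The key analytical step is to simplify the two variance expressions to the stated $V_{it,s}$ and $W_{it,s}$ and to establish asymptotic independence of the two leading terms. For the variances I would use the limit identities linking the rotation matrices to the population objects — in particular $Q^s=(V^s)^{1/2}(\Upsilon^s)^\top\Sigma_{\Lambda(s)}^{-1/2}$ together with the limit of $H^s=\tfrac{\Lambda(s)^\top\Lambda(s)}{N}\tfrac{(F^s)^\top\hat{F}^s}{T(s)}(V^s_r)^{-1}$ — to reduce $\big((H^s)^{-1}\big)^\top(V^s)^{-1}Q^s\Gamma_t^s(Q^s)^\top(V^s)^{-1}(H^s)^{-1}$ to $\Sigma_{\Lambda(s)}^{-1}\Gamma_t^s\Sigma_{\Lambda(s)}^{-1}$ and $H^s((Q^s)^\top)^{-1}\Phi_i^s(Q^s)^{-1}(H^s)^\top$ to $\Sigma_{F|s}^{-1}\Phi_i^s\Sigma_{F|s}^{-1}$, yielding exactly $V_{it,s}$ and $W_{it,s}$ (the same rotation algebra as in Bai (2003)). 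For independence I would observe that the first leading term is driven by the cross-sectional CLT $\tfrac{1}{\sqrt N}\sum_i\Lambda_i(s)e_{it}$ of Assumption \ref{ass_mom}.3 at the single time $t$, whereas the second is driven by the kernel-weighted time-series CLT $\tfrac{\sqrt{Th}}{T(s)}\sum_t K_s(S_t)F_te_{it}$ of Assumption \ref{ass_mom}.4 for the single unit $i$; their cross-covariance involves only the shared entry $(i,t)$ and is of smaller order than either variance, giving joint normality with the covariance matrix block-diagonal. Summing the two asymptotically independent normals produces variance $\tfrac{1}{N}V_{it,s}+\tfrac{1}{Th}W_{it,s}$, and dividing by its square root gives the $N(0,1)$ limit. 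Because the statement is self-normalized, it holds whether $N/(Th)$ tends to $0$, $\infty$, or a finite constant, so neither rate need dominate.

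The main obstacle I anticipate is exactly this asymptotic independence / joint normality step: the factor and loading errors converge at the two genuinely different rates $\sqrt N$ and $\sqrt{Th}$, so no single normalization applies, and one must argue that the cross-sectional aggregation at fixed $t$ and the time aggregation at fixed $i$ decouple in the limit. Closely tied to this is tracking the kernel bias through both expansions simultaneously and verifying uniform negligibility under the combined condition $Nh^2\to0$, $Th^3\to0$; it is the interaction of the bandwidth $h$ with the two distinct rates that makes this rate requirement more delicate than the constant-loading common-component result of Bai (2003).
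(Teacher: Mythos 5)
Your proposal follows essentially the same route as the paper's proof: the identical three-term decomposition in which the rotation $H^s$ cancels inside the common component, the same rotation algebra collapsing the two asymptotic variances to $\Sigma_{\Lambda(s)}^{-1}\Gamma_t^s\Sigma_{\Lambda(s)}^{-1}$ and $\Sigma_{F|s}^{-1}\Phi_i^s\Sigma_{F|s}^{-1}$ (the paper uses the exact identity $((H^s)^{-1})^\top(V_r^s)^{-1}\tfrac{(\hat F^s)^\top F^s}{T(s)}=\bigl(\Lambda(s)^\top\Lambda(s)/N\bigr)^{-1}$ together with $H^s(H^s)^\top\to\Sigma_{F|s}^{-1}$ from its Lemma on rotation limits), and the same asymptotic-independence argument pairing the cross-sectional CLT at fixed $t$ with the kernel-weighted time-series CLT at fixed $i$. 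The one place the paper is more careful than your sketch is the final normalization: the weights $\delta_{NT,h}/\sqrt N$ and $\delta_{NT,h}/\sqrt{Th}$ are bounded but need not converge at all (your trichotomy $N/(Th)\to 0$, $\infty$, or a constant does not exhaust the possibilities), and the paper closes this by invoking the almost sure representation theorem, as in Bai (2003), to obtain the self-normalized $N(0,1)$ limit for arbitrary bounded nonrandom weight sequences.
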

		
		The estimated common components converge to an asymptotic normal distribution that combines the results of the previous two theorems. 
		Note that the systematic part is identified without a rotation. The variance in the asymptotic distribution is determined by two components, factor and loading distributions. The first component $V_{it,s}$ is from the asymptotic distribution of estimated factors $\hat{F}^s_tK^{-1/2}_s(S_t)$. The second component $W_{it,s}$ comes from the asymptotic distribution of estimated loadings $\hat{\Lambda}_i(s)$. It depends on the relationship between $N$ and $Th$, which one dominates.  If $N$ and $Th$ have similar scales, both $V_{it,s}$, and $W_{it,s}$ play a role in the variance of the asymptotic distribution. However, if $Th/N=o(1)$, the asymptotic distribution of the loadings dominates (which allows us to drop the additional assumption on the times $t$), while if $N/(Th)=o(1)$, the factor distribution dominates.

		Lemma \ref{lemma:estimator-thm2-4} in the Internet Appendix provides consistent estimators for the asymptotic covariance matrices in Theorems \ref{thm_factor} to \ref{thm_common}. Our feasible estimators allow for a sparse correlation and autocorrelation structure for the residual terms.

		\section{Generalized Correlation Test for Change in Loadings} \label{test}

		We derive a test statistic to detect if and for which states loadings are different. This is distinct from a ``global'' test if loadings change at some time without guidance when the change actually happens. We provide an answer to the relevant economic question for which specific times and states loadings are different.\footnote{ \cite{Su2017} provide a ``global'' test for the constancy of factor loadings over time. Similar arguments could be applied to our framework. The proof would go through with some modification about controlling the bias from using data in other states. In a similar spirit, \cite{kong2018systematic} provide a global test in a high-frequency setup.} \cite{pelger2019} illustrates in an empirical study that it is important to identify when and how time-varying loadings change as this can provide valuable economic insights.  
		
		Since the loadings can be estimated up to some rotation matrix, the test statistic needs to be invariant to invertible linear transformations. A candidate measure is the total generalized correlation, which measures how close the two vector spaces spanned by loading vectors in two states are. The total generalized correlation ranges from 0 to the number of factors $r$. 0 means that two spaces are orthogonal, while $r$ represents that two spaces are the same. 
		
		It is worth noting that it is insufficient to test if the loading vectors for individual factors are different in different states. For example, it is possible that the first factor explains less variation in another state and switches with the second factor. In this case, measuring the correlation of the loadings of the first factor for different state outcomes would indicate a change in loadings, while the factor structure itself actually does not change. Thus, it is crucial to study the harder problem if the span of all factor loadings changes with the state.
		
		We consider the two state outcomes $s_1$ and $s_2$ with the corresponding loadings $\Lambda(s_1)$ and $\Lambda(s_2)$. Note that our state process $S_t$ still has a continuous support. Testing the constancy of factor loadings for the particular state realizations $s_1$ and $s_2$ is equivalent to testing whether there exists an invertible matrix $G$, such that $\Lambda(s_1) = \Lambda(s_2)G$. We use a slightly modified estimator for the loadings and estimators that will simplify the notation. Instead of normalizing the projected factors to be orthonormal, we apply this normalization to the loadings. This means we use $\bar{\Lambda}(s_l) = \hat{\Lambda}(s_l) (V^{s_l}_r)^{-1/2}$ and $\bar{F}^{s_l} = \hat{F}^{s_l} (V^{s_l}_r)^{1/2}$. All results are valid for the modified estimator under the same assumptions as for the estimators introduced in the previous section. $\bar{F}^{s_l}$ has the same asymptotic distribution as $\hat{F}^{s_l}$ except that it replaces the asymptotic variance by that of $\hat{F}^{s_l}$ multiplied by $(V^{s_l}_r)^{-1/2}$ on the left and and on the right. Similarly, $\bar{\Lambda}(s_l)$ has the same asymptotic distribution as $\hat{\Lambda}(s_l)$, except that the asymptotic variance is multiplied $(V^{s_l}_r)^{1/2}$ on the left and on the right.\footnote{In order to study $\bar{\Lambda}(s_1)$ and $\bar{\Lambda}(s_2)$, we need to redefine $H^{s_l} = \frac{(F^{s_l})^\T  F^{s_l}}{T(s_l)} \frac{\Lambda(s_l)^\T \bar{\Lambda}(s_l)}{N} (V^{s_l}_r)^{-1}$, $H^{s_l} \xrightarrow{p} (Q^{s_l})^{-1}$, where $Q^{s_l} = V^{s_l} (\Upsilon^{s_l})^\T \Sigma_{F|s_l}^{-1/2}$, and $V^{s_l}$ \footnote{$V^{s_l}$ here is the same as the $V^{s_l}$ in theorem \ref{ass_factor}, since the eigenvalues of $\Sigma_{\Lambda(s)}^{1/2}\Sigma_{F|s} \Sigma_{\Lambda(s)}^{1/2}$ are the same as those of $\Sigma_{F|s_l}^{1/2}\Sigma_{\Lambda(s_l)}\Sigma_{F|s_l}^{1/2}$} are eigenvalues of $\Sigma_{F|s_l}^{1/2}\Sigma_{\Lambda(s_l)}\Sigma_{F|s_l}^{1/2}$, $\Upsilon^{s_l}$ is the corresponding eigenvector matrix such that $(\Upsilon^{s_l})^{T}\Upsilon^{s_l} = I_r$. Under the same assumption as Theorem \ref{ass_loading}, the asymptotic distribution of $\bar{\Lambda}_i(s_l)$ is $\sqrt{Th}\left(\bar{\Lambda}_i(s_l) - (H^{s_l})^\T  \Lambda_i(s_l) \right) \xrightarrow{d} N(0, (V^{s_l})^{-1} Q^{s_l} \Phi^s_i (Q^{s_l})^{T} (V^{s_l})^{-1})$, where $\Phi^s_i$ is the same as the $\Phi^s_i$ in Theorem \ref{ass_loading}. Let $\lambda_{li} = \Lambda_i(s_l) $ and $v_{li} = (H^{s_l})^\T  \frac{\sqrt{Th}}{T(s_l)} \left(\frac{1}{N} \sum_{k = 1}^N \lambda_{lk} \lambda^\T_{lk} \right) \left((F^{s_l})^\T  e_i^{s_l} \right)$, then we have $\sqrt{Th}\left(\bar{\Lambda}_i(s_l) - (H^{s_l})^\T  \Lambda_i(s_l) \right)  = v_{li} + o_p(1)$ under the same assumptions as in Theorem \ref{ass_loading}.}

		The generalized correlation test statistic requires some mildly stronger assumptions.
		
		\begin{assumption} \label{doublesum} Moments and Central Limit Theorem: There exists an $M\leq \infty$, such that $\forall$ $k$ and $i$, for any $l, l' = 1, 2$
			
			\begin{enumerate}
				\item Double-sum factors, loadings and projected idiosyncratic components in two states: \label{veeminueeelam}\\
				$E\norm{\frac{Th}{NT^2(s_l)} \sum_{i=1}^N \sum_{k=1}^N (F^{s_l})^\T  e_k^{s_l} \lambda^\T_{l'i}  \sum_{t=1}^T [e^{s_l}_{it}e^{s_l}_{kt} - \+E(e^{s_l}_{it}e^{s_l}_{kt})]  }^2 \leq M.$ 
				\item Double-sum loadings and projected idiosyncratic components in two states:  \label{lameeminuslam} 
				
				$E\norm{\frac{\sqrt{Th}}{NT(s_l)} \sum_{i=1}^N \sum_{k=1}^N \lambda_{li} \lambda^\T_{l'i}  \sum_{t=1}^T [e^{s_l}_{it}e^{s_l}_{kt} - \+E(e^{s_l}_{it}e^{s_l}_{kt})]  }^2 \leq M.$ 
				\item Projected factors, loadings and idiosyncratic components in two states:  \label{vlamfelam}
				
				$ \+E \norm{\frac{\sqrt{Th}}{\sqrt{N} T(s_l)}\sum_{i=1}^N (F^{s_l})^\T  e_i^{s_l}\lambda^\T_{l'i} }^2 \leq M.$
				\item Define $\mu_{l,l'} = \frac{1}{N T(s_l)}\sum_{i=1}^N \sum_{j=1}^{T} K_{s_l}(S_t) F_t e_{it}\lambda^\T_{l'i} $ and let $B = \begin{bmatrix}
				\tvec\left( \mu_{1,1} \right) \\ \tvec\left( \mu_{1,2} \right) \\ \tvec\left( \mu_{2,1} \right) \\ \tvec\left( \mu_{2,2} \right) 
				\end{bmatrix} $. It holds\footnote{Here we denote by $\tvec(.)$ the vectorization operator. Inevitably the matrix $\Sigma_{B,B}$ is singular due to the symmetric nature of the covariance and a proper formulation uses vech operators and elimination matrices.}
				\begin{eqnarray}
				\sqrt{NTh} (B-0) \xrightarrow{d} N(0, \Sigma_{B, B}).
				\end{eqnarray} 
				\label{CLT}
			\end{enumerate}
		\end{assumption}
		\vspace{-0.8cm}
		
		Assumption \ref{doublesum} is closely related to Assumption \ref{ass_mom}, but Assumption \ref{doublesum} involves loadings in two states, $s_l$ and $s_{l'}$. Assumptions \ref{doublesum}.\ref{veeminueeelam} and \ref{doublesum}.\ref{lameeminuslam} are similar to Assumption \ref{ass_mom}.5, but these two assumptions are averaged twice in the cross-sectional dimension. 
		Assumption \ref{doublesum}.\ref{vlamfelam} generalizes Assumption \ref{ass_mom}.2 and it is identical to Assumption \ref{ass_mom}.2 when $l = l'$. 
		Assumption \ref{doublesum}.\ref{CLT} is a joint central limit theorem for the cross-sectional and time-series average of the residuals. 
		
		In order to simplify notation, we denote $\Lambda_l = \Lambda(s_l)$ and $\bar{\Lambda}_l = \bar{\Lambda}(s_l)$. We define the estimated total generalized correlation as 
		\begin{align*} \hat{\rho} = \tr \left\lbrace \left( \frac{1}{N}\bar{\Lambda}^\T_1\bar{\Lambda}_1 \right)^{-1} \left( \frac{1}{N}\bar{\Lambda}^\T_1\bar{\Lambda}_2 \right) \left( \frac{1}{N}\bar{\Lambda}^\T_2\bar{\Lambda}_2 \right)^{-1} \left( \frac{1}{N}\bar{\Lambda}^\T_2\bar{\Lambda}_1 \right) \right\rbrace 
		\end{align*}
		and the population counterpart as $\rho = \tr \left\lbrace \left( \frac{1}{N}\Lambda^\T_1 \Lambda_1 \right)^{-1} \left( \frac{1}{N}\Lambda^\T_1 \Lambda_2 \right) \left( \frac{1}{N}\Lambda^\T_2 \Lambda_2 \right)^{-1} \left( \frac{1}{N} \Lambda^\T_2 \Lambda_1 \right) \right\rbrace$.

		Testing if $\Lambda_1$ is some linear rotation of $\Lambda_2$ is equivalent to 
		\begin{eqnarray*}
			\mathcal{H}_0&:& \Lambda_1 =  \Lambda_2 G \text{ for some full rank square matrix } G \\
			\mathcal{H}_1&:& \Lambda_1 \neq  \Lambda_2 G \text{ for any square matrix } G \in \mathbbm R^{r \times r}. 
		\end{eqnarray*}
		If we multiple any full rank square matrix $G$ to the right of $\bar{\Lambda}_1$ or $\bar{\Lambda}_2$, $\hat{\rho}$ does not change and the same holds for $\rho$. Note that if $\Lambda_1 =  \Lambda_2 G$, then it holds $\rho = \tr(I_r) = r$, where $I_r\in \mathbb{R}^{r \times r}$ is an identity matrix.  Hence, it is equivalent to test\footnote{Here we use the following result: \begin{lemma} \label{lemma_leqr}
				Let $\Lambda_1 \in \mathbb{R}^{N \times k_1}$ and $\Lambda_2 \in \mathbb{R}^{N \times k_2}$. Assume $N \geq \max(k_1, k_2)$, $rank(\Lambda_1) = k_1$ and $rank(\Lambda_2) = k_2$, let $k = \min(k_1, k_2)$, then we have $\rho \leq k$.
		\end{lemma}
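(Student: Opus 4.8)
The plan is to exploit the fact that the $1/N$ factors in the definition of $\rho$ cancel, so that $\rho$ is intrinsically a function of the column spaces of $\Lambda_1$ and $\Lambda_2$ only. Under the full-rank hypotheses $rank(\Lambda_1)=k_1$ and $rank(\Lambda_2)=k_2$, the Gram matrices $\Lambda_1^\top\Lambda_1$ and $\Lambda_2^\top\Lambda_2$ are invertible, so the orthogonal projections $P_1=\Lambda_1(\Lambda_1^\top\Lambda_1)^{-1}\Lambda_1^\top$ and $P_2=\Lambda_2(\Lambda_2^\top\Lambda_2)^{-1}\Lambda_2^\top$ onto the two column spaces are well defined. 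The first and key step is to rewrite, via the cyclic invariance of the trace, $\rho = \text{trace}(P_1P_2)$. This identity makes transparent that $\rho$ equals the sum of the squared cosines of the principal angles between the two subspaces, of which there are exactly $k=\min\{k_1,k_2\}$.

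Next I would use idempotency and symmetry of the projections, $P_1=P_1^2=P_1^\top$, together with the cyclic property, to write $\text{trace}(P_1P_2)=\text{trace}(P_1P_2P_1)$, and set $M:=P_1P_2P_1$. Writing $M=P_1P_2P_2P_1=(P_2P_1)^\top(P_2P_1)$ shows that $M$ is symmetric positive semidefinite, so its eigenvalues are real and nonnegative.

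The two facts that close the argument are an eigenvalue bound and a rank bound on $M$. For the eigenvalue bound, for any $x$ one has $x^\top M x=(P_1x)^\top P_2(P_1x)\le (P_1x)^\top(P_1x)=x^\top P_1x$ since $P_2\preceq I$, so $M\preceq P_1$ in the Loewner order and every eigenvalue of $M$ lies in $[0,1]$. For the rank bound, $rank(M)\le rank(P_2P_1)\le\min\{rank(P_1),rank(P_2)\}=\min\{k_1,k_2\}=k$, so $M$ has at most $k$ nonzero eigenvalues. Combining these, $\rho=\text{trace}(M)=\sum_i\mu_i(M)\le k\cdot 1=k$, which is the claim. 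I expect no serious obstacle here: the only genuinely substantive step is the algebraic reduction to $\text{trace}(P_1P_2)$, after which the bound is a routine consequence of the spectral properties of orthogonal projection matrices.
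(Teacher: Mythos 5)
Your proof is correct. Both you and the paper begin with the same essential reduction: after observing that the $1/N$ factors cancel, $\rho$ is the trace of a product of quantities depending only on the column spaces, and in fact equals $\mathrm{trace}(P_1P_2)$ for the two orthogonal projections (the paper reaches the same object by orthonormalizing the loadings, $\tilde{\Lambda}_l = \frac{1}{\sqrt{N}}\Lambda_l(\frac{1}{N}\Lambda_l^\top\Lambda_l)^{-1/2}$, so that $P_l = \tilde{\Lambda}_l\tilde{\Lambda}_l^\top$). Where you diverge is in how you close the argument. The paper assumes without loss of generality $k_1 \geq k_2$, eigendecomposes $\tilde{\Lambda}_2\tilde{\Lambda}_2^\top = Q\Sigma Q^\top$, writes $\tilde{\Lambda}_1 = QA$ with $A$ having orthonormal columns, and then does an explicit coordinate computation $\rho = \sum_j\bigl(\sum_i \alpha_{ij}^2\bigr)\sigma_{jj} \leq \sum_j \sigma_{jj} = k_2$, where the bound $\sum_i\alpha_{ij}^2 \leq 1$ is obtained by completing $A$ to a full orthonormal basis. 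You instead form $M = P_1P_2P_1 = (P_2P_1)^\top(P_2P_1)$ and invoke two spectral facts: $M \preceq P_1 \preceq I$ gives eigenvalues in $[0,1]$, and $\mathrm{rank}(M) \leq \min(k_1,k_2)$ caps the number of nonzero eigenvalues, so the trace is at most $k$. Your route is more conceptual and symmetric (no WLOG needed, no coordinates, and it makes the principal-angles interpretation transparent), while the paper's is more elementary and self-contained, relying only on the spectral decomposition of a symmetric matrix and direct inner-product algebra. Both arguments are sound and yield exactly the stated bound.
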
} 
		\begin{align*}
			\mathcal{H}_0: \rho = r \qquad
			\mathcal{H}_1: \rho < r.
		\end{align*}
		

		
		Theorem \ref{thm_rho} provides the inferential statistic for a one-sided test of the null hypothesis $\rho=r$.\footnote{Note that our generalized correlation test statistic would also work when the dimensions of the loading spaces change with the state.}
		
		\begin{theorem} \label{thm_rho}
			Under Assumptions \ref{Ass:Ident}-\ref{doublesum} and under the null hypothesis $\rho = r$, if $Nh \rightarrow \infty$, $Th \rightarrow \infty$, $\sqrt{N}/(Th) \rightarrow 0$, $\sqrt{Th}/N \rightarrow 0$,  $Nh^2 \rightarrow 0$ and  $NTh^3 \rightarrow 0$, then 
			\begin{eqnarray} \label{gen_corr_thm}
			\sqrt{NTh} (\hat{\rho} - r - \xi^\T  b) \xrightarrow{d} N(0, \xi^\T  D \Sigma_{B, B} D^\T \xi).
			\end{eqnarray}
			The matrix $D$ and a consistent plug-in estimator $\hat D$ are given in the Internet Appendix. $\xi^\T  b$ is a bias correction term. 
			
			Let $\Sigma_{e_T} = \+E[e^\T  e/N]$ and $\Sigma_{e_N} = \+E[e e^\T /T]$. Assume there are only finitely many non-zero elements in each row of $\Sigma_{e_T}$ and $\Sigma_{e_N}$  and we know the sets $\Omega_{e_T}$ and $\Omega_{e_N}$ of nonzero indices. 
			A consistent estimator of the bias correction term is \\
			$\hat b = \begin{bmatrix}
			\tvec\left( \hat x_{1, 1} + \hat y_{1, 1} \right) \\ \tvec\left( \hat x_{1, 2} + \hat y_{1, 2} \right) \\ \tvec\left(\hat x_{2, 1} + \hat y_{2, 1}\right) \\ \tvec\left(  \hat x_{2, 2} + \hat y_{2, 2} \right)
			\end{bmatrix}$ and $\hat \xi = \begin{bmatrix}
			\tvec\left( -(\hat G_1^{-1} \hat G_2 \hat G_4^{-1} \hat G_3 \hat G_1^{-1})^{\top} \right) \\ \tvec\left( \hat G_1^{-1} \hat G_2 \hat G_4^{-1} \right) \\ \tvec\left( \hat G_4^{-1} \hat G_3 \hat G_1^{-1} \right) \\ \tvec\left( -(\hat G_4^{-1} \hat G_3 \hat G_1^{-1} \hat G_2 \hat G_4^{-1} )^{\top} \right) \end{bmatrix},$   \\
			where $\hat G_1= \frac{1}{N}\bar \Lambda^{\top}_1 \bar \Lambda_1, \hat G_2= \frac{1}{N} \bar \Lambda_1^{\top} \bar \Lambda_2, \hat G_3= \frac{1}{N} \bar \Lambda_2^{\top} \bar \Lambda_1, \hat G_4= \frac{1}{N}\bar \Lambda_2^{\top} \bar \Lambda_2$ and $\hat x_{l,l'} = \hat x_{l,l',l,l'} + \hat x_{l,l,l,l'} + \hat x_{l,l',l',l'}$ and $\hat y_{l,l'}=\hat z_{l,l'}+\hat z_{l',l}$ with components
			\begin{align*}
			\hat x_{u,v,p,w} =& (\bar{V}_r^{s_p})^{-1}  \left(\frac{1}{N}\sum_{i=1}^N   \bar \lambda_{pi} \bar \Lambda^\T_{ui}  \right) \left( \frac{1}{N T(s_u) T(s_{v})} \sum_{(t_1, t_2) \in  \Omega_{e_T}} \bar{F}^{s_u}_{t_1} (\bar{F}^{s_{v}}_{t_2})^\T  (\bar{e}_{t_1}^{s_u})^\T  \bar{e}_{t_2}^{s_{v}} \right) \\
			& \left(\frac{1}{N}\sum_{i=1}^N   \bar \lambda_{vi} \bar \Lambda^\T_{wi}  \right)(\bar{V}_r^{s_{w}})^{-1},\\
			\hat z_{p,w} =& (\bar{V}_r^{s_p})^{-1} \frac{1}{N^2 T(s_p)} \sum_{(i,j) \in \Omega_{e_N}} \bar \lambda_{pi}  (\bar{\und{e}}^{s_p}_i)^\T \bar{\und{e}}^{s_p}_j  \bar \Lambda^\T_{wj},
			\end{align*}
			where $\bar e^{s_l}_t = X^{s_l}_t - \bar \Lambda(s_l)\bar F^{s_l}_t$ and $\bar{\und{e}}^{s_l}_i = X^{s_l}_i - \bar F^{s_l} \bar \lambda_{li}$. The feasible test statistic
			\begin{align*}
			\sqrt{NTh} \frac{(\hat{\rho} - r - \hat \xi^\T  \hat b)}{\sqrt{\hat \xi^\T  \hat D \hat \Sigma_{B, B} \hat D^\T \hat \xi} } 
			\end{align*}
			is asymptotically $N(0,1)$ distributed under $\mathcal{H}_0$ and diverges to $-\infty$ with probability 1 under $\mathcal{H}_1$.
		\end{theorem}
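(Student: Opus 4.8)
The plan is to treat $\hat{\rho}$ as a smooth functional of the four Gram matrices $\hat G_1,\hat G_2,\hat G_3,\hat G_4$ and to expand it around their population limits under $\mathcal{H}_0$. Since $\hat{\rho}$ is invariant under $\bar{\Lambda}_l\mapsto\bar{\Lambda}_l M_l$ for any invertible $M_l$, I would first combine this invariance with the loading expansion $\bar{\Lambda}_i(s_l)=(H^{s_l})^\T\Lambda_i(s_l)+(Th)^{-1/2}v_{li}+o_p((Th)^{-1/2})$ from Theorem \ref{thm_loading} (and the preceding footnote) to reduce to the case in which the population rotated loadings of the two states share one column space, so that the population product satisfies $G_1^{-1}G_2G_4^{-1}G_3=I_r$ and $\rho=r$. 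Writing $\hat G_j=G_j+\delta_j$, I would split the loading error into a unit-local idiosyncratic channel and a common factor-estimation channel; the latter is a cross-sectional average and is precisely the object $B$ appearing in Assumption \ref{doublesum}.\ref{CLT}.

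The decisive structural fact is that, by Lemma \ref{lemma_leqr}, $\rho\le r$ with equality under $\mathcal{H}_0$, so $r$ is a maximizer and the linear (first-order) term of the expansion of $\hat{\rho}-r$ in the $\delta_j$ must vanish. I would verify this cancellation explicitly: inserting the gradient $\xi$ (the vectorized derivatives of $trace(G_1^{-1}G_2G_4^{-1}G_3)$, which at the null reduce to $\pm I_r$ blocks) into the linear perturbations, the trace telescopes to zero. This degeneracy is the source of the super-consistent $\sqrt{NTh}$ rate (rather than the loading rate $\sqrt{Th}$) and mirrors the unit-root ``corner case.'' Consequently the leading contribution to $\hat{\rho}-r$ is second order in the loading error, i.e. quadratic in $e$, and it is through the factor-estimation part of these quadratic pieces that $B$ enters.

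This quadratic contribution splits into a deterministic mean of order $(Th)^{-1}$ and a mean-zero fluctuation. Because the mean does not vanish relative to the fluctuation at the target rate, it must be removed; I would write it as $\xi^\T b$, where $b$ collects the expectations of the diagonal/self pieces of the relevant double sums of residual cross-products. Matching the time-series and cross-sectional dependence of $e$ to the sparse index sets $\Omega_{e_T},\Omega_{e_N}$, these expectations are exactly what $\hat x_{u,v,p,w}$ and $\hat z_{p,w}$ estimate, yielding $\hat b\xrightarrow{P}b$. In the same step I would establish $\hat\xi\xrightarrow{P}\xi$, $\hat D\xrightarrow{P}D$ and $\hat\Sigma_{B,B}\xrightarrow{P}\Sigma_{B,B}$ from Theorem \ref{thm_loading}, Assumption \ref{doublesum} and the consistency of $\bar{\Lambda}_l,\bar{F}^{s_l},\bar V_r^{s_l}$.

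After subtracting $\xi^\T b$, a delta-method argument would reduce the remainder to $\xi^\T D B$ plus negligible terms, and Assumption \ref{doublesum}.\ref{CLT} then delivers $\sqrt{NTh}(\hat{\rho}-r-\xi^\T b)\xrightarrow{d}N(0,\xi^\T D\Sigma_{B,B}D^\T\xi)$, with Slutsky and the consistent plug-ins producing the studentized statistic. The rate conditions serve to discard remainders: $NTh^3\to0$ ensures the $O_p(h)$ Lipschitz/kernel bias of $\bar{\Lambda}_l$ contributes only $O_p(\sqrt{NTh^3})=o_p(1)$ to the scaled statistic, $Nh^2\to0$ controls the factor-side kernel bias, the twin conditions $\sqrt{N}/(Th)\to0$ and $\sqrt{Th}/N\to0$ balance the factor and loading channels so that neither higher-order block dominates, and $Nh,Th\to\infty$ secure the underlying consistency. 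Under $\mathcal{H}_1$, Lemma \ref{lemma_leqr} gives $\rho<r$ strictly; since $\hat{\rho}\xrightarrow{P}\rho$, $\hat\xi^\T\hat b=o_p(1)$ and the estimated variance stays bounded, the statistic behaves like $\sqrt{NTh}(\rho-r)/O_p(1)\to-\infty$. I expect the main obstacle to be precisely this null degeneracy: one must show the first-order term cancels exactly, then identify which ordinarily subleading (factor-error, double-sum) terms become leading, isolate their non-standard $(Th)^{-1}$ bias, and prove the bias-corrected quadratic functional is asymptotically a linear functional of $B$ at the $\sqrt{NTh}$ rate, with all cross, kernel-bias and plug-in-estimation remainders shown to be $o_p((NTh)^{-1/2})$.
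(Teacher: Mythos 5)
Your proposal takes a genuinely different route from the paper, and the difference is not cosmetic. The paper's proof never claims, verifies, or uses any first-order degeneracy. It expands each estimated loading as $\bar\Lambda_i(s_l)=(H^{s_l})^\T\lambda_{li}+u_{li1}+u_{li2}+u_{li3}+u_{li4}+O_p(h)$, works out every cross term in $\frac{1}{N}\bar\Lambda_l^\T\bar\Lambda_{l'}$, sorts them into the finite-sample Gram matrix $G_{l,l'}$, the bias blocks $x_{l,l'}+y_{l,l'}$, and the mean-zero blocks $w_{u_l,\lambda_{l'},1},\dots,w_{\lambda_l,u_{l'},2}$ (which are exactly linear images of $B$, with coefficient matrix $D$), then invokes Assumption \ref{doublesum}.\ref{CLT} to obtain a joint CLT for the four Gram matrices at rate $\sqrt{NTh}$, and finishes with an ordinary \emph{first-order} delta method using the non-vanishing gradient $\xi$ of $trace(G_1^{-1}G_2G_4^{-1}G_3)$ (whose blocks at the null are $-G_1^{-1},G_2^{-1},G_3^{-1},-G_4^{-1}$, not $\pm I_r$). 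In the paper, super-consistency does not come from a vanishing gradient: it comes from the exact identity $f(G_1,\dots,G_4)=r$ at the finite-sample Gram matrices under $\mathcal{H}_0$ — so the $O_p(1/\sqrt{N})$ cross-sectional fluctuation of the loadings, which produces the $\sqrt{N}$ rate of Lemma \ref{lemma:rho} under the alternative, never enters — combined with the fact that the Gram estimation errors are cross-sectional averages of $O_p(1/\sqrt{Th})$ loading errors and are therefore $O_p(1/\sqrt{NTh})$ once the bias is removed.

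The genuine gap is that your proposal is internally inconsistent, and the inconsistency sits exactly at its centerpiece. You assert (a) that inserting $\xi$ into the linear perturbations makes the trace telescope to zero, and (b) that after bias correction the delta method "reduces the remainder to $\xi^\T D B$," so that Assumption \ref{doublesum}.\ref{CLT} delivers the stated normal limit. These cannot both hold: the $w$-blocks, i.e.\ $DB$, \emph{are} (the leading part of) those linear perturbations, so (a) forces $\xi^\T DB=0$. Concretely, under $\mathcal{H}_0$ one has $\lambda_{1i}=G^\T\lambda_{2i}$ and hence $\mu_{l,1}=\mu_{l,2}G$ exactly, and your own telescoping then annihilates $\xi^\T DB$ realization by realization (in the one-factor case this is two lines: the linearization is $-G_1^{-1}\delta_1+G_2^{-1}\delta_2+G_3^{-1}\delta_3-G_4^{-1}\delta_4$, which vanishes on any perturbation of the form $\delta_{l,l'}=\frac{1}{N}(U_l^\T\Lambda_{l'}H^{s_{l'}}+H^{s_l\T}\Lambda_l^\T U_{l'})$), so the variance $\xi^\T D\Sigma_{B,B}D^\T\xi$ your argument ends with would be zero. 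If the cancellation is taken seriously, the leading fluctuation of $\hat\rho-r$ after bias removal is quadratic in the idiosyncratic errors, lives at scale $o_p(1/\sqrt{NTh})$, and its limit is a second-order, U-statistic-type object whose law is not covered by Assumption \ref{doublesum}.\ref{CLT} and whose variance is not $\xi^\T D\Sigma_{B,B}D^\T\xi$; your sketch develops none of this. So either you drop (a) and reproduce the paper's non-degenerate first-order argument, or you keep (a) and then owe a genuinely second-order analysis with a different rate and variance — as written, the proposal establishes neither the display \eqref{gen_corr_thm} nor the asymptotic standard normality of the feasible statistic. (Your cancellation observation does raise a substantive question about degeneracy of the theorem's variance under the null — the paper's proof never checks whether $\xi^\T D$ annihilates the support of $B$ — but you cannot both assert the cancellation and borrow the paper's first-order conclusion.)
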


		There are two surprising results. First, the test statistic for the null hypothesis $\rho=r$ is super-consistent, i.e. converges at the higher rate $\sqrt{NTh}$. Under the assumption $\rho <r$, a simple delta-method argument applied to the trace shows that the convergence rate is slower at $\sqrt{N}$ as stated in Lemma \ref{lemma:rho} in the Internet Appendix.
		Second, the special case of $\rho=r$ requires a bias correction in contrast to $\rho <r$ where the bias can be ignored. The bias arises because the higher rate of convergence does not allow us to ignore certain higher-order terms in the asymptotic expansion of $\hat \rho$.  
		Note that by construction (see Lemma \ref{lemma_leqr}), we have $\hat \rho  \leq r$.  Theorem \ref{thm_rho} shows that under the null hypothesis, $\hat{\rho}$ is distributed asymptotically normal around $r + \xi^\T  b $ which implies that the bias term is negative. 
		
		Let $h = 1/T^{1/2 + \varepsilon}$. All rate conditions in Theorem \ref{thm_rho} can be reduced to $N/T^{1/2 + \varepsilon} \rightarrow \infty$, $N/T^{1-2 \varepsilon}\rightarrow 0$, $N/T^{1/2 + 3\varepsilon} \rightarrow 0$. If $0 < \varepsilon < 1/6$ (equivalent to $1/T^{1/2} < h < 1/T^{3/4}$), there exists combinations of $N$ and $T$ that satisfy the rate conditions. For example, if $\varepsilon = 1/8$, then the rate conditions can be reduced to $N/T^{3/4} \rightarrow 0$ and $T^{5/8}/N \rightarrow 0$. The rate conditions are more stringent than Theorem \ref{thm_consistency}-\ref{thm_common}, because $\hat \rho$ converges at the faster rate $\sqrt{NTh}$. The strong condition $\sqrt{NTh} \cdot h \rightarrow 0$, equivalent to $NTh^3 \rightarrow 0$ is needed to neglect the bias term.
Simulations suggest that the distribution result is still a good approximation even if the rate conditions are not satisfied.

		In order to obtain a consistent estimator of the bias term, we assume that the residual covariance matrix is sparse similar to \cite{fan2013large}. Our sparsity assumption imposes that there are only finitely many nonzero elements in each row of the covariance matrix of the errors $\Sigma_{e_N} = \+E[e e^\T /T]$ and similarly in the autocovariance matrix $\Sigma_{e_T} = \+E[e^\T  e/N]$. For simplicity, we assume that we know the set of nonzero indices. This assumption could be relaxed, and we could estimate the nonzero elements with a thresholding approach similar to \cite{fan2013large} under additional assumptions.

		\section{Simulation} \label{simulation}
		We study the finite sample properties of our estimators with Monte-Carlo simulations. First, we show that the simulated distributions of the estimated loadings, factors, and common components converge to the asymptotic distributions. Second, we show that the functional form of the loadings as a function of the state can be reliably recovered. Third, we verify the good size and power properties of our test statistic. Fourth, we test the performance of our estimator for a misspecified state process. The Internet Appendix contains a validation study for selecting the number of factors and bandwidth and shows the good performance of our estimator relative to existing estimation approaches based on structural breaks or local PCA estimation.
		
		\subsection{Asymptotic Distribution Theory of Estimators} \label{asy_est}
		In the baseline model, we generate data from a one-factor model $X_{it} = \Lambda_i(S_t) F_t + e_{it}$, where $F_t \sim N(0,1)$. The state process is an Ornstein-Uhlenbeck (OU) process which is a mean-reverting process with stationary distribution. In more detail, we simulate the state process as $S_t = \theta (\mu - S_t) d_t + \sigma dW_t$, where $\theta =1$, $\mu = 0.2$, and $\sigma =1$ and its stationary distribution has mean $\mu = 0.2$ and variance $\sigma^2/(2\theta) = 1/2$. The OU process is popular for modeling stochastic volatility in financial data, which is aligned with the volatility index as state process in our empirical applications. The loadings are cubic functions of the state process,   $\Lambda_i(S_t) = \Lambda_{0i} + \frac{1}{2} S_t \Lambda_{1i} + \frac{1}{4} S_t^2 \Lambda_{2i} + \frac{1}{8} S_t^3 \Lambda_{3i}$, where  $\Lambda_{0i}, \Lambda_{1i}, \Lambda_{2i}, \Lambda_{3i} \sim N(0,1)$. 
		The functional form of the loading function is motivated by our empirical findings. The loadings as a function of volatility change non-linearly, and the changes are larger for state values that deviate more from its mean. The coefficients in the cubic, quadratic and linear terms are chosen to guarantee that loadings will not be completely dominated by the state realizations with the largest absolute values, which is again in line with our empirical findings.
		We generate three different idiosyncratic processes: (1) i.i.d. $e_{it} \sim \text{IID } N(0,1)$, (2) heteroskedastic $e_{it} = \sigma_i v_{it}, \sigma_i \sim \text{IID } U(0.5, 1.5), v_{it} \sim N(0,1)$ and (3) cross sectional dependent $e_t \sim N(0, \Sigma_e)$, $\Sigma_e = (c_{ij})_{i,j = 1, 2, \cdots, N}$ with $c_{ij} = 0.5^{|i-j|}$.

%
%

		\begin{figure}[t!]
			\tcapfig{Histograms of Estimated Loadings}
			\centering
			\begin{subfigure}{.25\textwidth}
				\centering
				\includegraphics[width=1\linewidth]{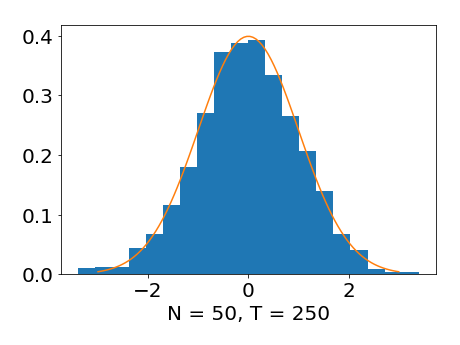}
			\end{subfigure}%
			\begin{subfigure}{.25\textwidth}
				\centering
				\includegraphics[width=1\linewidth]{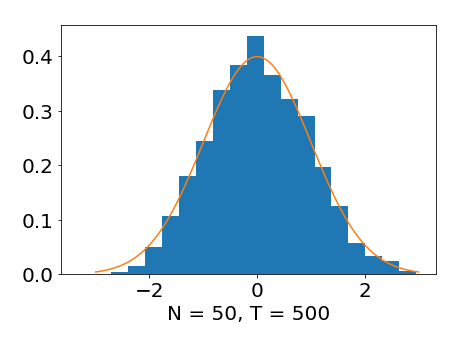}
			\end{subfigure}%
			\begin{subfigure}{.25\textwidth}
				\centering
				\includegraphics[width=1\linewidth]{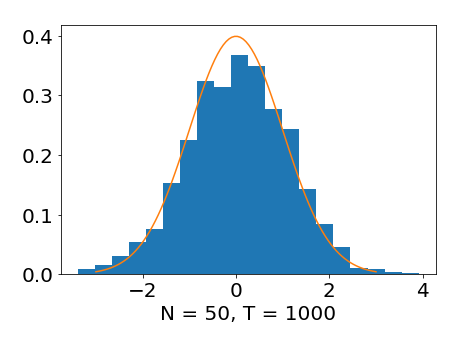}
			\end{subfigure}
			\begin{subfigure}{.25\textwidth}
				\centering
				\includegraphics[width=1\linewidth]{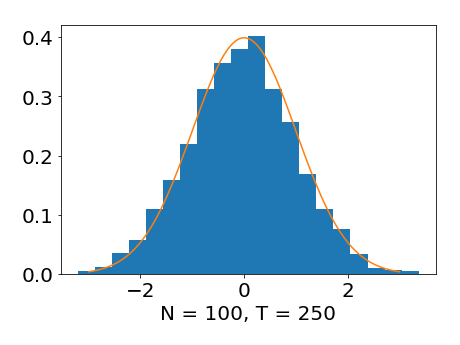}
			\end{subfigure}%
			\begin{subfigure}{.25\textwidth}
				\centering
				\includegraphics[width=1\linewidth]{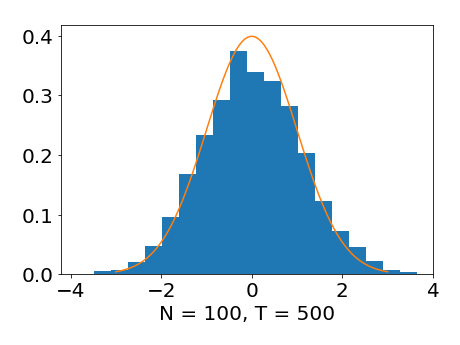}
			\end{subfigure}%
			\begin{subfigure}{.25\textwidth}
				\centering
				\includegraphics[width=1\linewidth]{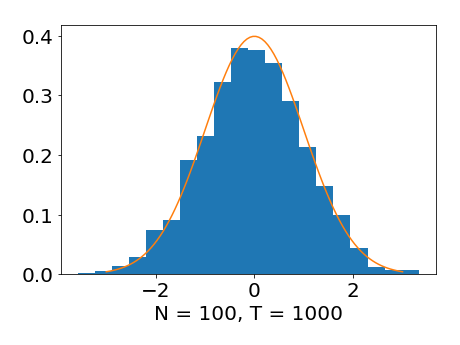}
			\end{subfigure}
			\begin{subfigure}{.25\textwidth}
				\centering
				\includegraphics[width=1\linewidth]{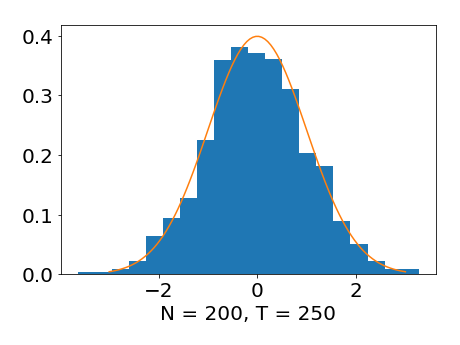}
			\end{subfigure}%
			\begin{subfigure}{.25\textwidth}
				\centering
				\includegraphics[width=1\linewidth]{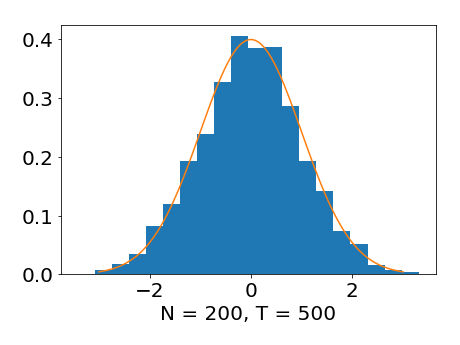}
			\end{subfigure}%
			\begin{subfigure}{.25\textwidth}
				\centering
				\includegraphics[width=1\linewidth]{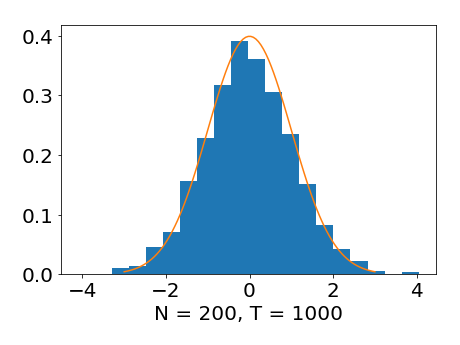}
			\end{subfigure}
			\bnotefig{Histograms of estimated loadings ($N=50, 100, 200$; $T=250, 500, 1000$; $h=0.3$) for i.i.d. errors. The normal density function is superimposed on the histograms. The histograms are based on 2,000 Monte Carlo simulations.}
			\label{hist_loading}
		\end{figure}

		Figure \ref{hist_loading} shows histograms of the standardized estimated loadings for different $N$ and $T$. The estimates are centered and standardized using consistent estimates of the theoretical mean and standard deviation. We set the state outcome to $s = 0.5$ and bandwidth to $h=0.3$ to balance the bias and variance inherited in the nonparametric method.\footnote{The squared error of the nonparametric method is $O_p\left( \max \left(  \frac{1}{N}, \frac{1}{Th}, h^2\right)  \right)$. In order for the results in Section 5 and 6 to hold, we have $Nh \rightarrow \infty$, $Th \rightarrow \infty$, $Nh^2 \rightarrow 0$ and $Th^3 \rightarrow 0$. This gives us a guideline for selecting the bandwidth $h$ in the simulation and empirical studies and suggests range of 0.1 to 0.5. The Internet Appendix collects the results for various bandwidth and shows that our findings are robust to the choice of $h$.}
	The Internet Appendix collects the results for the estimated factors and common components and includes the cases of heteroskedastic and cross-sectionally dependent errors. The results are virtually identical, and we find that the simulated data is very well approximated by the theoretically implied normal distribution. Thus, our results are robust to heteroskedastic or cross-sectionally dependent errors.

		
		We can estimate well the functional form of the loadings depending on the state. Figure \ref{loading_state} compares the estimated functional form with the true functional form of the loadings of four randomly selected cross-section units. The factor model is estimated in every possible state between -3 and 3. The estimated functional form of the loadings matches the true functional form very well.\footnote{In Figure \ref{fig:loadingSW} in the Internet Appendix, we compare the estimation results of our state-varying factor model with the local time-varying model of \cite{Su2017} under the same simulation setup. Our state-varying factor model can recover the correct functional form while the local window estimator fails.}

		\begin{figure}[t!]
			\tcapfig{Estimated Functional Form of Loading versus the State Variable}
			\centering
			\begin{subfigure}{.3\textwidth}
				\centering
				\includegraphics[width=1\linewidth]{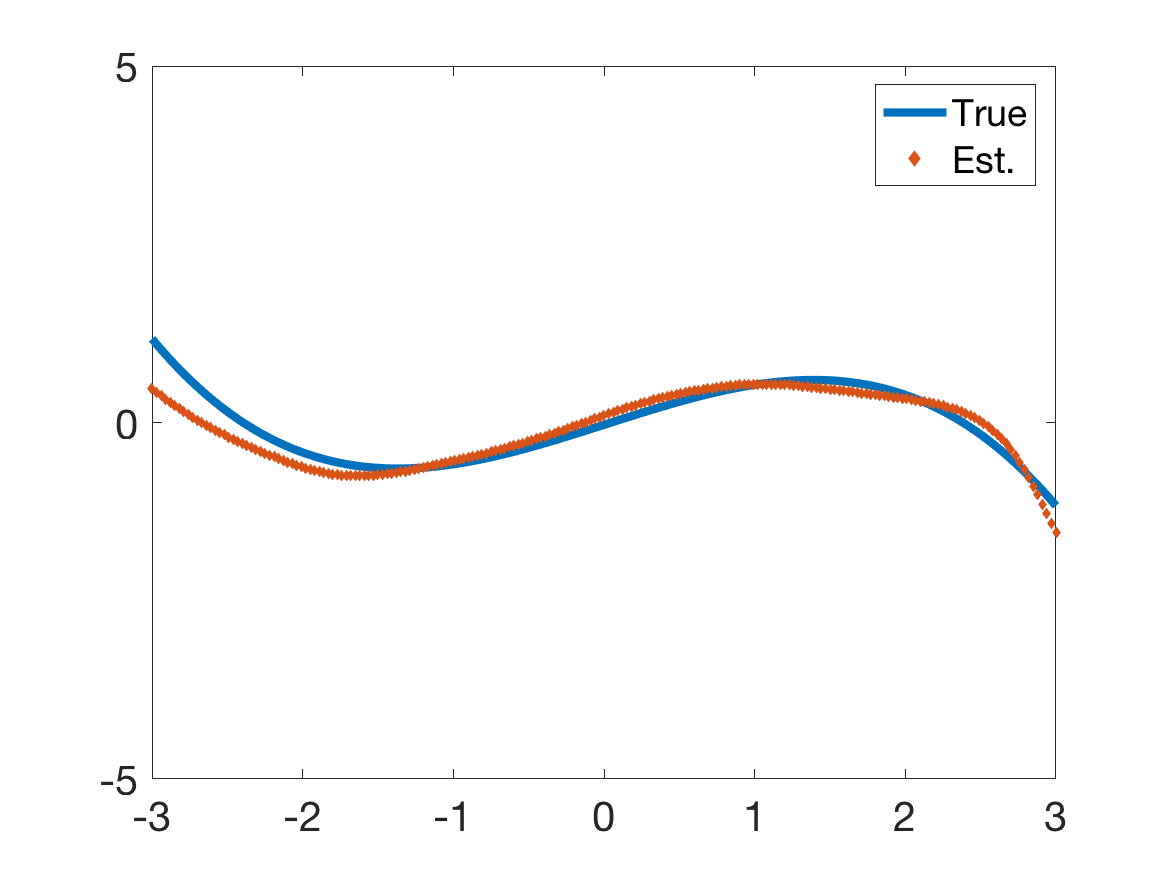}
			\end{subfigure}%
			\begin{subfigure}{.3\textwidth}
				\centering
				\includegraphics[width=1\linewidth]{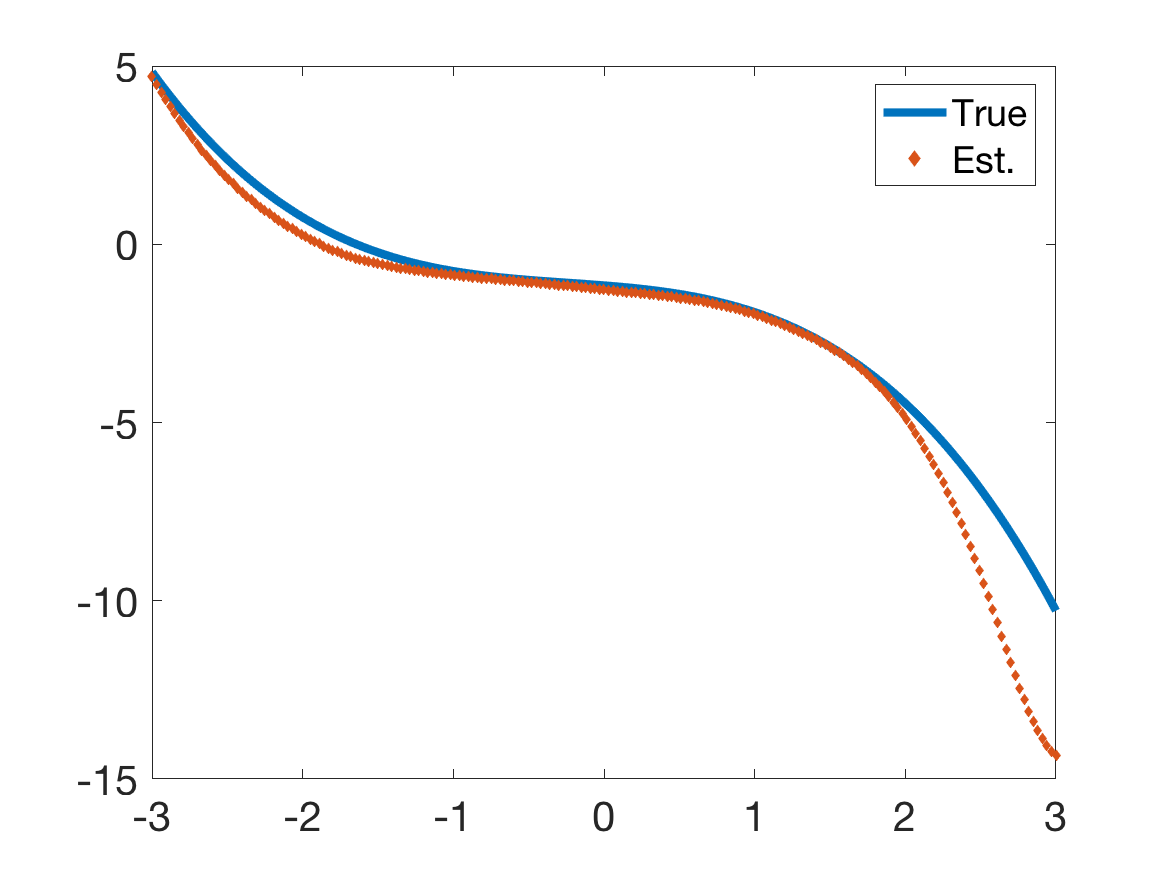}
			\end{subfigure}
			\begin{subfigure}{.3\textwidth}
				\centering
				\includegraphics[width=1\linewidth]{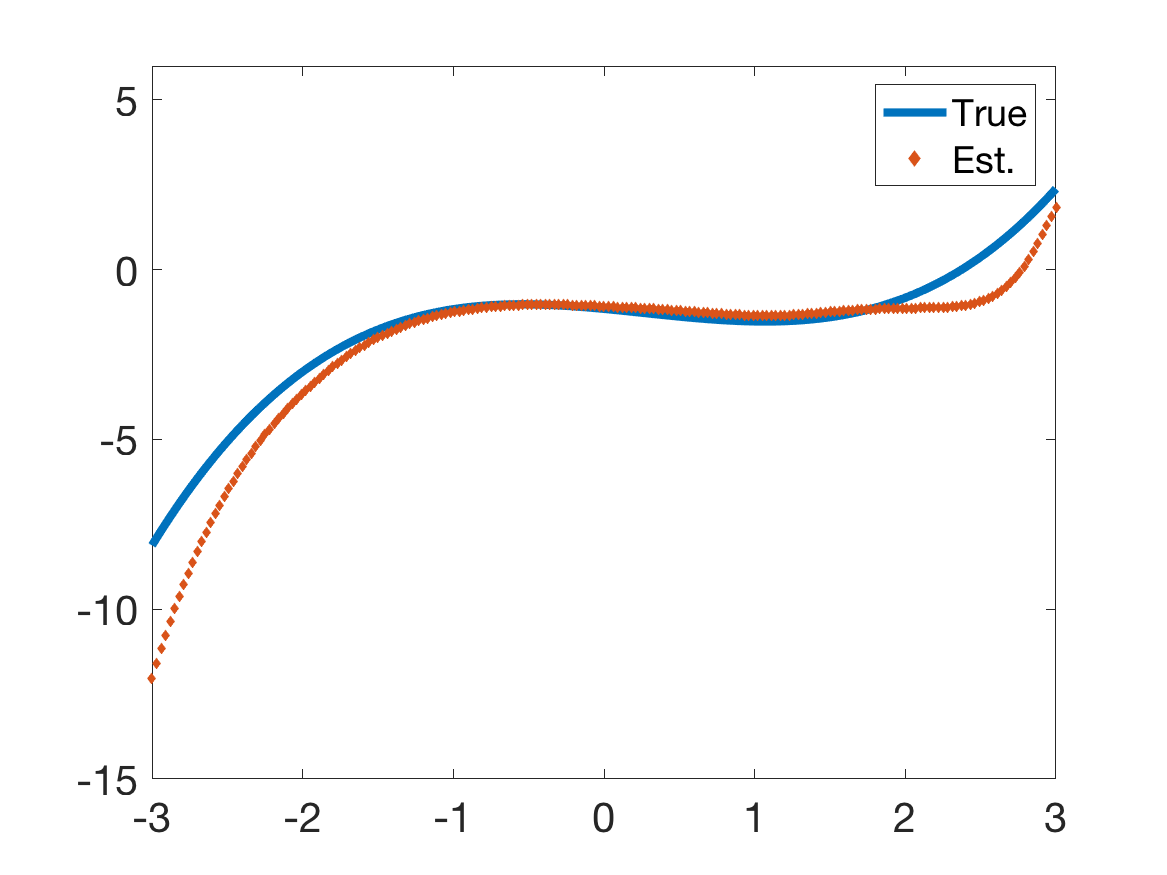}
			\end{subfigure}%
			\begin{subfigure}{.3\textwidth}
				\centering
				\includegraphics[width=1\linewidth]{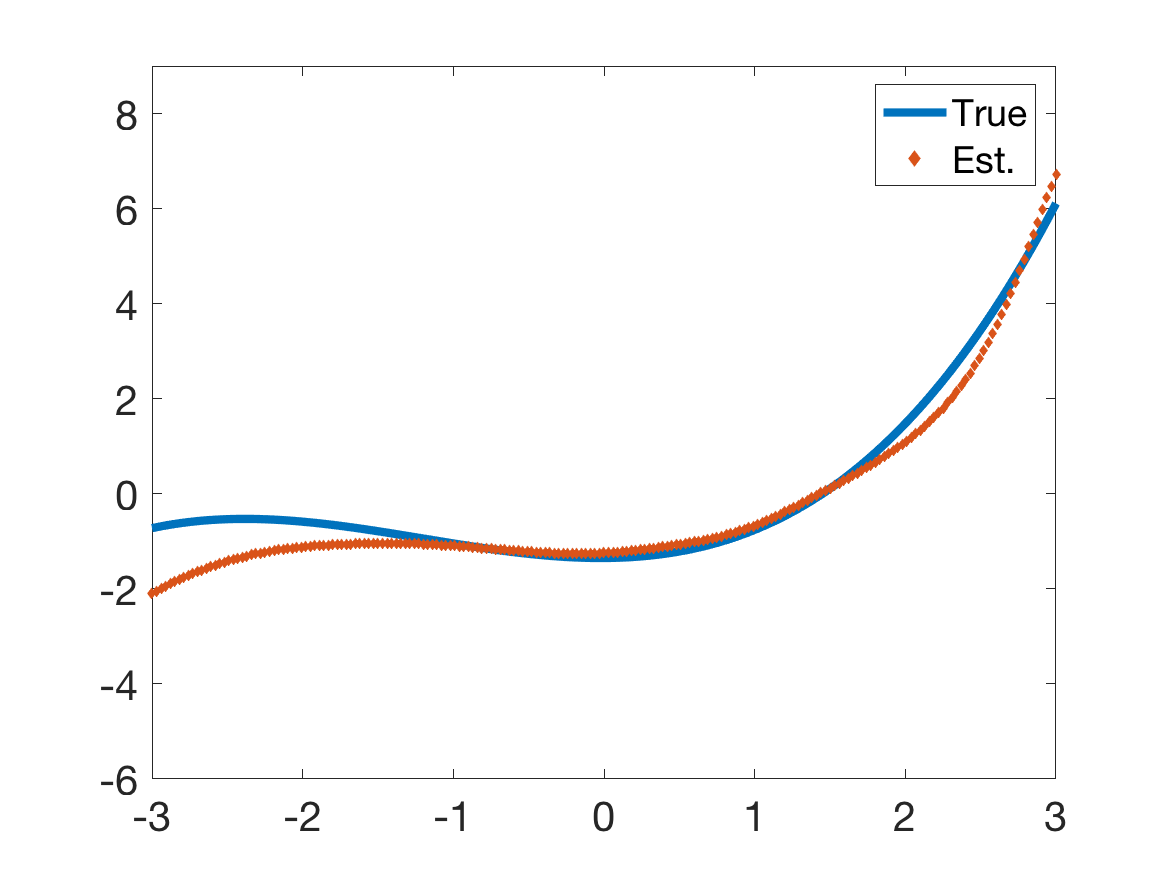}
			\end{subfigure}
			\bnotefig{Estimated functional form of loading versus the state variable ($N = 100, T = 500, h = 0.5$). The true functional form is superimposed on the estimated function.}
			\label{loading_state}
		\end{figure}

		\subsection{Generalized Correlation Test}
		
		The data generating process is similar to the data generating process in Section \ref{asy_est}, except that we use constant loadings to generate the data for all states. Figure \ref{hist_rho} is generated by keeping the realization of the single factor, loadings, and state fixed and simulating the i.i.d. errors. The histograms for heteroscedastic errors and cross-sectionally dependent errors are in the Internet Appendix. Without loss of generality, we select the two-state outcomes $s_1 = 0.4$ and $s_2 = 0.6$ to calculate the generalized correlation $\hat{\rho}$. We compare the empirical distribution of $\hat{\rho}$ standardized by the consistent estimators of its theoretical mean and deviation with a standard normal distribution. Figure \ref{hist_rho} shows that the standardized generalized correlation is very well approximated by a normal distribution.\footnote{Although we correct for the bias, the empirical distribution is still slightly shifted to the left. Our bias correction term only takes into account the dominant bias term. We believe that correcting for higher-order bias terms can correct the remaining minor bias. Note that the remaining minor bias makes our test statistic more conservative, i.e., we are more likely to reject the null hypothesis.}

		\begin{figure}[t!]
			\tcapfig{Histograms of Generalized Correlation Test Statistic}
			\centering
			\begin{subfigure}{.25\textwidth}
				\centering
				\includegraphics[width=1\linewidth]{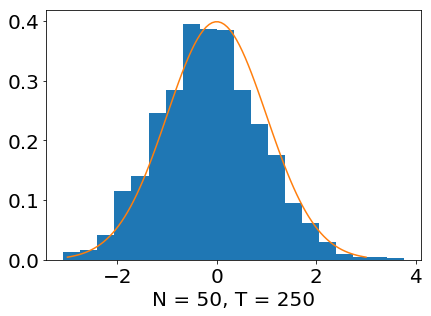}
			\end{subfigure}%
			\begin{subfigure}{.25\textwidth}
				\centering
				\includegraphics[width=1\linewidth]{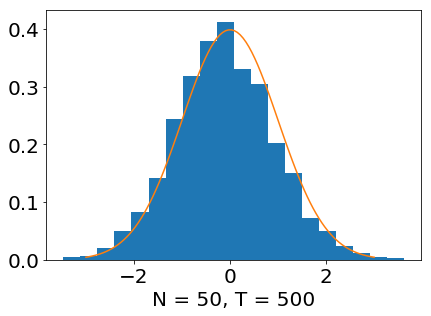}
			\end{subfigure}%
			\begin{subfigure}{.25\textwidth}
				\centering
				\includegraphics[width=1\linewidth]{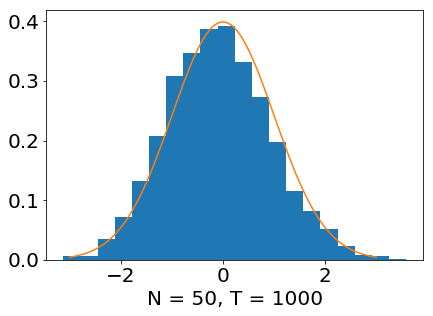}
			\end{subfigure}
			\begin{subfigure}{.25\textwidth}
				\centering
				\includegraphics[width=1\linewidth]{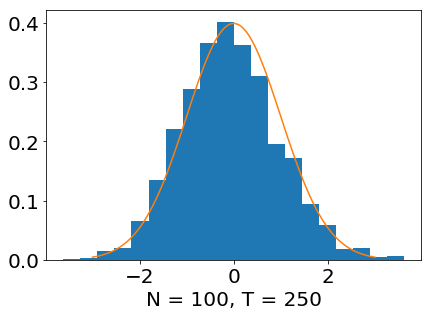}
			\end{subfigure}%
			\begin{subfigure}{.25\textwidth}
				\centering
				\includegraphics[width=1\linewidth]{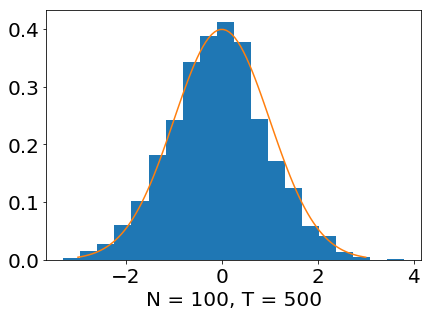}
			\end{subfigure}%
			\begin{subfigure}{.25\textwidth}
				\centering
				\includegraphics[width=1\linewidth]{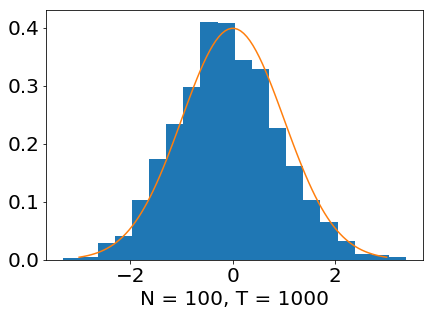}
			\end{subfigure}
			\begin{subfigure}{.25\textwidth}
				\centering
				\includegraphics[width=1\linewidth]{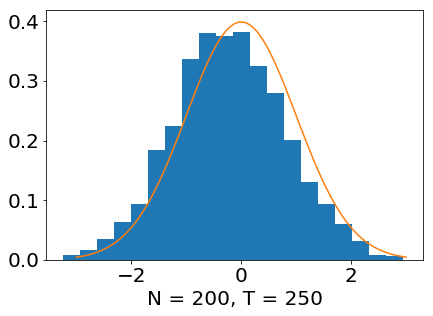}
			\end{subfigure}%
			\begin{subfigure}{.25\textwidth}
				\centering
				\includegraphics[width=1\linewidth]{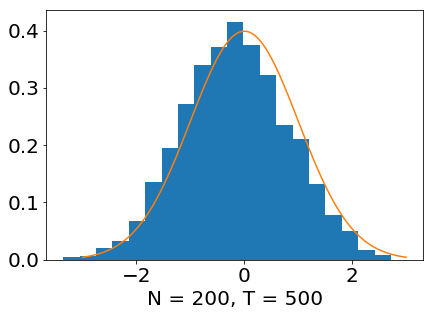}
			\end{subfigure}%
			\begin{subfigure}{.25\textwidth}
				\centering
				\includegraphics[width=1\linewidth]{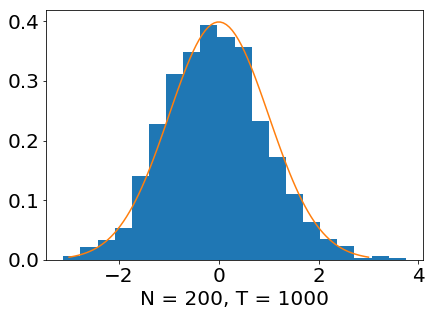}
			\end{subfigure}
			\bnotefig{Histograms of estimated standardized and bias-corrected generalized correlation test statistic. ($N = 50, 100, 200; T = 250, 500, 1000; h = 0.3$). The normal density function is superimposed on the histograms. Each subplot is based on 2,000 Monte-Carlo simulations.}
			\label{hist_rho}
		\end{figure}

			\begin{figure}[t!]
			\tcapfig{Generalized Correlation Test of Estimated Loadings in Any Paired States}
			\centering
			\begin{subfigure}{.4\textwidth}
				\centering
				\includegraphics[width=0.9\linewidth]{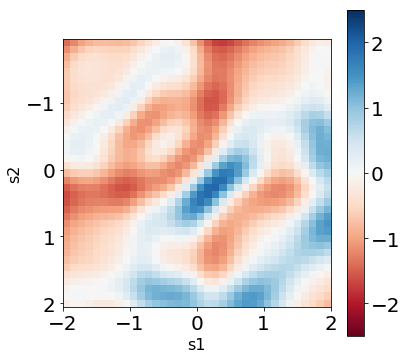}
				\caption{t-value}
			\end{subfigure}%
			\begin{subfigure}{.43\textwidth}
				\centering
				\includegraphics[width=0.9\linewidth]{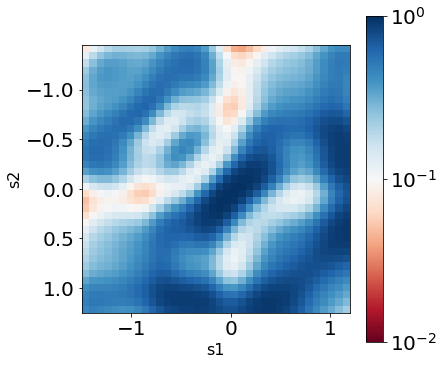}
				\caption{p-value}
			\end{subfigure}
			\bnotefig{Generalized correlation test of estimated loadings in any paired states ($N = 100$, $T = 500$ and $h = 0.3$. $\mathcal{H}_0$: there exists a full rank matrix $G$, $\Lambda_{s_2} = \Lambda_{s_1} G$, $\mathcal{H}_1$: for any full rank matrix $H$, $\Lambda_{s_2} \neq \Lambda_{s_1} G$). $x$-axis and $y$-axis are both state values.  The value at point ($s_1, s_2$) in figure (a) represents the normalized generalized correlation (t-value) of $\bar{\Lambda}_{s_1}$ and $\bar{\Lambda}_{s_2}$. The value at point ($s_1, s_2$) in Figure (b) represents the p-value corresponding to the t-value in Figure (a).}
			\label{gen_corr_heatmap}
		\end{figure}

		Figure \ref{gen_corr_heatmap} shows the p-values and t-values of any paired state outcomes when the loadings are constant. From the subplot of p-values, we would conclude that the loadings are constant for almost all paired loadings. As we face a multiple testing problem, there exists, as expected, a small number of false rejections for a given significance level.

		Simulations show the good power properties of the generalized correlation test. We assume the true underlying model has constant loadings in one interval and state-varying loadings in another interval. More specifically, data is generated such that loadings are constant in $s\in [0.3, 1]$ and linearly or quadratically depend on the state in $s \in [0, 0.3)$.  Table \ref{power} shows the acceptance probability for the null hypothesis for a 95\% significance level. When loadings in two states are different, the power of the generalized correlation test increases as $N$ or $T$ increases. The power is close to $1$ when the data size is at least $(N, T)=(100, 500)$.


		\begin{table}[t!]
			\tcaptab{Proportion of the Generalized Correlation Accepting the Null Hypothesis}
			{\small\begin{tabular}{@{}p{0.14\textwidth}*{6}{L{\dimexpr0.15\textwidth-2\tabcolsep\relax}}@{}}
				\toprule
				& \multicolumn{3}{c}{Loading linear in state} &
				\multicolumn{3}{c}{Loading quadratic in state} \\
				\cmidrule(r{4pt}){2-4} \cmidrule(l){5-7}
				$(N, T) \backslash (s_1, s_2)$  & (0.1, 0.9)  & (0.25, 0.75) & (0.90, 0.95) & (0.1, 0.9)  & (0.25, 0.75) & (0.90, 0.95) \\
				\midrule
				(50, 250) & 0.328  & 0.424 & 0.942 & 0.128 & 0.220 & 0.918 \\
				(50, 500) & 0.014  & 0.044 & 0.938 & 0.000  & 0.002 & 0.932 \\
				(50, 1000) & 0.002  & 0.000 & 0.952 & 0.000  & 0.000 & 0.970 \\
				(100, 250) &  0.084 & 0.124 & 0.948 & 0.022  & 0.024 & 0.934 \\
				(100, 500) & 0.000  & 0.002 & 0.954 & 0.002  & 0.002 & 0.938 \\
				(100, 1000) & 0.000  & 0.000 & 0.954 & 0.000  & 0.000 & 0.954 \\
				(200, 250) & 0.014  & 0.014 & 0.942 & 0.002  & 0.000 & 0.940 \\
				(200, 500) & 0.000  & 0.000 & 0.934 & 0.000  & 0.000 &  0.964 \\
				(200, 1000) & 0.000  & 0.000 & 0.946 & 0.000  & 0.000 &  0.946 \\
				\bottomrule
			\end{tabular}}
			\bnotetab{This table shows the proportion of Standardized Generalized Correlation $\rho$ of $\hat{\Lambda}(s_1)$ and $\hat{\Lambda}(s_2)$ that is within $[-1.65, +\infty)$. The state follows $S \sim U(0,1)$. Loading linear in state: $\Lambda(s) = \Lambda_1 + \mathbbm{1}(s \leq 0.3) (s-0.3) \Lambda_2$; Loadings quadratic in state: $\Lambda(s) = \Lambda_1 + \mathbbm{1}(s \leq 0.3) (s-0.3) \Lambda_2 + \mathbbm{1}(s \leq 0.3) (s-0.3)^2 \Lambda_3$). Among the loadings in the three pairs of states that we compare, the true loadings are different in $(s_1, s_2) = (0.1, 0.9)$ and $(s_1, s_2) =(0.25, 0.75)$, but the same in $(s_1, s_2) = (0.9, 0.95)$. Since the estimated loadings are smooth in $s$, more trials are rejected when we test loadings in $(s_1, s_2) = (0.1, 0.9)$ compared to $(s_1, s_2) =(0.25, 0.75)$. When testing loadings in $(s_1, s_2) = (0.9, 0.95)$, nearly 95\% of the trials are accepted, which is aligned with the asymptotic distribution under the null hypothesis. We run 500 Monte-Carlo simulations. The generalized correlation $\hat{\rho}$ of estimated loadings in two states is standardized by estimates of the mean and bias correction term and the standard deviation from Theorem \ref{gen_corr_thm}.}
			\label{power}
		\end{table}
		
		\subsection{Variation Explained by Factor Models}\label{sec:sim-est-err}
		We compare the amount of explained variation for the constant and state-varying factor model under misspecification. We consider a state observed with noise and a missing relevant state. Our simulation results confirm that our estimator is robust to noise in the observed state process and provides a more parsimonious model than a constant loading model as long as we condition on a process that is related to the underlying state process.
		
		We compare the in- and out-of-sample explained variation of $X$ and the common component for different factor estimators. The explained variation labeled as $R_X^2$ and $R_C^2$ is defined as
		\begin{align*}
		R^2_X= 1 - \frac{\sum_{i=1}^N \sum_{t=1}^T (X_{it} -\hat C_{it})^2}{\sum_{i=1}^N \sum_{t=1}^T X_{it}^2}  \qquad     R^2_C= 1 - \frac{\sum_{i=1}^N \sum_{t=1}^T (C_{it} -\hat C_{it})^2}{\sum_{i=1}^N \sum_{t=1}^T C_{it}^2} ,
		\end{align*}
		where the common component is either based on a state-varying or constant loading model. The out-of-sample common component projects the loading functions estimated in-sample on the out-of-sample observations, i.e. $\hat{C}_t = \hat{\Lambda}_t^\T  \Lp \hat{\Lambda}_t^\T \hat{\Lambda}_t \Rp^{-1} \hat{\Lambda}_t^\T X_t$. For the out-of-sample results we use the first $T/2$ time-series observations to estimate the loadings and test the model out-of-sample on the second $T/2$ observations.

		Table \ref{tab:sim-rsq} reports the explained variation for a noisy state process. This model can also be interpreted as a missing state process. Even when the noise has the same magnitude as the state process, the explained variation is very close to the case of using the true state. In contrast, the constant loading model explains one third less of the variation with the same number of factors.
		
		\begin{table}[t!]
			\tcaptab{In-Sample and Out-of-Sample $R^2$ Conditioned on Noisy State Process}
			\centering
			{\small
			\begin{tabular}{lcccc}
				\toprule
				& \multicolumn{2}{c}{In-sample} &
				\multicolumn{2}{c}{Out-of-sample} \\
				\cmidrule(r{4pt}){2-3} \cmidrule(l){4-5}
				& $R_X^2$  & $R_C^2$  & $R_X^2$ & $R_C^2$ \\
				\midrule
				State-Varying Model: $G = S$         & 0.677 & 0.987 & 0.643 & 0.982 \\
				State-Varying Model: $G = S + 0.1 v$ & 0.676 & 0.985 & 0.642 & 0.980 \\
				State-Varying Model: $G = S + 0.5 v$ & 0.653 & 0.952 & 0.611 & 0.934 \\
				State-Varying Model: $G = S + v$     & 0.616 & 0.894 & 0.559 & 0.856 \\
				State-Varying Model: $G = S + 2 v$   & 0.569 & 0.818 & 0.490 & 0.749 \\
				Constant Loading Model & 0.442 & 0.650 & 0.427 & 0.653 \\
				\bottomrule
			\end{tabular}
			}
			\bnotetab{In-sample and out-of-sample $R^2$ conditioned on noisy state process $G$ (true loadings depend only on $S_t$: $\Lambda_i(S_t) = \Lambda_{0i} + \frac{1}{2} S_t \Lambda_{1i} + \frac{1}{4} S_t^2 \Lambda_{2i} + \frac{1}{8} S_t^3 \Lambda_{3i}$): $N = 100$, $T = 500$; $S_t$ follows the same distribution as in Section \ref{asy_est}; $v_t \sim N(0,1)$ is the noise in the state process.}
		\end{table}\label{tab:sim-rsq}
		
				Figure \ref{fig:SimEV} considers missing a systematically relevant state in a non-linear state function. In this case, both the state-varying and constant loading model are misspecified. The loading function is modeled as $\Lambda_i(S_{1,t},S_{2,t}) = \exp(\Lambda_{1,i} S_{1,t} + \Lambda_{2,i} S_{2,t} )$ where the two independent states follow the same distribution as in Section \ref{asy_est}. We condition only on one state process and calculate the explained variation in $X$ and the common component out-of-sample. As before, we estimate the model on the first half of the data to obtain the out-of-sample fit on the second half. Conditioning on both state variables should yield a model with a high explained variation with only one factor. Both the state-dependent model with one state and the constant loading model do not explain a large amount of variation with one factor. 
		However, the state-varying loading model with two factors can almost perfectly explain the variation out-of-sample. In contrast, the constant loading model requires eight factors to capture the same amount of variation as a misspecified state-dependent model with two factors. This is exactly the same pattern that we observe in our empirical analysis of stock returns.

		\begin{figure}[t!]
			\tcapfig{Out-of-Sample $R^2$ for $X$ and $C$ for Misspecified Model}
			\centering
			\includegraphics[width=0.9\linewidth]{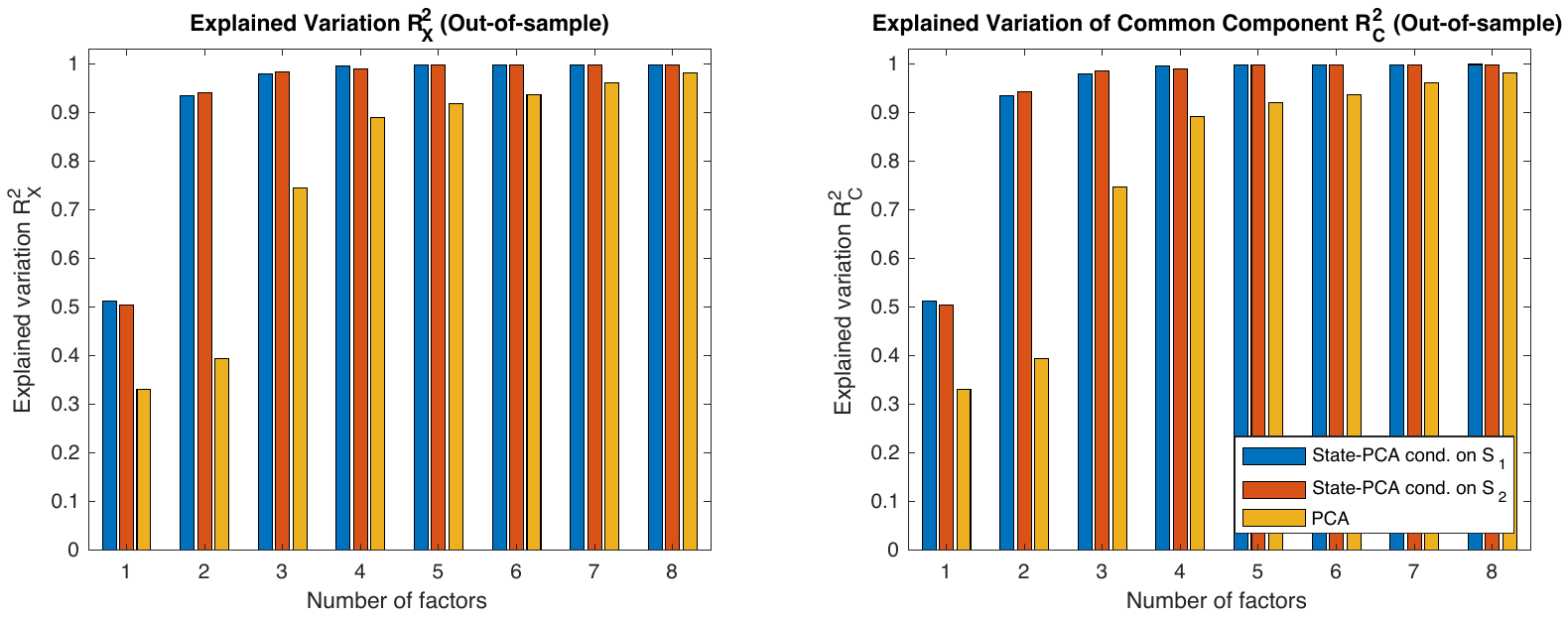}
			\bnotefig{Out-of-Sample $R^2$ for $X$ and $C$ for misspecified model (true loadings depend on two states and we condition only on one (State-PCA) or use constant loadings (PCA)): $N = 100$, $T = 500$; State-varying model: $\Lambda_i(S_{1,t},S_{2,t}) = \exp(\Lambda_{1,i} S_{1,t} + \Lambda_{2,i} S_{2,t} )$, where $S_1$ and $S_2$ are independent and follow OU processes. $\Lambda_{1,i},\Lambda_{2,i} \overset{iid}{\sim}N(0,1).$}
			\label{fig:SimEV}
		\end{figure}

		\section{Empirical Application to U.S. Treasury Securities} \label{empirical}
		
		We apply our approach to the treasury securities market and show that the factor structure changes with economic conditions. The U.S. Treasury yield structure has been shown to be well explained by the first three principal components.\footnote{See \cite{diebold2005modeling}, \cite{Diebold2006}, \cite{Cochrane2005} and \cite{Cochrane2009}.} The first three PCA factors are commonly referred to as the level (the long rate), slope (a long minus short rate), and curvature factor (a short and long rate average minus a mid-maturity) and can characterize the yield curves for different maturity bonds.
		We analyze how these three factors are influenced by three different macro-economic state variables. First, we use an NBER-based boom and recession indicator as a discrete state process. 
Second, we condition on the CBOE Volatility Index (VIX). Third, we model macro-economic conditions using the U.S. unemployment rate. Our findings strongly support a time-varying factor structure.
		
		The data set is daily data of the U.S. Treasury Securities Yields from 07/31/2001 to 12/01/2016. The terms range from 1, 3, 6 months to 1, 2, 3, 5, 7, 10, 20, 30 years. We first separate the data into booms and recessions based on NBER-based recession indicators and estimate a factor model for each state. Figure \ref{yield_recess} shows the loadings for the first three factors.  The level, slope, and curvature patterns of loadings versus bond terms persist in the loadings in the boom and in the loadings in the recession. However, there are differences in the values of the loadings, or the composition weights in the factors in the two different states.
		
		\begin{figure}[t!]
			\tcapfig{Treasury Security Data: Factor Loadings Conditioned on Boom and Recession}
			\centering
			\begin{subfigure}{.33\textwidth}
				\centering
				\includegraphics[width=1\linewidth]{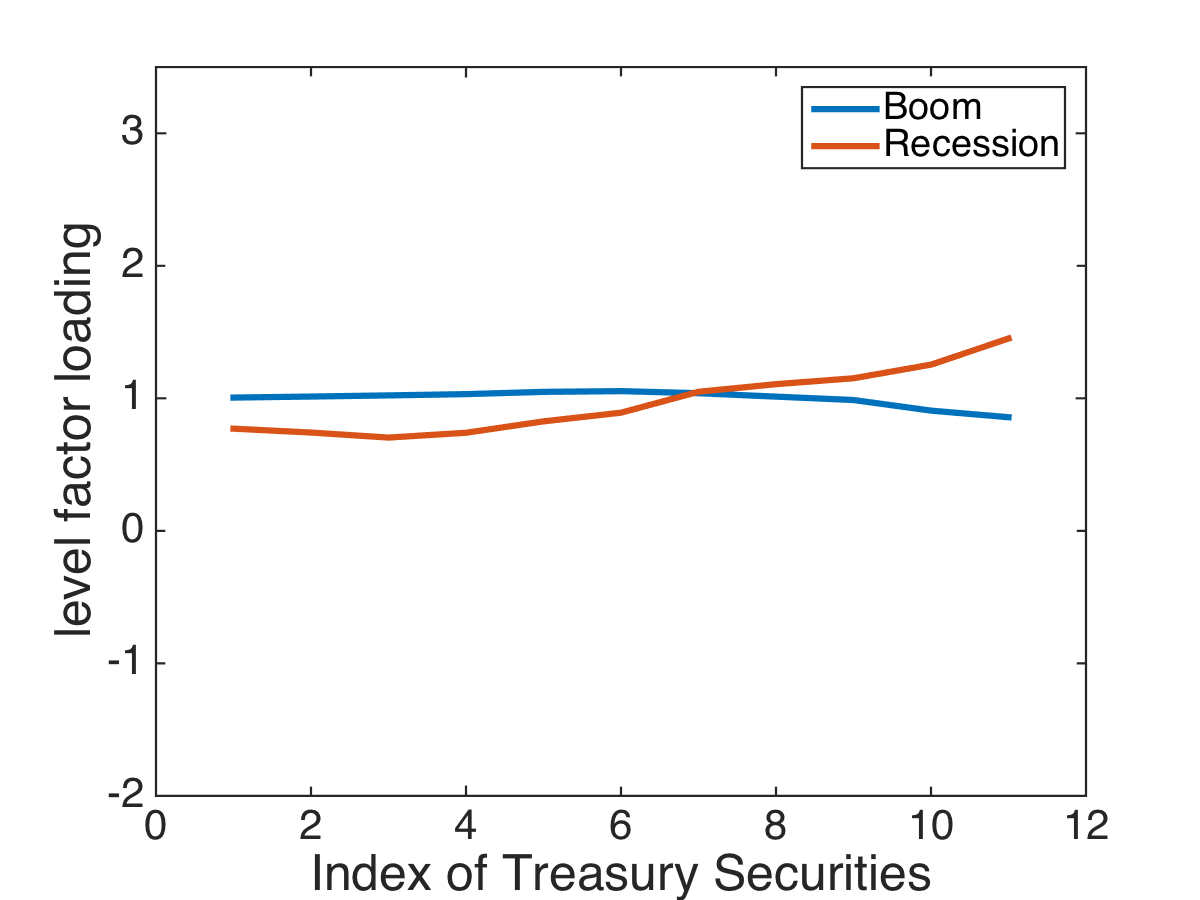}
				\caption{Level Factor}
			\end{subfigure}%
			\begin{subfigure}{.33\textwidth}
				\centering
				\includegraphics[width=1\linewidth]{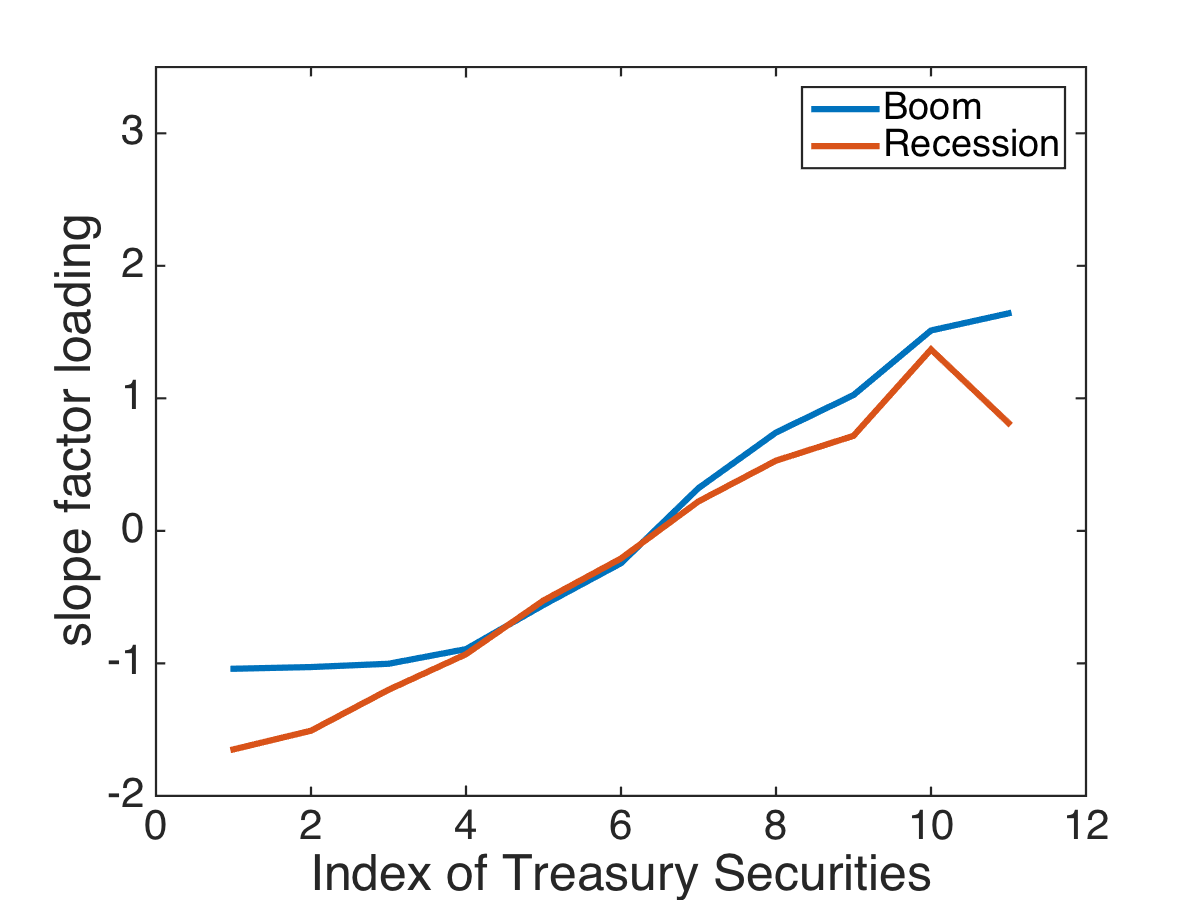}
				\caption{Slope Factor}
			\end{subfigure}%
			\begin{subfigure}{.33\textwidth}
				\centering
				\includegraphics[width=1\linewidth]{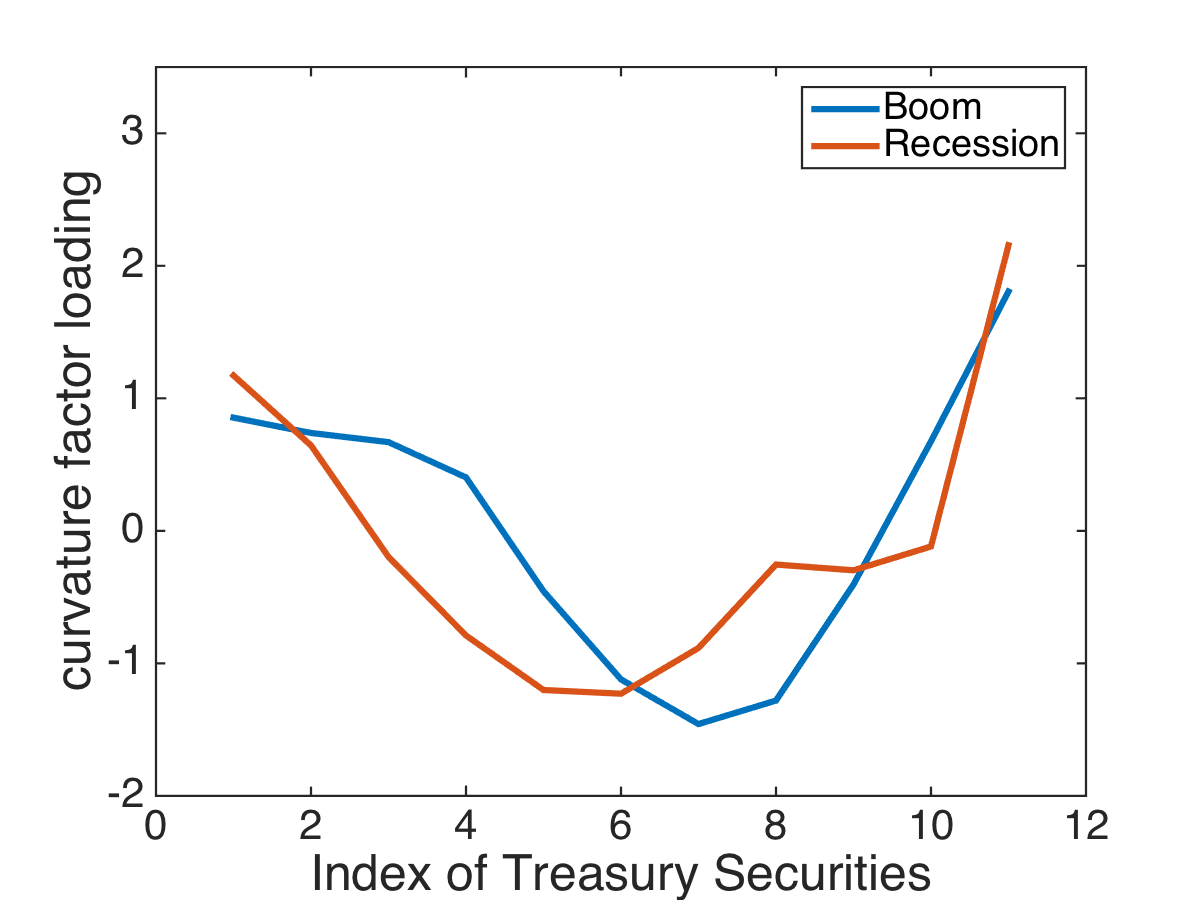}
				\caption{Curvature Factor}
			\end{subfigure}
			\bnotefig{First three latent factor loadings for treasury securities conditioned on boom and recession states. The x-axis is the index of Treasury Securities. The larger the index, the longer the bond term.}
			\label{yield_recess}
		\end{figure}
		
		It is coarse to characterize macro-economic conditions by only two state outcomes. The volatility index VIX and the unemployment rate can be viewed as continuous state processes that will provide a more refined analysis of the state-dependency. The CBOE Volatility Index (VIX) is a measure of the implied volatility of S\&P 500 index options. A higher VIX indicates a more volatile market. Typically a recession coincides with a high VIX. We use the standard convention of logarithmic VIX values to account for its heavy tails on the right. Figure \ref{log-vix} shows that the log-normalized VIX seems to be recurrent, as required by our methodology. 
		
		We estimate a factor model conditional on every possible log-normalized VIX value. We choose the bandwidth $h=0.1$  and confirm in the Internet Appendix that our results are robust to this choice. Figure \ref{prop-var} shows the variance explained by the first three factors. The level factor becomes less dominant as VIX goes up. Meanwhile, the slope factor becomes more important as the VIX increases. In a more volatile market, more yield movements are explained by the long minus short rate changes. 
		
		Figure \ref{yield_vix} shows how the loadings change with the VIX. The color bar indicates the log-normalized VIX value. Green curves represent loadings in low VIX states. Purple and red curves represent loadings in high VIX states. Even though the level, slope, and curvature patterns persist in all states, changes in the state variables lead to shifts in the curves of loadings versus bond terms, implying the changes of the compositions of level, slope, and curvature factors. For the level factor, longer-term bonds increase in weights as the VIX goes up. The loadings of the second factor, the slope factor, shift towards shorter maturities in more volatile markets. The curvature factor has a clear parallel shift to the left with increasing VIX. These results are consistent with the factor model conditioned on booms and recessions.

		\begin{figure}[t!]
			\tcapfig{Treasury Security Data: Log-normalized VIX and Variance Explained}
			\centering
			\begin{subfigure}{.5\textwidth}
				\centering
				\includegraphics[width=0.8\linewidth]{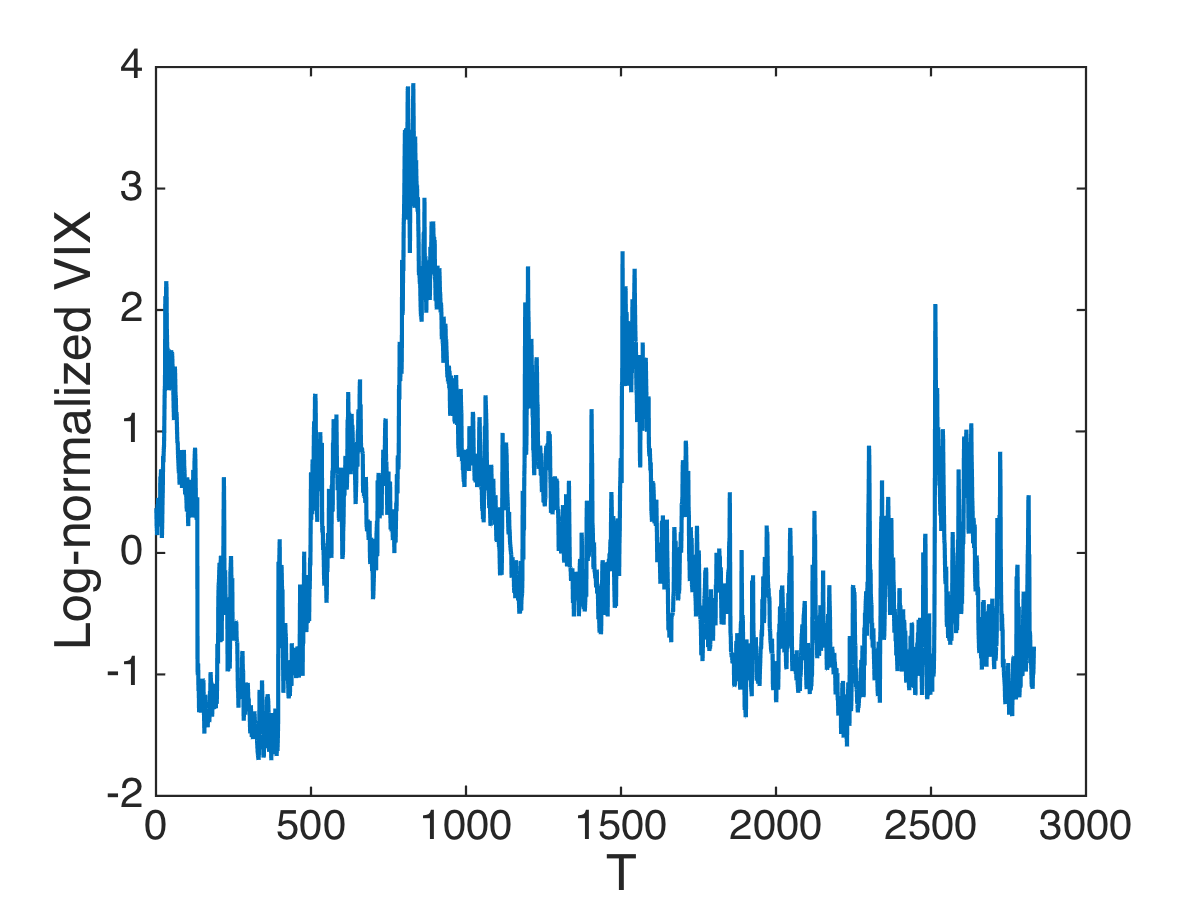}
				\caption{Log-normalized VIX}
				\label{log-vix}
			\end{subfigure}%
			\begin{subfigure}{.5\textwidth}
				\centering
				\includegraphics[width=0.8\linewidth]{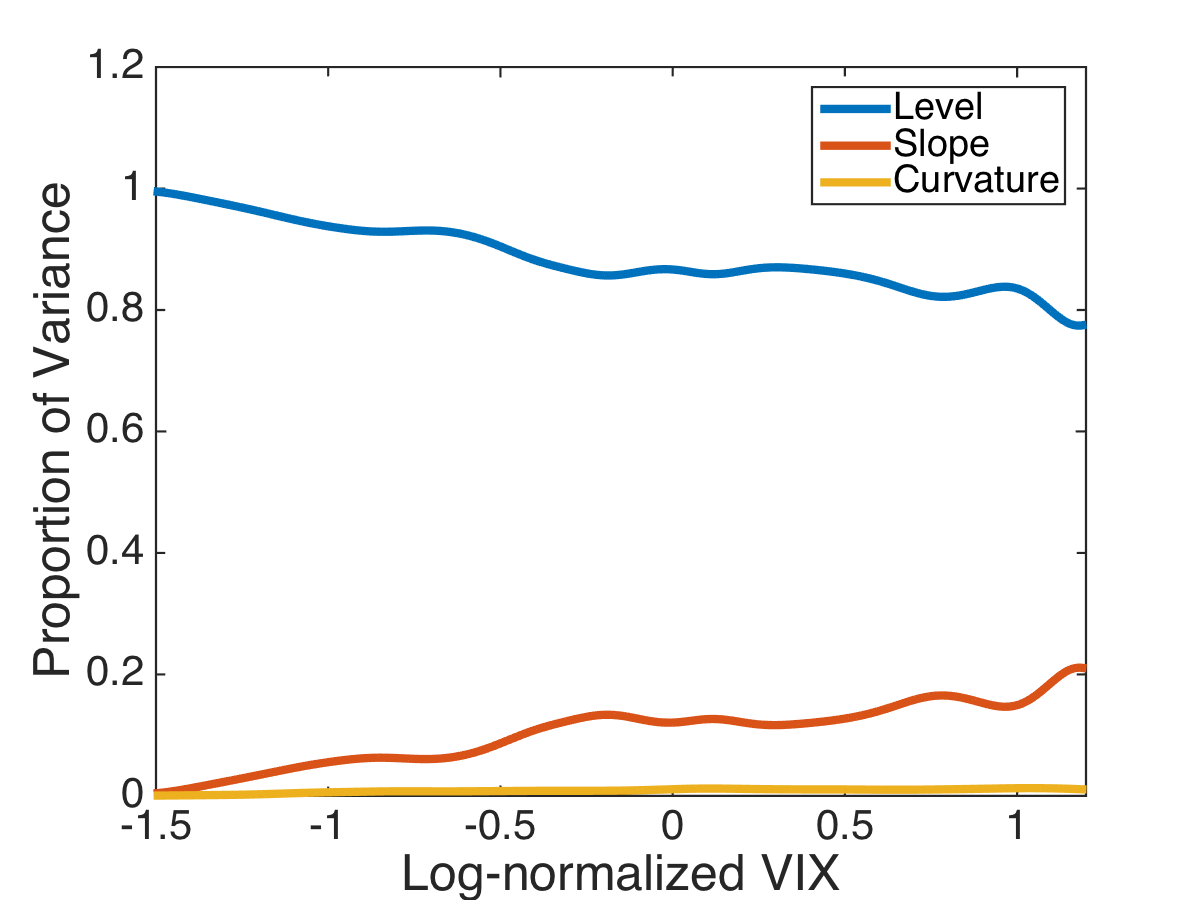}
				\caption{Proportion of variance explained}
				\label{prop-var}
			\end{subfigure}
			\bnotefig{Log-normalized VIX from 07/31/2001 to 12/01/2016 (index represents the number of trading days from 07/31/2001) and proportion of variance explained by the first three factors in different log-normalized VIX}
		\end{figure}

%
%
%
		
		\begin{figure}[t!]
			\tcapfig{Treasury Security Data: Factor Loadings Conditioned on VIX}
			\centering
			\begin{subfigure}{.33\textwidth}
				\centering
				\includegraphics[width=0.95\linewidth]{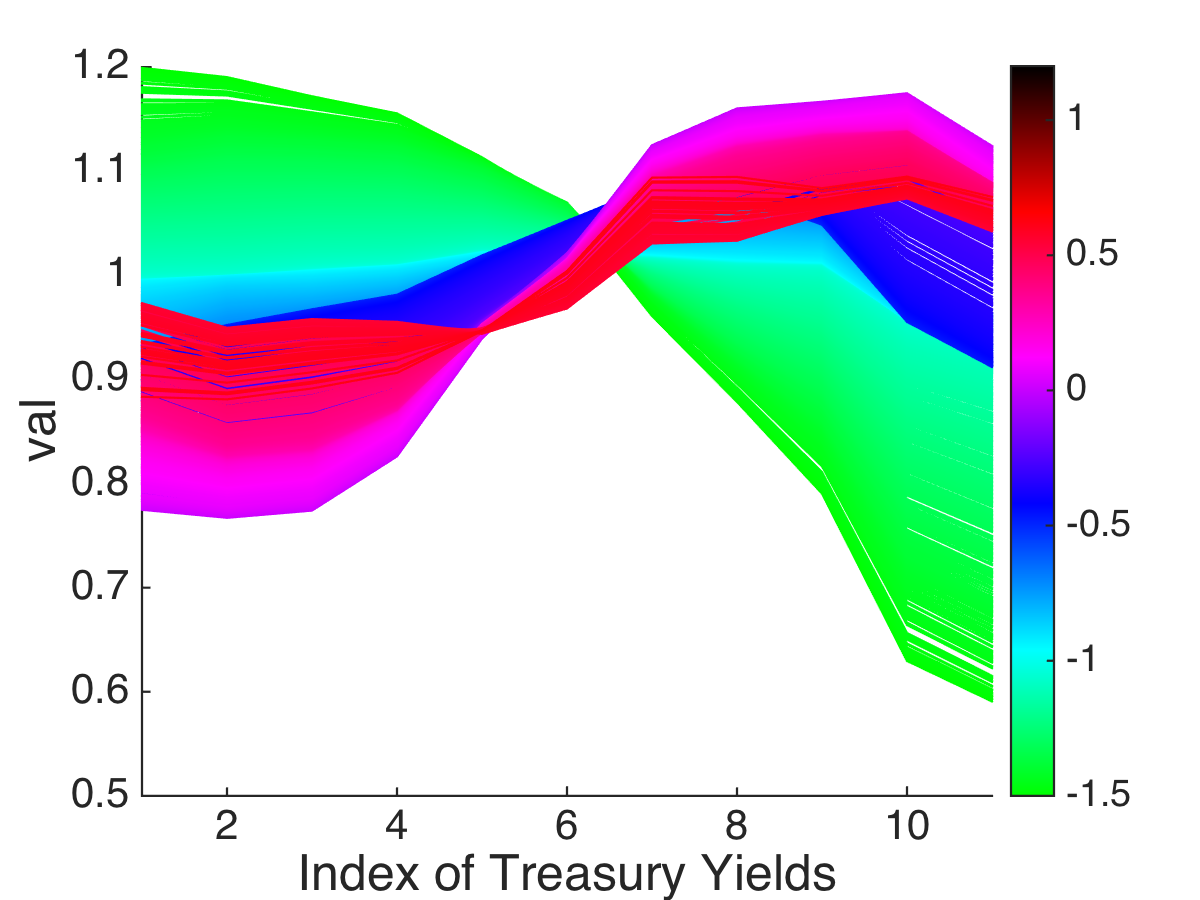}
				\caption{Level Factor}
			\end{subfigure}%
			\begin{subfigure}{.33\textwidth}
				\centering
				\includegraphics[width=0.95\linewidth]{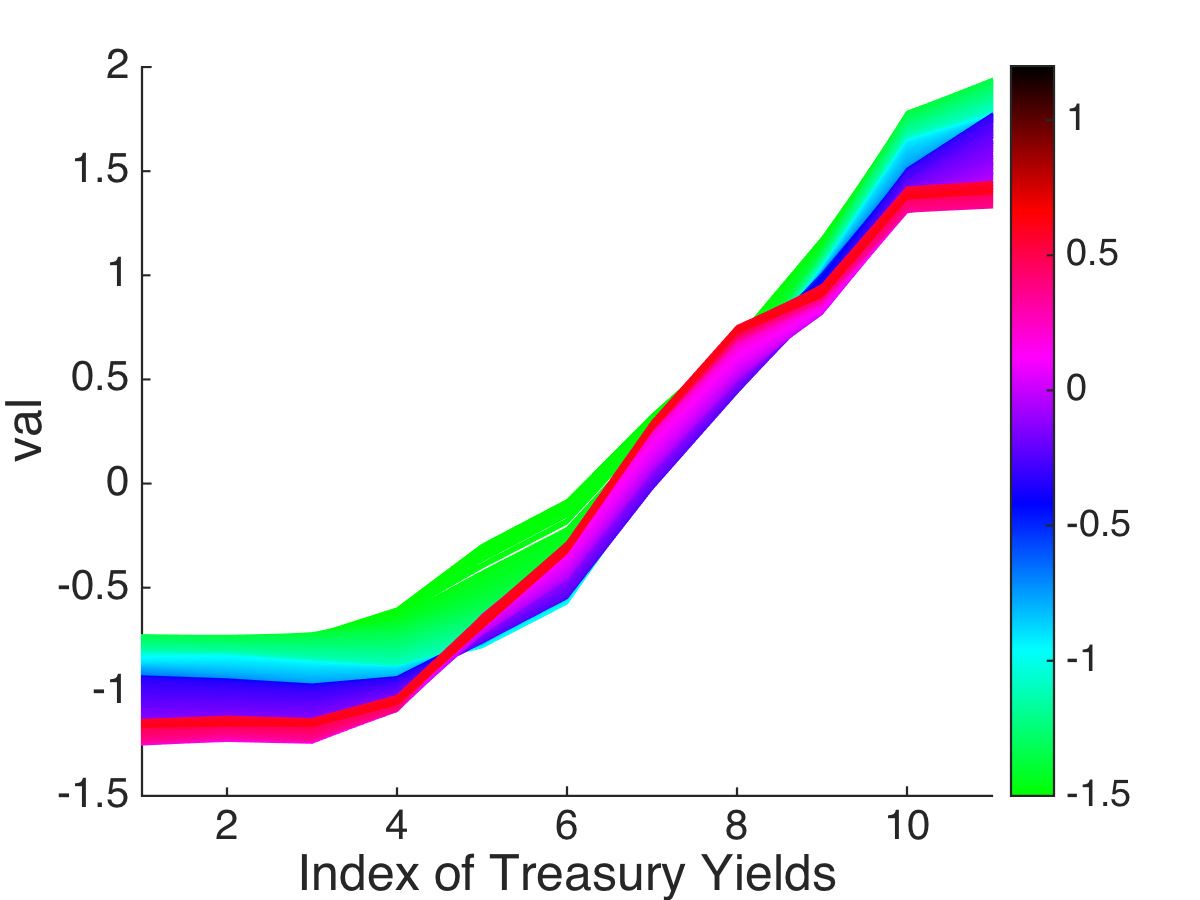}
				\caption{Slope Factor}
			\end{subfigure}%
			\begin{subfigure}{.33\textwidth}
				\centering
				\includegraphics[width=0.95\linewidth]{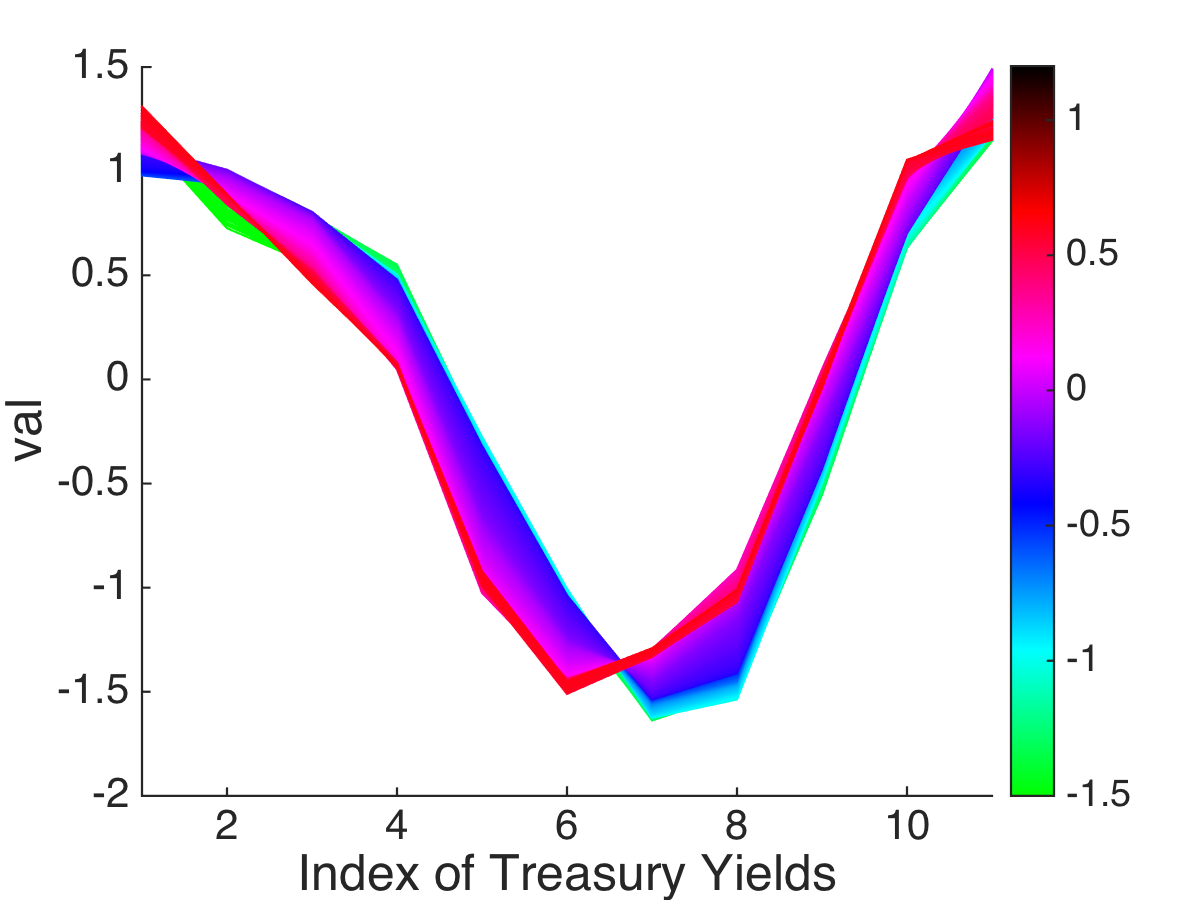}
				\caption{Curvature Factor}
			\end{subfigure}
			\bnotefig{First three latent factor loadings for treasury securities conditioned on VIX (the color bar indicates log-normalized VIX value). The x-axis is the same as Figure \ref{yield_recess}, indicating the maturity of the bonds.}
			\label{yield_vix}
		\end{figure}

		We use the generalized correlation approach to test for which states the factor structure changes. The previous results indicate that each of the first three eigenvectors changes with the VIX. However, it could be possible that the span of the eigenvectors does not change, i.e. it is possible that the factor structure is stable over time. Figure \ref{yield_vix_gen_corr} shows the results for the generalized correlation test statistic and its p-values for any combination of two states $s_1$ and $s_2$. As expected, the diagonal values take the largest values implying that the factor structure is very ``close'' for these paired states. The red regions represent changes in the factor structure. Apparently, the loading space is different in states with positive values (high VIX) from states with negative values (low VIX).\footnote{Our test-statistic is not a global test for changes in the factor structure, but aims at comparing two specific states. In order to use our results for a global test the p-values would need to be adjusted to account for multiple hypothesis testing.}

		\begin{figure}[t!]
			\tcapfig{Treasury Security Data: Generalized Correlation Test  in Any Paired States}
			\centering
			\begin{subfigure}{.4\textwidth}
				\centering
				\includegraphics[width=0.9\linewidth]{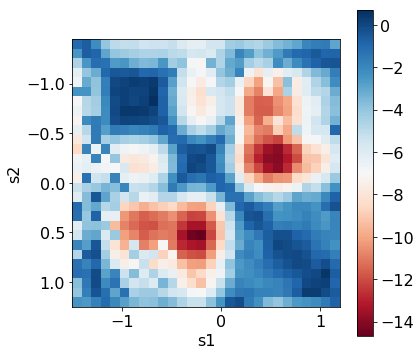}
				\caption{t-value}
			\end{subfigure}%
			\begin{subfigure}{.4\textwidth}
				\centering
				\includegraphics[width=0.9\linewidth]{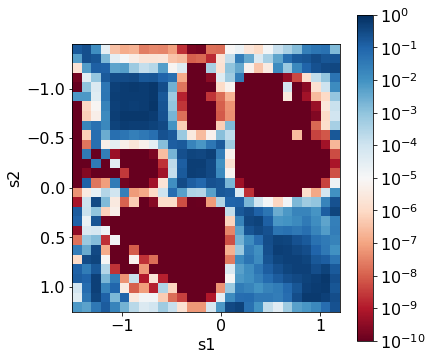}
				\caption{p-value}
			\end{subfigure}
			\bnotefig{Generalized correlation test of estimated loadings in any paired states in US Treasury securities data using log normalized VIX as state variable ($\mathcal{H}_0$: there exists a full rank matrix $G$, $\Lambda_{s_2} = \Lambda_{s_1} G$, $\mathcal{H}_1$: for any full rank matrix $H$, $\Lambda_{s_2} \neq \Lambda_{s_1} G$). $x$-axis and $y$-axis are both log-normalized VIX.  The value at point ($s_1, s_2$) in Figure (a) represents the normalized generalized correlation (t-value) of $\bar{\Lambda}_{s_1}$ and $\bar{\Lambda}_{s_2}$. The value at point ($s_1, s_2$) in Figure (b) represents the p-value corresponding to the t-values.}
			\label{yield_vix_gen_corr}
		\end{figure}
		
		Furthermore, we use the U.S. unemployment rate as the third state variable. The results are very similar to the VIX as the state variable and delegated to the Internet Appendix. In the Internet Appendix, we also compare the amount of variation explained by different factor models.  Treasury yields are somewhat special in the sense that their variation can almost perfectly be explained by three factors. The state-varying factor model with three factors explains slightly more variation, comparable to a four-factor model with constant loadings. However, if the goal is to explain variation, both a time-varying and a constant three-factor models perform well. The takeaway from this empirical application is to understand that the economic interpretation of ``level'', ``slope'' and ``curvature'' has to be used with caution. Depending on the economic conditions, the first PCAs are different.
The next application to individual stock returns shows that in other asset classes, the state-varying model can actually explain a significant larger amount of variation than its constant counterpart.

		\section{Empirical Application to Stock Returns}\label{sec:stocks}
		We estimate the latent factor structure in individual stock returns and show that the state-varying factor model is more parsimonious in explaining variation and captures more pricing information than the constant loading model. Our data set is the same as in \cite{pelger2019} and consists of the daily stock returns for the balanced panel of S\&P 500 stocks from January 1st 2004 to December 31st 2016. We include only stocks with returns available for the full-time horizon, which leaves us with a panel of $N=332$ and $T=3253$. We supplement the data with the daily risk-free rate from Kenneth French's website. As before, we condition on the log normalized VIX. We study the amount of variation explained by different factor models, the loadings for different states, and the optimal portfolio strategies implied by the factor models.
		
		\begin{figure}[t!]
			\tcapfig{S\&P500 Stock Return Data: In-Sample and Out-of-Sample Variance Explained}
			\centering
				\begin{subfigure}{.5\textwidth}
		\centering
		\includegraphics[width=1\linewidth]{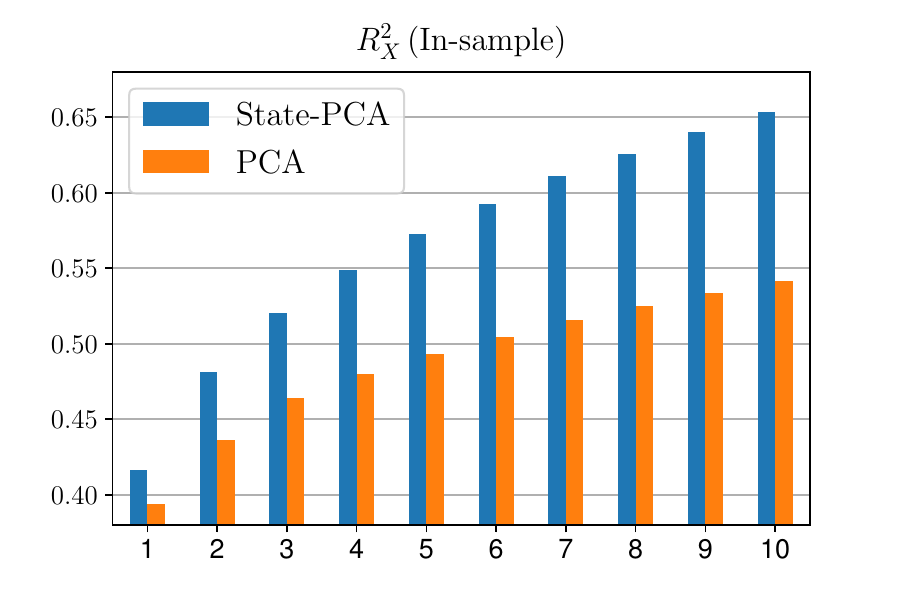}
	\end{subfigure}%
	\begin{subfigure}{.5\textwidth}
		\centering
		\includegraphics[width=1\linewidth]{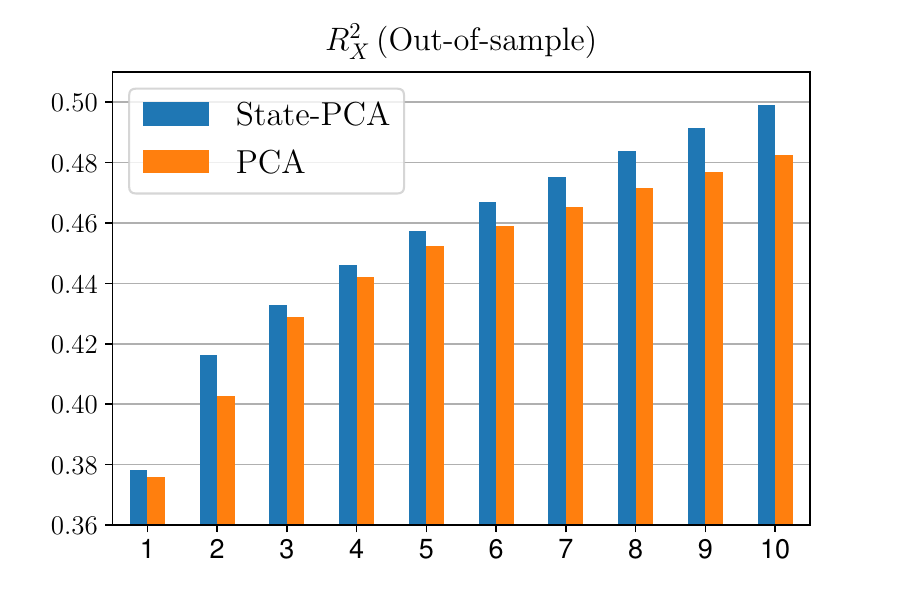}
	\end{subfigure}
	\bnotefig{S\&P500 stocks: Variation explained by the state-varying and constant loading model. State is log-normalized VIX. The constant loading model needs roughly five more factors to explain the same in-sample variation and two to three more factors to explain the same out-of-sample variation as the state-varying model. The bandwidth is chosen optimally. The Internet Appendix explains the choice of bandwidth and includes robustness results.}
			\label{fig:sp500-rsq}
		\end{figure}
		
				\begin{figure}[t!]
			\tcapfig{S\&P 500 Stock Return Data: Generalized Correlation Test in Any Paired States}
			\centering
			\begin{subfigure}[c]{.3\textwidth}
				\centering
				\includegraphics[width=1\linewidth]{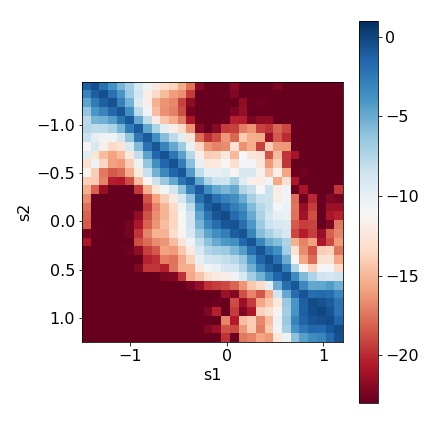}
				\caption{t-value}
			\end{subfigure}%
			\begin{subfigure}[c]{.3\textwidth}
				\centering
				\includegraphics[width=1\linewidth]{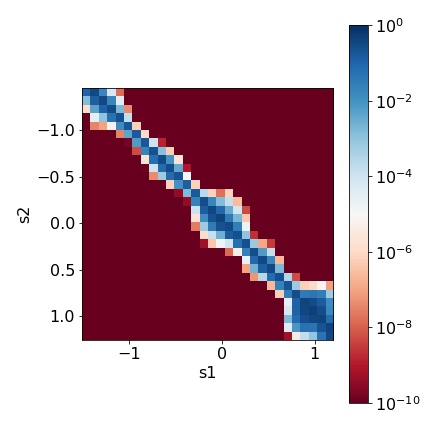}
				\caption{p-value}
			\end{subfigure}
			\bnotefig{Generalized correlation test for S\&P500 returns with log-normalized VIX as state variable for 5 factors. (a) standardized generalized correlations (t-values) and (b) corresponding p-values.}
			\label{gen_corr_sp500}
		\end{figure}
		
		Figure \ref{fig:sp500-rsq} reports the explained variation in- and out-of-sample for the state-varying and constant loading model. For the out-of-sample results, we first estimate the loadings on the first three years of data and then update the loadings estimates on an expanding window to obtain the out-of-sample systematic component for the next ten years.  Obviously, the state-varying factor model explains more variation than the constant loading model in- and out-of-sample for the same number of factors. Therefore, conditioning on the VIX results in a more parsimonious factor model to explain the co-movement in stock returns. Our results do not depend on a prior on the number of factors. In particular, it implies that stock returns do not follow a constant loading model and that the VIX is related to the source of time-variation. We do not require that the VIX explains all the time-variation in the loadings, but we show the conditional model provides a better description of the data than the unconditional one.
		
			Figure \ref{gen_corr_sp500} shows the test results for the generalized correlation test for the combination of any two state outcomes of the VIX. We use a five-factor model motivated by the five factors of \cite{fama2015five} and \cite{lettaupelger2018}. The span of the loadings drastically changes with the realization of the VIX, which confirms the previous results. The Internet Appendix shows that, even in a one-factor model, the span of the state-varying loadings is different from a constant factor model and studies the portfolio implications of the time-varying factors. 

		\section{Conclusion} \label{conclusion}
		The exposure of financial or macro-economic variables to factors may change with policies, macroeconomic environment, and technology innovation. Failing to correctly model the exposure may result in misspecifying factors and potentially inflating the number of factors identified in the model. We model these driving forces as a state process to build a state-varying factor model. We combine a nonparametric kernel projection with PCA to estimate the factor model in a particular state. Our model allows for general time-variation in the loadings for a given state. Asymptotic properties of estimated factors, loadings, and common components are presented. We develop a test for detecting changes in loadings at different states based on a generalized correlation statistic. Simulations show the good finite sample properties of our estimators and test statistic. 
		
		The analytical analysis is challenging for both the kernel estimator of the loadings, factors, and the test statistic because we have to take into account bias terms. We show under which conditions these bias terms can be neglected or how to estimate and correct for this bias. In turns out that our test statistic for changes in the loadings is non-standard with a super-consistent rate.  
		
		We believe that the proposed estimator and test statistic have wide applications in macro-economics and finance. In two empirical studies, we apply our estimators to U.S. Treasury securities and individual stock returns. In the first case, we use a recession indicator, the VIX, and the unemployment rate as state variables. In recessions, times of high volatility or times of high unemployment rate, the level factor explains less variance in the data and becomes less important, while the slope factor gains importance. In particular, the composition of the slope and curvature factors is shifted to shorter maturities in bad or volatile times. Based on our generalized correlation test, we identify the states for which the factor structure changes. The takeaway is that the economic interpretation of ``level'', ''slope'' and ``curvature'' has to be used with caution, as for different economic states, the PCA factors will be different. In the second application on individual stock returns, we show that the state-varying factor model with the VIX as state variable is more parsimonious in explaining variation and captures more pricing information than the constant loading model. Hence, even if we do not capture all time-variation in the loadings with the proposed state variable, we still obtain a model that explains the correlations structure and mean returns better than a constant loading model.

		
	\end{onehalfspacing}

%
	
		\singlespacing
\bibliographystyle{econometrica}
\let\oldbibliography\thebibliography
\renewcommand{\thebibliography}[1]{%
  \oldbibliography{#1}%
  \setlength{\itemsep}{3pt}%
}
	{\footnotesize
		\bibliography{reference}
	}
	\onehalfspacing

\newpage

	\begin{titlepage}
	
	\begin{center}
	{\Huge Internet Appendix for\\State-Varying Factor Models of Large Dimensions}
	\end{center}

	 \thispagestyle{empty}
		\vspace{0.5cm}
		
		\begin{abstract}
			The Internet Appendix collects the proofs and additional results that support the main text. The additional theoretical results include a detailed description of special cases and related models and an extension to noisy and misspecified state processes. We also provide an estimator for the number of factors. The additional empirical results consider alternative state processes and discuss the choice of tuning parameters. We also study a portfolio application of our state-varying factors. The extensive simulation section compares the performance relative to alternative latent factor models that allow for time-variation and studies the choice of bandwidth and number of factors with cross-validation arguments. Lastly, we collect the detailed proofs for all the theoretical statements. 
			\vspace{1cm}
			
			\noindent\textbf{Keywords:} Factor Analysis, Principal Components, State-Varying, Nonparametric, Kernel-Regression, Large-Dimensional Panel Data, Large $N$ and $T$

			\noindent\textbf{JEL classification:} C14, C38, C55, G12
		\end{abstract}
	\end{titlepage}

	\setcounter{page}{1}
	
	\setcounter{section}{0}
	\setcounter{subsection}{0}
	
	\renewcommand{\thesection}{IA.\Alph{section}}
	\renewcommand{\thesubsection}{\thesection.\arabic{subsection}}
	
	\setcounter{equation}{0}
	\renewcommand{\theequation}{\thesection.\arabic{equation}}
	

	
	\renewcommand{\theequation}{IA.\arabic{equation}}%
	\renewcommand{\thefigure}{IA.\arabic{figure}} \setcounter{figure}{0}
	\renewcommand{\thetable}{IA.\Roman{table}} \setcounter{table}{0}


	
	

\begin{small}
	\section{Overview}

	The Internet Appendix collects the proofs and additional results that support the main text. Section \ref{sec:theory} provides additional theoretical results. Section \ref{sec:examples} shows that our state-varying factor model nests several relevant models as special cases. Section \ref{sec:alternatives} shows how our model is related to alternative models in the literature. In Section \ref{sec:noisy}, we relax our model and assume that we only use a noisy approximation of the underlying state process. Section \ref{sec:misspecified} shows that our model dominates a constant loading model even if the state process is misspecified. In Section \ref{sec:number} we generalize the information criterion based estimator for the number of factors of \cite{bai2002determining} to our setup. Section \ref{sec:disrete} discusses the case of a discrete state process.    
	
	The additional empirical results in Section \ref{sec:EmpApp} consider alternative state processes and discuss the choice of tuning parameters. We also show that our results are robust to the choice of bandwidth and study a portfolio application of our state-varying factors. The extensive simulations in Section \ref{sec:SimApp} compare the performance relative to alternative latent factor models that allow for time-variation and discuss the choice of bandwidth and number of factors with cross-validation arguments. In Section \ref{sec:comparison}, we show our estimator has a better performance than general purpose estimators for structural breaks or local time variation if we have the additional information about time-varying state processes that we can exploit. In Section \ref{sec:comparison}, we illustrate that a local window estimator is inferior when state processes change fast. In Section \ref{sec:tuning}, we illustrate how to optimally choose the number of factors and the bandwidth of the kernel projection. Lastly, in Section \ref{sec:proofs}, we collect the detailed proofs for all the theoretical statements.

	%
	%
	%
	%
	%
	%
	%
	%
	%

	\section{Additional Theoretical Results}\label{sec:theory}
	
	\subsection{Special Cases}\label{sec:examples}
	
	Our state-varying factor model nests several models as a special case. For simplicity we consider here the case of a one factor model, i.e. $r=1$.
	\begin{enumerate}
		\item {\bf Linear functional form:} If loadings are modeled as an affine function of the state process, i.e. $\Lambda_i(S_t) = \Lambda_{i,1} + \Lambda_{i,2} S_t$, we can rewrite the one factor model as a two factor model:
		\begin{align*}
		X_{it} &= \Lambda_{i,1} \underbrace{F_t}_{F_{t,1}}  + \Lambda_{i,2} \underbrace{\left( S_t F_t  \right)}_{F_{t,2}}+ e_{it}.
		\end{align*}
		\item {\bf Polynomial functional form:} If loadings are a polynomial of degree $q$, i.e. $\Lambda_i(S_t) = \Lambda_1 + ...+\Lambda_{q+1} S_t^q$, we can rewrite the one factor model as a $q+1$ factor model:
		\begin{align*}
		X_{it} &= \Lambda_{i,1} \underbrace{F_t}_{F_{t,1}}  +...+ \Lambda_{i,q+1} \underbrace{\left( S_t^q F_t  \right)}_{F_{t,q+1}}+ e_{it}.
		\end{align*}
		\item{\bf Discrete state space:} In the case where the loadings are non-linear functions of the state but the state process is discrete (we assume for simplicity here that there are only two state outcomes), the one factor model can again be formulated as a two factor model:
		\begin{align*}
		X_{it} &= \underbrace{g_i(s_1)}_{\Lambda_{i,1}} \underbrace{\mathbbm 1_{\{S_t=s_1 \}} F_t}_{F_{t,1}} + \underbrace{g_i(s_2)}_{\Lambda_{i,2}} \underbrace{\mathbbm 1_{\{S_t=s_2 \}} F_t}_{F_{t,2}} + e_{it}.
		\end{align*}
		\item{\bf Smooth time-variation in loadings:} The slowly changing loading model of \cite{Su2017} can be interpreted as a deterministic state model with $S_t=t$. Only time observations in a neighborhood of the target value $t_0$ are used for the estimation.
	\end{enumerate}
	In the first two special cases our state-varying factor model is equivalent to a constant loading model but more parsimonious. The third special case of a discrete state space model is useful to provide the intuition behind our estimator: Conditioning on a specific state outcome corresponds to selecting only those time observations where the discrete state process takes the target value. Using boom and recession indicators as a discrete state model in our empirical analysis, we illustrate that loadings change over time. However, the second case rules out state processes which can take many different values. The fourth special case uses only information in a local neighborhood. If the time-variation in the loadings has a cyclical component, previous observations that are not in a local neighborhood contain information that can be used in the estimation.  
	
	We consider the relevant model $\Lambda_i(S_t)=g_i(S_t)$ where we have a continuum of state outcomes for $S_t$ and a non-linear loading function $g(.)$ that requires a large number of basis functions to approximate. In this case, there exists, in general, no multi-factor representation. Hence, neither forecasting nor economic interpretation is possible in a constant loading model. This type of model seems to be supported by our empirical examples. 
	
	\subsection{Related Models}\label{sec:alternatives}
	Following the setup of \cite{bai2020estimation} we can formulate the high-dimensional factor model with a structural break at $t=\tau$ as 
	\[
	X_{it}= \begin{cases} 
	\Lambda_{i1}^\top F_t + e_{it} \qquad &\text{for $t=1,2,...,\tau$}\\
	\Lambda_{i2}^\top F_t + e_{it} \qquad &\text{for $t=\tau+1,...,T$.}
	\end{cases}
	\]
	This type of model can be extended to have multiple breaks points, which are usually not known and have to be estimated. It is limited to a finite number of breaks that are sufficiently far away from each other. The structural break model can be embedded into our model if for example $\Lambda(S_t)$ and $S_t=s_1$ for $t \leq \tau$ and $S_t=s_2$ for $t >\tau$. As we use the additional information of the state process in our framework, we can deal with a continuum of breakpoints as long as they are due to changes in the state process.

	The noisy state process model $X_{it} = (\Lambda_i(S_t) + \varepsilon_{it})^\top F_t + e_{it}$ can be interpreted as a random coefficient model. The loadings $\Lambda_i(S_t) + \varepsilon_{it}$ are random and time-varying because of the randomness in the state process and the noise component $\varepsilon_{it}$. However, there are two major differences to a conventional random coefficient model. First, we estimate the model conditional on a particular realization of the state process, i.e., we implicitly take out the randomness in the state process. Second, we show that under mild assumptions on the loading noise component $\varepsilon_{it}$, which limits its cross-sectional and time-series dependence, it becomes part of the latent residual component $e_{it}$. Thus, all the conditional PCA estimation results of the model without the loading noise component continue to hold.

	An alternative approach to include information from a known state process is a factor-augmented regression studied in \cite{bai2006}. Here the state process (or a finite number of transformations of the state process) is added as an observable factor to the model:
	\begin{align*}
	X_{it}=\beta_i S_t + \Lambda_i^\top F_t + e_{it}.
	\end{align*}
	First, the residuals of a regression on the state process are calculated, and second, PCA is applied to the covariance matrix of those residuals to estimate the constant loading model. This framework is different from ours as it does not allow the impact of factors to depend on the state process. For example, it could not capture an asset pricing model in which the effect of the market factor on asset returns changes during the business cycle.

	\subsection{Extension to Noisy State Process}\label{sec:noisy}
	
	Our state-varying factor model requires the knowledge of the state process driving the loading variation, which can be restrictive in some cases. A natural relaxation is to assume that we only use a noisy approximation of the underlying state process: 
	\[X_{it} = (\Lambda_i(S_t) + \varepsilon_{it})^\top F_t + e_{it} \quad i = 1, 2, \cdots, N \text{ and } t = 1, 2, \cdots, T\]
	or in vector notation,
	\[\underbrace{X_t}_{N \times 1}  = \underbrace{\Lambda(S_t)}_{N \times r} \underbrace{F_t}_{r \times 1} + \underbrace{\mathcal{E}_t}_{N \times r} \underbrace{F_t}_{r \times 1}  + \underbrace{e_t}_{N \times 1} = \Lambda(S_t)F_t + \psi_t + e_t   \qquad \text{$ t = 1, 2, \cdots, T$}.\]
	The term $\epsilon_{it}$ is the time-varying component of the loading coefficient that cannot be explained by the state process $S_t$. It can, for example, be due to a measurement error in the state process $S_t$ or an omitted additional state process. Without loss of generality, we can assume that $\epsilon_{it}$ has a time-series mean of zero as the non-zero mean can be captured by the latent loading function $\Lambda_i$.
	In the following, we will argue that the additional term $\psi_t=\mathcal{E}_t F_t$ can be treated like an additional error term that will not affect our previous results. In this sense, our model is robust to model miss-specification. 
	
	Our approach is related to \cite{fan2016projected}. They model loadings as non-linear functions of time-varying features of the cross-sectional units. Their estimation approach applies PCA to the data matrix that is projected in the cross-section on the subject-specific covariates. In addition to the covariates, they allow for a subject specific orthogonal residual component in the loadings. In contrast, our projection is applied in the time dimension. We also allow for an additional component independent of the state process to capture additional variation. Our assumptions on this noise component are similar to their setup.
	
	Defining $\widetilde e_{it} = \varepsilon^\top_{it} F_t + e_{it}$ we reformulate our model as $X_{it}=\Lambda_i(S_t) F_t + \widetilde e_{it}$. The noise term in the loadings $\varepsilon_{it}$ needs to satisfy essentially the same assumptions as $e_{it}$ for Theorem \ref{thm_consistency} to \ref{thm_rho} to hold. In particular, $\varepsilon_{it}$ can only have weak cross-sectional and time-series correlation.
	
	\setcounter{assumption}{8}	
	
	\begin{assumption}\label{assumption_noisy}Weak noise dependency:
		\begin{enumerate}
			\item Assume $\varepsilon_{it}$ is independent of $S_u$, $F_u$ and $e_{ju}$ for all $i$, $j$, $t$ and $u$. Furthermore, Assumptions \ref{ass_err}.1-5 hold with $e_{it}$ replaced by $\varepsilon_{it}$
			\footnote{The error $\varepsilon_{it}$ is a $r$-dimensional vector and hence the assumptions are formulated for each element of the vector.}
			, $\+E[\norm{F_t}^8] \leq M < \infty$ and $\max_t . \+E[\norm{F_t}^8|\mathcal{F}_S] \leq M < \infty$.
			\item Assumptions \ref{ass_mom} and \ref{doublesum} hold with $e_{it}$ replaced by $\varepsilon^\top_{it}F_t$.
		\end{enumerate}
	\end{assumption}
	
	
	Assumption \ref{assumption_noisy}.1 imposes the same weak correlation structure on $\varepsilon_{it}$ as on $e_{it}$.  Assumption \ref{assumption_noisy}.2 imposes the same dependency structure between $F_t$, $\Lambda(s)$ with $F_t \varepsilon_{it}$ as between $F_t$, $\Lambda(s)$ with $e_{it}$ in Assumptions \ref{ass_mom} and \ref{doublesum}. 
	Assumption \ref{assumption_noisy}.2 is only slighter stronger as we now essentially limit the dependency between $F_t F_t^{\top}$ and $\varepsilon_{it}$ instead of the $F_t$ and $e_{it}$. Under Assumption \ref{assumption_noisy} the new error term $\epsilon_{it}$ satisfies the same assumptions as the previous error term $e_{it}$:
	\begin{corollary}\label{cor_noisy_ass_hold}
		Assume the noisy state-varying factor model holds. 
		\begin{enumerate}
			\item Under Assumption \ref{assumption_noisy}.1, Assumption \ref{ass_err} holds with $e_{it}$ replaced by $\widetilde e_{it}$.
			\item Under Assumption \ref{assumption_noisy}.2, Assumptions \ref{ass_mom} and \ref{doublesum} hold with $e_{it}$ replaced by $\widetilde e_{it}$.
		\end{enumerate}
	\end{corollary}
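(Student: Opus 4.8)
The plan is to write $\widetilde e_{it} = u_{it} + e_{it}$ with $u_{it} := \varepsilon_{it}^\top F_t$, and to exploit the fact that every quantity constrained by Assumptions \ref{ass_err}, \ref{ass_mom} and \ref{doublesum} is either linear in the idiosyncratic term or bilinear in two copies of it. Substituting $\widetilde e = u + e$ therefore decomposes each such quantity into a pure-$e$ part, a pure-$u$ part, and cross terms. The pure-$e$ parts are controlled by the original assumptions; the pure-$u$ parts are controlled by Assumption \ref{assumption_noisy}; and the cross terms either vanish or are negligible because, by Assumption \ref{assumption_noisy}.1, $\varepsilon_{it}$ is mean zero and independent of $(S_u, F_u, e_{ju})$ for all indices. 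I would first record the two facts that drive everything: (i) $\+E[\varepsilon_{it} \mid \mathcal{F}_S, \{F_t\}] = \+E[\varepsilon_{it}] = 0$, so that $\+E[u_{it} \mid \mathcal{F}_S] = 0$; and (ii) for any $t,u$, $\+E[\varepsilon_{it}\varepsilon_{lu}^\top \mid \mathcal{F}_S,\{F_t\}] = \+E[\varepsilon_{it}\varepsilon_{lu}^\top]$, which lets me factor the factor moments out of any second moment of $u$.

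For Part 1 I would verify Assumption \ref{ass_err}.1--6 for $\widetilde e$ term by term. Mean zero is immediate from (i). For the eighth moment, the $c_r$-inequality and Cauchy--Schwarz give $\+E[u_{it}^8] \le \+E[\norm{\varepsilon_{it}}^8]\+E[\norm{F_t}^8] \le M$ using the independence in Assumption \ref{assumption_noisy}.1 together with the bounded eighth moments of $\varepsilon$ (from Assumption \ref{ass_err}.1 applied to $\varepsilon$) and of $F$; combining with $\+E[e_{it}^8]\le M$ via $(a+b)^8\le 2^7(a^8+b^8)$ bounds $\+E[\widetilde e_{it}^8]$. For the weak-dependence conditions (parts 2--4), I expand $\+E[\widetilde e_{it}\widetilde e_{lu}] = \+E[u_{it}u_{lu}] + \+E[e_{it}e_{lu}] + \+E[u_{it}e_{lu}] + \+E[e_{it}u_{lu}]$; the two cross terms vanish by (i) and the independence of $\varepsilon$ from $e$, while $\+E[u_{it}u_{lu}] = \+E[F_t^\top \+E[\varepsilon_{it}\varepsilon_{lu}^\top]F_u]$ is summable in exactly the same index patterns as the covariances of $\varepsilon$ (which satisfy Assumption \ref{ass_err}.2--4 by hypothesis) after bounding the factor bilinear form by its second moment. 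The triangle inequality then bounds the new $\gamma_N$, $\tau_{il,t}$ and $\tau_{il,tu}$ by the sums of the corresponding quantities for $e$ and for $u$. Parts 5 and 6 are handled identically: the relevant fourth moments are bounded through Minkowski's inequality and the eighth-moment bounds, and the cross terms involving one $\varepsilon$ and one $e$ (or one $\varepsilon$ and a factor) vanish by independence.

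For Part 2 I would treat Assumptions \ref{ass_mom} and \ref{doublesum} the same way. Each of the moment conditions has the form $\+E\norm{A_{NT}(\widetilde e)}^2 \le M$ where $A_{NT}$ is a kernel-weighted linear map of the error; writing $A_{NT}(\widetilde e) = A_{NT}(u) + A_{NT}(e)$ and using $\norm{a+b}^2 \le 2\norm{a}^2 + 2\norm{b}^2$ reduces each bound to the corresponding bound for $e$ (the original Assumptions \ref{ass_mom}/\ref{doublesum}) and for $u$ (the hypothesis of Assumption \ref{assumption_noisy}.2, which posits precisely these with $e$ replaced by $\varepsilon^\top F$). For the central-limit statements (Assumptions \ref{ass_mom}.3, \ref{ass_mom}.4 and \ref{doublesum}.4) I would invoke a Cram\'er--Wold device: $u$ and $e$ each obey the stated CLT, and their joint limiting covariance is block diagonal because every cross-covariance between a linear form in $u$ and one in $e$ vanishes by the independence of $\varepsilon$ from $e$ and $F$; hence the sum obeys a CLT with limiting variance equal to the sum of the two individual limiting variances, which is the required form.

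The main obstacle is the exact independence clause in Assumption \ref{ass_err}.1, ``$e$ and $S$ are independent,'' which $\widetilde e$ does \emph{not} inherit: $u_{it} = \varepsilon_{it}^\top F_t$ is correlated with the state through $F_t$. The resolution, which I would make precise, is that the downstream proofs never use full independence of the error from $S$ but only its two consequences recorded above --- conditional mean zero given $\mathcal{F}_S$ and the ability to pull factor moments out of error second moments --- and both survive for $\widetilde e$ because $\varepsilon$ is independent of $(S,F,e)$. Verifying that each invocation of $e\perp S$ in the original arguments can be replaced by $\+E[\varepsilon\mid S,F]=0$ is the one genuinely non-mechanical step; it is also the reason Assumption \ref{assumption_noisy}.1 strengthens the factor moment bound to eighth order both unconditionally and conditionally on $\mathcal{F}_S$, since the conditional versions of the moment conditions in Assumptions \ref{ass_factor} and \ref{ass_mom} now require controlling $\+E[\norm{F_t}^8\mid\mathcal{F}_S]$ when bounding moments of $u_{it}$.
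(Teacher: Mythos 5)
Your proposal matches the paper's own proof essentially step for step: the same decomposition $\widetilde e_{it} = \varepsilon_{it}^\top F_t + e_{it}$, the same Cauchy--Schwarz and $c_r$-inequality moment bounds for the pure noise term, the same use of the independence of $\varepsilon_{it}$ from $(S,F,e)$ to make every cross term vanish, and the same additive-variance conclusion for the CLT conditions (where the paper asserts the summed limiting variance directly from $\+E[\varepsilon_{it}^\top F_t\, e_{ju}]=0$, you invoke Cram\'er--Wold, which is the same reasoning made explicit). The one place you go beyond the paper is your closing paragraph: the paper's proof verifies only the eighth-moment bound when checking Assumption \ref{ass_err}.1 for $\widetilde e_{it}$ and is silent on the ``$e$ and $S$ are independent'' clause, which $\widetilde e_{it}$ genuinely fails to inherit since it involves $F_t$; your observation that the downstream arguments use only conditional mean zero given $\mathcal{F}_S$ and the ability to factor conditional factor moments out of error moments --- both of which survive for $\widetilde e_{it}$ precisely because Assumption \ref{assumption_noisy}.1 bounds $\+E[\norm{F_t}^8\mid \mathcal{F}_S]$ --- is the correct patch, and is a refinement of, not a departure from, the paper's argument.
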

	As a result all our previous theorems are still valid.
	\begin{corollary}\label{cor_noise}
		Assume the noisy state-varying factor model holds. 
		\begin{enumerate}
			\item Under the assumptions of Theorem \ref{thm_consistency} and Assumption \ref{assumption_noisy}.1, the results of Theorem \ref{thm_consistency} continue to hold.
			\item Under Assumption \ref{assumption_noisy} and the assumptions of Theorem \ref{thm_factor}, \ref{thm_loading} or respectively \ref{thm_common}, the results of Theorem \ref{thm_factor}, \ref{thm_loading} or respectively \ref{thm_common} continue to hold.
			\item Let $\Sigma_{\tilde e_T} = \+E[\widetilde e^\top \widetilde e/N]$, $\Sigma_{\tilde e_N} = \+E[\widetilde e \widetilde e^\top/T]$ and $\Sigma_{\widetilde e} = \+E[\tvec(\tilde e)\tvec(\tilde e)^\top]$. Assume there are only finitely many non-zero elements in each row of $\Sigma_{\tilde e_T}$, $\Sigma_{\tilde e_N}$ and $\Sigma_{\tilde e}$ and we know the sets of nonzero indices, $\Omega_{\tilde e_T}$, $\Omega_{\tilde e_N}$ and $\Omega_{\tilde e}$.
			Under Assumption \ref{assumption_noisy} and the assumptions of Theorem \ref{thm_rho} or respectively Lemma \ref{consistentcovariance}, the results of Theorem \ref{thm_rho} or respectively Lemma \ref{consistentcovariance} continue to hold.    
		\end{enumerate}
	\end{corollary}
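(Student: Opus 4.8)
The plan is to prove Corollary \ref{cor_noise} as a direct reduction to the already-established theorems by exploiting the fact that the noisy model can be rewritten as a standard state-varying factor model with an augmented error term. The starting observation is that $X_{it} = (\Lambda_i(S_t) + \varepsilon_{it})^\top F_t + e_{it}$ is algebraically identical to $X_{it} = \Lambda_i(S_t)^\top F_t + \widetilde e_{it}$ with $\widetilde e_{it} = \varepsilon_{it}^\top F_t + e_{it}$. Since the loadings $\Lambda_i(s)$, factors $F_t$, state process $S_t$, kernel, and eigenvalue gap are untouched by this reformulation, Assumptions \ref{ass_state}, \ref{ass_factor}, \ref{ass_loading} and \ref{ass_eigen} hold verbatim for the reformulated model. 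The entire derivation of Theorems \ref{thm_consistency}--\ref{thm_rho} used the error only through the error-specific Assumptions \ref{ass_err}, \ref{ass_mom} and \ref{doublesum}. Hence the proof reduces to invoking Corollary \ref{cor_noisy_ass_hold}, which is precisely the statement that $\widetilde e_{it}$ inherits these three assumptions, and then re-applying each theorem to $\widetilde e_{it}$.

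For part 1, Theorem \ref{thm_consistency} invokes only Assumptions \ref{ass_state}--\ref{ass_err}. By Corollary \ref{cor_noisy_ass_hold}.1, Assumption \ref{assumption_noisy}.1 yields Assumption \ref{ass_err} with $\widetilde e_{it}$ in place of $e_{it}$; here the strengthened eighth-moment conditions on $F_t$ are exactly what control $\+E[\widetilde e_{it}^8]\le M$, since $(\varepsilon_{it}^\top F_t)^8 \le \norm{\varepsilon_{it}}^8\norm{F_t}^8$ and independence of $\varepsilon$ from $(F,S,e)$ gives $\+E[(\varepsilon_{it}^\top F_t)^8]\le \+E[\norm{\varepsilon_{it}}^8]\+E[\norm{F_t}^8]<\infty$. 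Applying Theorem \ref{thm_consistency} to the reformulated model then gives the claim. Part 2 follows the same logic: Theorems \ref{thm_factor}, \ref{thm_loading} and \ref{thm_common} additionally require Assumption \ref{ass_mom}, and by Corollary \ref{cor_noisy_ass_hold} the full Assumption \ref{assumption_noisy} delivers both Assumption \ref{ass_err} and Assumption \ref{ass_mom} for $\widetilde e_{it}$, while Assumption \ref{ass_eigen} is intact. Crucially, the asymptotic variances $\Gamma_t^s$, $\Phi_i^s$, $V_{it,s}$, $W_{it,s}$ are now computed from $\widetilde e_{it}$, so they automatically absorb the extra $\varepsilon^\top F$ contribution while the form of the limiting distributions is unchanged.

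The main obstacle, and the place demanding the most care, is part 3, the generalized-correlation test. Theorem \ref{thm_rho} and Lemma \ref{consistentcovariance} impose sparsity of the residual covariance matrices and assume the nonzero-index sets are known, so beyond the assumption transfer I must verify that the combined error inherits this sparsity. Using independence of $\varepsilon$ from $(F,S,e)$ together with the zero time-series mean of $\varepsilon$, the cross terms vanish and
\begin{align*}
\+E[\widetilde e_{it}\widetilde e_{ju}] = \+E[e_{it}e_{ju}] + \operatorname{tr}\!\left(\+E[F_t F_u^\top]\,\+E[\varepsilon_{it}\varepsilon_{ju}^\top]\right).
\end{align*}
Thus the support of the autocovariance and covariance of $\widetilde e$ lies in the union of the supports for $e$ and for $\varepsilon$, both sparse by Assumption \ref{assumption_noisy}.1 (Assumption \ref{ass_err} holding for $\varepsilon$). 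This is exactly why the corollary introduces $\Sigma_{\widetilde e_T}$, $\Sigma_{\widetilde e_N}$, $\Sigma_{\widetilde e}$ and their known index sets $\Omega_{\widetilde e_T}$, $\Omega_{\widetilde e_N}$, $\Omega_{\widetilde e}$: they are the correct sparse objects for the plug-in bias correction. With sparsity confirmed and Assumptions \ref{ass_mom} and \ref{doublesum} available for $\widetilde e_{it}$ via Corollary \ref{cor_noisy_ass_hold}.2, Theorem \ref{thm_rho} and Lemma \ref{consistentcovariance} apply to the reformulated model, completing the proof. The only genuinely new content beyond a mechanical re-application is this sparsity-transfer check together with verifying that the eighth-moment requirements on $F_t$ suffice to control the products $\varepsilon_{it}^\top F_t$ wherever fourth moments of the pure idiosyncratic error previously sufficed.
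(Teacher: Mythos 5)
Your reduction --- rewriting the noisy model as $X_{it}=\Lambda_i(S_t)^\top F_t+\widetilde e_{it}$ with $\widetilde e_{it}=\varepsilon_{it}^\top F_t+e_{it}$, transferring Assumptions \ref{ass_err}, \ref{ass_mom} and \ref{doublesum} to $\widetilde e_{it}$ via Corollary \ref{cor_noisy_ass_hold}, and then re-applying Theorems \ref{thm_consistency}--\ref{thm_rho} and Lemma \ref{consistentcovariance} to the reformulated model --- is exactly the paper's argument, which gives no proof of this corollary beyond that reduction. One caveat: your sparsity-transfer step in part 3 is both unnecessary (the corollary assumes sparsity of $\Sigma_{\tilde e_T}$, $\Sigma_{\tilde e_N}$, $\Sigma_{\tilde e}$ and knowledge of the index sets directly as hypotheses) and slightly overstated, since Assumption \ref{assumption_noisy}.1 gives only weak dependence of $\varepsilon_{it}$ (summability of covariances), not finitely many nonzero entries, so the union-of-supports argument would not by itself deliver the required sparsity.
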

	Corollary \ref{cor_noise} implies that we do not need to know the exact source of changes in the loadings, but it is sufficient to use a state process that is correlated with the cause of change in the structure. Furthermore, the asymptotic standard errors and the generalized correlation test for state-dependency provide valid results even if we do not implicitly capture all the variation in the loadings with our choice of state process.
	
	\subsection{Misspecified State Process}\label{sec:misspecified}
	
	Our estimator can dominate a constant loading model, even if the state process is misspecified. In the last section, we have shown that our estimator is robust to moderate misspecification of the state process. However, we have ruled out a ``systematic and non-diversifiable'' miss-specification, which could arise if, for example, a systematic state process is omitted in the estimation. Here, we illustrate that even if we omit a relevant state process, the state-varying loading estimator can outperform the constant loading estimator for the same number of factors. We only require our candidate state process to be dependent on the true population state process. Our arguments are based on the population model to illustrate the key ideas.
	
	Assume that that population model follows $X_{it} = \Lambda_i(S_t)^\T F_t + e_{it}$ where $S_t$ is the true state process. We compare PCA applied to the second moment conditioned on the true state, a wrong state and without conditioning. We denote by $\Sigma_{S_t=s}=E[X_t X_t^{\top} |S_t=s]$ the second moment conditioned on the true state. Given the stationary density of the state process $\pi_{S_t}(s)$ the unconditional second moment equals
	\[\Sigma =E[X_t X_t^{\top}]= \int \Sigma_{S_t = s} \pi_{S_t}(s) d s. \]
	We denote by $G_t$ another state process whose stationary density is $\pi_{G_t}(g)$. 
	The conditional second moment on $G_t = g$ equals
	\[\Sigma_{G_t = g} =E[X_t X_t^{\top} |G_t=g]= \int \Sigma_{S_t = s} \pi_{S_t|g_t = g}(s) ds,\]
	where $\pi_{S_t|g_t = g}(s)$ is the conditional density of $S_t$ given $G_t = g$. Note, that 
	\[\Sigma = \int \Sigma_{G_t = g} \pi_{G_t}(g) dg. \]
	We compare the variation explained from PCA-based factors estimated from different second moment population matrices. The constant loading model will always explain less variation than the state-varying model conditioned on the correct state:
	\begin{eqnarray}
	\nonumber \max_{\Lambda: \Lambda^\T \Lambda/N = I_r}  tr \Lp \Lambda^\T \Sigma \Lambda \Rp &=& \max_{\Lambda: \Lambda^\T \Lambda/N = I_r} tr \Lp \Lambda^\T \Lp \int \Sigma_{S_t = s} \pi_{S_t}(s) ds \Rp \Lambda  \Rp \\
	&\leq&  \int tr  \Lp \max_{\Lambda_s: \Lambda_s^\T \Lambda_s/N = I_r} \Lambda_s^\T  \Sigma_{S_t = s} \Lambda_s  \Rp  \pi_{S_t}(s) ds. \label{eqn:exact-spec}
	\end{eqnarray}
	which follows from the convexity of the maximum operator. The inequality is strict if $rank(\Sigma) > r$. Similarly the state-varying model with a potentially wrong state explains at least as much variation as the constant loading model
	\begin{eqnarray}
	\nonumber \max_{\Lambda: \Lambda^\T \Lambda/N = I_r} tr \Lp\Lambda^\T \Sigma \Lambda \Rp &=& \max_{\Lambda: \Lambda^\T \Lambda/N = I_r} tr \Lp \Lambda^\T \Lp  \int \Sigma_{G_t = g} \pi_{G_t}(g) ds \Rp \Lambda \Rp  \\
	&\leq&  \int tr  \Lp \max_{\Lambda_g: \Lambda_g^\T \Lambda_g/N = I_r} \Lambda_g^\T   \Sigma_{G_t = g}  \Lambda_g  \Rp  \pi_{G_t}(g) dg. \label{eqn:noisy-spec-1}
	\end{eqnarray}

	Let $d_j$ and $d_{G_t=g, j}$ be the $j$-th largest eigenvalues of $\Sigma$ and $\Sigma_{G_t = g}$ respectively. 
	Inequality (\ref{eqn:noisy-spec-1}) is strict if and only if 
	\[\int \Lp \sum_{j = 1}^r d_{G_t=g, j} \Rp \pi_{G_t}(g) ds  > \sum_{j=1}^r d_j. \]
	Note that if $S_t$ and $G_t$ are independent, it holds that $\pi_{S_t|g_t = g}(s) = \pi_{S_t}(s)$ and thus $\Sigma_{G_t = g} = \Sigma$ and $d_j = d_{G_t = g, j}$ for all $g$. In this case, Inequality (\ref{eqn:noisy-spec-1}) becomes an equality. 
	This inequality can become a strict inequality only if $G_t$ and $S_t$ are dependent, but $G_t$ does not need to be equal to $S_t$ in order to explain more variation than the constant loading model for the same number of factors.

	This discussion is based on the population model, and for the estimated model, we need to take the estimation error into account, which adds another layer of complexity. The goal of this discussion is to formalize the idea of Section \ref{sec:robust}, namely that the state-varying model provides a more parsimonious representation. In our empirical applications, we show that we need considerably more factors in the constant loading model to explain the same amount of variation as in the state-varying model. This confirms that the states that we condition on are relevant and cannot be independent of the true underlying state process.

	\subsection{Asymptotic Covariance Matrix of Generalized Correlation Statistic}
	
	The matrix $D$ in Theorem 5 is given by
	$D^{\top} = \begin{bmatrix}
	C_{1,1}^{\top} & C_{1,2}^{\top} & C_{2,1}^{\top} & C_{2,2}^{\top}
	\end{bmatrix}$, 
	\begin{eqnarray*}
		C_{1,1} &=& \begin{bmatrix}
			M_{1,1,2}^\T  \otimes M_{1,1,1} + M_{1,1,3} \otimes M_{1,1,4}^\T  + M_{1,1,5} \otimes M_{1,1,6}^\T  + M_{1,1,8}^\T  \otimes M_{1,1,7} & 0 & 0 & 0   
		\end{bmatrix} \\
		C_{1,2} &=& \begin{bmatrix}
			M_{1,2,3} \otimes M_{1,2,4}^\T  & M_{1,2,2}^\T  \otimes M_{1,2,1} & M_{1,2,5} \otimes M_{1,2,6}^\T  & M_{1,2,8}^\T  \otimes M_{1,2,7}
		\end{bmatrix} \\
		C_{2,1} &=& \begin{bmatrix}
			M_{2,1,8}^\T  \otimes M_{2,1,7} & M_{2,1,5} \otimes M_{2,1,6}^\T  & M_{2,1,2}^\T  \otimes M_{2,1,1} & M_{2,1,3} \otimes M_{2,1,4}^\T 
		\end{bmatrix} \\
		C_{2,2} &=& \begin{bmatrix}
			0 & 0 & 0& M_{2,2,2}^\T  \otimes M_{2,2,1} + M_{2,2,3} \otimes M_{2,2,4}^\T  + M_{2,2,5} \otimes M_{2,2,6}^\T  + M_{2,2,8}^\T  \otimes M_{2,2,7}
		\end{bmatrix}
	\end{eqnarray*}
	$M_{l,l',1} =  \left(V^{s_l} \right)^{-1}  ((Q^{s_l})^\T )^{-1} \Sigma_{\lambda_l,\lambda_l} $, $M_{l,l',2} = (Q^{s_{l'}} )^{-1}$, $M_{l,l',3} = \left(V^{s_l} \right)^{-1} ((Q^{s_l})^\T )^{-1} $, $M_{l,l',4} = \Sigma_{\lambda_l,\lambda_{l'}} (Q^{s_{l'}} )^{-1} $, $M_{l,l',5} = ((Q^{s_l})^\T )^{-1}$, $M_{l,l',6} = \Sigma_{\lambda_{l'},\lambda_{l'}} (Q^{s_{l'}})^{-1} \left( V^{s_{l'}} \right)^{-1}  $, $M_{l,l',7} = ((Q^{s_l})^\T )^{-1} \Sigma_{\lambda_l,\lambda_{l'}} $, and $M_{l,l',8} = (Q^{s_{l'}})^{-1} \left( V^{s_{l'}} \right)^{-1}  $

	A plug-in estimator $\hat D$ is $\hat{M}_{l,l',1} =  \left(\bar{V}^{s_l} \right)^{-1}  \frac{1}{N} \sum_{i=1}^N \bar{\lambda}_{li}\bar{\lambda}_{li}^\T $, $\hat{M}_{l,l',2} = I_r$, $\hat{M}_{l,l',3} =\left(\bar{V}^{s_l} \right)^{-1} $, $\hat{M}_{l,l',4} = \frac{1}{N} \sum_{i=1}^N \bar{\lambda}_{li}\bar{\lambda}_{l'i}^\T $, $\hat{M}_{l,l',5} = I_r$, $\hat{M}_{l,l',6} = \frac{1}{N} \sum_{i=1}^N \bar{\lambda}_{l'i}\bar{\lambda}_{l'i}^\T $, $\hat{M}_{l,l',7}= \frac{1}{N} \sum_{i=1}^N \bar{\lambda}_{li}\bar{\lambda}_{l'i}^\T $, $\hat{M}_{l,l',8} = \left(V^{s_{l'}} \right)^{-1}$. 
	Lemma \ref{consistentcovariance} discusses conditions for consistency of the plug-in estimator.

	\subsection{Estimator for the Number of Factors}\label{sec:number}
	
	We generalize the information criterion based estimator for the number of factors of \cite{bai2002determining} to our setup. As the derivations for the asymptotic distribution provide the counterpart of the upper bounds derived
	in \cite{bai2002determining} for our setup, it is relatively straightforward to obtain an estimator for the number of factors.

	For a given number of candidate factors $k$, we define the loss function weighted by the kernel as
	\begin{align}
	V^s(k,F^{s,k}) =& \frac{1}{N T(s)} \sum_{i = 1}^{N} \sum_{t = 1}^{T}  (X^s_{it} - (\Lambda_i(s)^k)^\T F^{s,k}_t)^2, \label{eqn:loss-function}
	\end{align}
	where $F^{s,k} = K_s^{1/2} F^k$, $F^k$ is a matrix of $k$ factors, $\Lambda(s)^k$ is a matrix of $k$ conditional loadings, and $X^s = X K_s^{1/2}$. The sum of the squared residuals where $k$ factors are estimated is denoted as
	\[V^s(k,\hat{F}^{s,k}) = \frac{1}{N T(s)} \sum_{i = 1}^{N} \sum_{t = 1}^{T} (X^s_{it} - (\Lambda_i(s)^k)^\T \hat{F}^{s,k}_t)^2. \]
	Similar to \cite{bai2002determining}, we consider the following objective function to estimate the number of factors:
	\[\mathcal{L}^s(k) = V^s(k,\hat{F}^{s,k})  + g(N,T,h),  \]
	where $g(N,T,h)$ is a penalty function. The following lemma generalizes the \cite{bai2002determining} estimator for the number of factors.
	\setcounter{lemma}{1}
	\begin{lemma}\label{lemma:num-factors}
		Under Assumptions \ref{Ass:Ident}-\ref{ass_err}, $N \rightarrow \infty$, $Th \rightarrow \infty$,  $Nh^2 \rightarrow 0$ and $T h^3 \rightarrow 0$ and the estimated factors minimize \eqref{eqn:loss-function}. Let $\hat{k} = \argmin_{k \leq k_{\max}}  \mathcal{L}^s(k,\hat{F}^s) $. Then $\lim_{N, T \rightarrow \infty} P(\hat{k} = r ) = 1$ if (i) $g(N, T, h) \rightarrow 0$ and (ii) $\delta_{NT,h}^2 g(N, T, h) \rightarrow \infty$ as $N, T \rightarrow \infty$, where $ \delta_{NT,h}^2 = \min(N, Th)$.
	\end{lemma}
	The proof of Lemma \ref{lemma:num-factors} is a straightforward extension of the proof of Theorem 2 in \cite{bai2002determining}. The derivation of Lemma \ref{lemma:num-factors} is based on the following lemma.
	\begin{lemma}\label{lemma:num-factors-prep}
		Under Assumptions \ref{Ass:Ident}-\ref{ass_err}, $N \rightarrow \infty$, $Th \rightarrow \infty$,  $Nh^2 \rightarrow 0$ and $T h^3 \rightarrow 0$, the following statements hold. $H^{s,k}$ denotes a rotation matrix as defined in Theorem \ref{thm_consistency}.
		\begin{enumerate}
			\item For any $k$ with $1 \leq k \leq r$,  we have $V^s(k,\hat{F}^{s,k}) - V^s(k,F^s H^{s,k} ) = O(\delta_{NT,h}^{-1})$.
			\item For each $k$ with $k < r$, there exists a $\tau_k$ such that $\mathrm{plim} \inf_{N, T\rightarrow \infty}  V^s(k,F^s H^{s,k} ) -  V^s(r,F^s  ) = \tau_k $.
			\item For any fixed $k$ with $k \geq r$, $V^s(k,\hat{F}^{s,k}) - V^s(r,\hat{F}^{s,r}) = O_p(\delta_{NT,h}^{-2}) $.
		\end{enumerate}
	\end{lemma}
	The major difference between Lemma \ref{lemma:num-factors-prep} and Lemmas 2, 3, 4 in \cite{bai2002determining} is the convergence rate. Our convergence rate is the smaller of $N$ and $Th$ because the kernel projection implicitly uses a subset of the time-series observations. Moreover, $Nh^2 \rightarrow 0$ and $T h^3 \rightarrow 0$ imply $\delta_{NT, h} h \rightarrow 0$ and  guarantee that the kernel bias terms are negligible in the difference of $V^s(k,F^k) $ evaluated for different $k$ and $F^{s,k}$. We can show this property using a similar proof as for Theorem \ref{thm_consistency}. Hence, the difference mainly depends on the convergence rate of $ \frac{1}{T} \sum_{t=1}^T \norm{\hat{F}^s_t - (H^s)^\T  F^s_t}^2$, which is $\delta_{NT,h}^2$. 
	
	In this Internet Appendix, we also show how to select the number of factors and bandwidth parameter based on cross-validation arguments. In more detail, the number of factors and the bandwidth can be viewed as tuning parameters that can be selected on a validation data set to maximize the amount of explained variation, while the model itself is estimated on the training data. Then, the model can be evaluated out-of-sample on the test data. We explore this idea in simulation and empirical applications. We confirm that the number of factors and bandwidth chosen optimally on the validation data also maximize the out-of-sample $R^2$ on the test data.

	\subsection{Generalized Correlation Test Under the Alternative Hypothesis}\label{sec:testalternative}

	\begin{lemma}\label{lemma:rho}
		Under Assumptions \ref{Ass:Ident}-\ref{doublesum} and under the alternative hypothesis, if $N/\sqrt{Th} \rightarrow 0$, $Nh \rightarrow \infty$, $Th \rightarrow \infty$, $\sqrt{Th}/N \rightarrow 0$, $Nh^2 \rightarrow 0$, $Th^3 \rightarrow 0$, and $NTh^3 \rightarrow 0$, 
		\begin{eqnarray} \label{ass_lamjoint}
		\sqrt{N} \left(  \begin{bmatrix}
		\tvec\left(\frac{1}{N} \Lambda_1^\T \Lambda_1 \right) \\
		\tvec\left(\frac{1}{N} \Lambda_1^\T \Lambda_2 \right) \\
		\tvec\left(\frac{1}{N} \Lambda_2^\T  \Lambda_1 \right) \\
		\tvec\left(\frac{1}{N} \Lambda_2^\T  \Lambda_2 \right) \\
		\end{bmatrix} - \begin{bmatrix}
		\Sigma_{\Lambda_1, \Lambda_1} \\
		\Sigma_{\Lambda_1, \Lambda_2} \\
		\Sigma_{\Lambda_2, \Lambda_1} \\
		\Sigma_{\Lambda_2, \Lambda_2} \\
		\end{bmatrix}  \right)
		\xrightarrow{d} 
		N(0, \Pi)
		\end{eqnarray}
		and the eigenvalues of $\Sigma_{\Lambda_1, \Lambda_1}$, $\Sigma_{\Lambda_2, \Lambda_2}$ are bounded away from 0, then we have 
		\begin{eqnarray*}
			\sqrt{N} (\hat{\rho} - \bar{\rho} ) \rightarrow N(0, \xi^\T  \Pi \xi)
		\end{eqnarray*}
		where $\xi = \begin{bmatrix}
		\tvec\left( -(G_1^{-1} G_2 G_4^{-1} G_3 G_1^{-1})^{\top} \right) \\ \tvec\left( G_1^{-1} G_2 G_4^{-1} \right) \\ \tvec\left( G_4^{-1} G_3 G_1^{-1} \right) \\ \tvec\left( -(G_4^{-1} G_3 G_1^{-1} G_2 G_4^{-1} )^{\top} \right) \end{bmatrix}$, $G_1 = \Sigma_{\Lambda_1, \Lambda_1}$, $G_2 = \Sigma_{\Lambda_1, \Lambda_2}$, $G_3 = \Sigma_{\Lambda_2, \Lambda_1}$, $G_4 = \Sigma_{\Lambda_2, \Lambda_2}$ and $\bar{\rho} = tr \left( \Sigma_{\Lambda_1, \Lambda_1}^{-1} \Sigma_{\Lambda_1, \Lambda_2} \Sigma_{\Lambda_2, \Lambda_2}^{-1} \Sigma_{\Lambda_2, \Lambda_1} \right)$. 
	\end{lemma}

	\subsection{Discrete State Space}\label{sec:disrete}	
	The focus of this paper is to estimate latent factor models conditioned on a continuous state space. Assumption \ref{ass_state} assumes that the state space is continuous.	
	If the state space is discrete, for example, the recession indicator that we consider in our empirical application for the yield data, we can, in principle, also use the kernel method to estimate the factor model conditional on a discrete state outcome. If we use a uniform kernel $k(u) = \frac{1}{2} \mathbbm{1}(|u| \leq 1)$ with a very small bandwidth, then observations in the time periods with a different state value have zero kernel weight. For example, if we want to estimate a factor model in recessions, using a uniform kernel with a small bandwidth is equivalent to using the data only in recessions to estimate the factor model. However, our proofs, that are based on a kernel applied to a continuous state-space model, would require slight modifications.

The case of a discrete state space is actually a simpler problem. Estimating a latent model conditioned on a discrete state outcome, is simply PCA applied to only those time periods where we observe the specific state realizations. We can directly extend the theoretical arguments in \cite{bai2003inferential} to show the asymptotic normality of the estimated loadings $\hat{\Lambda}(s)$ in state $s$ and estimated factors $\hat{F}_t$ for $t$ that satisfies $S_t = s$. The convergence rate for $\hat{\Lambda}(s)$ is $T_s$, where $T_s = \frac{1}{T} \sum_{t = 1}^T \mathbbm{1}(S_t = s)$ is the number of time periods in state $s$. The convergence rate for $\hat{F}_t$ is $\sqrt{N}$, the same as in \cite{bai2003inferential}. 
	
\newpage

	\section{Additional Empirical Results}\label{sec:EmpApp}
	
	\subsection{Term Structure Factors Conditioned on Unemployment Rates}
	
	We use the U.S. unemployment rate as the third state variable. Unemployment rates, released monthly by the Bureau of Labor Statistics, rise or fall in the wake of changing economic conditions. Although it is lagged, it can roughly indicate how well the economy is doing. The monthly unemployment rates from 07/2001 to 12/2016 are shown in Figure \ref{unemploy}. The unemployment rate reached its peak at 10\% during the financial crisis and dropped afterward. 
	
	Similar to using the VIX as the state variable, we estimate a factor model conditional on every possible value of the unemployment rate. Due to the wider range of values of the unemployment rate compared to the log-normalized VIX, we choose a larger bandwidth $h=1$.\footnote{{Results using different bandwidths are similar and available upon request.}} The proportion of variance explained by the first three factors is shown in Figure \ref{prop-var-unemploy}. The first factor becomes less important as the unemployment rate rises, while the second factor gains importance, even close to the level of the first factor.  
	
	\begin{figure}[H]
		\centering
		\begin{subfigure}{.35\textwidth}
			\centering
			\includegraphics[width=1\linewidth]{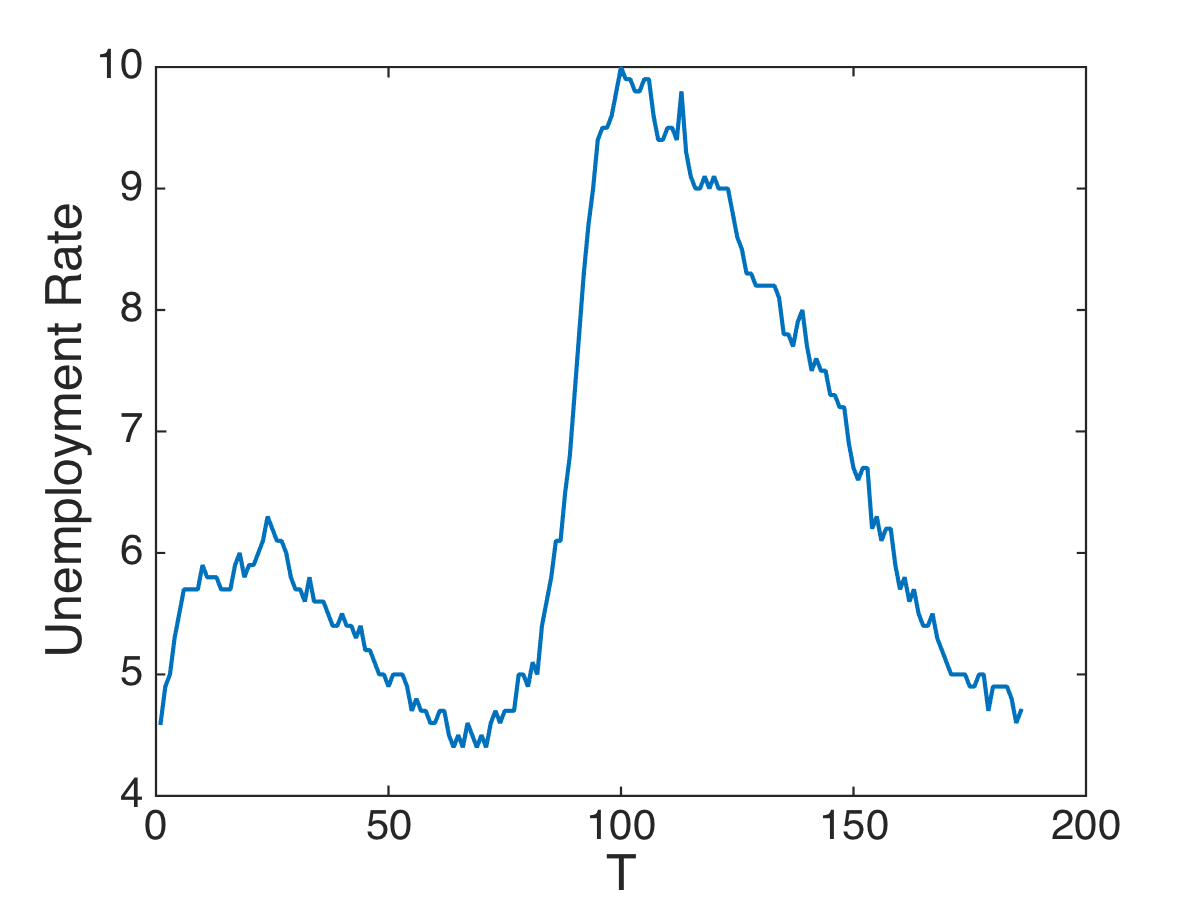}
			\caption{\\ Unemployment Rate}
			\label{unemploy}
		\end{subfigure}%
		\begin{subfigure}{.35\textwidth}
			\centering
			\includegraphics[width=1\linewidth]{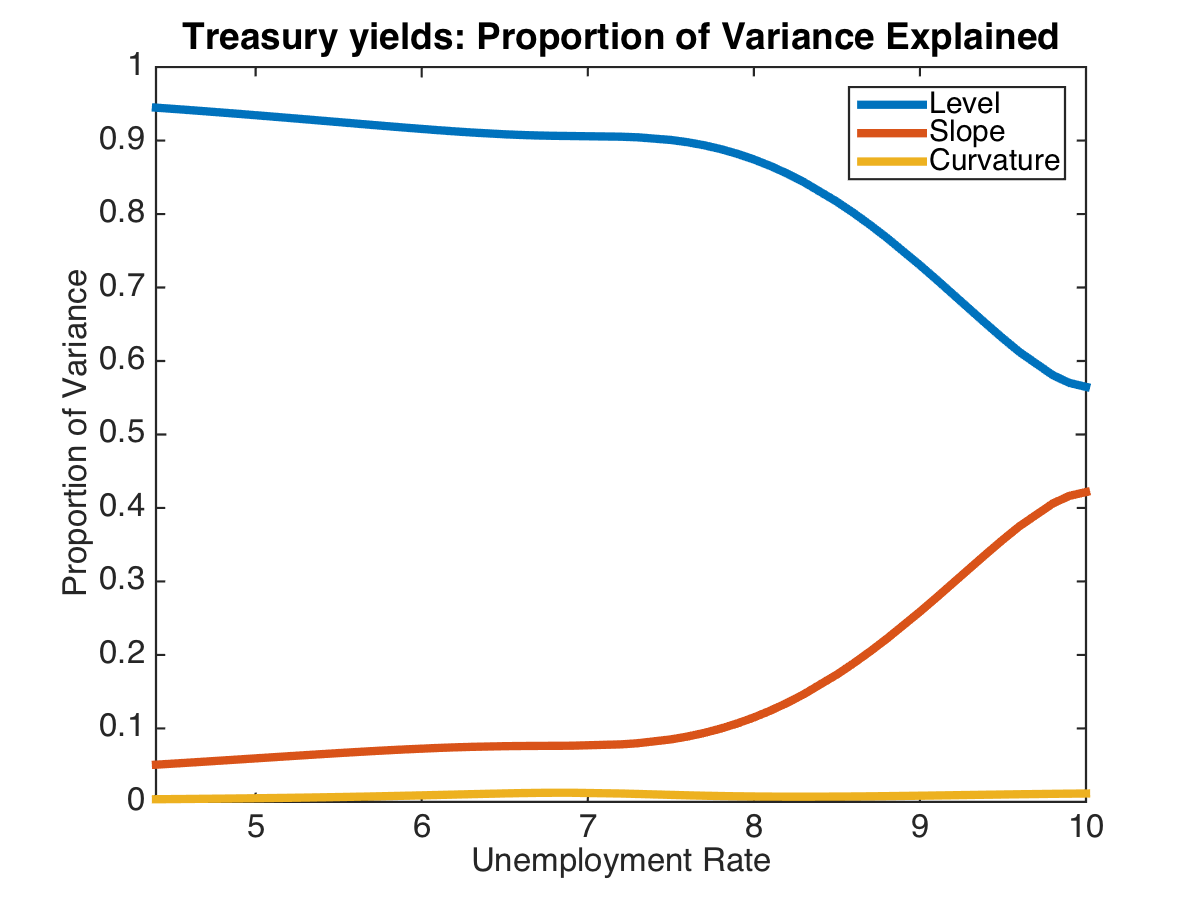}
			\caption{\\ Proportion of variance explained}
			\label{prop-var-unemploy}
		\end{subfigure}%
		\begin{subfigure}{.35\textwidth}
			\centering
			\includegraphics[width=1\linewidth]{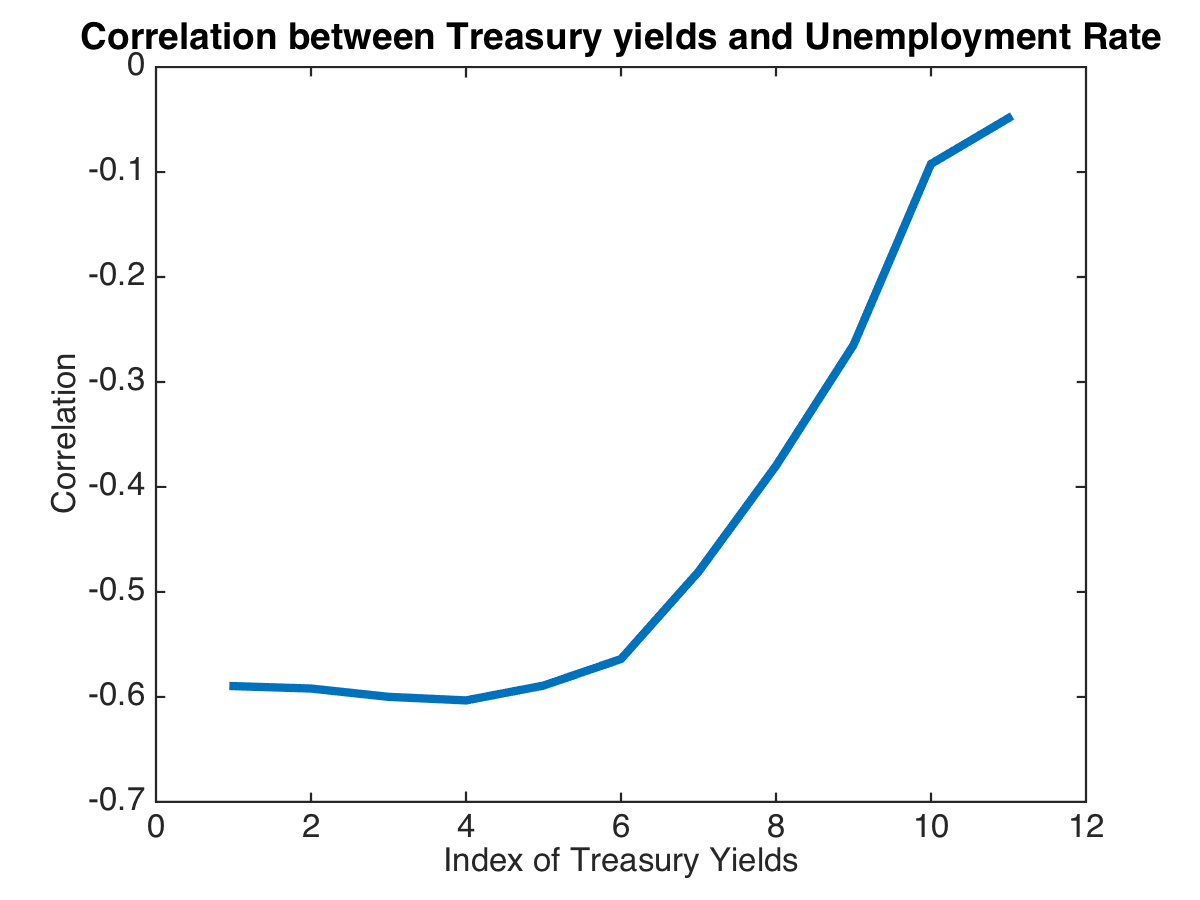}
			\caption{\\ Correlation}
			\label{corr}
		\end{subfigure}
		\caption{Unemployment rate from 07/31/2001 to 12/01/2016 (index represents the number of months from 07/31/2001), proportion of variance explained by the first three factors in different unemployment rates, and the correlation between yields and unemployment rate for different maturity bonds}
	\end{figure}

	Figure \ref{corr} shows that yields of short-term bonds are strongly negatively correlated with the unemployment rate, while yields of long-term bonds are almost uncorrelated with the unemployment rate. This correlation is captured by the variation of factor models in different unemployment rates. Figure \ref{yield_unemploy} shows how loadings of different maturity bonds change with unemployment rates. The weights of long-term bonds (10-, 20-, 30-year bonds) in the level factor drop significantly, to even close to 0, as the unemployment rate increases to its maximum. The level factor's value mainly depends on yields of short-term (< 3-year) bonds. A recession is usually associated with a high VIX and a high unemployment rate. The results conditional on unemployment rates are consistent with the previous two results. The relative shifts in the three factors have a similar pattern as those for the VIX and the recession indicator. The patterns are stronger in terms of the magnitude of changes in factor loadings when using the unemployment rate as the state variable.

	Last but not least, we test for which unemployment rates the factor model changes. The result is shown in Figure \ref{yield_unemploy_gen_corr}. The values of the test statistic and the p-values exhibit a weaker pattern than in the case of the VIX. However, the general finding that the factor structure is different for state values far away from each other persists. We suspect that Figure \ref{yield_unemploy_gen_corr} is more scattered than Figure \ref{yield_vix_gen_corr} because our choice of bandwidth leads to a less smooth fit of the loading function than in the VIX case.

	\begin{figure}[H]
		\centering
		\begin{subfigure}{.35\textwidth}
			\centering
			\includegraphics[width=1\linewidth]{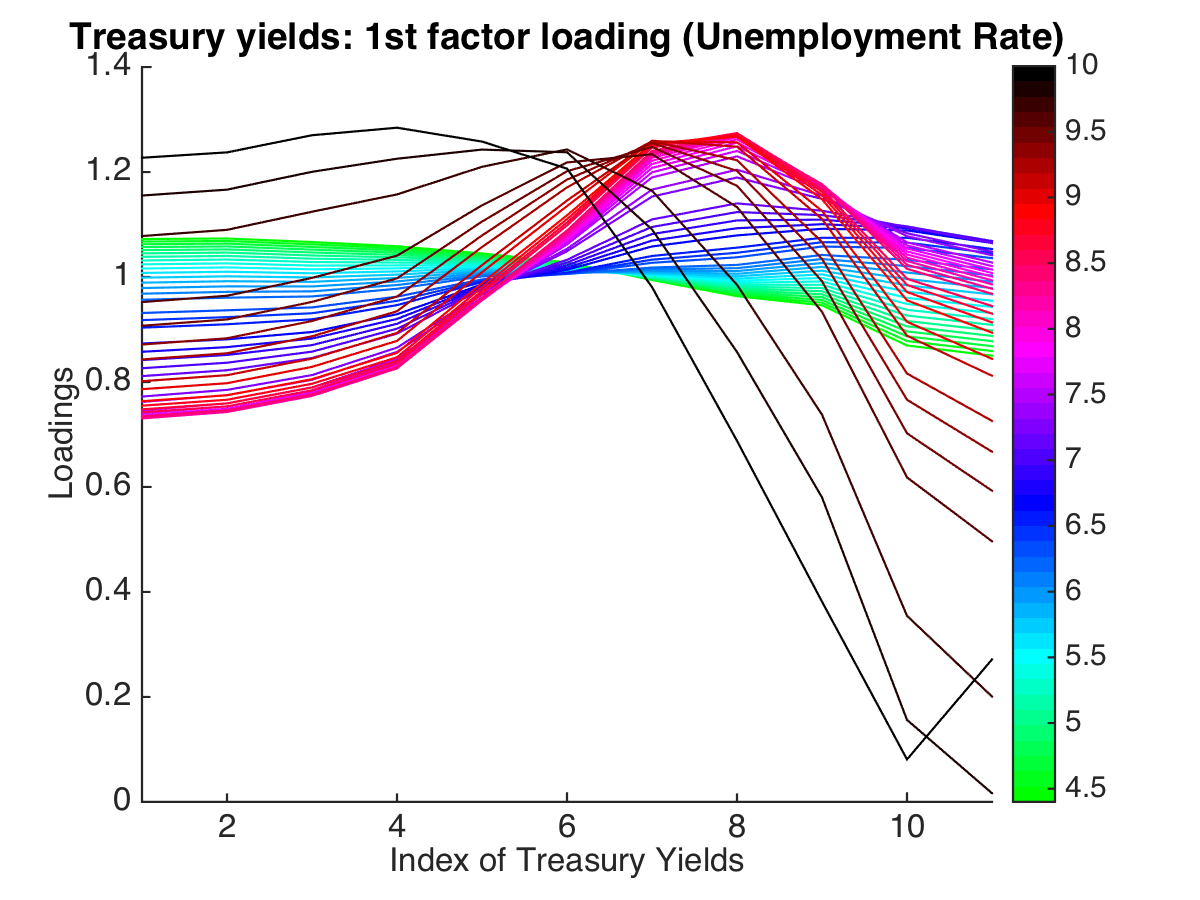}
			\caption{Level Factor}
		\end{subfigure}%
		\begin{subfigure}{.35\textwidth}
			\centering
			\includegraphics[width=1\linewidth]{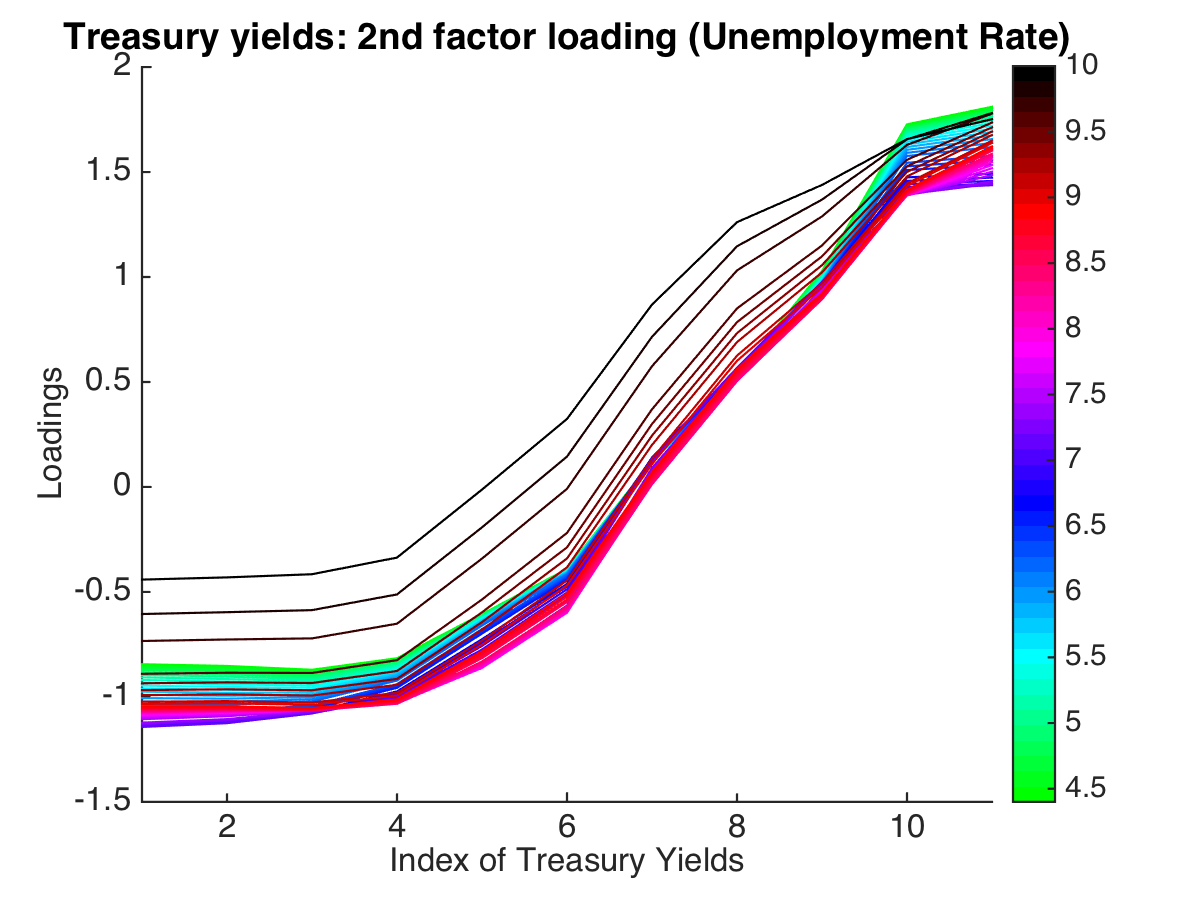}
			\caption{Slope Factor}
		\end{subfigure}%
		\begin{subfigure}{.35\textwidth}
			\centering
			\includegraphics[width=1\linewidth]{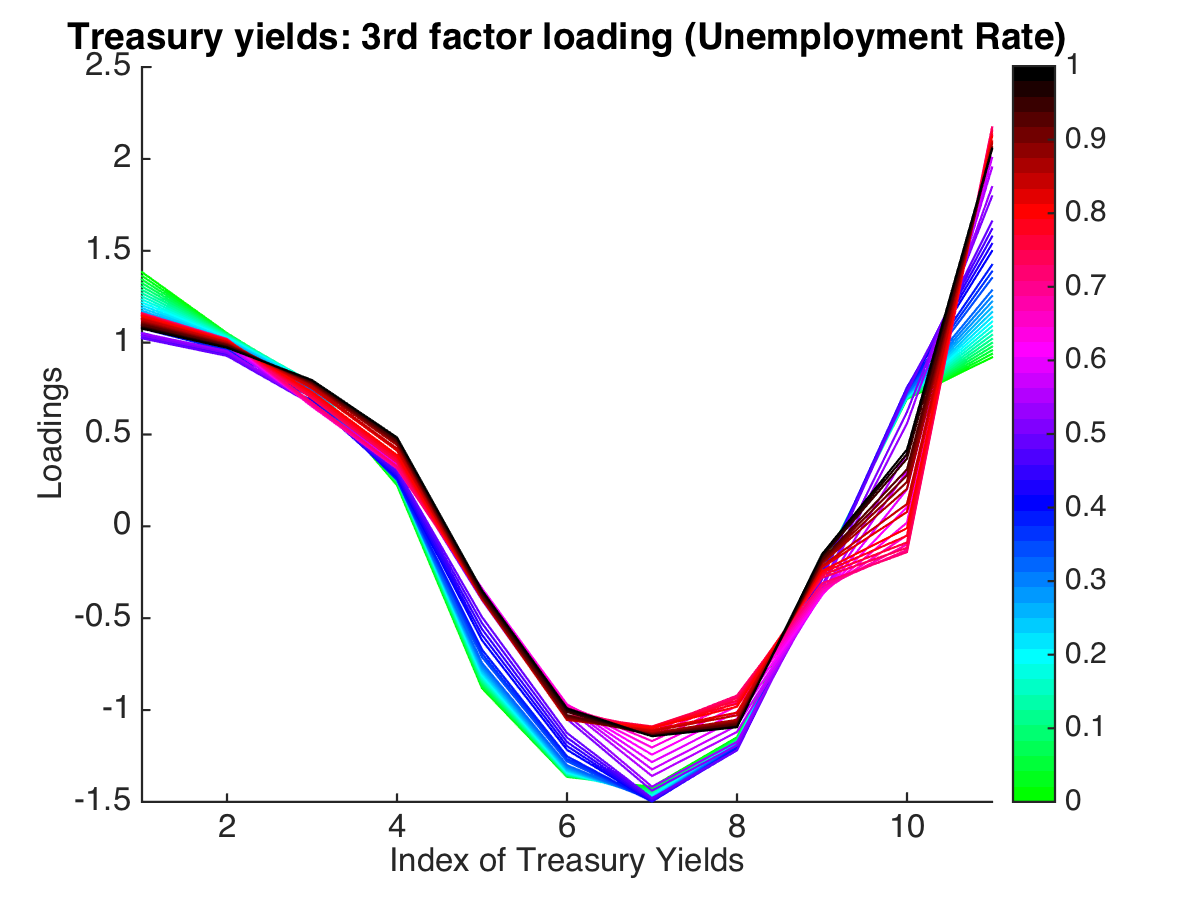}
			\caption{Curvature Factor}
		\end{subfigure}
		\caption{First three latent factor loadings for treasury securities conditioned on unemployment rate  (The color bar indicates unemployment rate).}
		\label{yield_unemploy}
	\end{figure}

	\begin{figure}[H]
		\centering
		\begin{subfigure}{.4\textwidth}
			\centering
			\includegraphics[width=1\linewidth]{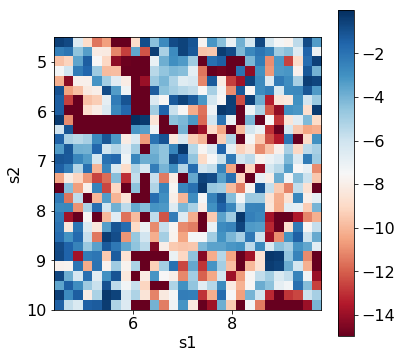}
			\caption{t-value}
		\end{subfigure}%
		\begin{subfigure}{.44\textwidth}
			\centering
			\includegraphics[width=1\linewidth]{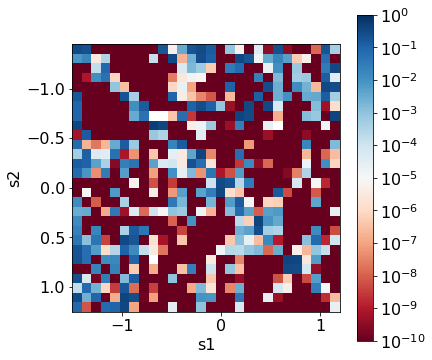}
			\caption{p-value}
		\end{subfigure}
		\caption{Generalized correlation test of estimated loadings in any paired states in US Treasury Securities data using unemployment rate as the state variable ($\mathcal{H}_0$: there exists a full rank matrix $G$, $\Lambda_{s_2} = \Lambda_{s_1} G$, $\mathcal{H}_1$: for any full rank matrix $G$, $\Lambda_{s_2} \neq \Lambda_{s_1} G$). $x$-axis and $y$-axis are both log-normalized VIX.  The value at point ($s_1, s_2$) in figure (a) represents the standardized generalized correlation (t-value) of $\bar{\Lambda}_{s_1}$ and $\bar{\Lambda}_{s_2}$. The value at point ($s_1, s_2$) in Figure (b) represents the p-value corresponding to the t-value in Figure (a)}
		\label{yield_unemploy_gen_corr}
	\end{figure}

	\subsection{Stock Returns Conditioned on VIX}

	\begin{figure}[H]
		\tcapfig{S\&P 500 Stock Return Data: Generalized Correlation Test in Any Paired States}
		\centering
		\begin{subfigure}[c]{.3\textwidth}
			\centering
			\includegraphics[width=1\linewidth]{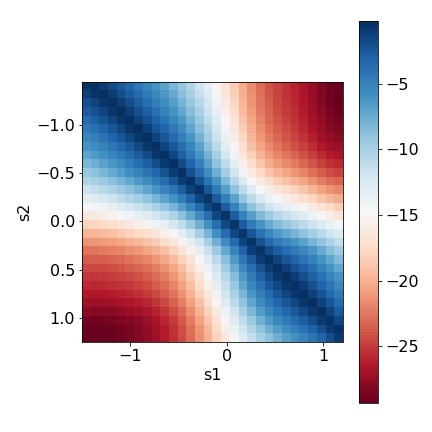}
			\caption{t-value (1 factor)}
		\end{subfigure}%
		\begin{subfigure}[c]{.3\textwidth}
			\centering
			\includegraphics[width=1\linewidth]{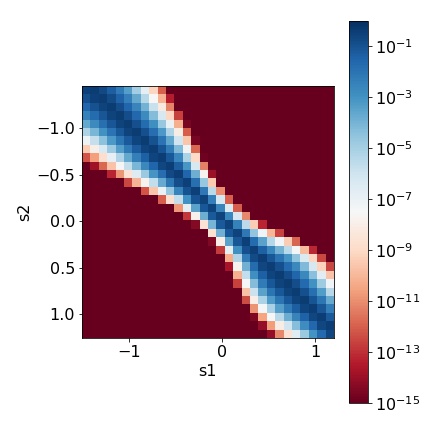}
			\caption{p-value (1 factor)}
		\end{subfigure}
		\begin{subfigure}[c]{.3\textwidth}
			\centering
			\includegraphics[width=1\linewidth]{Figures/stock_gen_corr_t_value_5_factors.jpg}
			\caption{t-value (5 factors)}
		\end{subfigure}%
		\begin{subfigure}[c]{.3\textwidth}
			\centering
			\includegraphics[width=1\linewidth]{Figures/stock_gen_corr_p_value_5_factors.jpg}
			\caption{p-value (5 factors)}
		\end{subfigure}
		\bnotefig{Generalized correlation test for S\&P500 returns with log-normalized VIX as state variable for 1 and 5 factors. (a) and (c) standardized generalized correlations (t-values) and (b) and (d) corresponding p-values.}
		\label{gen_corr_sp500}
	\end{figure}

	Figure \ref{gen_corr_sp500} shows the test results for the generalized correlation test for the combination of any two state outcomes of the VIX. We use a five-factor model motivated by the five factors of \cite{fama2015five} and \cite{lettaupelger2018}. The span of the loadings drastically changes with the realization of the VIX, which confirms the previous results. Even in a one-factor model, the span of the state-varying loadings is different from a constant factor model. This finding will be confirmed in the portfolio application.

	\begin{figure}[H]
		\tcapfig{S\&P 500 Stock Return Data: Out-of-Sample Sharpe Ratio of Mean-Variance Efficient Portfolio}
		\centering
		\includegraphics[width=0.7\linewidth]{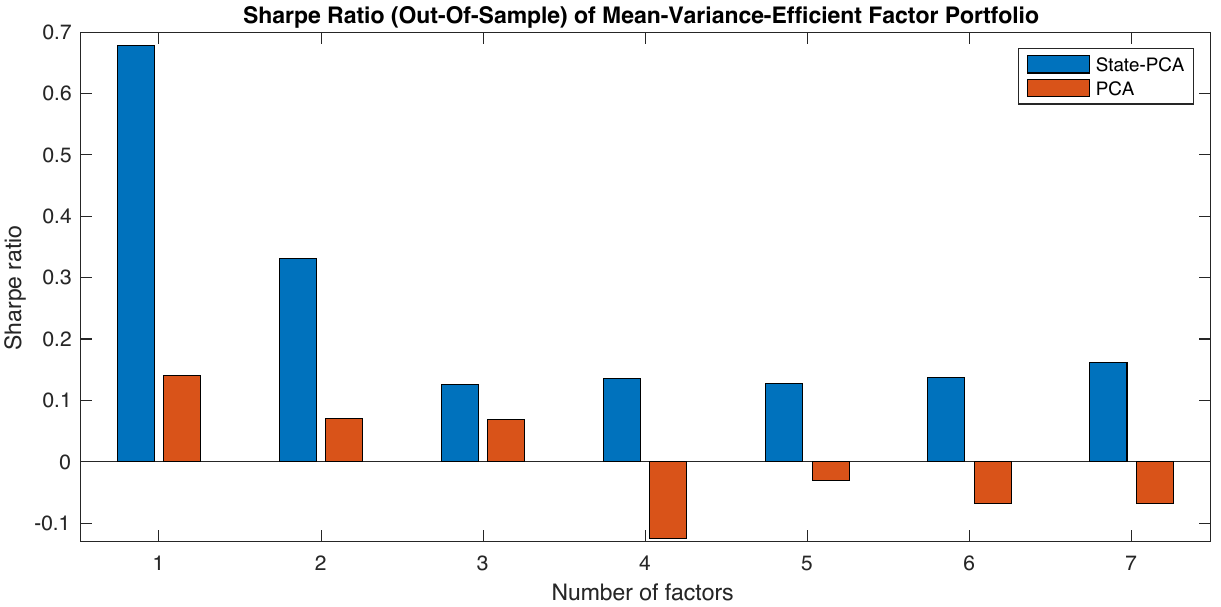}
		\bnotefig{S\&P500 stocks: Out-of-sample Sharpe ratio of mean-variance efficient portfolio based on the latent factors of the state-varying and constant loading model. State is log-normalized VIX. We use the first 3 years (2004/01/01-2006/12/31) for training and update out-of-sample results on an expanding window for the next 10 years (2007/01/01-2016/12/31).}
		\label{fig:SRsp500}
	\end{figure}
	
	Last but not least, we study the portfolio implications of the estimated factor. Arbitrage pricing theory implies that only systematic risk earns a risk premium, and hence the pricing kernel is spanned by the systematic factors. In other words, the mean-variance-efficient portfolio should only be composed of systematic factors. The mean-variance optimization problem based on latent factors has the appealing feature that it is independent of the rotational indeterminacy of the latent factors, i.e., it does not depend on a specific choice of $H(S_t)$. Note that although a risk premium can only be earned by a systematic factor, not every systematic factor is necessarily compensated for risk. This implies that, for example, in a five-factor model, an optimal portfolio invests only in these five factors, but some of the factors can have a weight of zero in the portfolio.\footnote{See \cite{lettaupelger2018} for a discussion.} Figure \ref{fig:SRsp500} plots the annualized Sharpe ratios\footnote{The Sharpe ratio is the expected return of an asset in excess of the risk-free rate normalized by its standard deviation. A higher Sharpe ratio corresponds to a higher average return for the same amount of risk measured by the standard deviation. The mean-variance efficient portfolio has the highest Sharpe ratio.} out-of-sample for different numbers of factors. As in-sample results are known for over-fitting,\footnote{See \cite{lettaupelger2018}.} we report only the out-of-sample results. Note that our investment strategy is an actually investable portfolio, as we use the VIX value at the market opening to calculate the portfolio returns from the opening to the closing of the market.\footnote{The conditional mean-variance efficient portfolio weights are a function of the state variable $S_t$, which is known at the time of the investment.} The mean-variance efficient portfolio based on state-varying factors has a significantly higher Sharpe ratio than the constant loading model. In fact, the constant loading factors can result in negative Sharpe ratios, which indicate that they are missing a crucial time-variation, which is captured by our model. It seems that a one-factor model with state-varying loadings captures most of the pricing information while adding more factors distorts the model.

	\subsection{Choice of Tuning Parameters}
	
	We show how to select the number of factors and the bandwidth parameter based on cross-validation arguments. In more detail, the number of factors and the bandwidth can be viewed as tuning parameters that can be selected on a validation data set to maximize the amount of explained variation, while the model itself is estimated on the training data. Then, the model can be evaluated out-of-sample on the test data. We confirm that the number of factors and bandwidth chosen optimally on the validation data also maximize the out-of-sample $R^2$ on the test data.

	We split the data into training, validation, and test sets. The first 25\% time periods constitute the training set, the following 25\% time periods are the validation set, and the remaining 50\% time periods represent the test data. We use the loadings and factor weights estimated on the training to choose the number of factors and bandwidth that maximize the $R^2$ on the validation data. Given the estimated model and tuning parameters, we evaluate the $R^2$ on the test data. We plot the $R^2_{X,\, \mathrm{val}}$ and $R^2_{X,\, \mathrm{test}}$ as a function of number of factors $k$ and bandwidth $h$.
	
	Figures \ref{fig:stock_vix_varying_k_h_out_of_sample}, \ref{fig:yield_vix_varying_k_h_out_of_sample} and \ref{fig:yield_unemploy_varying_k_h_out_of_sample} show that the optimal tuning parameters on the validation data yield the best out-of-sample results. In particular, the estimation results confirm that the estimation results are relatively robust to the choice of the bandwidth.
	
	We also report the results of using all the data for the in-sample estimation and an expanding window estimate for out-of-sample in Figures \ref{fig:stock_vix_compare}, \ref{fig:yield_vix_compare} and \ref{fig:yield_unemploy_compare}. For example, Figure \ref{fig:stock_vix_compare} reports the explained variation in- and out-of-sample for the state-varying and constant loading model for stock returns. For the out-of-sample results, we first estimate the loadings on the first three years of data and then update the loadings estimates on an expanding window to obtain the out-of-sample systematic component for the next ten years. Obviously, the state-varying factor model explains more variation than the constant loading model in- and out-of-sample for the same number of factors. Therefore, conditioning on the VIX results in a more parsimonious factor model to explain the co-movement in stock returns. Our results do not depend on a prior on the number of factors. In particular, it implies that stock returns do not follow a constant loading model and that the VIX is related to the source of time-variation. We do not require that the VIX explains all the time-variation in the loadings, but we show the conditional model provides a better description of the data than the unconditional one. We use the optimal bandwidth as suggested by the cross-validation, but as shown in Figures \ref{fig:stock_vix_varying_k_h_out_of_sample} and \ref{fig:stock_vix_varying_k_h_in_sample} the results are relatively robust to the choice of bandwidth.

In Figures \ref{fig:yield_vix_compare} and \ref{fig:yield_unemploy_compare}, we compare the amount of variation explained by different factor models for the yield data. Treasury yields are somewhat special in the sense that their variation can almost perfectly be explained by three factors. The state-varying factor model with three factors explains slightly more variation, comparable to a four-factor model with constant loadings. However, if the goal is to explain variation, a time-varying, and a constant three-factor model both perform well. The takeaway from this empirical application is to understand that the economic interpretation of ``level'', ``slope'' and ``curvature'' has to be used with caution. Depending on the economic conditions, the first PCAs are different, and for example, the first PCA factor can move from a simple average to a combination of a long-short and average portfolio. 	
	

	\subsubsection{Tuning Parameters for Stock Returns}

	\begin{figure}[H]
		\centering
		\begin{subfigure}{\textwidth}
			\centering
			\includegraphics[width=1\linewidth]{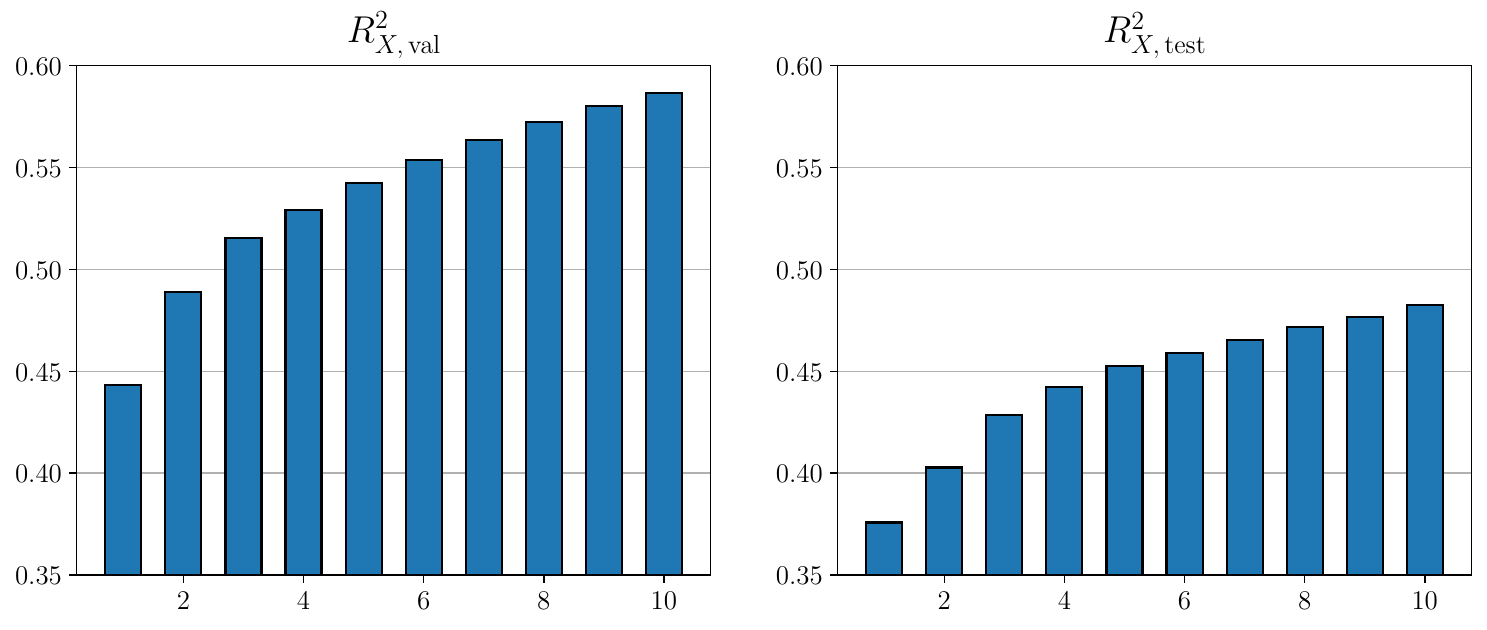}
			\caption{Constant loading factor model}
		\end{subfigure}
		\begin{subfigure}{\textwidth}
			\centering
			\includegraphics[width=1\linewidth]{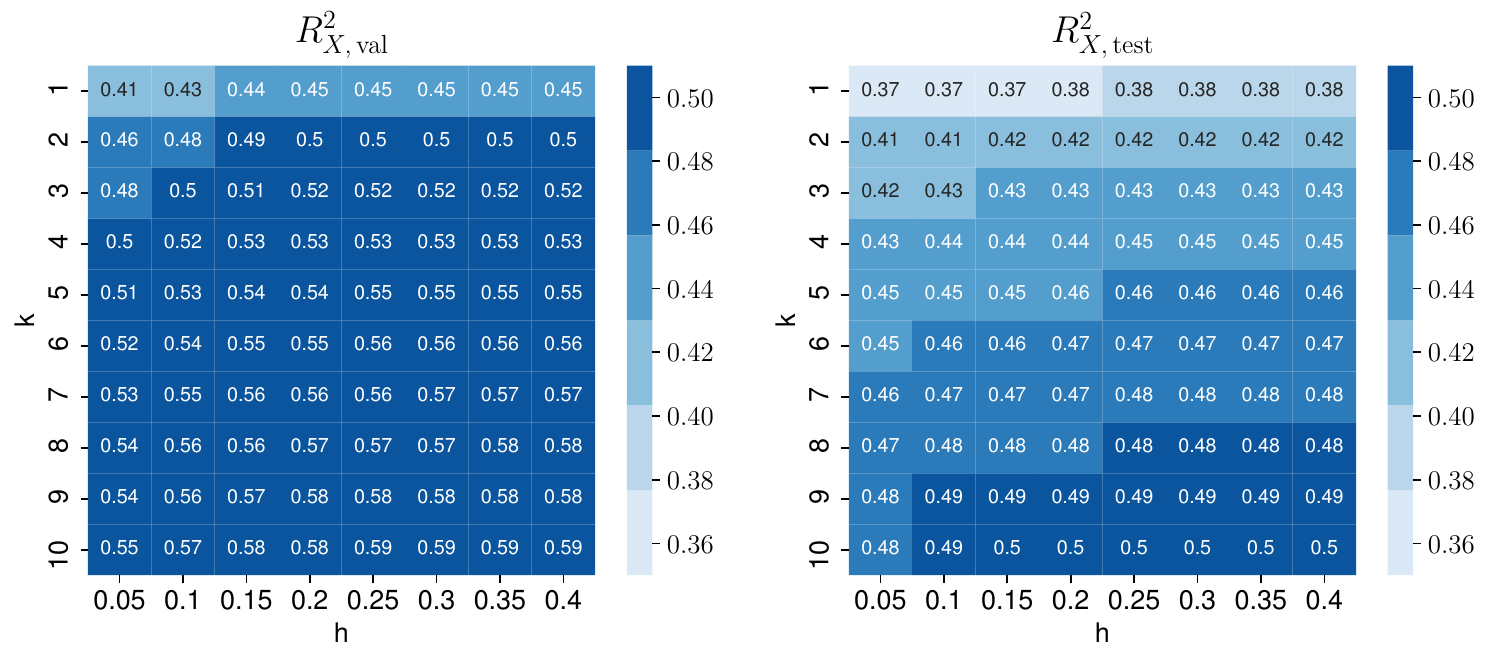}
			\caption{State-varying factor model}
		\end{subfigure}
		\caption{Optimal tuning parameter selection for S\&P 500 stock returns with the log-normalized VIX as state process. The figure plots $R^2_X$ on the validation and test data as function of the number of factors $k$ and the bandwidth $h$. The training data is 01/01/2004 to 31/12/2006, the validation data is 01/01/2007 to 12/31/2009 and the test data is 01/01/2010 to 12/31/2016.}
		\label{fig:stock_vix_varying_k_h_out_of_sample}
	\end{figure}

	\begin{figure}[H]
		\centering
		\begin{subfigure}{.5\textwidth}
			\centering
			\includegraphics[width=1\linewidth]{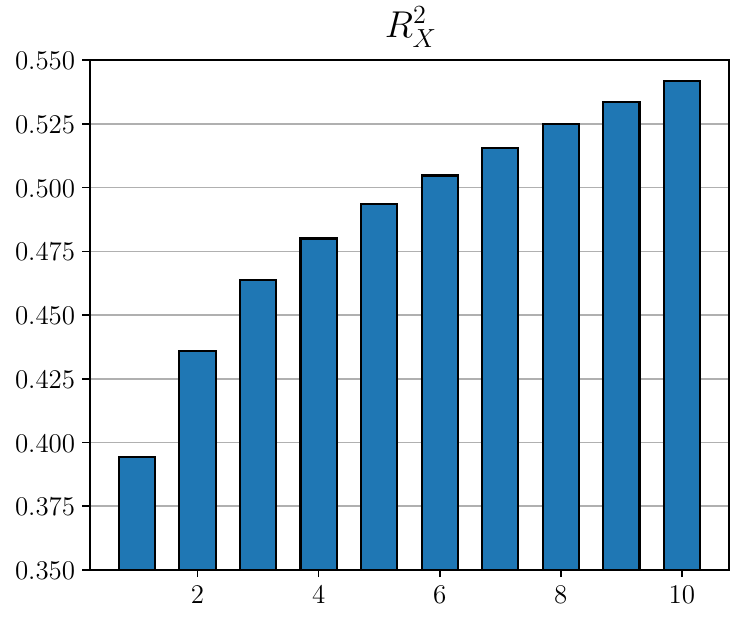}
			\caption{Constant loading factor model}
		\end{subfigure}%
		\begin{subfigure}{.5\textwidth}
			\centering
			\includegraphics[width=1\linewidth]{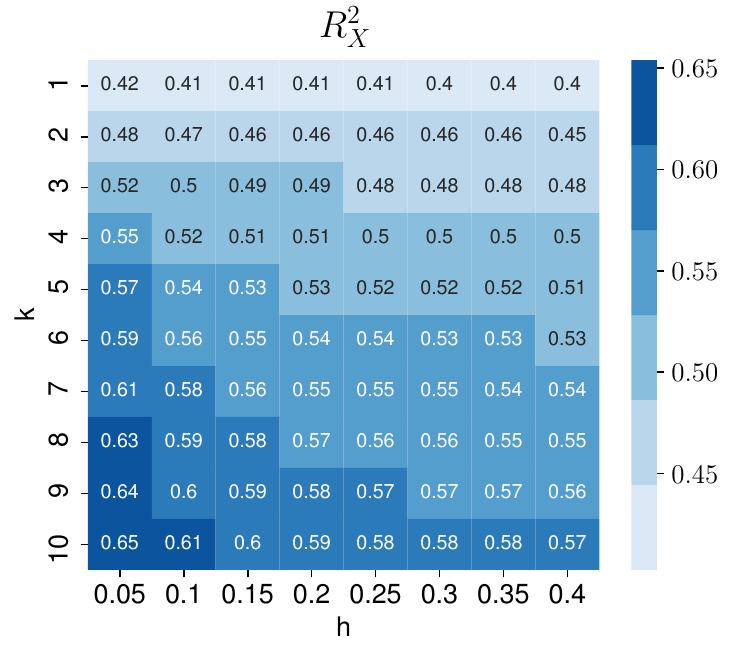}
			\caption{State-varying factor model}
		\end{subfigure}
		\caption{Optimal tuning parameter selection for S\&P 500 stock returns with the log-normalized VIX as state process. The figure plots $R^2_X$ on the in-sample training data as function of the number of factors $k$ and the bandwidth $h$. The in-sample results use the full data set.}	
		\label{fig:stock_vix_varying_k_h_in_sample}
	\end{figure}

	\begin{figure}[H]
		\centering
		\begin{subfigure}{.5\textwidth}
			\centering
			\includegraphics[width=1\linewidth]{Figures/comparison/stock_vix_compare_in_sample.pdf}
		\end{subfigure}%
		\begin{subfigure}{.5\textwidth}
			\centering
			\includegraphics[width=1\linewidth]{Figures/comparison/stock_vix_compare_out_of_sample.pdf}
		\end{subfigure}
		\caption{In-sample (training) and out-of-sample (test) $R^2_X$ for S\&P 500 stock returns with the log-normalized VIX as state process. The bandwidth is chosen optimally.}
		\label{fig:stock_vix_compare}
	\end{figure}
	
	\subsubsection{Tuning Parameters for Treasury Securities}

	\begin{figure}[H]
		\centering
		\begin{subfigure}{\textwidth}
			\centering
			\includegraphics[width=1\linewidth]{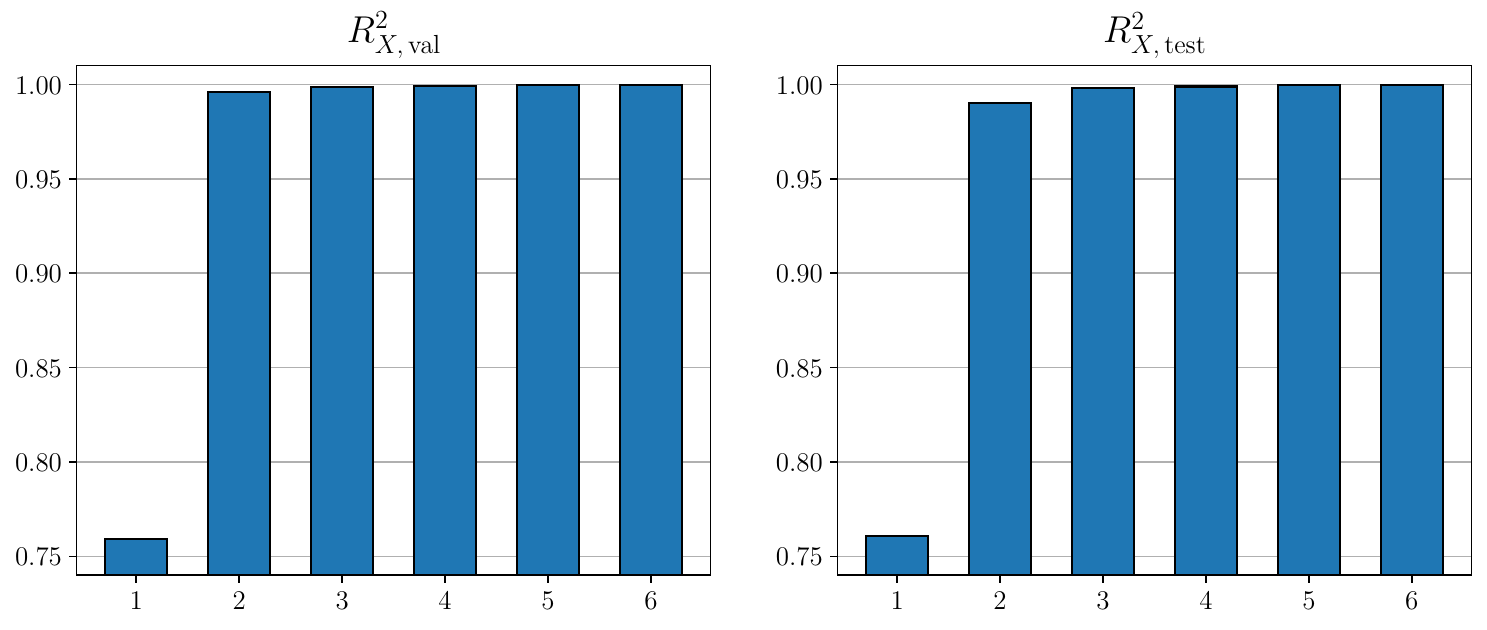}
			\caption{Constant loading factor model}
		\end{subfigure}
		\begin{subfigure}{\textwidth}
			\centering
			\includegraphics[width=1\linewidth]{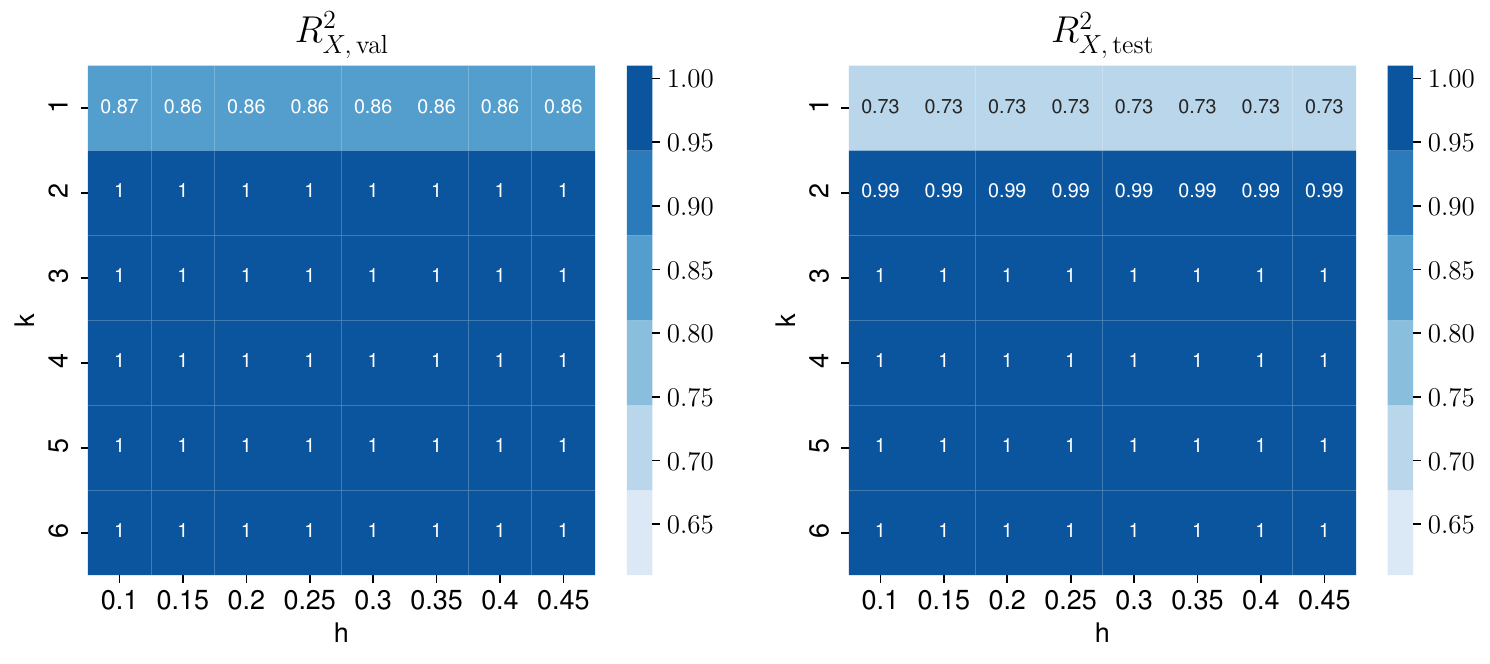}
			\caption{State-varying factor model}
		\end{subfigure}
		\caption{Optimal tuning parameter selection for U.S. Treasury Securities with the log-normalized VIX as state process. The figure plots $R^2_X$ on the validation and test data as function of the number of factors $k$ and the bandwidth $h$. The first 25\% time observations are the training data, the following 25\% time observations are the validation data, and the remaining 50\% time are the out-of-sample test data}
		\label{fig:yield_vix_varying_k_h_out_of_sample}
	\end{figure}

	\begin{figure}[H]
		\centering
		\begin{subfigure}{.5\textwidth}
			\centering
			\includegraphics[width=1\linewidth]{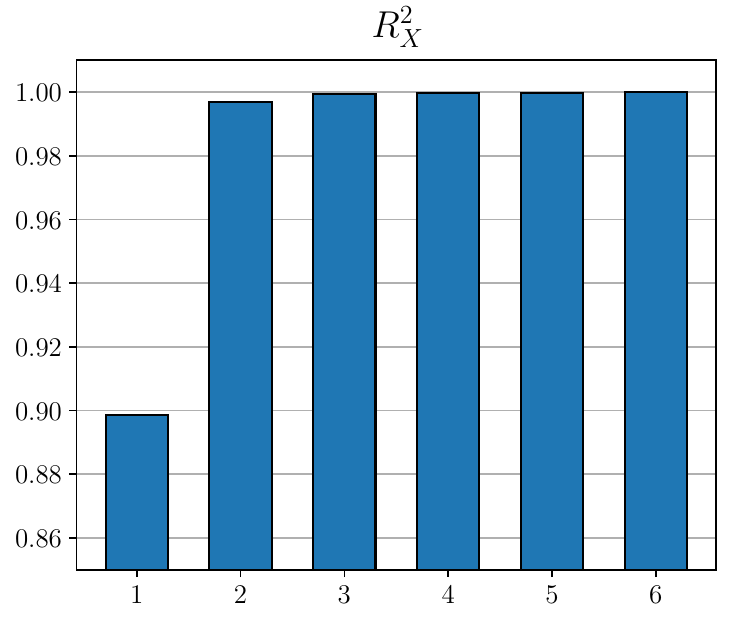}
			\caption{Constant loading factor model}
		\end{subfigure}%
		\begin{subfigure}{.5\textwidth}
			\centering
			\includegraphics[width=1\linewidth]{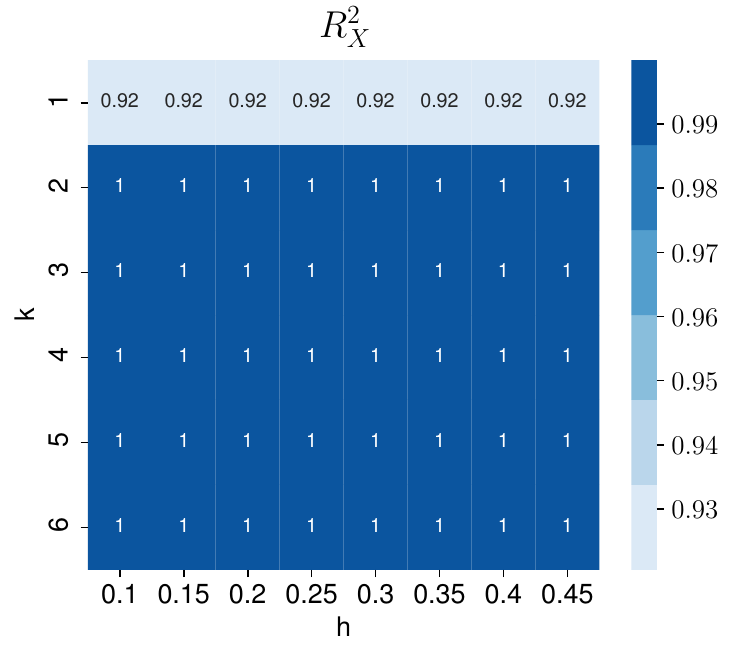}
			\caption{State-varying factor model}
		\end{subfigure}
		\caption{Optimal tuning parameter selection for U.S. Treasury Securities with the log-normalized VIX as state process. The figure plots $R^2_X$ on the in-sample training data as function of the number of factors $k$ and the bandwidth $h$. The in-sample results use the full data set.}
		\label{fig:yield_vix_varying_k_h_in_sample}
	\end{figure}

		\begin{figure}[H]
		\centering
		\begin{subfigure}{.5\textwidth}
			\centering
			\includegraphics[width=1\linewidth]{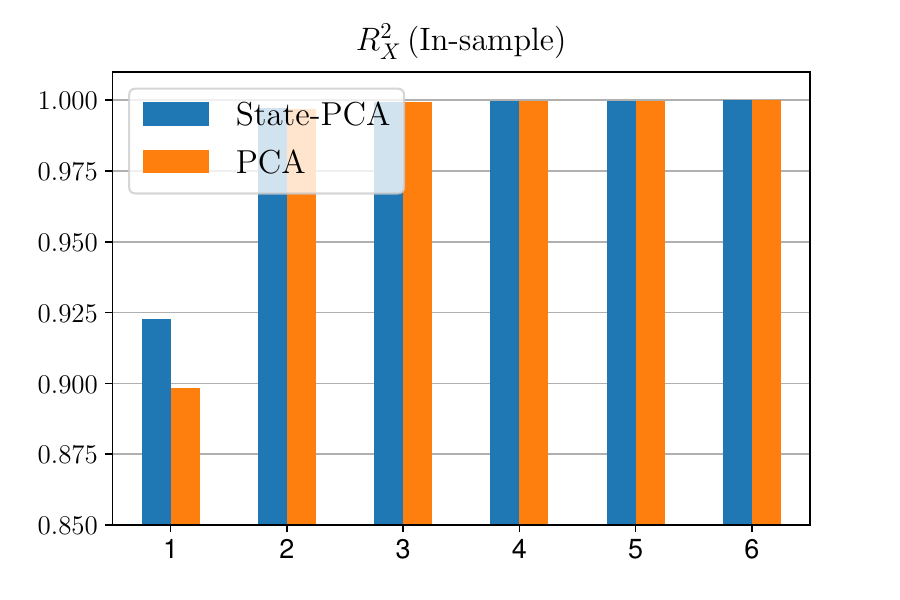}
		\end{subfigure}%
		\begin{subfigure}{.5\textwidth}
			\centering
			\includegraphics[width=1\linewidth]{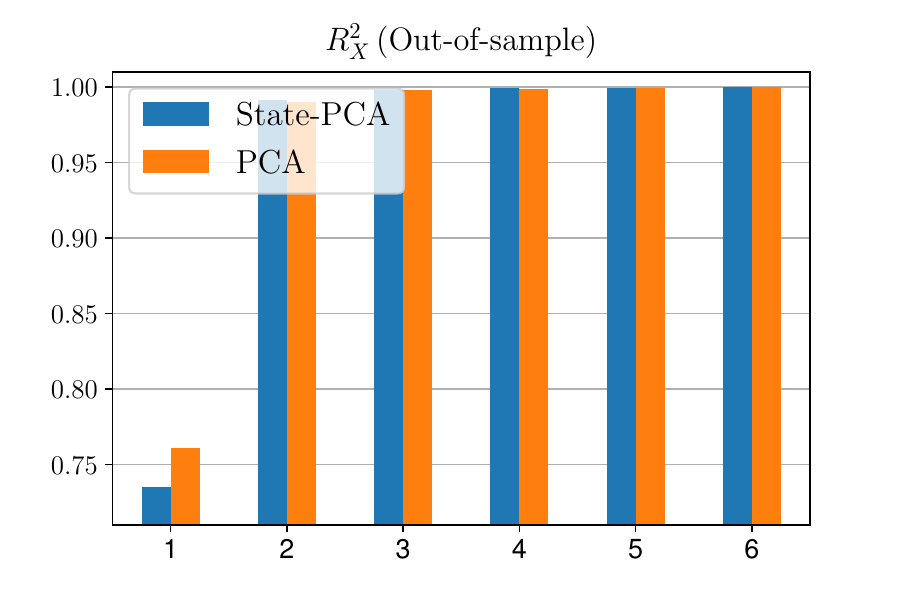}
		\end{subfigure}
		\caption{In-sample (training) and out-of-sample (test) $R^2_X$ for US Treasury Securities with the log-normalized VIX as state process. The bandwidth is chosen optimally. We use the first half of the data for the estimating the loadings and update them on an expanding window to obtain the out-of-sample common component.}
		\label{fig:yield_vix_compare}
	\end{figure}

	\begin{figure}[H]
		\centering
		\begin{subfigure}{\textwidth}
			\centering
			\includegraphics[width=1\linewidth]{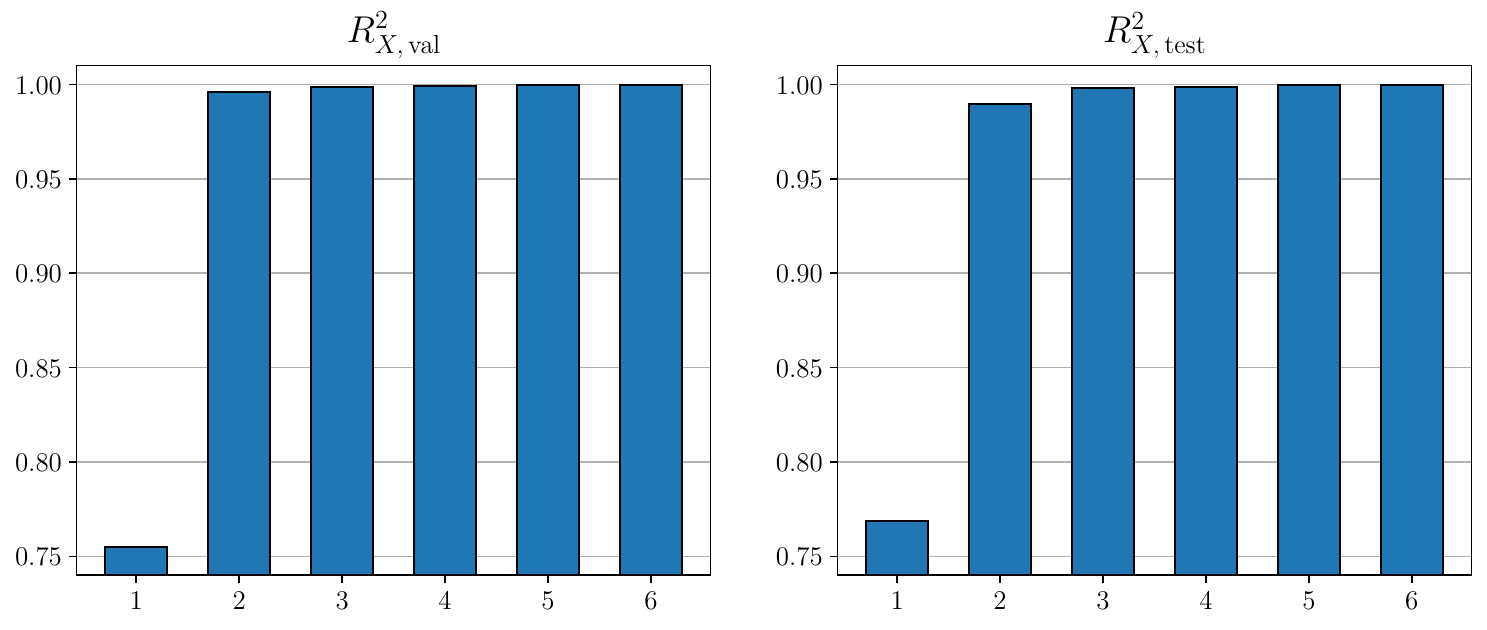}
			\caption{Constant loading factor model}
		\end{subfigure}
		\begin{subfigure}{\textwidth}
			\centering
			\includegraphics[width=1\linewidth]{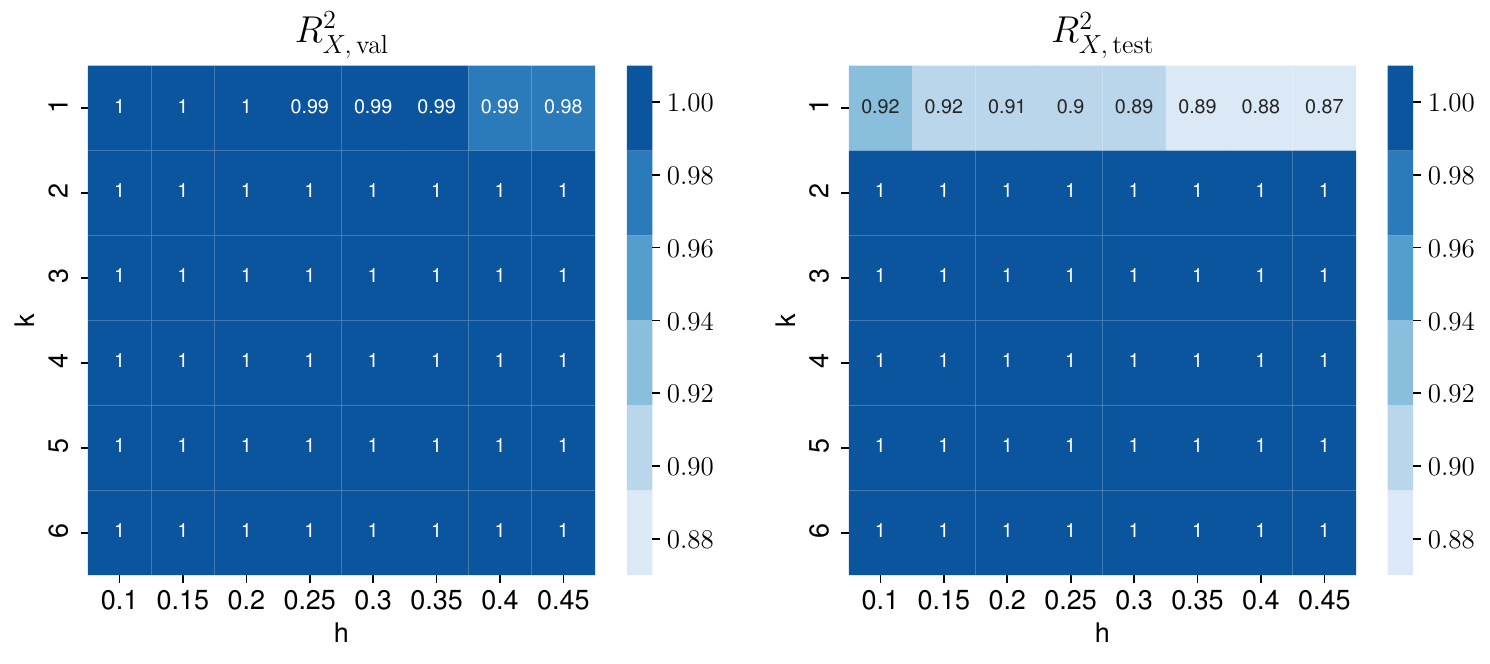}
			\caption{State-varying factor model}
		\end{subfigure}
		\caption{Optimal tuning parameter selection for U.S. Treasury Securities with the unemployment rate as state process. The figure plots $R^2_X$ on the validation and test data as function of the number of factors $k$ and the bandwidth $h$. The first 25\% time observations are the training data, the following 25\% time observations are the validation data, and the remaining 50\% time are the out-of-sample test data}
		\label{fig:yield_unemploy_varying_k_h_out_of_sample}
	\end{figure}
	
	\begin{figure}[H]
		\centering
		\begin{subfigure}{.5\textwidth}
			\centering
			\includegraphics[width=1\linewidth]{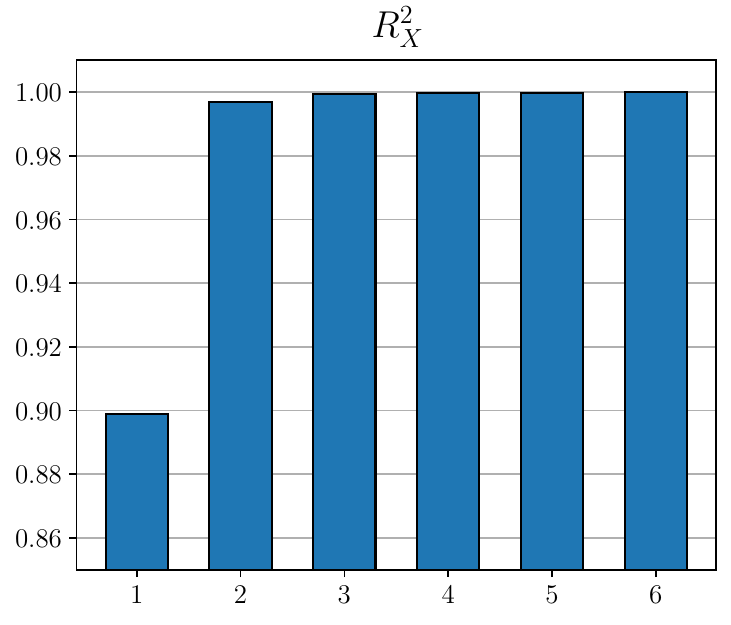}
			\caption{Constant loading factor model}
		\end{subfigure}%
		\begin{subfigure}{.5\textwidth}
			\centering
			\includegraphics[width=1\linewidth]{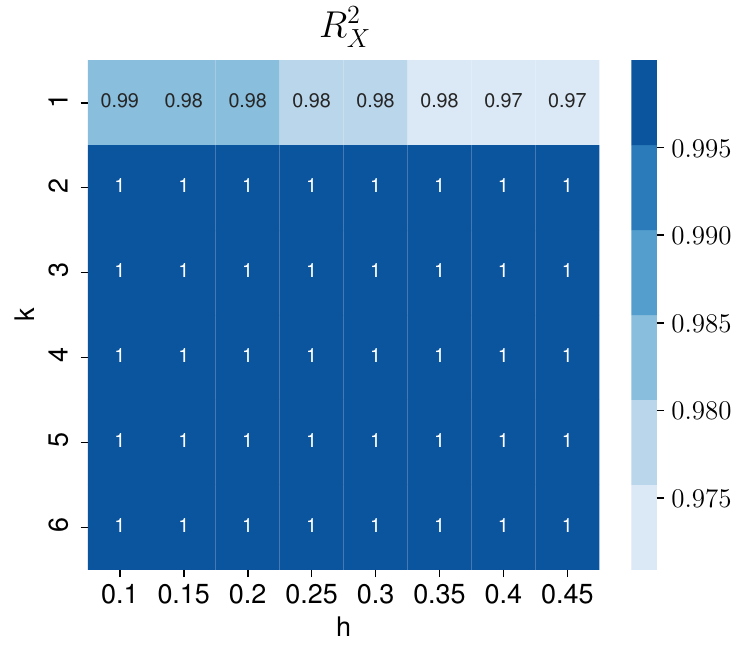}
			\caption{State-varying factor model}
		\end{subfigure}
		\caption{Optimal tuning parameter selection for U.S. Treasury Securities with the unemployment rate as state process. The figure plots $R^2_X$ on the in-sample training data as function of the number of factors $k$ and the bandwidth $h$. The first 25\% time observations are the training data, the following 25\% time observations are the validation data, and the remaining 50\% time are the out-of-sample test data}	\label{fig:yield_unemploy_varying_k_h_in_sample}
	\end{figure}

		\begin{figure}[H]
		\centering
		\begin{subfigure}{.5\textwidth}
			\centering
			\includegraphics[width=1\linewidth]{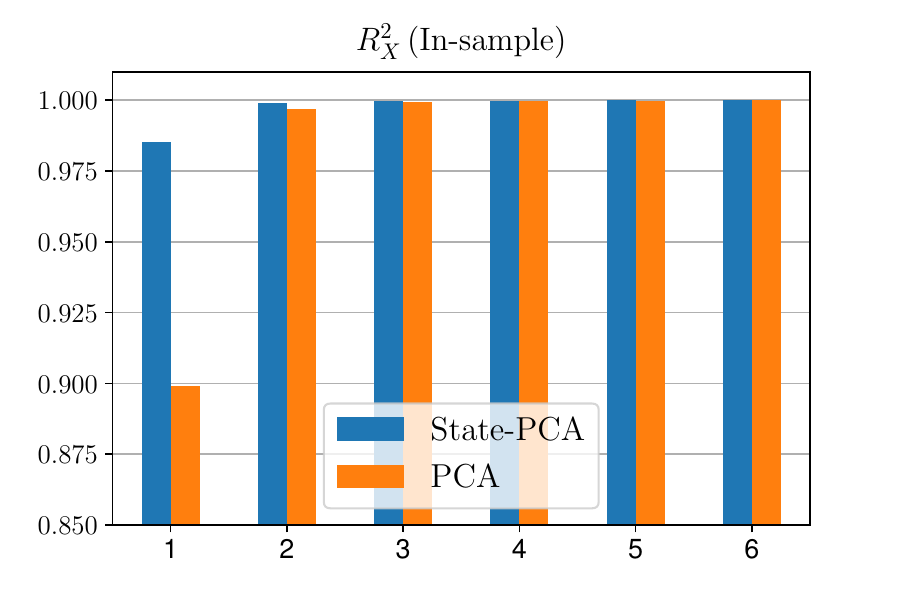}
		\end{subfigure}%
		\begin{subfigure}{.5\textwidth}
			\centering
			\includegraphics[width=1\linewidth]{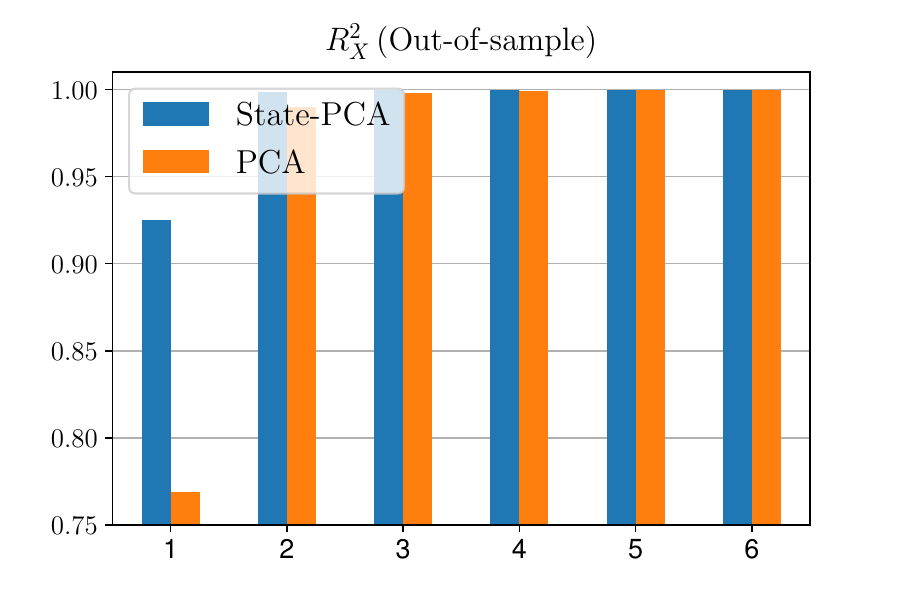}
		\end{subfigure}
		\caption{In-sample (training) and out-of-sample (test) $R^2_X$ for US Treasury Securities with the unemployment rate as state process. The bandwidth is chosen optimally. We use the first half of the data for the estimating the loadings and update them on an expanding window to obtain the out-of-sample common component.}
		\label{fig:yield_unemploy_compare}
	\end{figure}

	\newpage

	\section{Additional Simulation Results}\label{sec:SimApp}

	\subsection{Comparison with Alternative Latent Factors Models}\label{sec:comparison}
	
	In this section we compare the amount of explained variation for different latent factor models that allow for time-variation in the factor structure. In the baseline model, we generate data from a one-factor model
	\begin{align*}
	X_{it} = \Lambda_i(S_t) F_t + e_{it},
	\end{align*}
	where $F_t \sim N(0,1)$ and $e_{it} \sim N(0,1)$. The loadings are cubic functions in the state process,   $\Lambda_i(S_t) = \Lambda_{0i} + \frac{1}{2} S_t \Lambda_{1i} + \frac{1}{4} S_t^2 \Lambda_{2i} + \frac{1}{8} S_t^3 \Lambda_{3i}$, where  $\Lambda_{0i}, \Lambda_{1i}, \Lambda_{2i}, \Lambda_{3i} \sim N(0,1)$. All random variables are independent. We consider three types of state processes:
	\begin{enumerate}
		\item One structural break: The state process is piecewise linear with one jump at half-time:
		\begin{align*}
		S_t = \begin{cases}
		-1 & t \leq \frac{T}{2}\\
		1 & t > \frac{T}{2}
		\end{cases}
		\end{align*}
		\item Three structural breaks: The state process is piecewise linear with three jumps at $\frac{T}{4}, \frac{T}{2} $ and $ \frac{3T}{4}$:
		\begin{align*}
		S_t = \begin{cases}
		-2 & t \leq \frac{T}{4}\\
		-1 & \frac{T}{4}\ < t \leq \frac{T}{2} \\
		1 & \frac{T}{2} < t \leq \frac{3T}{4}\\
		2 & \frac{3T}{4} < t
		\end{cases}
		\end{align*}
		\item Ornstein Uhlenbeck (OU) process: We simulate the state process as $S_t = \theta (\mu - S_t) d_t + \sigma dW_t$, where $\theta =1$, $\mu = 0.2$, and $\sigma =1$. (the same parameters as in Section 7.1 in the main text)
	\end{enumerate}
	Following the suggestions of referees we compare the amount of explained variation $R_X^2$ and the proximity to the common component $R_C^2$ for the following reference approaches:
	
	\begin{enumerate}
		\item {\bf Constant loading model:} The conventional estimator of \cite{bai2003inferential} based on PCA.
		\item {\bf Time-varying loading model:} The time-varying factor model of \cite{Su2017} uses a local kernel estimator in time.
		\item {\bf Structural breaks with pseudo factors (number of factors):} \cite{baltagi2020estimating} use the property that the factor model with changes in loadings is equivalent to a factor model with stable loadings but pseudo factors and detect multiple structural breaks using the sample covariance matrix of the estimated pseudo factors. The number of breaks is assumed to be finite and related to the number of factors which is estimated by the method of \cite{bai2002determining}. Breaks cannot be too close to each other.
		\item {\bf Structural breaks with pseudo factors (regression based):} The approach of \cite{bai2020estimation} uses the same insight as \cite{baltagi2020estimating}. A model with structural breaks can be represented as a constant loading model with more factors. \cite{bai2020estimation} use a regression based approach to estimate the loadings of factors and ``pseudo'' factors which allows them to detect large and small breaks. The paper assumes one structural break.
		\item {\bf Time-varying states with splines:} \cite{Park2009} study a semiparametric factor model. They apply B-splines to estimate the unknown loading function and estimate the factors with a Newton-Raphson algorithm. Their estimator does not estimate latent loading functions that are cross-sectionally different. The loadings can only differ if the loadings are a function of observed cross-section specific variables.
	\end{enumerate}

	\begin{table}[H]
		\centering
		\begin{tabular}{l|rr|rr|rr}
			\toprule
			& \multicolumn{2}{c|}{One Jump} & \multicolumn{2}{c|}{Three Jumps} & \multicolumn{2}{c}{OU} \\
			{} & $R^2_X$ &   $R^2_C$ &  $R^2_X$ &  $R^2_C$ &  $R^2_X$ & $R^2_C$ \\
			\midrule
			State-varying &         0.575 &         0.989 &            0.733 &             0.993 &   0.623 &   0.935 \\
			Constant &         0.461 &         0.791 &            0.411 &             0.555 &   0.422 &   0.661 \\
			Time-Varying  &         0.565 &         0.966 &            0.710 &             0.961 &   0.439 &   0.681 \\
			BNS   &         0.575 &         0.989 &            0.684 &             0.928 &   0.437 &   0.682 \\
			BKW   &         0.575 &         0.987 &            0.702 &             0.950 &   0.450 &   0.698 \\
			PMHB  &         0.010 &         0.003 &            0.010 &             0.006 &   0.011 &   0.005 \\
			\bottomrule
		\end{tabular}
		\caption{This table compares the explained variation of $X$ and the common component $C$ based on 100 Monte Carlo simulations for one factor with the state-varying factor model, constant loading factor model \citep{bai2002determining,bai2003inferential}, time-varying factor model \citep{Su2017}, BNS \citep{baltagi2020estimating}, BKW \citep{baltagi2020estimating} and PMHB \citep{Park2009}. $N = 100$ and $T = 500$ and the bandwidth is set to $h=0.1$ for both the state-varying and time-varying factor model.}
		\label{table:comparison}
	\end{table}

	Table \ref{table:comparison} shows that in all three scenarios, our state-varying factor model can explain the most variation and approximates the unknown factor structure the best. Each of the methods has naturally a setup in which it performs well. Our approach performs well if we know that the time-variation is driven by an observed state process, but we have no prior knowledge about the functional relationship and the latent factor structure. If this state process captures structural breaks in the factor structure, our approach seems to perform better than estimation methods that neglect the additional information captured by the state process. The OU example illustrates that the local time-varying estimator \citep{Su2017} does not perform much better than a constant loading model if the loadings change quickly. In summary, the general purpose estimators for structural breaks or local time-variation do not work better than our approach if we have an additional structure that we can exploit. As expected, the estimation approach of \cite{Park2009} will not work well in this setup as it requires observed cross-section specific variables to obtain loadings that vary in the cross-section.

	\subsection{Comparison with Local Time-Varying Factor Model}\label{sec:timevarying}
	
	We compare the estimation results of our state-varying factor model with the local time-varying model of \cite{Su2017}. The data is generated as in Section \ref{asy_est} in the main text. Our state-varying factor model can recover the correct functional form while the local window estimator fails.  
	\begin{figure}[H]
		\centering
		\begin{subfigure}{.3\textwidth}
			\centering
			\includegraphics[width=1\linewidth]{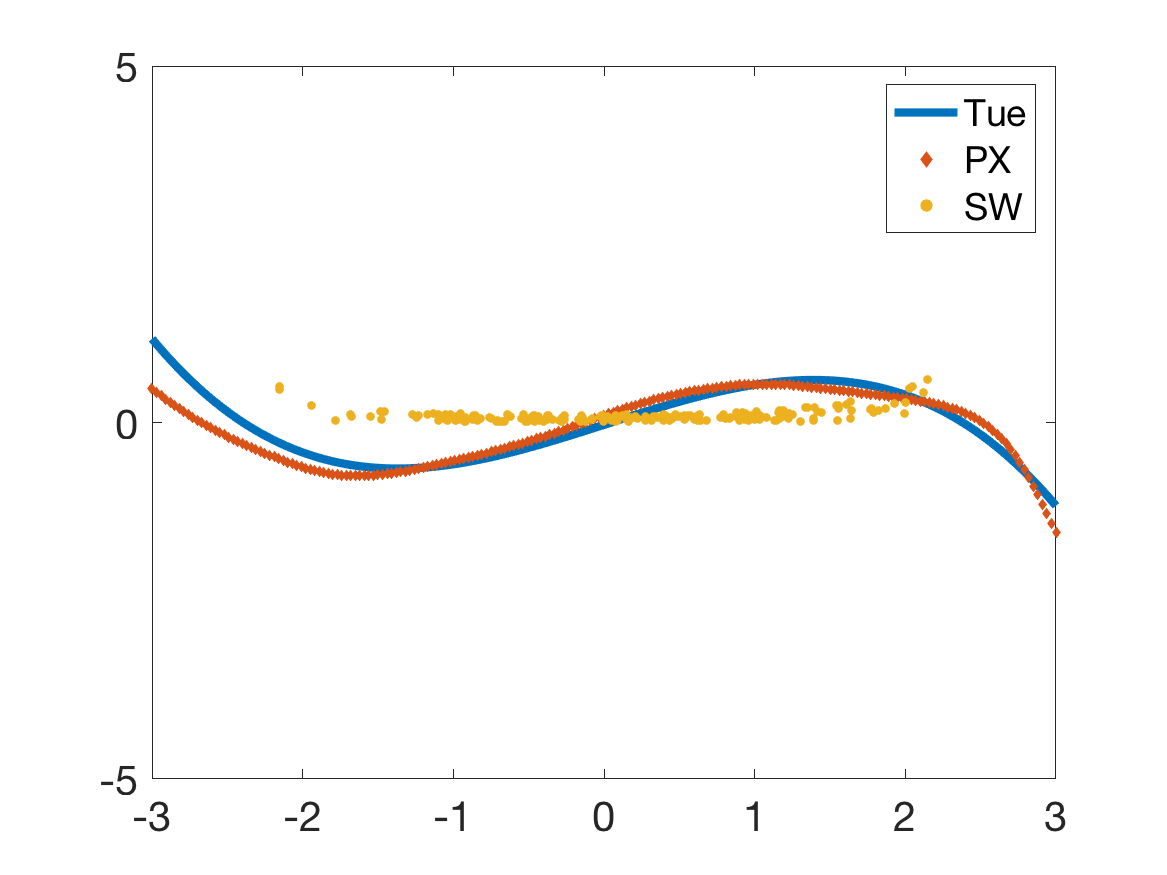}
		\end{subfigure}%
		\begin{subfigure}{.3\textwidth}
			\centering
			\includegraphics[width=1\linewidth]{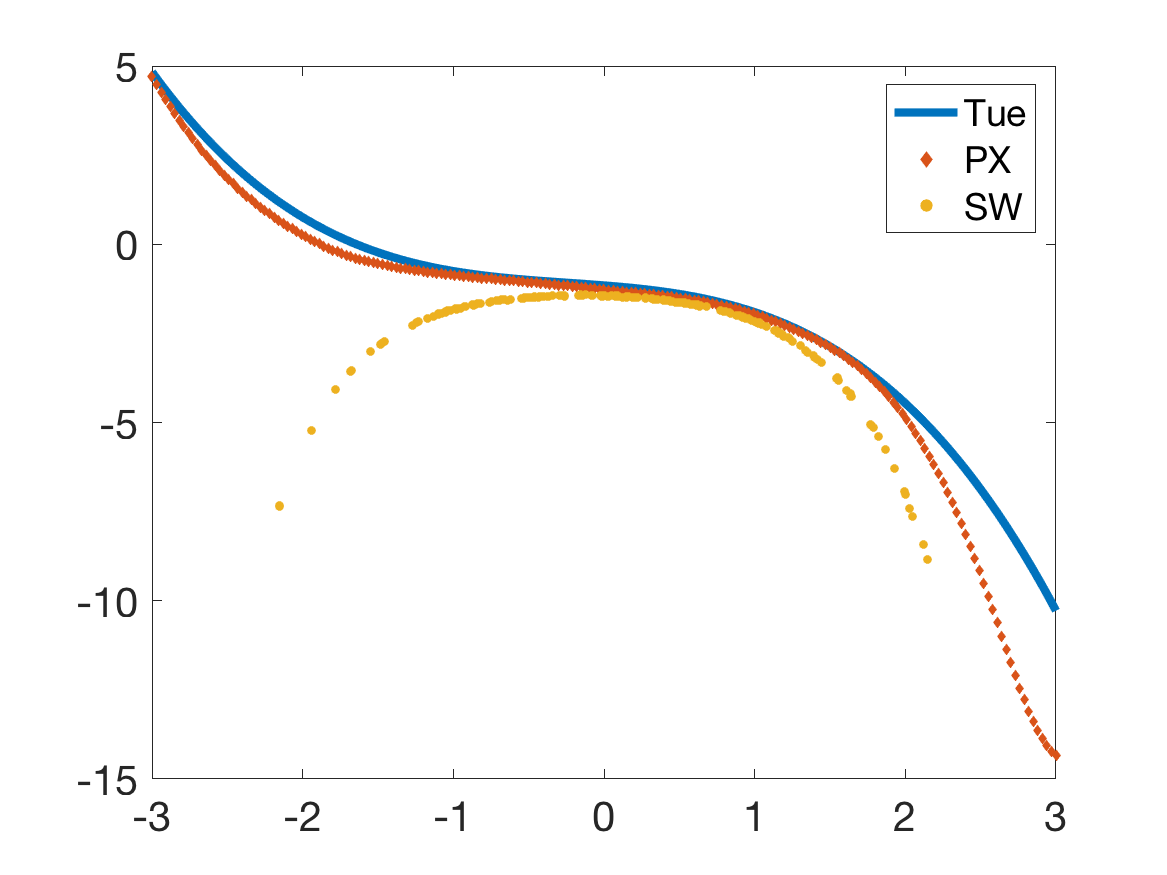}
		\end{subfigure}
		\begin{subfigure}{.3\textwidth}
			\centering
			\includegraphics[width=1\linewidth]{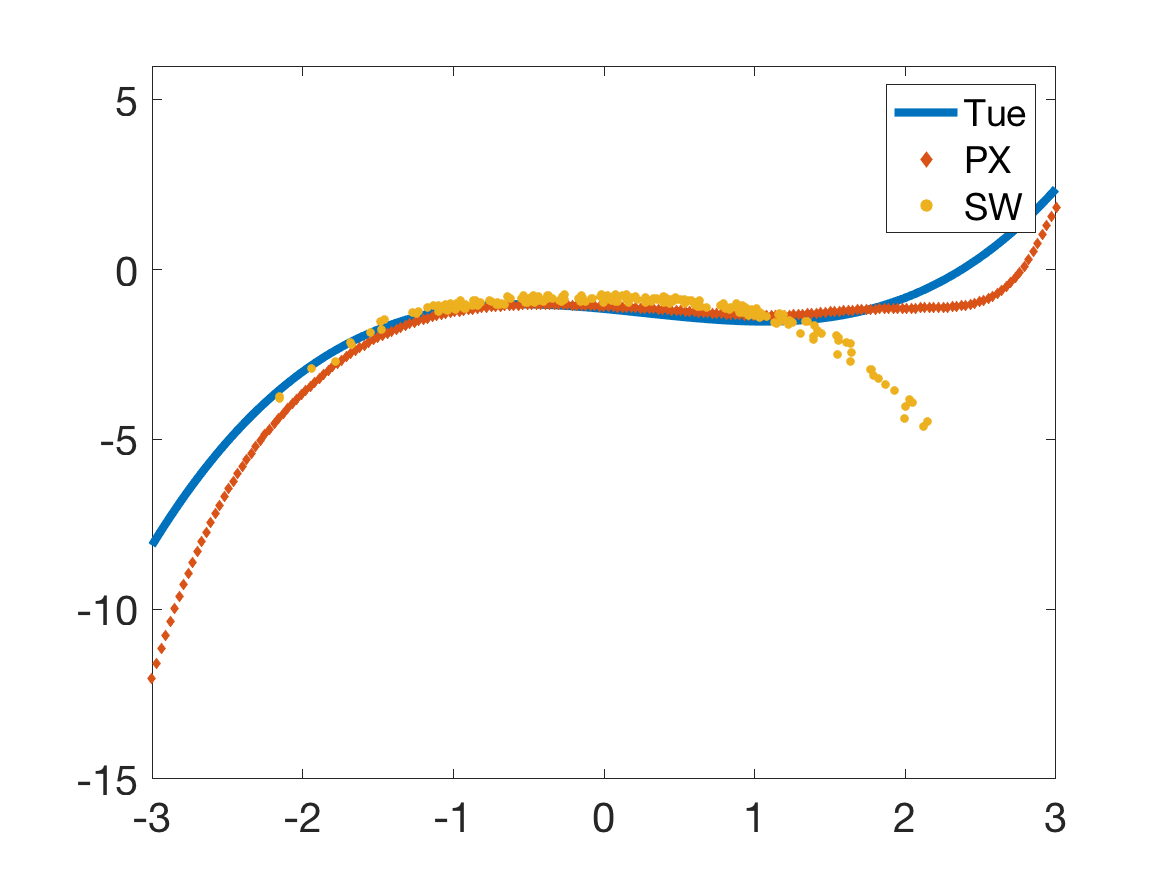}
		\end{subfigure}%
		\begin{subfigure}{.3\textwidth}
			\centering
			\includegraphics[width=1\linewidth]{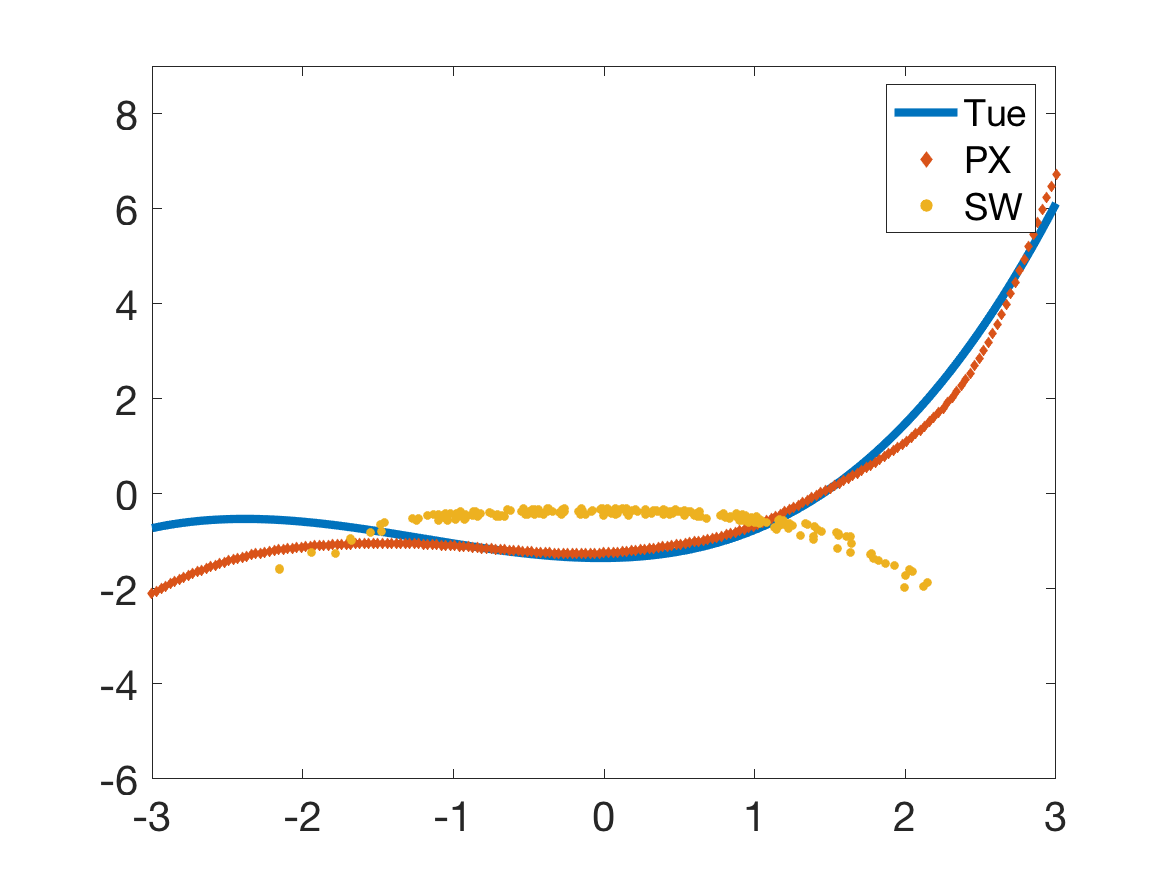}
		\end{subfigure}
		\caption{Estimated functional form of loading versus the state variable ($N = 100, T = 500, h = 0.5$). Loadings estimated from state-varying factor model (denoted as ``PX'' in the subplots) is compared with loadings estimated from time-varying factor model \citep{Su2017} (denoted as ``SW'' in the subplots). The true functional form is superimposed on the estimated form.
			\label{fig:loadingSW}}
	\end{figure}

	\subsection{Choice of Tuning Parameters}\label{sec:tuning}
	
	We illustrate in a simulation setup how to optimally choose the number of factors and the bandwidth of the kernel projection.
	In the baseline model we generate data from a three-factor model
	\begin{align*}
	X_{it} = \Lambda_i(S_t)^\top F_t + e_{it},
	\end{align*}
	where $F_t \sim N(0,I_3)$ and $e_{it} \sim N(0,1)$. The loadings are cubic functions of the state process, that is $\Lambda_i(S_t) = \Lambda_{0i} + \frac{1}{2} S_t \Lambda_{1i} + \frac{1}{4} S_t^2 \Lambda_{2i} + \frac{1}{8} S_t^3 \Lambda_{3i}$, where $\Lambda_{0i}, \Lambda_{1i}, \Lambda_{2i}, \Lambda_{3i} \sim N(0,I_3)$. All processes are independent. We simulate the state process as an OU process, $S_t = \theta (\mu - S_t) d_t + \sigma dW_t$, where $\theta =1$, $\mu = 0.2$, and $\sigma =1$ (these are the same parameters as in Section 7.1 in the main text).

	We show how to select the number of factors and the bandwidth parameter based on cross-validation arguments. In more detail, the number of factors and the bandwidth can be viewed as tuning parameters that can be selected on a validation data set to maximize the amount of explained variation, while the model itself is estimated on the training data. Then, the model can be evaluated out-of-sample on the test data. We confirm that the number of factors and bandwidth chosen optimally on the validation data also maximize the out-of-sample $R^2$ on the test data.
	We split the data into training, validation and test sets. The first 60\% time periods constitute the training set, the following 20\% time periods are the validation set, and the remaining 20\% time periods represent the test data. We use the loadings and factor weights estimated on the training to choose the number of factors and bandwidth that maximize the $R^2$ on the validation data. Given the estimated model and tuning parameters, we evaluate the $R^2$ on the test data. 
	
	Figure \ref{fig:varying_k_h} plots the $R^2_{X,\, \mathrm{train}}$, $R^2_{X,\, \mathrm{val}}$, $R^2_{X,\, \mathrm{test}}$, $R^2_{C,\, \mathrm{train}}$, $R^2_{C,\, \mathrm{val}}$, and $R^2_{C,\, \mathrm{test}}$ as a function of the number of factors $k$ and bandwidth $h$. The natural selection criterion is to choose the smallest $k$ and largest $h$ to achieve a $R^2_{X,\mathrm{val}}$ that is close to the maximum value. The reason why it is usually preferred to opt for the best parsimonious model on the validation data and not simply the best validation model, is that the criterion function is also estimated with some noise. The validation data suggests to choose $k=3$ and $h=0.4$ which achieves 99\% of the best validation value. For these values we obtain $R^2_{X,\, \mathrm{test}} = 0.78$ and $R^2_{C,\, \mathrm{test}} = 0.94$ out-of-sample. Note that the validation and test results are relatively robust to the choice of bandwidth.

	\begin{figure}[H]
		\centering
		\includegraphics[width=1\linewidth]{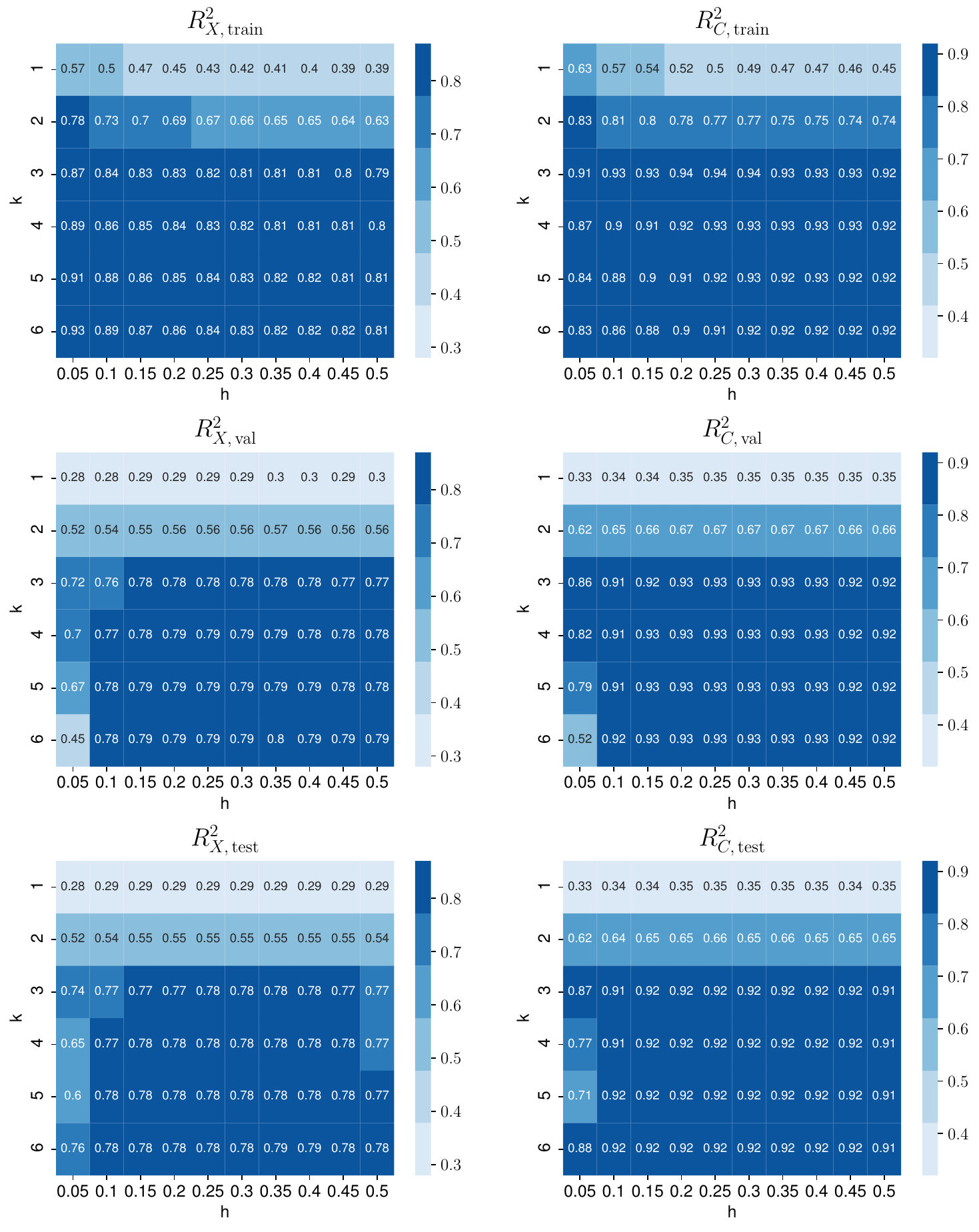}
		\caption{Optimal tuning parameter selection in the simulation. The figure plots $R^2_X$ and $R^2_C$ on the training, validation and test data as function of the number of factors $k$ and the bandwidth $h$. The first 60\% time observations are the training data, the following 20\% time observations are the validation data, and the remaining 20\% time are the out-of-sample test data. $N = 100$ and $T = 500$.}
		\label{fig:varying_k_h}
	\end{figure}
	
	%
	
	\subsection{Asymptotic Distribution}
	\subsubsection{IID Errors}

	\begin{figure}[H]
		\centering
		\begin{subfigure}{.25\textwidth}
			\centering
			\includegraphics[width=1\linewidth]{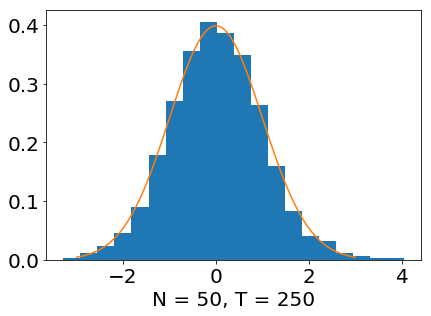}
		\end{subfigure}%
		\begin{subfigure}{.25\textwidth}
			\centering
			\includegraphics[width=1\linewidth]{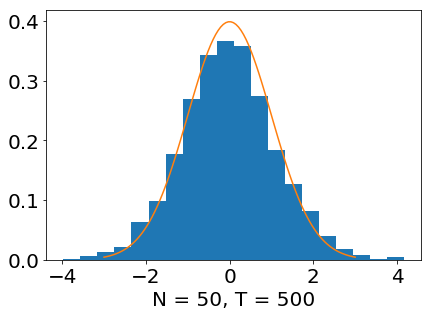}
		\end{subfigure}%
		\begin{subfigure}{.25\textwidth}
			\centering
			\includegraphics[width=1\linewidth]{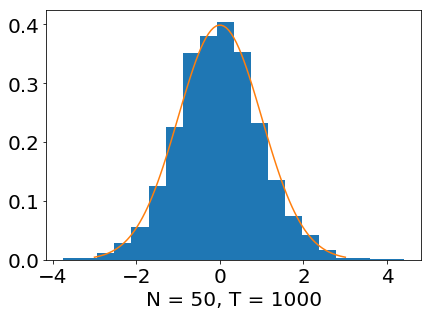}
		\end{subfigure}
		\begin{subfigure}{.25\textwidth}
			\centering
			\includegraphics[width=1\linewidth]{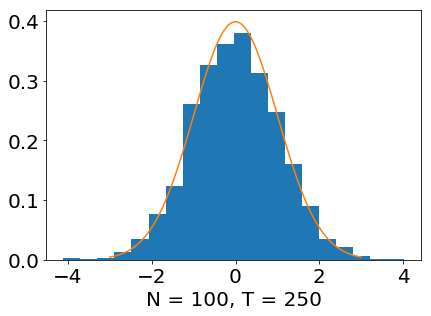}
		\end{subfigure}%
		\begin{subfigure}{.25\textwidth}
			\centering
			\includegraphics[width=1\linewidth]{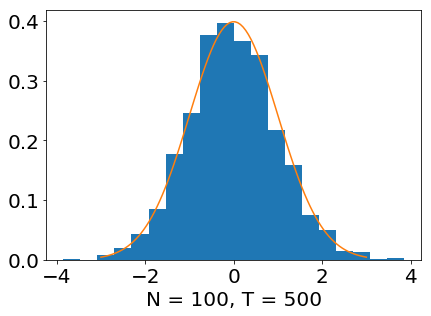}
		\end{subfigure}%
		\begin{subfigure}{.25\textwidth}
			\centering
			\includegraphics[width=1\linewidth]{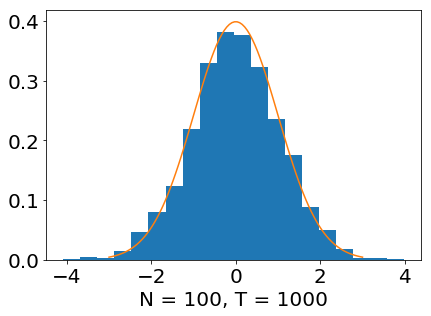}
		\end{subfigure}
		\begin{subfigure}{.25\textwidth}
			\centering
			\includegraphics[width=1\linewidth]{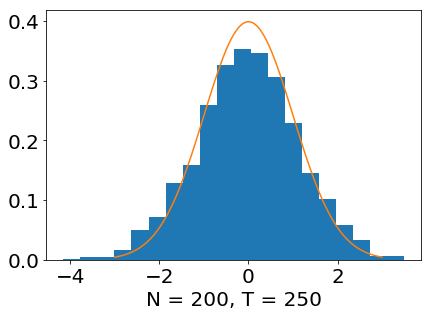}
		\end{subfigure}%
		\begin{subfigure}{.25\textwidth}
			\centering
			\includegraphics[width=1\linewidth]{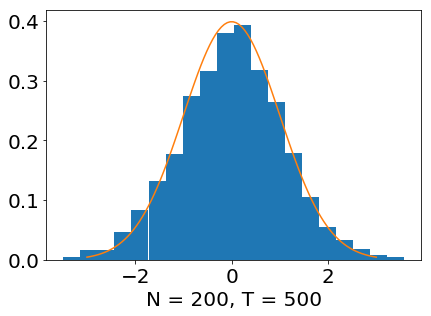}
		\end{subfigure}%
		\begin{subfigure}{.25\textwidth}
			\centering
			\includegraphics[width=1\linewidth]{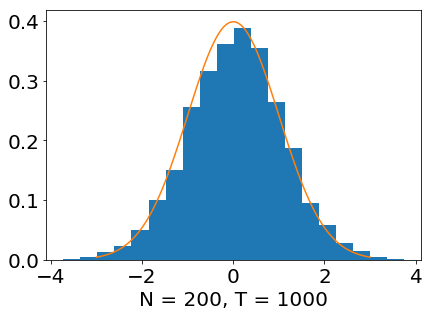}
		\end{subfigure}
		\caption{Histograms of estimated standardized factors ($N=$50, 100, 200; $T$=250, 500, 1000; $h$=0.3) for IID errors. The normal density function is superimposed on the histograms.}
		\label{hist_factor}
	\end{figure}

	\begin{figure}[H]
		\centering
		\begin{subfigure}{.25\textwidth}
			\centering
			\includegraphics[width=1\linewidth]{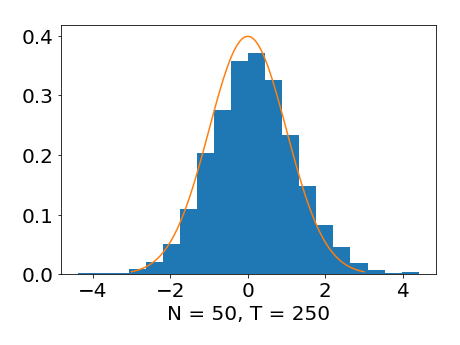}
		\end{subfigure}%
		\begin{subfigure}{.25\textwidth}
			\centering
			\includegraphics[width=1\linewidth]{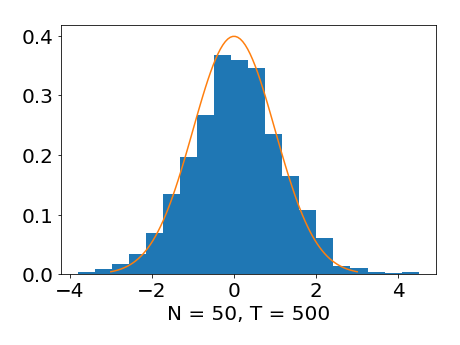}
		\end{subfigure}%
		\begin{subfigure}{.25\textwidth}
			\centering
			\includegraphics[width=1\linewidth]{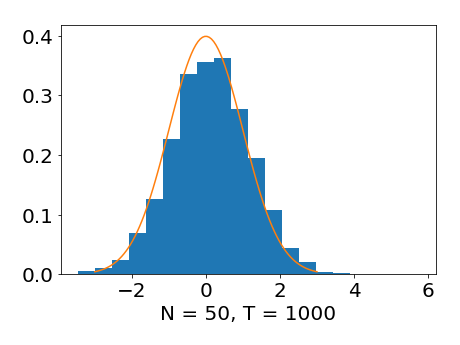}
		\end{subfigure}
		\begin{subfigure}{.25\textwidth}
			\centering
			\includegraphics[width=1\linewidth]{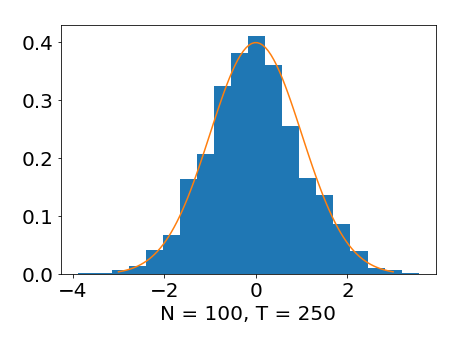}
		\end{subfigure}%
		\begin{subfigure}{.25\textwidth}
			\centering
			\includegraphics[width=1\linewidth]{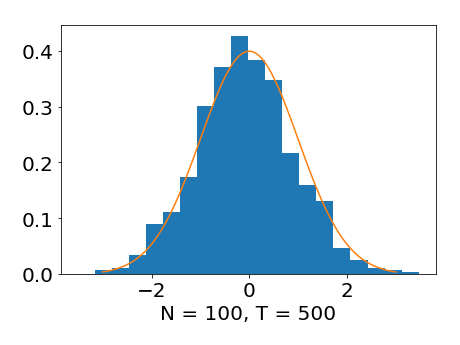}
		\end{subfigure}%
		\begin{subfigure}{.25\textwidth}
			\centering
			\includegraphics[width=1\linewidth]{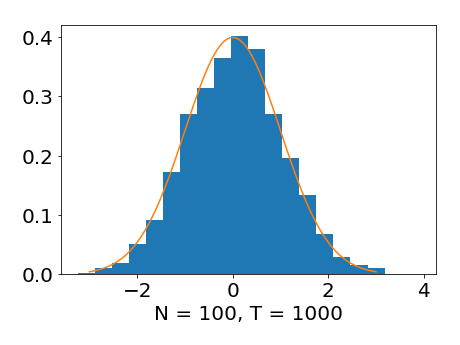}
		\end{subfigure}
		\begin{subfigure}{.25\textwidth}
			\centering
			\includegraphics[width=1\linewidth]{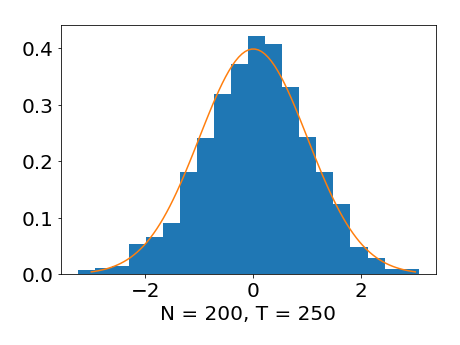}
		\end{subfigure}%
		\begin{subfigure}{.25\textwidth}
			\centering
			\includegraphics[width=1\linewidth]{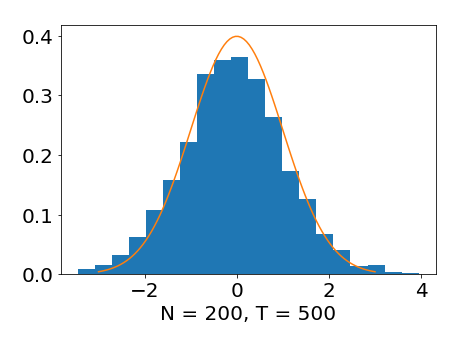}
		\end{subfigure}%
		\begin{subfigure}{.25\textwidth}
			\centering
			\includegraphics[width=1\linewidth]{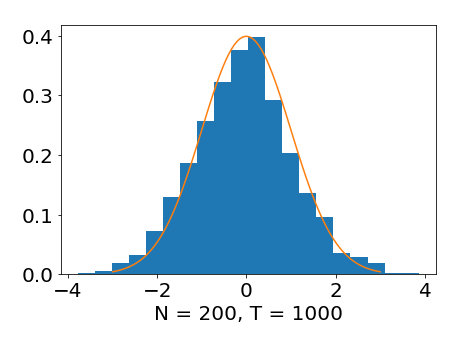}
		\end{subfigure}
		\caption{Histograms of estimated standardized common components ($N=50, 100, 200$; $T=250, 500, 1000$; $h=0.3$) for IID errors. The normal density function is superimposed on the histograms.}
		\label{hist_common}
	\end{figure}

	\subsubsection{Heteroskedastic Errors}
	\begin{figure}[H]
		\centering
		\begin{subfigure}{.22\textwidth}
			\centering
			\includegraphics[width=1\linewidth]{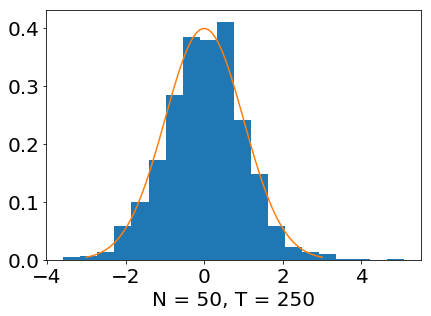}
		\end{subfigure}%
		\begin{subfigure}{.22\textwidth}
			\centering
			\includegraphics[width=1\linewidth]{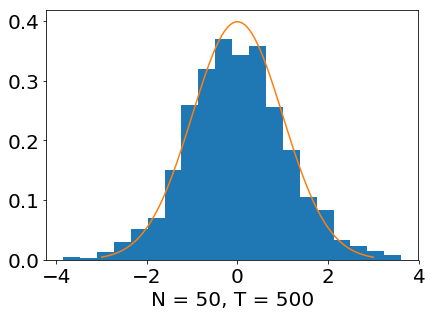}
		\end{subfigure}%
		\begin{subfigure}{.22\textwidth}
			\centering
			\includegraphics[width=1\linewidth]{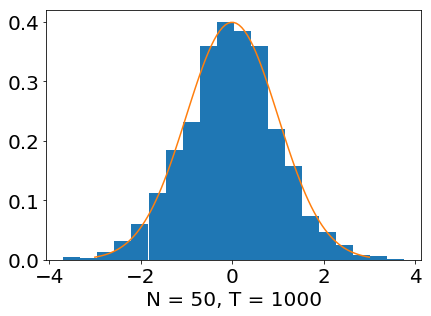}
		\end{subfigure}
		\begin{subfigure}{.22\textwidth}
			\centering
			\includegraphics[width=1\linewidth]{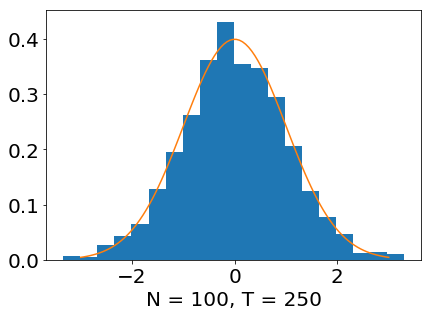}
		\end{subfigure}%
		\begin{subfigure}{.22\textwidth}
			\centering
			\includegraphics[width=1\linewidth]{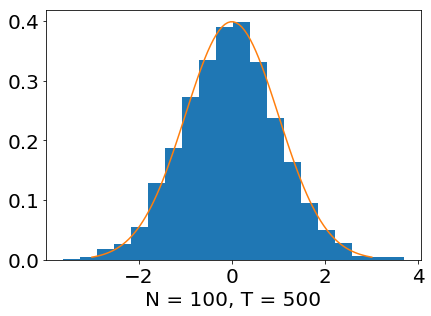}
		\end{subfigure}%
		\begin{subfigure}{.22\textwidth}
			\centering
			\includegraphics[width=1\linewidth]{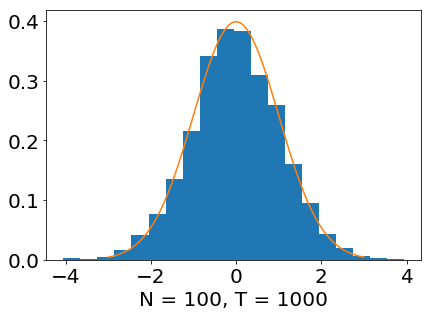}
		\end{subfigure}
		\begin{subfigure}{.22\textwidth}
			\centering
			\includegraphics[width=1\linewidth]{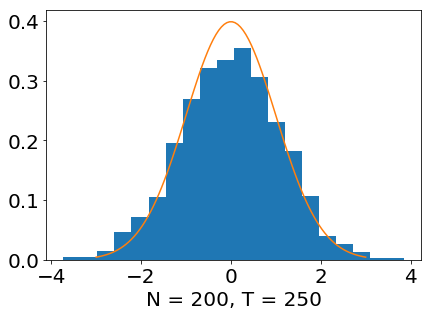}
		\end{subfigure}%
		\begin{subfigure}{.22\textwidth}
			\centering
			\includegraphics[width=1\linewidth]{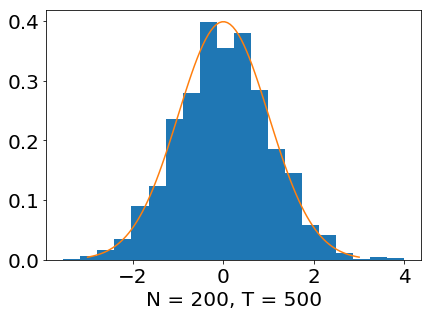}
		\end{subfigure}%
		\begin{subfigure}{.22\textwidth}
			\centering
			\includegraphics[width=1\linewidth]{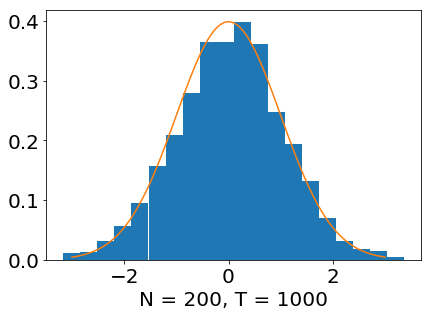}
		\end{subfigure}
		\caption{Histograms of estimated standardized factors ($N =50, 100, 200; T=250, 500, 1000; h=0.3$). The normal density function is superimposed on the histograms. (Heteroskedastic errors)}
		\label{hist_factor_heter}
	\end{figure}

	\begin{figure}[H]
		\centering
		\begin{subfigure}{.22\textwidth}
			\centering
			\includegraphics[width=1\linewidth]{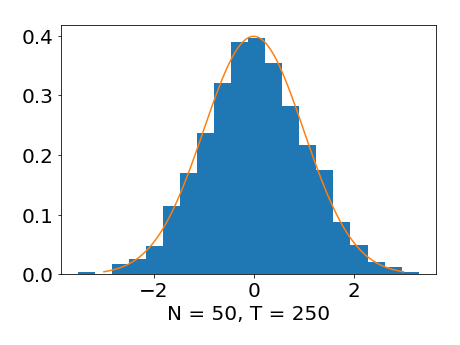}
		\end{subfigure}%
		\begin{subfigure}{.22\textwidth}
			\centering
			\includegraphics[width=1\linewidth]{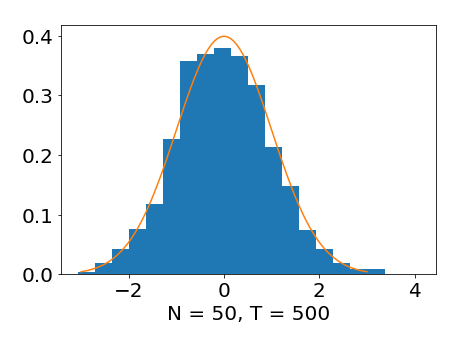}
		\end{subfigure}%
		\begin{subfigure}{.22\textwidth}
			\centering
			\includegraphics[width=1\linewidth]{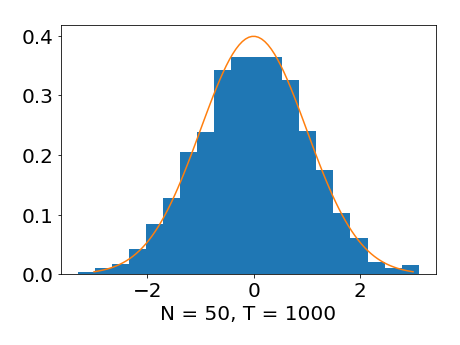}
		\end{subfigure}
		\begin{subfigure}{.22\textwidth}
			\centering
			\includegraphics[width=1\linewidth]{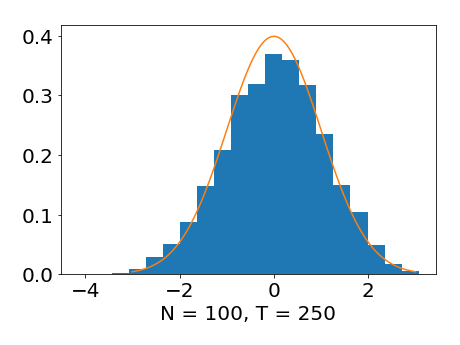}
		\end{subfigure}%
		\begin{subfigure}{.22\textwidth}
			\centering
			\includegraphics[width=1\linewidth]{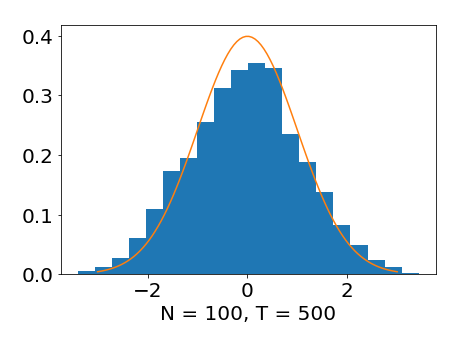}
		\end{subfigure}%
		\begin{subfigure}{.22\textwidth}
			\centering
			\includegraphics[width=1\linewidth]{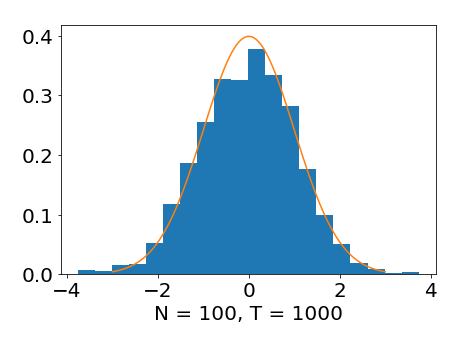}
		\end{subfigure}
		\begin{subfigure}{.22\textwidth}
			\centering
			\includegraphics[width=1\linewidth]{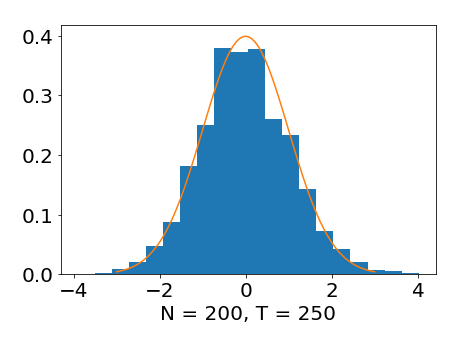}
		\end{subfigure}%
		\begin{subfigure}{.22\textwidth}
			\centering
			\includegraphics[width=1\linewidth]{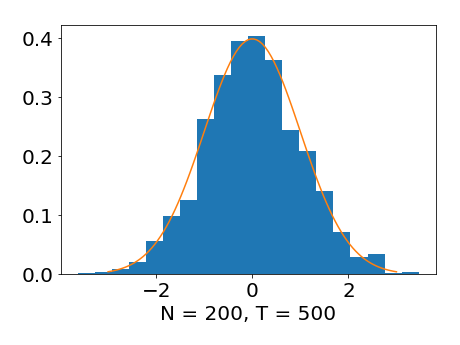}
		\end{subfigure}%
		\begin{subfigure}{.22\textwidth}
			\centering
			\includegraphics[width=1\linewidth]{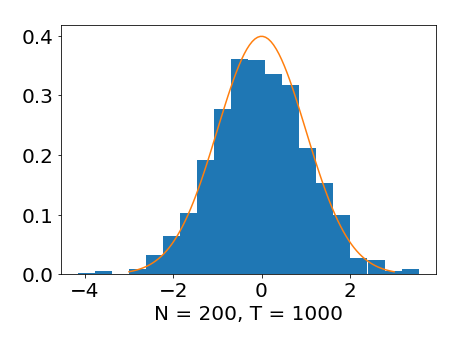}
		\end{subfigure}
		\caption{Histograms of estimated standardized loadings. ($N =50, 100, 200; T=250, 500, 1000; h=0.3$). The normal density function is superimposed on the histograms (Heteroskedastic errors)}
		\label{hist_loadings_heter}
	\end{figure}

	\begin{figure}[H]
		\centering
		\begin{subfigure}{.22\textwidth}
			\centering
			\includegraphics[width=1\linewidth]{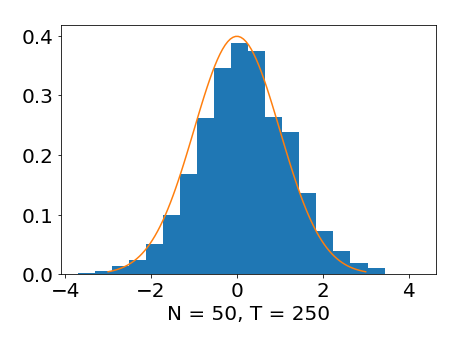}
		\end{subfigure}%
		\begin{subfigure}{.22\textwidth}
			\centering
			\includegraphics[width=1\linewidth]{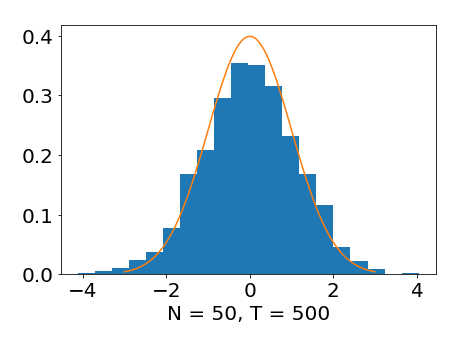}
		\end{subfigure}%
		\begin{subfigure}{.22\textwidth}
			\centering
			\includegraphics[width=1\linewidth]{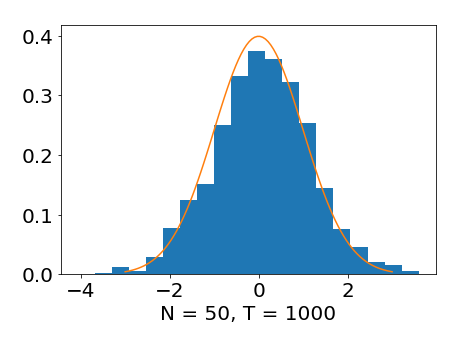}
		\end{subfigure}
		\begin{subfigure}{.22\textwidth}
			\centering
			\includegraphics[width=1\linewidth]{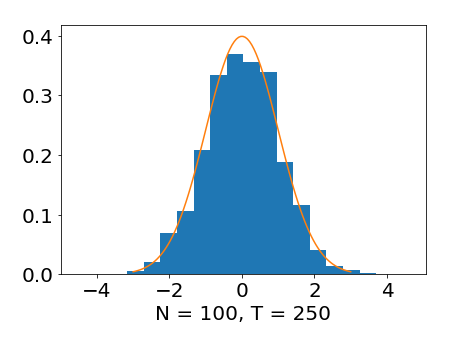}
		\end{subfigure}%
		\begin{subfigure}{.22\textwidth}
			\centering
			\includegraphics[width=1\linewidth]{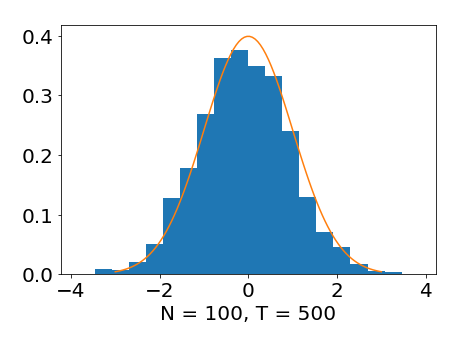}
		\end{subfigure}%
		\begin{subfigure}{.22\textwidth}
			\centering
			\includegraphics[width=1\linewidth]{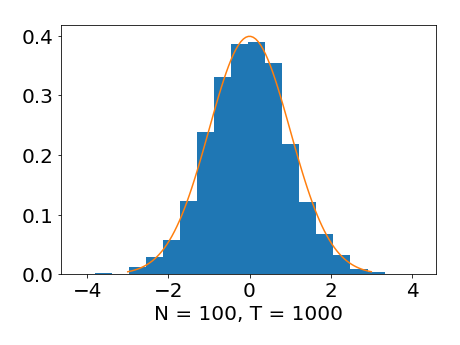}
		\end{subfigure}
		\begin{subfigure}{.22\textwidth}
			\centering
			\includegraphics[width=1\linewidth]{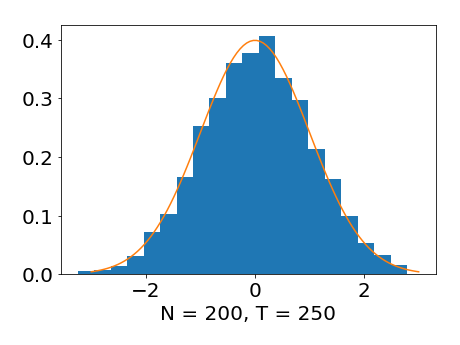}
		\end{subfigure}%
		\begin{subfigure}{.22\textwidth}
			\centering
			\includegraphics[width=1\linewidth]{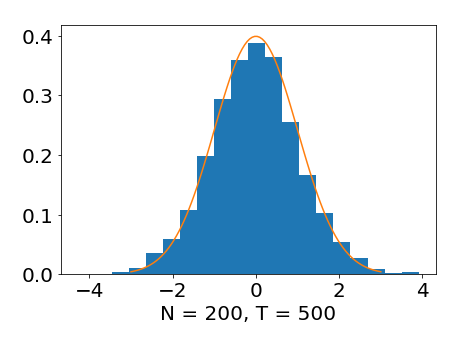}
		\end{subfigure}%
		\begin{subfigure}{.22\textwidth}
			\centering
			\includegraphics[width=1\linewidth]{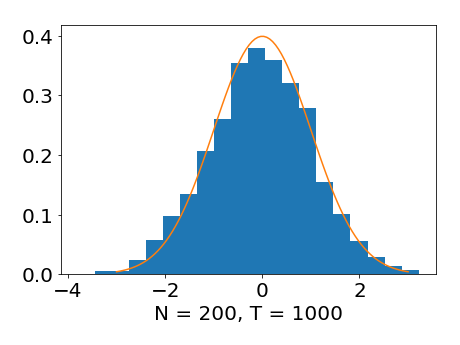}
		\end{subfigure}
		\caption{Histograms of estimated standardized common components. ($N =50, 100, 200; T=250, 500, 1000; h=0.3$). The normal density function is superimposed on the histograms (Heteroskedastic errors)}
		\label{hist_comp_heter}
	\end{figure}
	
	\begin{figure}[H]
		\centering
		\begin{subfigure}{.22\textwidth}
			\centering
			\includegraphics[width=1\linewidth]{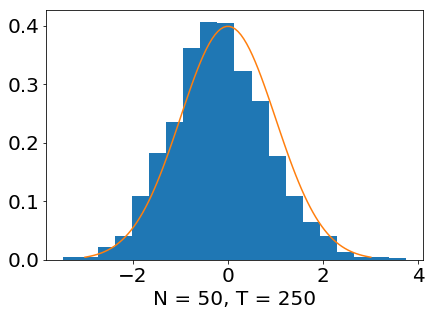}
		\end{subfigure}%
		\begin{subfigure}{.22\textwidth}
			\centering
			\includegraphics[width=1\linewidth]{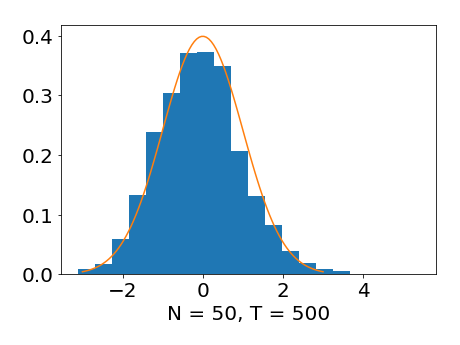}
		\end{subfigure}%
		\begin{subfigure}{.22\textwidth}
			\centering
			\includegraphics[width=1\linewidth]{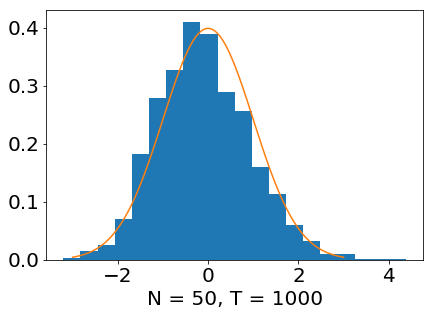}
		\end{subfigure}
		\begin{subfigure}{.22\textwidth}
			\centering
			\includegraphics[width=1\linewidth]{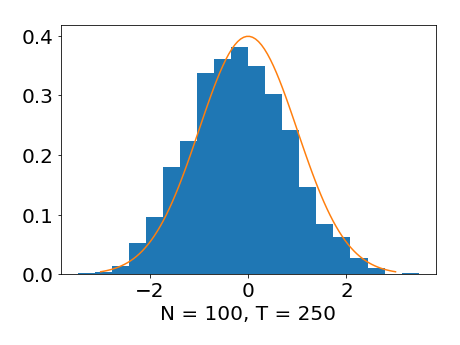}
		\end{subfigure}%
		\begin{subfigure}{.22\textwidth}
			\centering
			\includegraphics[width=1\linewidth]{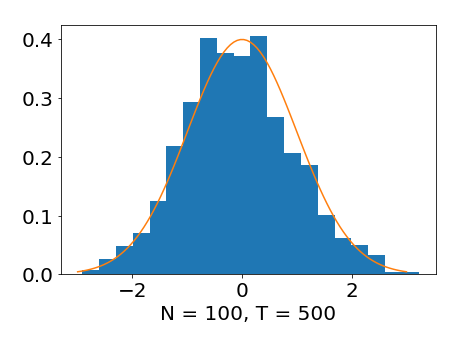}
		\end{subfigure}%
		\begin{subfigure}{.22\textwidth}
			\centering
			\includegraphics[width=1\linewidth]{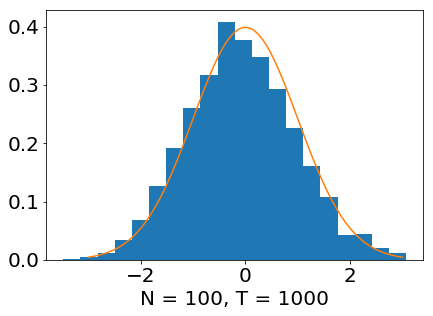}
		\end{subfigure}
		\begin{subfigure}{.22\textwidth}
			\centering
			\includegraphics[width=1\linewidth]{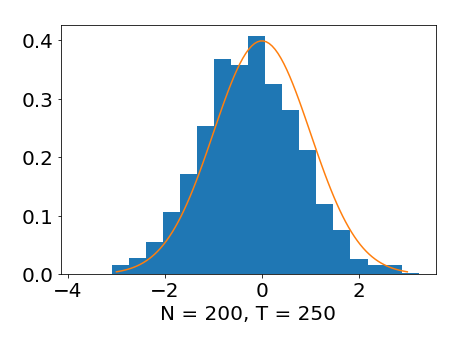}
		\end{subfigure}%
		\begin{subfigure}{.22\textwidth}
			\centering
			\includegraphics[width=1\linewidth]{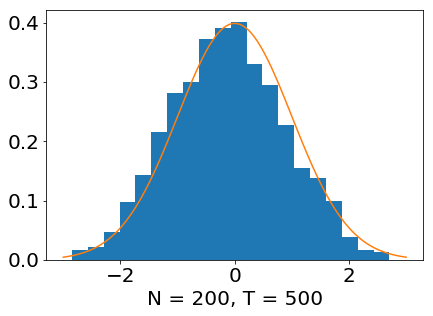}
		\end{subfigure}%
		\begin{subfigure}{.22\textwidth}
			\centering
			\includegraphics[width=1\linewidth]{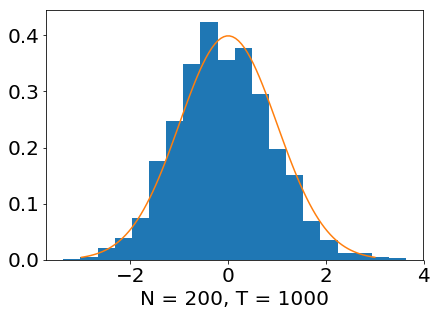}
		\end{subfigure}
		\caption{Histograms of estimated standardized and bias-corrected generalized correlation test statistic. ($N =50, 100, 200; T=250, 500, 1000; h=0.3$). The normal density function is superimposed on the histograms (Heteroskedastic errors)}
		\label{hist_gen_heter}
	\end{figure}
	
	\subsubsection{Cross-Sectionally Dependent Errors}
	\begin{figure}[H]
		\centering
		\begin{subfigure}{.22\textwidth}
			\centering
			\includegraphics[width=1\linewidth]{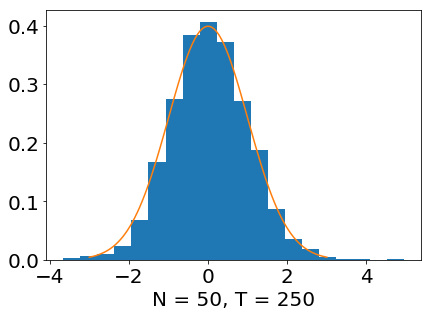}
		\end{subfigure}%
		\begin{subfigure}{.22\textwidth}
			\centering
			\includegraphics[width=1\linewidth]{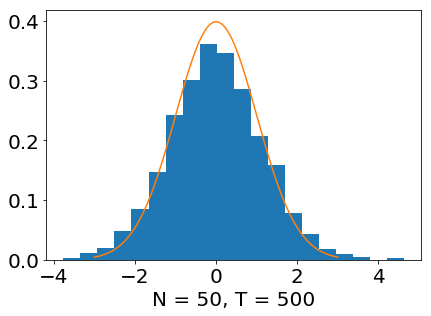}
		\end{subfigure}%
		\begin{subfigure}{.22\textwidth}
			\centering
			\includegraphics[width=1\linewidth]{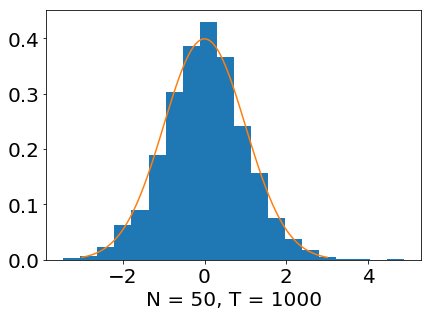}
		\end{subfigure}
		\begin{subfigure}{.22\textwidth}
			\centering
			\includegraphics[width=1\linewidth]{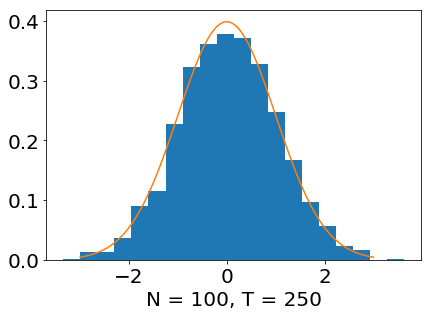}
		\end{subfigure}%
		\begin{subfigure}{.22\textwidth}
			\centering
			\includegraphics[width=1\linewidth]{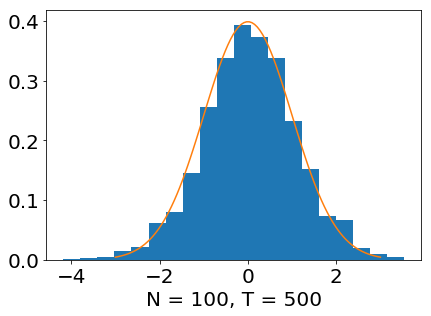}
		\end{subfigure}%
		\begin{subfigure}{.22\textwidth}
			\centering
			\includegraphics[width=1\linewidth]{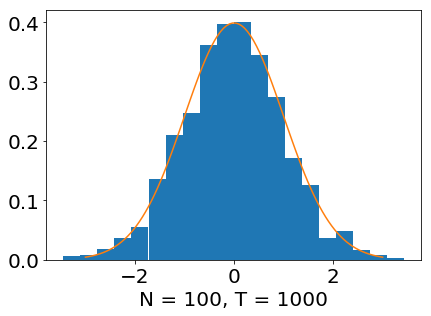}
		\end{subfigure}
		\begin{subfigure}{.22\textwidth}
			\centering
			\includegraphics[width=1\linewidth]{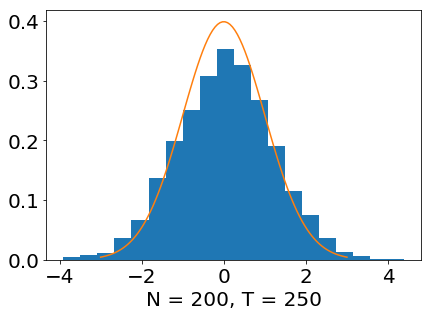}
		\end{subfigure}%
		\begin{subfigure}{.22\textwidth}
			\centering
			\includegraphics[width=1\linewidth]{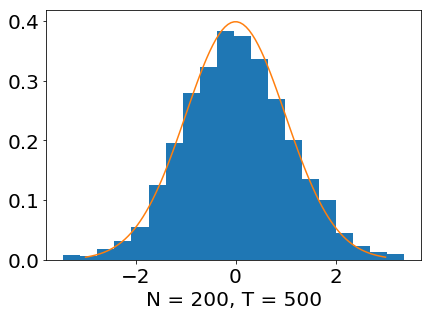}
		\end{subfigure}%
		\begin{subfigure}{.22\textwidth}
			\centering
			\includegraphics[width=1\linewidth]{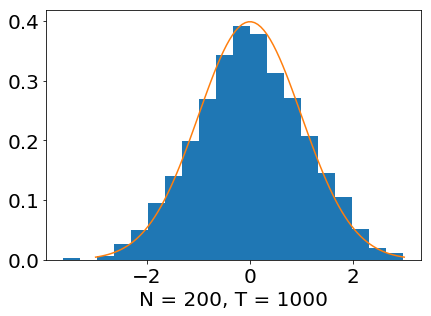}
		\end{subfigure}
		\caption{Histograms of estimated standardized factors. ($N =50, 100, 200; T=250, 500, 1000; h=0.3$).  The normal density function is superimposed on the histograms (Cross-sectionally dependent errors)}
		\label{hist_factor_cross}
	\end{figure}

	\begin{figure}[H]
		\centering
		\begin{subfigure}{.22\textwidth}
			\centering
			\includegraphics[width=1\linewidth]{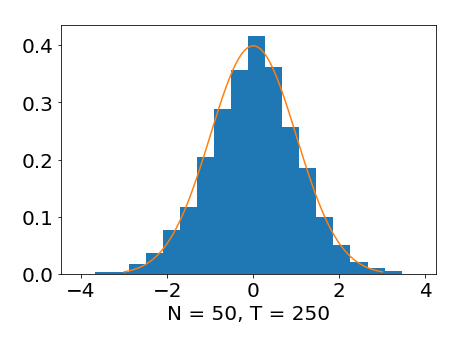}
		\end{subfigure}%
		\begin{subfigure}{.22\textwidth}
			\centering
			\includegraphics[width=1\linewidth]{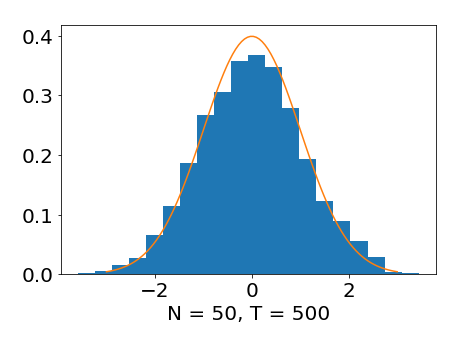}
		\end{subfigure}%
		\begin{subfigure}{.22\textwidth}
			\centering
			\includegraphics[width=1\linewidth]{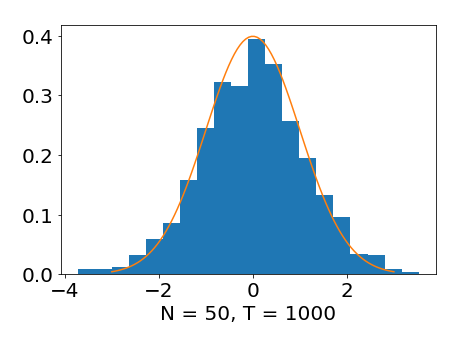}
		\end{subfigure}
		\begin{subfigure}{.22\textwidth}
			\centering
			\includegraphics[width=1\linewidth]{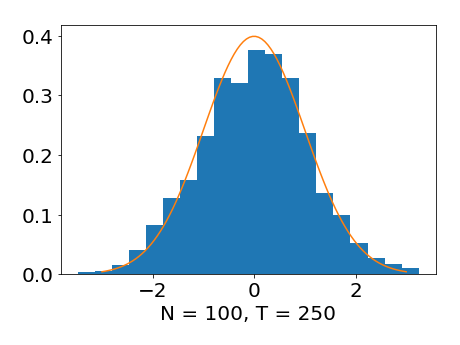}
		\end{subfigure}%
		\begin{subfigure}{.22\textwidth}
			\centering
			\includegraphics[width=1\linewidth]{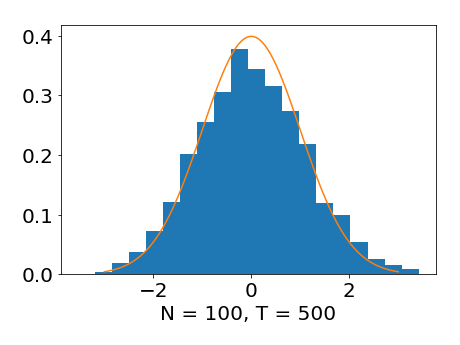}
		\end{subfigure}%
		\begin{subfigure}{.22\textwidth}
			\centering
			\includegraphics[width=1\linewidth]{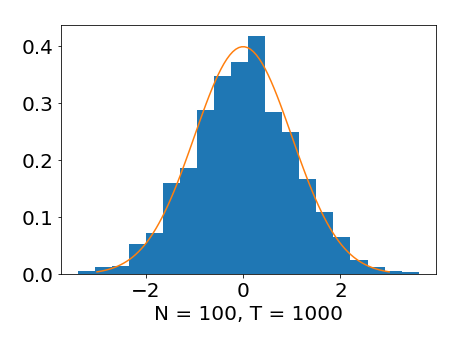}
		\end{subfigure}
		\begin{subfigure}{.22\textwidth}
			\centering
			\includegraphics[width=1\linewidth]{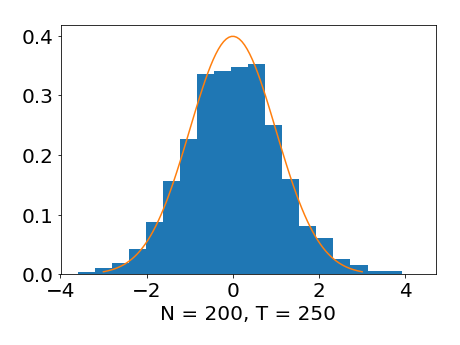}
		\end{subfigure}%
		\begin{subfigure}{.22\textwidth}
			\centering
			\includegraphics[width=1\linewidth]{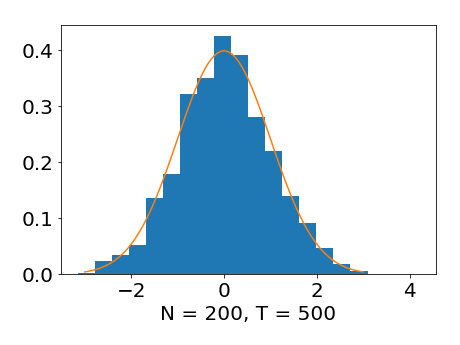}
		\end{subfigure}%
		\begin{subfigure}{.22\textwidth}
			\centering
			\includegraphics[width=1\linewidth]{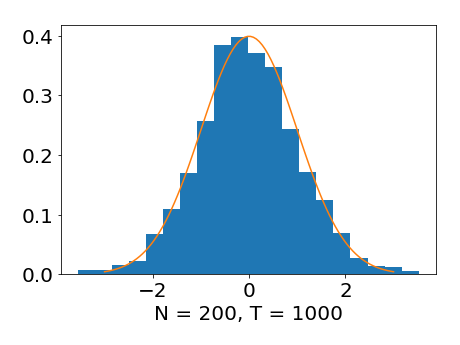}
		\end{subfigure}
		\caption{Histograms of estimated standardized loadings. ($N =50, 100, 200; T=250, 500, 1000; h=0.3$).  The normal density function is superimposed on the histograms (Cross-sectionally dependent errors)}
		\label{hist_loadings_cross}
	\end{figure}
	
	\begin{figure}[H]
		\centering
		\begin{subfigure}{.22\textwidth}
			\centering
			\includegraphics[width=1\linewidth]{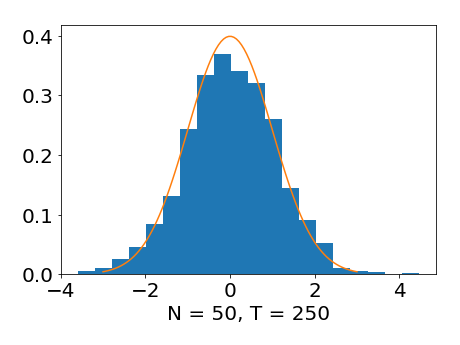}
		\end{subfigure}%
		\begin{subfigure}{.22\textwidth}
			\centering
			\includegraphics[width=1\linewidth]{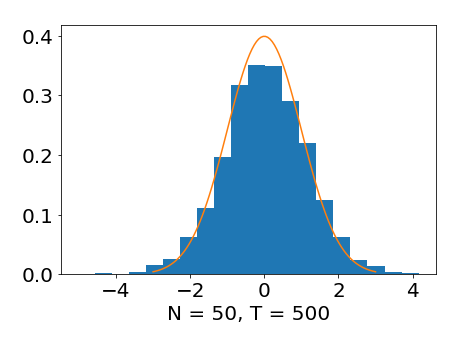}
		\end{subfigure}%
		\begin{subfigure}{.22\textwidth}
			\centering
			\includegraphics[width=1\linewidth]{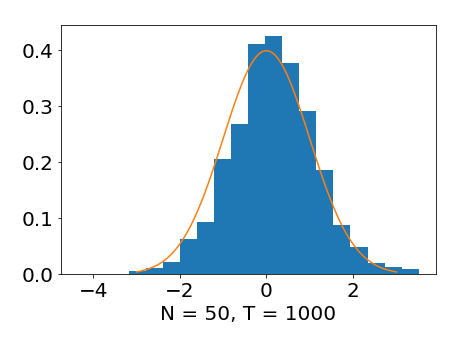}
		\end{subfigure}
		\begin{subfigure}{.22\textwidth}
			\centering
			\includegraphics[width=1\linewidth]{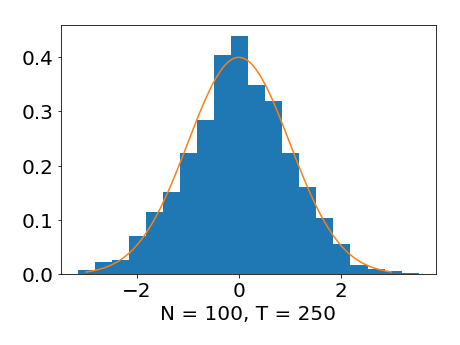}
		\end{subfigure}%
		\begin{subfigure}{.22\textwidth}
			\centering
			\includegraphics[width=1\linewidth]{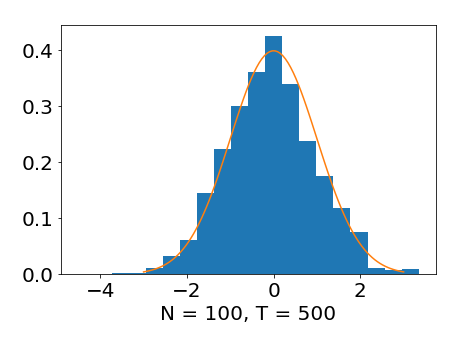}
		\end{subfigure}%
		\begin{subfigure}{.22\textwidth}
			\centering
			\includegraphics[width=1\linewidth]{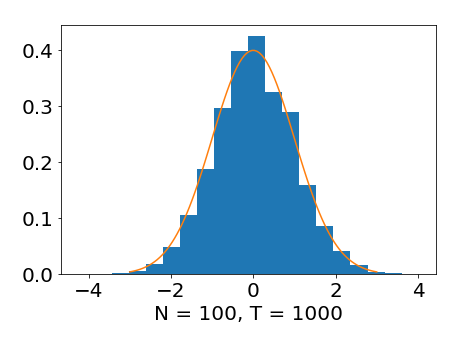}
		\end{subfigure}
		\begin{subfigure}{.22\textwidth}
			\centering
			\includegraphics[width=1\linewidth]{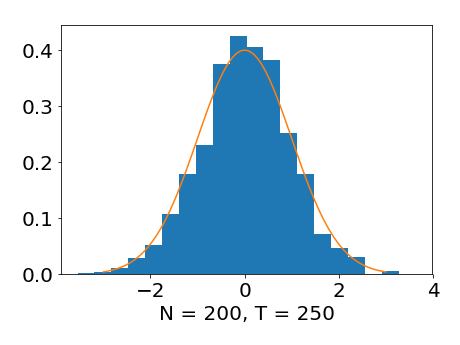}
		\end{subfigure}%
		\begin{subfigure}{.22\textwidth}
			\centering
			\includegraphics[width=1\linewidth]{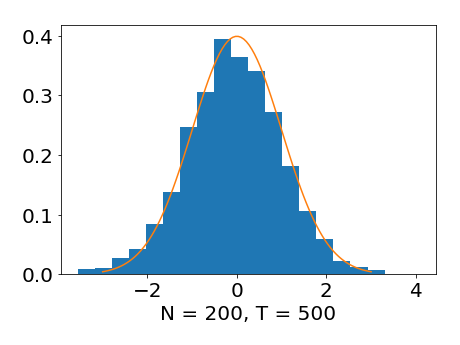}
		\end{subfigure}%
		\begin{subfigure}{.22\textwidth}
			\centering
			\includegraphics[width=1\linewidth]{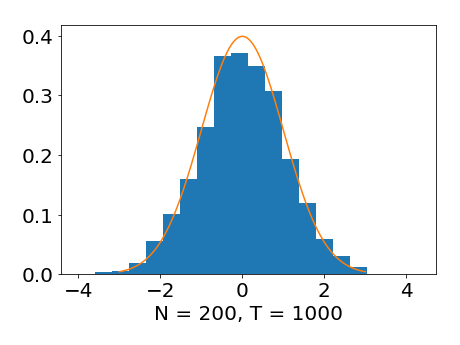}
		\end{subfigure}
		\caption{Histograms of estimated standardized common components. ($N =50, 100, 200; T=250, 500, 1000; h=0.3$).  The normal density function is superimposed on the histograms (Cross-sectionally dependent errors)}
		\label{hist_comp_cross}
	\end{figure}
	
	\begin{figure}[H]
		\centering
		\begin{subfigure}{.22\textwidth}
			\centering
			\includegraphics[width=1\linewidth]{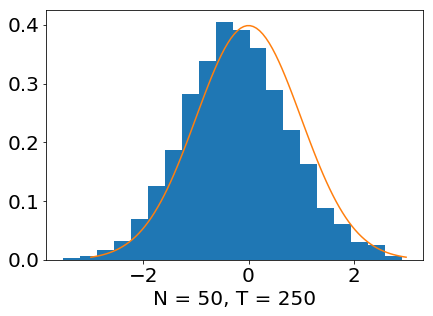}
		\end{subfigure}%
		\begin{subfigure}{.22\textwidth}
			\centering
			\includegraphics[width=1\linewidth]{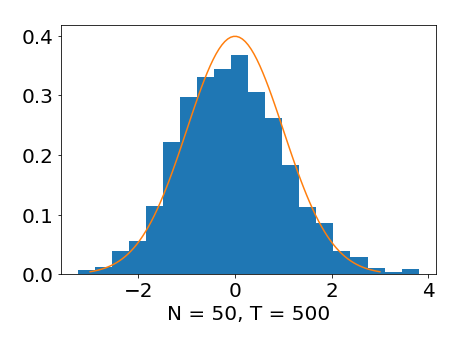}
		\end{subfigure}%
		\begin{subfigure}{.22\textwidth}
			\centering
			\includegraphics[width=1\linewidth]{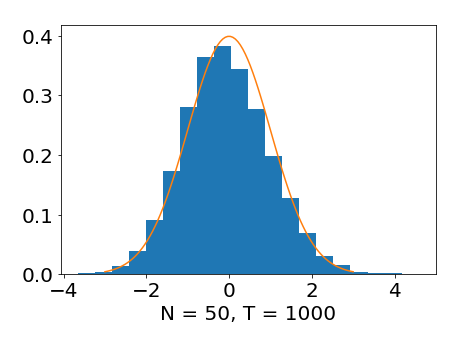}
		\end{subfigure}
		\begin{subfigure}{.22\textwidth}
			\centering
			\includegraphics[width=1\linewidth]{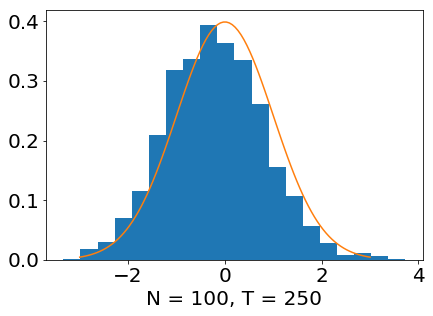}
		\end{subfigure}%
		\begin{subfigure}{.22\textwidth}
			\centering
			\includegraphics[width=1\linewidth]{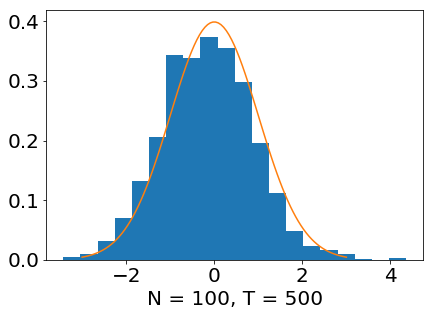}
		\end{subfigure}%
		\begin{subfigure}{.22\textwidth}
			\centering
			\includegraphics[width=1\linewidth]{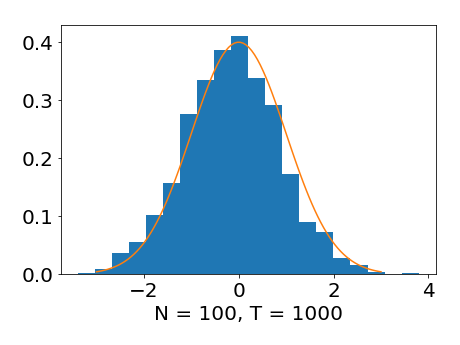}
		\end{subfigure}
		\begin{subfigure}{.22\textwidth}
			\centering
			\includegraphics[width=1\linewidth]{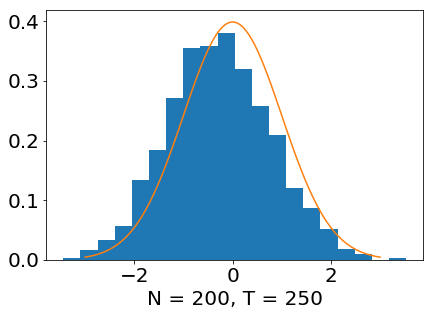}
		\end{subfigure}%
		\begin{subfigure}{.22\textwidth}
			\centering
			\includegraphics[width=1\linewidth]{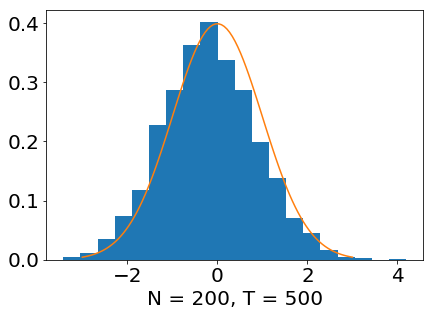}
		\end{subfigure}%
		\begin{subfigure}{.22\textwidth}
			\centering
			\includegraphics[width=1\linewidth]{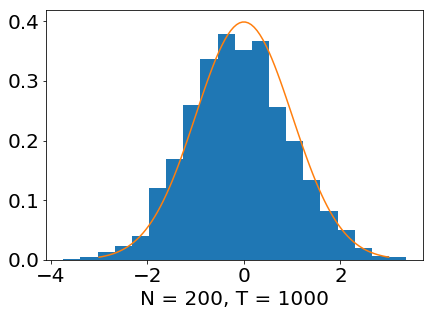}
		\end{subfigure}
		\caption{Histograms of estimated standardized and bias-corrected generalized correlation test statistic. ($N =50, 100, 200; T=250, 500, 1000; h=0.3$). The normal density function is superimposed on the histograms (Cross-sectionally dependent errors)}
		\label{hist_gen_cross}
	\end{figure}

	\newpage

	\section{Proof of Asymptotic Results}\label{sec:proofs}
	
	\subsection{Proofs of Asymptotic Distribution Results}
	
	Define $X^s_t = \Lambda(S_t) F^s_t + e^s_t$ and $\bar{X}_t^s = \Lambda(s) F^s_t + e^s_t$, we have $\Delta X^s_t = X^s_t - \bar{X}_t^s = (\Lambda(S_t) - \Lambda(s))F^s_t$. Similarly, $\bar{X}_t = \Lambda(s) F_t + e_t$ and $\Delta X_t = X_t - \bar{X}_t = (\Lambda(S_t) - \Lambda(s))F_t$.
	
	\begin{lemma} \label{lemma1}
		Under Assumption \ref{Ass:Ident}-\ref{ass_loading}, $h \rightarrow 0$, $Th \rightarrow \infty$, if there exists $T_0$ with $T_0/T \rightarrow 0$ such that $S_{T_0} \sim \pi$, then 
		\begin{enumerate}
			\item $\frac{T(s)}{T} \xrightarrow{p} \pi(s)$  and $\frac{T}{T(s)}  \xrightarrow{p} \frac{1}{\pi(s)}$ \label{ts}
			\item $\frac{1}{N} \sum_{i = 1}^N ( \Delta X_{it}^s )^2  = O_p(h)$ and $\frac{1}{T(s)} \sum_{t = 1}^T (\Delta X_{it}^s)^2 = O_p(h^2) $ \label{deltaX}
		\end{enumerate}
	\end{lemma}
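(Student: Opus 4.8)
The plan is to handle the two parts separately, using throughout the compact support of $K$ together with the Lipschitz continuity of $\Lambda$ in Assumption \ref{ass_loading}.2 to trade powers of $|S_t-s|$ inside the kernel window for powers of the bandwidth $h$. For the first part, I would observe that $T(s)/T=\frac{1}{T}\sum_{t=1}^T K_s(S_t)=\hat\pi(s)$ is exactly the nonparametric kernel density estimator of $\pi(s)$. A change of variables $u=(x-s)/h$ gives $\+E[K_s(S_t)]=\int K(u)\,\pi(s+uh)\,du\to\pi(s)$ by continuity of $\pi$, and together with positive recurrence of $S_t$ and $Th\to\infty$ the variance vanishes, so $\hat\pi(s)\xrightarrow{P}\pi(s)$ as already noted below Assumption \ref{ass_state}. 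Since we condition on a state with $\pi(s)>0$, both $T(s)/T\xrightarrow{P}\pi(s)\in(0,\infty)$ and $T/T(s)\xrightarrow{P}1/\pi(s)<\infty$, which immediately yields $T(s)/T=O_p(1)$ and $T/T(s)=O_p(1)$.

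For the cross-sectional average I would write $(\Delta X^s_{it})^2=K_s(S_t)\big[(\Lambda_i(S_t)-\Lambda_i(s))^\top F_t\big]^2$ and bound the bracket by Cauchy--Schwarz and the Lipschitz property, $\big[(\Lambda_i(S_t)-\Lambda_i(s))^\top F_t\big]^2\le C^2|S_t-s|^2\norm{F_t}^2$. The key deterministic estimate is that, because $K$ has compact support, $K_s(S_t)(S_t-s)^2=h\big(\tfrac{S_t-s}{h}\big)^2K\big(\tfrac{S_t-s}{h}\big)\le h\sup_u u^2K(u)=O(h)$, the supremum being finite since $u^2K(u)$ is bounded and compactly supported. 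Hence $\frac{1}{N}\sum_{i=1}^N(\Delta X^s_{it})^2\le C^2\norm{F_t}^2\,K_s(S_t)(S_t-s)^2=O_p(h)$, where $\norm{F_t}^2=O_p(1)$ follows from $\+E\norm{F_t}^2\le\bar F^{1/2}$ (Assumption \ref{ass_factor}).

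The time average demands the sharper rate $O_p(h^2)$, and here the naive per-term bound $O(h)$ is too weak: one must use that the kernel concentrates the sum so that the weight $(S_t-s)^2$ contributes a further factor of $h$ on average. I would write $\frac{1}{T(s)}\sum_{t}(\Delta X^s_{it})^2\le C^2\,\frac{T}{T(s)}\cdot\frac{1}{T}\sum_{t}K_s(S_t)(S_t-s)^2\norm{F_t}^2$, invoke $T/T(s)=O_p(1)$ from part one, and bound the remaining average in expectation. Conditioning on the state filtration $\mathcal F_S$ and using $\+E[\norm{F_t}^2\mid\mathcal F_S]\le\bar F^{1/2}$, the expectation reduces to $\+E[K_s(S_t)(S_t-s)^2]$, which by the same change of variables equals $h^2\int u^2K(u)\pi(s+uh)\,du=O(h^2)$. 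Markov's inequality then gives $\frac{1}{T}\sum_{t}K_s(S_t)(S_t-s)^2\norm{F_t}^2=O_p(h^2)$, and multiplying by the $O_p(1)$ factor yields $\frac{1}{T(s)}\sum_{t}(\Delta X^s_{it})^2=O_p(h^2)$.

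The main obstacle is this last rate: obtaining $O_p(h^2)$ rather than $O_p(h)$ hinges on recognizing that the correct normalization is $T(s)\asymp T$ (each kernel term carries the $1/h$ factor, so the effective mass is $\Theta(T)$, not $\Theta(Th)$), while the $(S_t-s)^2$ weight supplies the extra $h^2$ only in expectation, not deterministically. A secondary care point is that $S_t$ is merely positive recurrent rather than i.i.d.; the expectation computations are legitimate because they use only the stationary marginal density $\pi$, and the concentration needed in part one is exactly the recurrence-based convergence $\hat\pi(s)\xrightarrow{P}\pi(s)$ already assumed below Assumption \ref{ass_state}.
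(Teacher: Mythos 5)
Your proposal is correct. Part 1 and the cross-sectional bound in part 2 follow the paper's own argument essentially verbatim: identify $T(s)/T$ with the kernel density estimator $\hat\pi(s)\xrightarrow{P}\pi(s)>0$, and combine the Lipschitz property with the pathwise bound $K_s(S_t)(S_t-s)^2=h\,u^2K(u)\le h\sup_u u^2K(u)$ and $\norm{F_t}^2=O_p(1)$.

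Where you genuinely diverge is the time-series bound $\frac{1}{T(s)}\sum_{t}(\Delta X^s_{it})^2=O_p(h^2)$, which is the delicate step. The paper applies Cauchy--Schwarz across time, splitting the sum into $\bigl(\frac{1}{T(s)}\sum_t K_s(S_t)\norm{\Lambda_i(S_t)-\Lambda_i(s)}^4\bigr)^{1/2}$, which is $O_p(h^2)$ because the Lipschitz bound produces the kernel-weighted fourth moment $(S_t-s)^4$, and $\bigl(\frac{1}{T(s)}\sum_t K_s(S_t)\norm{F_t}^4\bigr)^{1/2}$, which it argues is $O_p(1)$ via a law-of-large-numbers statement for the kernel-weighted conditional fourth moment of the factors. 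You instead keep the product together, bound its expectation directly by conditioning on $\mathcal{F}_S$ — so that only $\mathbb{E}[\norm{F_t}^2\mid\mathcal{F}_S]\le\bar F^{1/2}$ is needed, obtained from Assumption \ref{ass_factor} by Jensen — reduce to $\mathbb{E}[K_s(S_t)(S_t-s)^2]=h^2\int u^2K(u)\pi(s+uh)\,du=O(h^2)$, and finish with Markov's inequality. Your route is somewhat more elementary: it uses only the second kernel moment and conditional second moments of the factors, and it sidesteps the paper's implicit appeal to a kernel-weighted LLN for $\norm{F_t}^4$ (Assumption \ref{ass_factor} states such convergence only for the second-moment matrix $F_tF_t^\top$, so the paper's $O_p(1)$ claim for the fourth-moment average is itself an extra step). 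What the paper's route buys is a pathwise treatment of the factor process up to the final step, with randomness entering only through the asserted LLN. Both arguments share the same unavoidable looseness — treating kernel-weighted time averages over a positive recurrent, not necessarily stationary, state process as integrals against the stationary density $\pi$ — so neither is more rigorous on that point.
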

	
	\begin{proof}[Proof of Lemma \ref{lemma1}.\ref{ts}]
		If $S_{T_0} \sim \pi$, then $S_t \sim \pi $ for all $t \geq T_0$. 
		Since $\pi(s) > 0$,
		\begin{eqnarray*}
			\frac{T(s)}{T}  &=& \frac{1}{T} \sum_{t = 1}^{T_0 - 1} K_s(S_t) +  \frac{1}{T}\sum_{t = T_0}^T K_s(S_t) \\
			&=&  \frac{T-T_0}{T} \Bigg[  \pi(s) + O(h^2) + O_p\left(\frac{1}{\sqrt{Th}}\right)   \Bigg] + o_p(1) =\pi(s) + o_p(1)
		\end{eqnarray*}
		following $T_0/T \rightarrow 0 $. Moreover, since  $\pi(s) > 0$, we have 
		$$\frac{T}{T(s)}  = O_p(1).$$
	\end{proof}
	
	\begin{proof}[Proof of Lemma \ref{lemma1}.\ref{deltaX}]
		
		\begin{eqnarray*}
			(\Delta X_{it}^s)^2  &=& ((\Lambda_i(S_t) - \Lambda_i(s))^\T  F^s_t)^2 \\
			&\leq& K_s(S_t) \norm{ \Lambda_i(S_t) - \Lambda_i(s)}^2  \norm{F_t }^2 \\
			&\stackrel{S_t = s + uh}{=}& \frac{1}{h}K(u) \norm{ \Lambda_i(S_t) - \Lambda_i(s)}^2  \norm{F_t }^2\\
			&\leq& \frac{1}{h}K(u) (C^2u^2h^2)  \norm{F_t }^2 \\
			&=& O_p(h)
		\end{eqnarray*}
		by $K_s(S_t) = \frac{1}{h} K(\frac{S_t - s}{h}) = \frac{1}{h}K(u)$, when $S_t = s + uh$, and by Assumption \ref{ass_loading}, $\norm{\Lambda_i(s + \Delta s) - \Lambda_i(s)} \leq C \Delta s$, $\forall s, \Delta s$ and $i$ and $\int u^4 k(u) du$ exists so $\lim_{u \rightarrow \infty} u^2 k(u)$ is bounded. Therefore,
		\begin{eqnarray*}
			\frac{1}{N} \sum_{i = 1}^N (\Delta X_{it}^s )^2   = O_p(h) \label{deltaX_N}.
		\end{eqnarray*} 
		
		Furthermore,
		\begin{eqnarray*}
			\frac{1}{T(s)} \sum_{t = 1}^T  (\Delta X_{it}^s)^2 &=& \frac{1}{T(s)} \sum_{t = 1}^T ((\Lambda_i(S_t) - \Lambda_i(s))^\T  F^s_t)^2 \\ 
			&=& \frac{1}{T(s)} \sum_{t = 1}^T K_s(S_t) ((\Lambda_i(S_t) - \Lambda_i(s))^\T  F_t)^2 \\ 
			&\leq& \frac{1}{T(s)} \sum_{t = 1}^T K_s(S_t) \norm{(\Lambda_i(S_t) - \Lambda_i(s)}^2 \norm{F_t}^2  \\ 
			&\leq& \left( \frac{1}{T(s)} \sum_{t = 1}^T K_s(S_t) \norm{(\Lambda_i(S_t) - \Lambda_i(s)}^4 \right)^{1/2} \left(\frac{1}{T(s)} \sum_{t = 1}^T K_s(S_t) \norm{F_t}^4 \right)^{1/2}.
		\end{eqnarray*}
		By Assumption \ref{ass_factor}, $\max_t\, \+E[\norm{F_t}^4 | S_t = s]  \leq \bar{F} < \infty$, so $(\frac{1}{T(s)} \sum_{t = 1}^T K_s(S_t) \norm{F_t}^4)^{1/2} = \+E\left[ \norm{F_t}^4 | S_t = s \right] + O_p(h^2) + O_P(\frac{1}{\sqrt{Th}}) = O_p(1)$. Also,
		\begin{eqnarray*}
			\frac{1}{T} \sum_{t = 1}^T K_s(S_t) \norm{\Lambda_i(S_t) - \Lambda_i(s)}^4 &\leq& \frac{C}{T} \sum_{t = 1}^T K_s(S_t) (S_t - s)^4 \\
			&=& C \int K(u) u^4h^4 \pi(s+uh)du = O_p(h^4)
		\end{eqnarray*}
		when $\int u^4 k(u) du$ exists. 
		Therefore, 
		
		\begin{eqnarray*}
			\frac{1}{T(s)} \sum_{t = 1}^T (\Delta X_{it}^s)^2 = O_p(h^2).
		\end{eqnarray*} 
	\end{proof}

	\begin{lemma}\label{lemma:simplify-assumption6}
		Let $R_K = \int K^2(u) du$. If there exists $T_0$ with $T_0/T \rightarrow 0$ such that $S_{T_0} \sim \pi$, then for all $N$, $T$, and $h$ that satisfies $h \rightarrow 0$, $Nh \rightarrow \infty$, we have 
		$Th \rightarrow \infty$, $Nh^2 \rightarrow 0$ and $Th^3 \rightarrow 0$,
		\begin{enumerate}
			\item  $\+E \norm{\sqrt{\frac{Th}{N}} \frac{1}{T(s)}  \sum_{u = 1}^T \sum_{i = 1}^N K_s(S_u) F_u(e_{iu}e_{it}- \+E[e_{iu}e_{it}])}^2  $ \\ $= \frac{1}{N T} \sum_{l = 1}^r \sum_{l^\prime = 1}^r  \sum_{i = 1}^N \sum_{j = 1}^{N} \left(\sum_{u = 1}^{T} \frac{R_K}{\pi(s)} \gamma^s_{Fe, t, l, l^\prime}(i,j,u,u)  +  h \sum_{p \neq u} \gamma^s_{Fe, t, l, l^\prime}(i,j,u,p) \right) + o_p(1)  $.
			\item  $\+E \norm{\sqrt{\frac{Th}{N}} \frac{1}{T(s)} \sum_{t = 1}^T \sum_{i = 1}^N K_s(S_t) F_t  \Lambda_i(s)^\T   e_{it}}^2 $\\ $ = \frac{1}{N T} \sum_{l = 1}^{r^2} \sum_{l^\prime = 1}^{r^2}  \sum_{i = 1}^N \sum_{j = 1}^{N} \left(\sum_{u = 1}^{T} \frac{R_K}{\pi(s)} \gamma^s_{F\Lambda e, l, l^\prime}(i,j,u,u)  +  h \sum_{p \neq u} \gamma^s_{F\Lambda e, l, l^\prime}(i,j,u,p) \right) + o_p(1) $.
			\item  $ \+E \norm{\sqrt{\frac{Th}{N}} \frac{1}{T(s)} \sum_{t = 1}^T \sum_{l = 1}^N K_s(S_t) \Lambda_l(s)(e_{lt}e_{it} - \+E[e_{lt}e_{it}])}^2 $ \\ $= \frac{1}{N T} \sum_{l = 1}^r \sum_{l^\prime = 1}^r  \sum_{i = 1}^N \sum_{j = 1}^{N} \left(\sum_{u = 1}^{T} \frac{R_K}{\pi(s)} \gamma^s_{\Lambda e,m, l, l^\prime}(i,j,u,u)  +  h \sum_{p \neq u} \gamma^s_{\Lambda e, m, l, l^\prime}(i,j,u,p) \right) + o_p(1)$.
			\item $ \+E \norm{\frac{\sqrt{Th}}{\sqrt{N} T(s_l)}\sum_{i=1}^N (F^{s_l})^\T  e_i^{s_l}\lambda^\T_{l'i} }^2 $ \\
			$= \frac{1}{N T} \sum_{m = 1}^{r^2} \sum_{m^\prime = 1}^{r^2}  \sum_{i = 1}^N \sum_{j = 1}^{N} \left(\sum_{u = 1}^{T} \frac{R_K}{\pi(s_l)} \gamma^{s_l, s_{l^\prime}}_{F\Lambda e, m, m^\prime}(i,j,u,u)  +  h \sum_{p \neq u} \gamma^{s_l, s_{l^\prime}}_{F\Lambda e, m, m^\prime}(i,j,u,p) \right) + o_p(1)$
			\item $E\norm{\frac{\sqrt{Th}}{NT(s_l)} \sum_{i=1}^N \sum_{k=1}^N \lambda_{li} \lambda^\T_{l'i}  \sum_{t=1}^T [e^{s_l}_{it}e^{s_l}_{kt} - \+E(e^{s_l}_{it}e^{s_l}_{kt})]  }^2 $ \\
			$=\frac{1}{N T} \sum_{m = 1}^{r^2} \sum_{m^\prime = 1}^{r^2}  \sum_{i = 1}^N    \sum_{i^\prime = 1}^N \sum_{j = 1}^{N}  \sum_{j^\prime = 1}^{N} \Big(\sum_{u = 1}^{T} \frac{R_K}{\pi(s_l)} \gamma^{s_l, s_{l^\prime}}_{\Lambda e, m, m^\prime}(i,i^\prime,j,j^\prime,u,u) $ \\ $ +  h \sum_{p \neq u} \gamma^{s_l, s_{l^\prime}}_{\Lambda e, m, m^\prime}(i,i^\prime,j,j^\prime,u,p) \Big)$
			\item $E\norm{\frac{Th}{NT^2(s_l)} \sum_{i=1}^N \sum_{k=1}^N (F^{s_l})^\T  e_k^{s_l} \lambda^\T_{l'i}  \sum_{t=1}^T [e^{s_l}_{it}e^{s_l}_{kt} - \+E(e^{s_l}_{it}e^{s_l}_{kt})]  }^2$ \\
			$= \frac{1}{N^2 T^2}  \sum_{m} \sum_{m^\prime} \sum_{i = 1}^{N}  \sum_{i^\prime = 1}^{N} \sum_{j = 1}^{N}  \sum_{j^\prime = 1}^{N}  \sum_{u = 1}^{T}  \sum_{u^\prime = 1}^{T} \sum_{p = 1}^{T}  \sum_{p^\prime = 1}^{T}   $ \\ $ c(u,u^\prime,p,p^\prime)  \cdot \gamma^{s_l}_{F \Lambda e, m, m^\prime} (i,i^\prime,j,j^\prime,u,u^\prime,p, p^\prime)  + o_p(1)$
		\end{enumerate}
		$\gamma^s_{Fe, t, l, l^\prime}(i,j,u,p) $, $\gamma^s_{F \Lambda e, t, l, l^\prime}(i,j,u,p) $ and $\gamma^s_{\Lambda e, t, l, l^\prime}(i,j,u,p) $ are defined in Assumption \ref{ass_mom}, and $\gamma^{s_l, s_{l^\prime}}_{F\Lambda e, m, m^\prime}(i,j,u,p) $, $\gamma^{s_l, s_{l^\prime}}_{\Lambda e, m, m^\prime}(i,i^\prime,j,j^\prime,u,p) $ $\gamma^{s_l}_{F \Lambda e, m, m^\prime} (i,i^\prime,j,j^\prime,u,u^\prime,p, p^\prime) $ and $c(u,u^\prime,p,p^\prime) $ are defined in Assumption \ref{doublesum}.
	\end{lemma}
	
	\begin{proof}[Proof of Lemma \ref{lemma:simplify-assumption6}]
		Without loss of generality, we can assume $T_0 = 1$ because the difference between $T_0 = 1$ and some other value in the value of Lemma \ref{lemma:simplify-assumption6}.1-6 is $o_p(1)$ following $T_0/T \rightarrow 0$.
		If we can show 
		\begin{eqnarray*}
			\+E \norm{\sqrt{\frac{Th}{N} } \frac{1}{T(s)}  \sum_{u = 1}^{T} \sum_{i = 1}^{N} K_s(S_u)  \*z_{iu} }^2  & =& \frac{1}{NT} \sum_{l} \sum_{l^\prime} \sum_{i = 1}^{N} \sum_{j = 1}^{N} \Bigg(  \sum_{u=1}^{T} \frac{R_K}{\pi(s)} \+E[ \tvec(\*z_{iu})_l \tvec(\*z_{ju})_{l^\prime}  |S_u = s]  \\ 
			&& + h \sum_{p \neq u}  \+E[ \tvec(\*z_{iu})_l \tvec(\*z_{jp})_{l^\prime}  |S_u = s, S_p = s]  \Bigg) + o_p(1),
		\end{eqnarray*}
		then we can plug $ F_u(e_{iu}e_{it}- \+E[e_{iu}e_{it}])$, $  F_t  \Lambda_i(s)^\T   e_{it}  $, $\Lambda_l(s)(e_{lt}e_{it} - \+E[e_{lt}e_{it}])$ and $ F_t  \Lambda_i(s_{l^\prime})^\T   e_{it}  $ into $\*z_{iu}$ and Lemmas \ref{lemma:simplify-assumption6}.1, \ref{lemma:simplify-assumption6}.2, \ref{lemma:simplify-assumption6}.3 and \ref{lemma:simplify-assumption6}.4 holds. 
		
		Note that 
		\begin{eqnarray*}
			\+E \norm{\sqrt{\frac{Th}{N} }  \frac{1}{T(s)}  \sum_{u = 1}^{T} \sum_{i = 1}^{N} K_s(S_u)  \*z_{iu} }^2  & =& \frac{Th}{NT(s)^2} \sum_{l} \sum_{l^\prime} \sum_{i = 1}^{N} \sum_{j = 1}^{N} \Bigg(  \sum_{u=1}^{T}  \+E[ K_s(S_u)^2 \tvec(\*z_{iu})_l \tvec(\*z_{ju})_{l^\prime} ] \\
			&& + \sum_{p \neq u}  \+E[ K_s(S_u) K_s(S_p)   \tvec(\*z_{iu})_l \tvec(\*z_{jp})_{l^\prime} ]  \Bigg), 
		\end{eqnarray*}
		
		
		Suppose  $\tvec(\*z_{iu})_l \tvec(\*z_{jp})_{l^\prime} = g_{i,j,l,l^\prime}(S_u, S_p) + \varepsilon_{i,j,l,l^\prime,u,p} $ with $\+E[\varepsilon_{i,j,l,l^\prime,u,p} | S_u, S_p] = 0$.
		
		Let us first consider $\+E[ K_s(S_u)^2 \tvec(\*z_{iu})_l \tvec(\*z_{ju})_{l^\prime} ] $. We have 
		\begin{align*}
		\+E[ K_s(S_u)^2 \tvec(\*z_{iu})_l \tvec(\*z_{ju})_{l^\prime} ]  =& \frac{1}{h^2} \int k \Big( \frac{s^\prime - s}{h} \Big)^2 \Big( g_{i,j,l,l^\prime}(s^\prime, s^\prime) + \varepsilon_{i,j,l,l^\prime,u,u}  \Big) \pi(s^\prime) ds^\prime \\
		=& \frac{1}{h^2} \int k \Big( \frac{s^\prime - s}{h} \Big)^2  g_{i,j,l,l^\prime}(s^\prime, s^\prime)  \pi(s^\prime) ds^\prime \\
		=& \frac{1}{h} \int k(x)^2   g_{i,j,l,l^\prime}(s+xh, s+xh)  \pi(s+xh) d x \\
		=& \frac{1}{h} \int k(x)^2   g_{i,j,l,l^\prime}(s, s)  \pi(s) d x + o \Big(\frac{1}{h}\Big) \\
		=& \frac{1}{h}  g_{i,j,l,l^\prime}(s, s)  R(k)   \pi(s) + o \Big(\frac{1}{h}\Big)  \\
		=& \frac{ R(k)   \pi(s) }{h} \+E[ \tvec(\*z_{iu})_l \tvec(\*z_{ju})_{l^\prime}  |S_u = s]  + o \Big(\frac{1}{h}\Big) 
		\end{align*}
		Next let us consider $ \+E[ K_s(S_u) K_s(S_p)   \tvec(\*z_{iu})_l \tvec(\*z_{jp})_{l^\prime} ]  $. We have 
		\begin{align*}
		\+E[ K_s(S_u) K_s(S_p)   \tvec(\*z_{iu})_l \tvec(\*z_{jp})_{l^\prime} ]  =&  \frac{1}{h^2} \int \int k \Big( \frac{s^\prime - s}{h} \Big) k \Big( \frac{s^{\prime\prime} - s}{h} \Big) \Big( g_{i,j,l,l^\prime}(s^\prime, s^{\prime\prime}) + \varepsilon_{i,j,l,l^\prime,u,p}  \Big) \pi(s^\prime, s^{\prime\prime}) ds^\prime  d s^{\prime\prime}\\
		=& \frac{1}{h^2} \int \int k \Big( \frac{s^\prime - s}{h} \Big) k \Big( \frac{s^{\prime\prime} - s}{h} \Big) g_{i,j,l,l^\prime}(s^\prime, s^{\prime\prime}) \pi(s^\prime, s^{\prime\prime}) ds^\prime  d s^{\prime\prime}\\
		=& \frac{1}{h^2} \int \int k \Big( \frac{s^\prime - s}{h} \Big) k \Big( \frac{s^{\prime\prime} - s}{h} \Big) g_{i,j,l,l^\prime}(s^\prime, s^{\prime\prime}) \pi(s^\prime)\pi( s^{\prime\prime}) ds^\prime  d s^{\prime\prime}\\
		=&\int \int k(x) k(y)  g_{i,j,l,l^\prime}(s+xh, s+yh)   \pi(s+xh) \pi(s+yh) d x dy  \\
		=& \int \int k(x)  k(y)    g_{i,j,l,l^\prime}(s, s)  \pi(s)^2  d x dy + o(1) \\
		=&  g_{i,j,l,l^\prime}(s, s)   \pi(s)^2 + o(1) \\
		=&  \pi(s)^2  \+E[ \tvec(\*z_{iu})_l \tvec(\*z_{jp})_{l^\prime}  |S_u = s, S_p = s]  + o(1) 
		\end{align*}
		From Lemma \ref{lemma1}, $\frac{T}{T(s)} \xrightarrow{p} \frac{1}{\pi(s)}$.  Then we have
		\begin{align*}
		\+E \norm{\sqrt{\frac{Th}{N} }  \frac{1}{T(s)}  \sum_{u = 1}^{T} \sum_{i - 1}^{N} K_s(S_u)  \*z_{iu} }^2  =& \frac{h}{N T \pi(s)^2} \sum_{l} \sum_{l^\prime} \sum_{i = 1}^{N} \sum_{j = 1}^{N} \Bigg(  \sum_{u=1}^{T}  \frac{ R(k)   \pi(s) }{h} \+E[ \tvec(\*z_{iu})_l \tvec(\*z_{ju})_{l^\prime}  |S_u = s] \\
		& + \sum_{p \neq u}  \pi(s)^2  \+E[ \tvec(\*z_{iu})_l \tvec(\*z_{jp})_{l^\prime}  |S_u = s, S_p = s]    \Bigg) + o_p(1) \\
		=& \frac{h}{N T} \sum_{l} \sum_{l^\prime} \sum_{i = 1}^{N} \sum_{j = 1}^{N} \Bigg(  \sum_{u=1}^{T}  \frac{ R(k)  }{ \pi(s)} \+E[ \tvec(\*z_{iu})_l \tvec(\*z_{ju})_{l^\prime}  |S_u = s] \\
		& + h \sum_{p \neq u}  \+E[ \tvec(\*z_{iu})_l \tvec(\*z_{jp})_{l^\prime}  |S_u = s, S_p = s]    \Bigg) + o_p(1)
		\end{align*}
		Hence, if we can plug $ F_u(e_{iu}e_{it}- \+E[e_{iu}e_{it}])$, $  F_t  \Lambda_i(s)^\T   e_{it}  $, $\Lambda_l(s)(e_{lt}e_{it} - \+E[e_{lt}e_{it}])$ and $ F_t  \Lambda_i(s_{l^\prime})^\T   e_{it}  $ into $\*z_{iu}$ and Lemmas \ref{lemma:simplify-assumption6}.1, \ref{lemma:simplify-assumption6}.2, \ref{lemma:simplify-assumption6}.3 and \ref{lemma:simplify-assumption6}.4 holds. \ref{lemma:simplify-assumption6}.5 can be shown similarly. 
		
		Let $\*z_{ikut} = F_u    \lambda^\T_{l'i} e_{ku}  [e_{it}e_{kt} - \+E(e_{it}e_{kt})] $. Then for Lemma \ref{lemma:simplify-assumption6}.6, we have 
		\begin{align*}
		&E\norm{\frac{Th}{NT^2(s_l)} \sum_{i=1}^N \sum_{k=1}^N (F^{s_l})^\T  e_k^{s_l} \lambda^\T_{l'i}  \sum_{t=1}^T [e^{s_l}_{it}e^{s_l}_{kt} - \+E(e^{s_l}_{it}e^{s_l}_{kt})]  }^2 \\
		=& \frac{T^2 h^2}{N^2 T^4(s_l)} \sum_{m} \sum_{m^\prime} \sum_{i = 1}^{N}  \sum_{i^\prime = 1}^{N} \sum_{j = 1}^{N}  \sum_{j^\prime = 1}^{N}  \sum_{u = 1}^{T}  \sum_{u^\prime = 1}^{T} \sum_{p = 1}^{T}  \sum_{p^\prime = 1}^{T}  \+E[ K_s(S_u) K_s(S_{u^\prime})   K_s(S_p) K_s(S_{p^\prime})  \tvec(\*z_{ijup})_m \tvec(\*z_{i^\prime j^\prime u^\prime p^\prime})_{m^\prime} ]
		\end{align*}
		
		Suppose  $ \tvec(\*z_{ijup})_m \tvec(\*z_{i^\prime j^\prime u^\prime p^\prime})_{m^\prime}  = g_{i,i^\prime, j, j^\prime, m ,m^\prime}(S_u, S_{u^\prime}, S_p, S_{p^\prime}) + \varepsilon_{i,i^\prime, j, j^\prime, m ,m^\prime,u,u^\prime,p,p^\prime} $  with  \\ $\+E[\varepsilon_{i,i^\prime, j, j^\prime, m ,m^\prime,u,u^\prime,p,p^\prime} | S_u, S_{u^\prime}, S_p, S_{p^\prime}] = 0$. With similar algebra manipulation,  if $u = u^\prime = p = p^\prime$, we have 
		\begin{align*}
		& \+E[ K_s(S_u) K_s(S_{u^\prime})   K_s(S_p) K_s(S_{p^\prime})  \tvec(\*z_{ijup})_m \tvec(\*z_{i^\prime j^\prime u^\prime p^\prime})_{m^\prime} ] \\
		=& \frac{1}{h^3}  \int k(x)^4 dx  \cdot \pi(s)   g_{i,i^\prime, j, j^\prime, m ,m^\prime}(s,s,s,s)+ o \Big(\frac{1}{h^3}\Big); 
		\end{align*}
		if three variables in $u, u^\prime , p, p^\prime $ are the same, we have
		\begin{align*}
		& \+E[ K_s(S_u) K_s(S_{u^\prime})   K_s(S_p) K_s(S_{p^\prime})  \tvec(\*z_{ijup})_m \tvec(\*z_{i^\prime j^\prime u^\prime p^\prime})_{m^\prime} ] \\
		=& \frac{1}{h^2}  \int k(x)^3 dx \cdot  \pi(s)^2   g_{i,i^\prime, j, j^\prime, m ,m^\prime}(s,s,s,s)+ o \Big(\frac{1}{h^2}\Big); 
		\end{align*}
		if two variables in $u, u^\prime , p, p^\prime $ are the same and other two are same as well, we have
		\begin{align*}
		& \+E[ K_s(S_u) K_s(S_{u^\prime})   K_s(S_p) K_s(S_{p^\prime})  \tvec(\*z_{ijup})_m \tvec(\*z_{i^\prime j^\prime u^\prime p^\prime})_{m^\prime} ] \\
		=& \frac{1}{h^2}  R(k)^2 \cdot \pi(s)^2   g_{i,i^\prime, j, j^\prime, m ,m^\prime}(s,s,s,s)+ o \Big(\frac{1}{h^2}\Big); 
		\end{align*}
		if two variables in $u, u^\prime , p, p^\prime $ are the same and other two are different, we have
		\begin{align*}
		& \+E[ K_s(S_u) K_s(S_{u^\prime})   K_s(S_p) K_s(S_{p^\prime})  \tvec(\*z_{ijup})_m \tvec(\*z_{i^\prime j^\prime u^\prime p^\prime})_{m^\prime} ] \\
		=& \frac{1}{h}  R(k) \cdot \pi(s)^3   g_{i,i^\prime, j, j^\prime, m ,m^\prime}(s,s,s,s)+ o \Big(\frac{1}{h}\Big); 
		\end{align*}
		if all variables in $u, u^\prime , p, p^\prime $ are different, we have
		\begin{align*}
		& \+E[ K_s(S_u) K_s(S_{u^\prime})   K_s(S_p) K_s(S_{p^\prime})  \tvec(\*z_{ijup})_m \tvec(\*z_{i^\prime j^\prime u^\prime p^\prime})_{m^\prime} ] \\
		=&  \pi(s)^4   g_{i,i^\prime, j, j^\prime, m ,m^\prime}(s,s,s,s)+ o(1).
		\end{align*}
		From Lemma \ref{lemma1}, $\frac{T}{T(s)} \xrightarrow{p} \frac{1}{\pi(s)}$.  Then we have
		\begin{align*}
		&E\norm{\frac{Th}{NT^2(s_l)} \sum_{i=1}^N \sum_{k=1}^N (F^{s_l})^\T  e_k^{s_l} \lambda^\T_{l'i}  \sum_{t=1}^T [e^{s_l}_{it}e^{s_l}_{kt} - \+E(e^{s_l}_{it}e^{s_l}_{kt})]  }^2 \\
		=& \frac{h^2}{N^2 T^2 \pi(s_l)^4 } \sum_{m} \sum_{m^\prime} \sum_{i = 1}^{N}  \sum_{i^\prime = 1}^{N} \sum_{j = 1}^{N}  \sum_{j^\prime = 1}^{N}  \sum_{u = 1}^{T}  \sum_{u^\prime = 1}^{T} \sum_{p = 1}^{T}  \sum_{p^\prime = 1}^{T} \\
		& \quad  \+E[ K_{s_l}(S_u) K_{s_l}(S_{u^\prime})   K_{s_l}(S_p) K_{s_l}(S_{p^\prime})  \tvec(\*z_{ijup})_m \tvec(\*z_{i^\prime j^\prime u^\prime p^\prime})_{m^\prime} ]  + o_p(1) \\
		=&  \frac{1}{N^2 T^2}  \sum_{m} \sum_{m^\prime} \sum_{i = 1}^{N}  \sum_{i^\prime = 1}^{N} \sum_{j = 1}^{N}  \sum_{j^\prime = 1}^{N}  \sum_{u = 1}^{T}  \sum_{u^\prime = 1}^{T} \sum_{p = 1}^{T}  \sum_{p^\prime = 1}^{T}    \\
		& \quad c(u,u^\prime,p,p^\prime) \cdot  \+E[ \tvec(\*z_{ijup})_m \tvec(\*z_{i^\prime j^\prime u^\prime p^\prime})_{m^\prime}  |S_u = s_l, S_{u^\prime} = s_l, S_p = s_l, S_{p^\prime} = s_l]   + o_p(1),
		\end{align*}
		where 
		\begin{align*}
		c(u,u^\prime,p,p^\prime) = \begin{cases}
		\frac{\int k(x)^4 dx }{ h \pi(s_l)^3 } & u = u^\prime = p = p^\prime \\
		\frac{\int k(x)^3 dx }{ \pi(s_l)^2 } & \text{three in $\{u, u^\prime, p, p^\prime\}$ are the same}  \\
		\frac{R(k)^2}{ \pi(s_l)^2 } & \text{two in $\{u, u^\prime, p, p^\prime\}$ are the same, the other two are the same}  \\
		\frac{R(k) h}{ \pi(s_l) } & \text{two in $\{u, u^\prime, p, p^\prime\}$ are the same, the other two are  distinct}  \\
		h^2 & \text{all in $\{u, u^\prime, p, p^\prime\}$ are distinct}  \\
		\end{cases}
		\end{align*}

	\end{proof}

	\begin{proof}[Proof of Theorem 1]
		Since $\hat{F}^s$ are $\sqrt{T(s)}$ times eigenvectors corresponding to the $r$ largest eigenvalues of matrix $\frac{1}{NT} (X^s)^\T X^s$. $V^s_r$ is the diagonal matrix with diagonal values equal to $r$ largest eigenvalues in decreasing order of matrix $\frac{1}{NT(s)} (X^s)^\T X^s$ and all eigenvalues are bounded from 0. We plug $X^s = \Lambda(s) (F^s)^\T  + e^s + \Delta X^s = \bar{X}^s + \Delta X^s$ into
		\begin{eqnarray*}
			\left(\frac{1}{NT(s)} (X^s)^\T X^s \right) \hat{F}^s = \hat{F}^s V^s_r
		\end{eqnarray*}
		and get
		\begin{eqnarray*}
			\frac{1}{NT(s)} [ F^s \Lambda(s)^\T  \Lambda(s) (F^s)^\T  \hat{F}^s + F^s \Lambda(s)^\T  e^s \hat{F}^s + (e^s)^\T  \Lambda(s) (F^s)^\T  \hat{F}^s  && \\
			+ (e^s)^\T  e^s \hat{F}^s + (\Delta X^s)^\T \bar{X}^s \hat{F}^s + (\bar{X}^s)^\T  \Delta X^s\hat{F}^s  + (\Delta X^s)^\T \Delta X^s \hat{F}^s]   &=& \hat{F}^s V^s_r.
		\end{eqnarray*}
		Define $H^s = \frac{\Lambda(s)^\T  \Lambda(s)}{N} \frac{(F^s)^\T  \hat{F}^s}{T(s)}  (V^s_r)^{-1}$, we have 
		\begin{eqnarray*}
			&&V^s_r (\hat{F}^s_t - (H^s)^\T  F^s_t) \\
			&=& \frac{1}{NT(s)}\left[(\hat{F}^s)^\T  (e^s)^\T  \Lambda(s) F^s_t + (\hat{F}^s)^\T  F^s \Lambda(s)^\T  e^s_t + (\hat{F}^s)^\T  (e^s)^\T  e^s_t \right. \\
			&&+ \left. (\hat{F}^s)^\T  (\Delta X^s)^\T \bar{X}_t^s  + (\hat{F}^s)^\T  (\bar{X}^s)^\T  \Delta X^s_t + (\hat{F}^s)^\T (\Delta X^s)^\T \Delta X^s_t \right] \\
			&=& \frac{1}{T(s)} \left[ \sum_{u = 1}^T\hat{F}^s_u \frac{(e^s_u)^\T \Lambda(s) F^s_t}{N} \right] + \frac{1}{T(s)} \left[\sum_{u = 1}^T\hat{F}^s_u \frac{(F^s_u)^\T  \Lambda(s)^\T  e^s_t}{N} \right]  \\
			&&+ \frac{1}{T(s)} \left[ \sum_{u = 1}^T\hat{F}^s_u \frac{(e^s_u)^\T  e^s_t - \+E[(e^s_u)^\T  e^s_t]}{N}  \right]  + \frac{1}{T(s)} \left[ \sum_{u = 1}^T\hat{F}^s_u \frac{\+E[(e^s_u)^\T  e^s_t]}{N} \right] \\
			&&+ \frac{1}{T(s)} \left[ \sum_{u = 1}^T\hat{F}^s_u \frac{(\Delta X^s_u)^\T \bar{X}_t^s}{N} \right] + \frac{1}{T(s)} \left[ \sum_{u = 1}^T\hat{F}^s_u \frac{( \bar{X}^s_u)^\T \Delta X^s_t}{N} \right] + \frac{1}{T(s)} \left[ \sum_{u = 1}^T\hat{F}^s_u \frac{(\Delta X^s_u)^\T \Delta X^s_t}{N} \right].
		\end{eqnarray*}
		
		We want to prove  $\frac{1}{T}\sum_{t = 1}^T \norm{V^s_r (\hat{F}^s_t - (H^s)^\T  F^s_t)}^2$ converges to 0 in probability at rate $\delta^2_{NT, h}$. It is equivalent to prove $\frac{1}{T(s)}\sum_{t = 1}^T \norm{V^s_r (\hat{F}^s_t - (H^s)^\T  F^s_t)}^2$ converges to 0 in probability at rate $\delta^2_{NT, h}$ by lemma \ref{lemma1}.\ref{ts}. First, let 
		\begin{eqnarray*}
			\gamma_N^s(u,t) &=&  \frac{K_s^{1/2}(S_u) K_s^{1/2}(S_t)\+E[e^\T_u e_t]}{N} = K_s^{1/2}(S_u) K_s^{1/2}(S_t) \gamma_N(u,t) \\
			\zeta^s_{ut} &=& \frac{(e^s_u)^\T  e^s_t - K_s^{1/2}(S_u) K_s^{1/2}(S_t)\+E[e^\T_u e_t]}{N} = K_s^{1/2}(S_u) K_s^{1/2}(S_t) \zeta_{ut}\\
			\eta^s_{ut} &=& \frac{(F^s_u)^\T  \Lambda(s)^\T  e^s_t}{N} = K_s^{1/2}(S_u) K_s^{1/2}(S_t) \eta_{ut} \\
			\epsilon^s_{ut} &=& \frac{(e^s_u)^\T \Lambda(s) F^s_t}{N} = K_s^{1/2}(S_u) K_s^{1/2}(S_t) \epsilon_{ut}
		\end{eqnarray*}
		and 
		\begin{eqnarray*}
			a_t &=& \frac{1}{T(s)^2} \norm{\sum_{u = 1}^T\hat{F}^s_u \frac{K_s^{1/2}(S_u) K_s^{1/2}(S_t)\+E[e^\T_u e_t]}{N}}^2 = \frac{1}{T(s)^2} \norm{\sum_{u = 1}^T\hat{F}^s_u \gamma_N^s(u,t)}^2\\
			b_t &=& \frac{1}{T(s)^2} \norm{\sum_{u = 1}^T\hat{F}^s_u \frac{(e^s_u)^\T  e^s_t - K_s^{1/2}(S_u) K_s^{1/2}(S_t)\+E[e^\T_u e_t]}{N}}^2 = \frac{1}{T(s)^2} \norm{\sum_{u = 1}^T\hat{F}^s_u \zeta^s_{ut}}^2 \\ 
			c_t &=& \frac{1}{T(s)^2} \norm{\sum_{u = 1}^T\hat{F}^s_u \frac{(F^s_u)^\T  \Lambda(s)^\T  e^s_t}{N}}^2 = \frac{1}{T(s)^2} \norm{\sum_{u = 1}^T\hat{F}^s_u \eta^s_{ut}}^2\\
			d_t &=& \frac{1}{T(s)^2} \norm{\sum_{u = 1}^T\hat{F}^s_u \frac{(e^s_u)^\T \Lambda(s) F^s_t}{N}}^2 = \frac{1}{T(s)^2} \norm{\sum_{u = 1}^T\hat{F}^s_u \epsilon^s_{ut}}^2\\
			f_t &=& \frac{1}{T(s)^2} \norm{\sum_{u = 1}^T\hat{F}^s_u \frac{(\Delta X^s_u)^\T \bar{X}_t^s}{N}}^2 \\
			g_t &=& \frac{1}{T(s)^2} \norm{\hat{F}^s_u \frac{( \bar{X}^s_u)^\T \Delta X^s_t}{N}}^2 \\
			h_t &=& \frac{1}{T(s)^2} \norm{\sum_{u = 1}^T\hat{F}^s_u \frac{(\Delta X^s_u)^\T \Delta X^s_t}{N}}^2,
		\end{eqnarray*}
		then
		\begin{eqnarray*}
			\norm{V^s_r (\hat{F}^s_t - (H^s)^\T  F^s_t)}^2 &\leq& 4 \{a_t + b_t + c_t + d_t + f_t + g_t + h_t\} \\
			\frac{1}{T(s)}\sum_{t = 1}^T \norm{V^s_r (\hat{F}^s_t - (H^s)^\T  F^s_t)}^2 &\leq& \frac{4}{T(s)}\sum_{t = 1}^T (a_t + b_t + c_t + d_t  + f_t + g_t + h_t)
		\end{eqnarray*}
		
		Next is to show $\frac{1}{T(s)}\sum_{t = 1}^T a_t, \frac{1}{T(s)}\sum_{t = 1}^T b_t, \frac{1}{T(s)}\sum_{t = 1}^T c_t, \frac{1}{T(s)}\sum_{t = 1}^T d_t, \frac{1}{T(s)}\sum_{t = 1}^T f_t,  \frac{1}{T(s)}\sum_{t = 1}^T g_t, \\ \frac{1}{T(s)}\sum_{t = 1}^T h_t$ converge to 0 at rate at least $\delta^2_{NT, h}$. \\ 
		Since $(\hat{F}^s)^\T \hat{F}^s/T(s) = I_r$, $tr((\hat{F}^s)^\T \hat{F}^s/T(s)) = \frac{1}{T(s)}\sum_{t = 1}^T \norm{\hat{F}^s_t}^2 = r = O(1)$.
		\begin{eqnarray*}
			\frac{1}{T(s)}\sum_{t = 1}^T a_t &=& \frac{1}{T} \sum_{t = 1}^T \frac{1}{T(s)^2} \norm{\sum_{u = 1}^T\hat{F}^s_u \gamma_N^s(u,t)}^2 \\
			&\leq& \frac{1}{T(s)} \left( \frac{1}{T(s)} \sum_{u = 1}^T \norm{\hat{F}^s_u}^2 \right) \left( \frac{1}{T(s)}\sum_{t = 1}^T \sum_{u = 1}^T  \gamma_N^s(u,t)^2 \right) \\
			&=& O_p \left(\frac{1}{T} \right) O\left(1\right) O_p\left( \frac{1}{h} \right)\\
			&=& O_p \left(\frac{1}{Th} \right)
		\end{eqnarray*}
		where we use Lemma \ref{lemma1}.1 and $\frac{1}{T(s)}\sum_{t = 1}^T \sum_{u = 1}^T \gamma^s_N(u,t)^2 = O_p(\frac{1}{h})$. The proof is as follows.
		
		Let $\rho(u,t) = \gamma_N(u,t)/[\gamma_N(u,u) \gamma_N(t,t)]^{1/2}$, then $\rho(u,t) \leq 1$. Also, $\gamma_N(t,t) \leq M$ from Assumption \ref{ass_err}.2. Therefore, $\gamma_N(u,t)^2 \leq M |\gamma_N(u,t)|$. Moreover, from Assumption \ref{ass_err}.2, $\sum_{t = 1}^T  |\gamma_N(u,t)| \leq M$. Together with $K_s(S_t) = \frac{1}{h}K(\frac{S_t - s}{h}) = O_p(\frac{1}{h})$, we have
		
		\begin{eqnarray} \label{gamma}
		\frac{1}{T(s)}\sum_{t = 1}^T \sum_{u = 1}^T \gamma^s_N(u,t)^2 &=& \frac{1}{T(s)}\sum_{t = 1}^T \sum_{u = 1}^T \frac{K_s(S_u) K_s(S_t)\+E[e^\T_u e_t]^2}{N^2} \nonumber \\
		&\leq&\max_t K_s(S_t)  \frac{1}{T(s)}  \sum_{u = 1}^T  K_s(S_u)  \left( \max_{k} \sum_{t = 1}^T \frac{\+E[e^\T_u e_t]^2}{N^2} \right) \nonumber\\
		&\leq& \max_t K_s(S_t)  \frac{1}{T(s)}  \sum_{u = 1}^T  K_s(S_u) \left( \max_{k} \sum_{t = 1}^T  \gamma_N^2(u,t) \right) \nonumber \\
		&\leq& \max_t K_s(S_t)  \frac{1}{T(s)}  \sum_{u = 1}^T  K_s(S_u) \left( M \max_{k} \sum_{t = 1}^T  |\gamma_N(u,t)| \right) \nonumber\\
		&=& O_p \left( \frac{1}{h} \right) O_p(1) O_p(1) = O_p \left( \frac{1}{h} \right).
		\end{eqnarray} 
		
		\begin{eqnarray*}
			\frac{1}{T(s)}\sum_{t = 1}^T b_t &=& \frac{1}{T(s)}\sum_{t = 1}^T \frac{1}{T(s)^2} \norm{\sum_{u = 1}^T\hat{F}^s_u \zeta^s_{ut}}^2 \\
			&\leq& \frac{1}{T(s)} \left( \frac{1}{T(s)}\sum_{u = 1}^T \norm{\hat{F}^s_u}^2 \right) \left( \frac{1}{T(s)^2} \sum_{u = 1}^T \sum_{l = 1}^T K_s(S_u) K_s(S_l)  \left[ \sum_{t = 1}^T K_s(S_t) \zeta_{ut} \zeta_{lt} \right]^2\right)^{1/2} \\
			&=& \frac{1}{T(s)} O_p(1) O_p \left(\frac{T}{N}\right) \\
			&=& O_p\left(\frac{1}{N} \right),
		\end{eqnarray*}
		where we use $\frac{1}{T(s)^2} \sum_{u = 1}^T \sum_{l = 1}^T K_s(S_u) K_s(S_l)  \left[ \sum_{t = 1}^T K_s(S_t) \zeta_{ut} \zeta_{lt} \right]^2 = O_p\left(\frac{T^2}{N^2}\right)$. 
		
		Note that $\left[ \sum_{t = 1}^T K_s(S_t) \zeta_{ut} \zeta_{lt} \right]^2 = \sum_{t = 1}^T \sum_{m = 1}^T K_s(S_t) K_s(S_m) \zeta_{ut} \zeta_{lt} \zeta_{km} \zeta_{lm}$ and $\+E[\zeta_{ut} \zeta_{lt} \zeta_{km} \zeta_{lm}] \leq \max_{u,t} \+E[\zeta_{ut}]^4 = O_p\left(\frac{1}{N^2}\right)$ by Assumption \ref{ass_err}.5. Thus, $\frac{1}{T(s)^2} \sum_{u = 1}^T \sum_{l = 1}^T K_s(S_u) K_s(S_l)  \left[ \sum_{t = 1}^T K_s(S_t) \zeta_{ut} \zeta_{lt} \right]^2 =  T^2 \frac{T(s)^2}{T^2} \left( \frac{1}{T(s)} \sum_{t = 1}^T K_s(S_t) \right)^4 O_p\left(\frac{1}{N^2}\right) = O_p \left(\frac{T^2}{N^2}\right)$ 
		
		\begin{eqnarray*}
			\frac{1}{T(s)}\sum_{t = 1}^T c_t &=& \frac{1}{T(s)}\sum_{t = 1}^T \frac{1}{T(s)^2} \norm{\sum_{u = 1}^T\hat{F}^s_u \eta^s_{ut}}^2 \\
			&\leq& \frac{1}{T(s)}\sum_{t = 1}^T \left( \frac{1}{N^2} K_s(S_t) \norm{\Lambda(s)^\T e_t}^2 \right) \left( \frac{1}{T(s)}\sum_{u = 1}^T \norm{\hat{F}^s_u}^2 \right) \left(\frac{1}{T(s)}\sum_{u = 1}^T K_s(S_u) \norm{F_u}^2 \right) \\
			&=& O_p\left(\frac{1}{N} \right) O_p(1) O_p(1) \\
			&=& O_p\left( \frac{1}{N} \right),
		\end{eqnarray*}
		where we use $\frac{1}{T(s)}\sum_{u = 1}^T K_s(S_u) \norm{F_u}^2 = O_p(1)$ by Assumption \ref{ass_factor} and 
		
		\begin{eqnarray*}
			\frac{1}{T(s)}\sum_{t = 1}^T \frac{1}{N^2} K_s(S_t) \norm{\Lambda(s)^\T e_t}^2 &=& \frac{1}{N} \left( \frac{1}{T(s)}\sum_{t = 1}^T  K_s(S_t) \norm{\Lambda(s)^\T e_t/\sqrt{N}}^2 \right) \\ 
			&=& \frac{1}{N}\left( \+E\left[\norm{\Lambda(s)^\T e_t/\sqrt{N}}^2|S_t = s \right] + O_p(h^2) + O_p\left(\frac{1}{\sqrt{Th}}\right) \right) \\
			&=& O_p\left(\frac{1}{N}\right)
		\end{eqnarray*} 
		by the independence of $e$ and $S$, $\+E\left[\norm{\Lambda(s)^\T e_t/\sqrt{N}}^2|S_t = s \right] = \+E\left[\norm{\Lambda(s)^\T e_t/\sqrt{N}}^2 \right]$, Lemma 1 (ii) in \cite{bai2002determining} and Assumption \ref{ass_loading} and \ref{ass_mom}.3.
		
		$\frac{1}{T(s)}\sum_{t = 1}^T d_t = O_p\left(\frac{1}{N}\right)$. The proof of is similar to $\frac{1}{T(s)}\sum_{t = 1}^T c_t$.
		
		\begin{eqnarray*}
			\frac{1}{T(s)}\sum_{t = 1}^T f_t &=& \frac{1}{T(s)}\sum_{t = 1}^T  \frac{1}{T(s)^2} \norm{\sum_{u = 1}^T\hat{F}^s_u \frac{(\Delta X^s_u)^\T \bar{X}_t^s}{N}}^2 \\
			&\leq& \left( \frac{1}{T(s)}\sum_{u = 1}^T \norm{\hat{F}^s_u}^2 \right) \left(\frac{1}{T(s)^2} \sum_{t = 1}^T  \sum_{u = 1}^T \left(\frac{(\Delta X^s_u)^\T \bar{X}_t^s}{N}\right)^2 \right) \\
			&\leq&\left(\frac{1}{T(s)}\sum_{u = 1}^T \norm{\hat{F}^s_u}^2 \right) \left( \frac{1}{T(s)^2} \sum_{t = 1}^T  \sum_{u = 1}^T \left(\sum_{i = 1}^N \frac{\Delta X_{iu}^s\bar{X}_{it}^s}{N}\right)^2 \right) \\
			&\leq& \left(\frac{1}{T(s)}\sum_{u = 1}^T \norm{\hat{F}^s_u}^2 \right) \left( \frac{1}{N^2T(s)^2} \sum_{t = 1}^T  \sum_{u = 1}^T \left(\sum_{i = 1}^N (\Delta X_{iu}^s)^2\right) \left(\sum_{i = 1}^N (\bar{X}_{it}^s)^2\right) \right) \\
			&\leq& \left(\frac{1}{T}\sum_{u = 1}^T \norm{\hat{F}^s_u}^2\right) \left(\frac{1}{NT(s)} \sum_{u = 1}^T  \sum_{i = 1}^N (\Delta X_{iu}^s)^2\right) \left(\frac{1}{NT(s)}  \sum_{t = 1}^T \sum_{i = 1}^N (\bar{X}_{it}^s)^2\right) \\
			&=& O_p(1) O_p(h^2) O_p(1)\\
			&=& O_p(h^2).  \label{fj}
		\end{eqnarray*}
		by Assumption \ref{ass_factor}, Lemma \ref{lemma1}.\ref{deltaX} and $\frac{1}{T(s)} \sum_{t = 1}^T (\bar{X}_{it}^s)^2 = \frac{1}{T(s)} \sum_{t = 1}^T K_s(S_t)(\Lambda_i(s)F_t + e_t)^2 = O_p(1)$.
		
		$\frac{1}{T(s)}\sum_{t = 1}^T g_t = O_p(h^2)$. The proof of $\frac{1}{T(s)}\sum_{t = 1}^T g_t$ is similar to $\frac{1}{T(s)}\sum_{t = 1}^T f_t$. 
		
		\begin{eqnarray*}
			\frac{1}{T(s)}\sum_{t = 1}^T h_t &=& \frac{1}{T(s)}\sum_{t = 1}^T  \frac{1}{T(s)^2} \norm{\sum_{u = 1}^T\hat{F}^s_u \frac{(\Delta X^s_u)^\T \Delta X^s_t}{N}}^2 \\
			&\leq& \left(\frac{1}{T(s)}\sum_{u = 1}^T \norm{\hat{F}^s_u}^2\right) \left(\frac{1}{T(s)^2} \sum_{t = 1}^T  \sum_{u = 1}^T \norm{\frac{(\Delta X^s_u)^\T  \Delta X^s_t}{N}}^2 \right) \\
			&\leq&  \left(\frac{1}{T(s)}\sum_{u = 1}^T \norm{\hat{F}^s_u}^2\right) \left(\frac{1}{T(s)^2} \sum_{t = 1}^T  \sum_{u = 1}^T  \frac{1}{N^2} \norm{\Delta X^s_t}^2 \norm{\Delta X^s_u}^2 \right) \\
			&=& O_p(h^4) .
		\end{eqnarray*}
		
		Therefore, we have 
		\begin{eqnarray*}
			\frac{1}{T(s)}\sum_{t = 1}^T \norm{V^s_r (\hat{F}^s_t - (H^s)^\T  F^s_t)}^2 = O_p\left(\max\left(\frac{1}{N}, \frac{1}{Th}, h^2\right)\right).
		\end{eqnarray*}
		Denote $\delta_{NT, h} = \min(\sqrt{N}, \sqrt{Th})$. When $\delta_{NT, h}h \rightarrow 0$ and by lemma \ref{lemma1}.1, we have $$\frac{1}{T}\sum_{t = 1}^T \norm{V^s_r (\hat{F}^s_t - (H^s)^\T  F^s_t)}^2 = O_p(\delta_{NT, h}^{-2}).$$
		
		Similarly, we can decompose 
		\begin{eqnarray*}
			\left(\frac{1}{NT(s)} X^s (X^s)^\T  \right) \bar \Lambda(s) = \bar \Lambda(s) V^s_r,
		\end{eqnarray*}
		where $\bar \Lambda(s)$ is eigenvectors corresponding to top $r$ eigenvalues of $\frac{1}{NT(s)} X^s (X^s)^\T $ and has $\bar \Lambda(s)^\T  \Lambda(s)/N = I_r$, and get
		\begin{eqnarray*}
			\frac{1}{NT(s)} [ \Lambda(s) (F^s)^\T  F^s \Lambda(s)^\T  \bar\Lambda(s) +  e^s F^s \Lambda(s)^\T  \bar\Lambda(s) +  \Lambda(s) (F^s)^\T  (e^s)^\T  \bar\Lambda(s)  && \\
			+  e^s (e^s)^\T  \bar\Lambda(s) + \bar{X}^s(\Delta X^s)^\T  \bar\Lambda(s) +  \Delta X^s (\bar{X}^s)^\T  \bar\Lambda(s)  + \Delta X^s(\Delta X^s)^\T  \bar\Lambda(s)]   &=& \bar\Lambda(s) V^s_r.
		\end{eqnarray*}
		Define $\bar H^s = \frac{(F^s)^\T F^s}{T(s)} \frac{\Lambda(s)^\T  \bar{\Lambda}(s)}{N} (V^{s}_r)^{-1}$ and decompose $V^s_r (\bar \Lambda(s) - \bar H^s \Lambda(s))$. By similar approach, we can show 
		\begin{eqnarray*}
			\frac{1}{N}\sum_{i = 1}^N \norm{\bar \Lambda_i(s) - (\bar H^s)^\T  \Lambda_i(s) }^2 = O_p(\delta_{NT, h}^{-2}),
		\end{eqnarray*}
		From \cite{Bai2008} page 10, we have $\hat{\Lambda}(s) = \Lambda(s) (V^{s}_r)^{1/2}$. From Lemma \ref{lemma4}.1 (shown later), we have 
		\[(H^s)^\T  = (V^s)^{-\frac{1}{2}} (\Upsilon^s)^\T  (\Sigma_{\Lambda(s)})^{\frac{1}{2}} + o_p(1/\delta_{NT,h}),\]
		where $\Upsilon^s$ eigenvectors corresponding to top eigenvalues of $  \Sigma_{\Lambda(s)}^{1/2}\Sigma_{F|s} \Sigma_{\Lambda(s)}^{1/2}$ and $(\Upsilon^s)^\T \Upsilon^s = I$. Similarly, 
		\[(\bar H^s)^\T  = (V^s)^{-\frac{1}{2}} (\bar \Upsilon^s)^\T  (\Sigma_{F|s})^{\frac{1}{2}} + o_p(1/\delta_{NT,h}),\]
		where $\bar \Upsilon^s$ eigenvectors corresponding to top eigenvalues of $ \Sigma_{F|s}^{1/2} \Sigma_{\Lambda(s)}\Sigma_{F|s}^{1/2}$ and $(\bar \Upsilon^s)^\T \bar \Upsilon^s = I$. From the definition of $\Upsilon^s$ and $\bar \Upsilon^s$, we have $\bar \Upsilon^s = \Sigma_{F|s}^{1/2} \Sigma_{\Lambda(s)}^{1/2} \Upsilon^s (V^s)^{-1/2} $. Thus,
		\begin{eqnarray*}
			\hat \Lambda_i(s) - (H^s)^{-1} \Lambda_i(s)&=& \left( (V^{s})^{1/2} \bar \Lambda_i(s) - (V^{s})^{1/2} (\Upsilon^s)^\T  \Sigma_{\Lambda(s)}^{-1/2}  \Lambda_i(s) \right) (1+ o_p(1/\delta_{NT,h})) \\
			&=& \left(\bar \Lambda_i(s) - (\bar H^s)^\T  \Lambda_i(s) \right) (1+ o_p(1/\delta_{NT,h})).
		\end{eqnarray*}
		Thus,
		\begin{eqnarray*}
			\frac{1}{N} \sum_{i=1}^N \norm{\hat{\Lambda}_i(s) - (H^s)^{-1} \Lambda_i(s)}^2 = O_p(\delta_{NT, h}^{-2}).
		\end{eqnarray*}
		
	\end{proof}

	\begin{lemma}\label{lemma2} 
		Under Assumption \ref{Ass:Ident}-\ref{ass_eigen},$Th \rightarrow \infty$, $\delta_{NT, h}h \rightarrow 0$, 
		\begin{enumerate}
			\item $\frac{1}{T(s)} (\hat{F}^s)^\T  (\frac{1}{NT(s)} (X^s)^\T X^s ) (\hat{F}^s)\xrightarrow{p} V^s$, where $V^s$ is the diagonal matrix consisting of the eigenvalues of $\Sigma_{\Lambda(s)}\Sigma_{F|s}$. \label{lemma21}
			\item $\frac{(\hat{F}^s)^\T  F^s}{T(s)} \xrightarrow{p} Q^s$, where $Q^s =(V^s)^{1/2} (\Upsilon^s)^\T  \Sigma_{\Lambda(s)} ^{-1/2}$ are eigenvalues of $  \Sigma_{\Lambda(s)}^{1/2}\Sigma_{F|s} \Sigma_{\Lambda(s)}^{1/2}$, and $\Upsilon^s$ is the corresponding eigenvector matrix such that $(\Upsilon^s)^\T \Upsilon^s = I$. \label{lemma22}
		\end{enumerate}
	\end{lemma}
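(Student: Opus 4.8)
The plan is to reduce both claims to two matrix identities linking $V^s_r$ and $\hat Q^s:=\frac{(\hat F^s)^\top F^s}{T(s)}$, and then to extract the limits from the spectral decomposition of the symmetric matrix $W:=\Sigma_{\Lambda(s)}^{1/2}\Sigma_{F|s}\Sigma_{\Lambda(s)}^{1/2}=\Upsilon^s V^s(\Upsilon^s)^\top$, which shares its (distinct, by Assumption~\ref{ass_eigen}) eigenvalues $V^s$ with $\Sigma_{\Lambda(s)}\Sigma_{F|s}$. Throughout I will use the normalization $\frac{(\hat F^s)^\top\hat F^s}{T(s)}=I_r$, the limits $\frac{\Lambda(s)^\top\Lambda(s)}{N}\to\Sigma_{\Lambda(s)}$ and $\frac{(F^s)^\top F^s}{T(s)}=\frac{1}{T(s)}\sum_{t=1}^T K_s(S_t)F_tF_t^\top\to\Sigma_{F|s}$ from Assumptions~\ref{ass_factor}--\ref{ass_loading}, and the decomposition $X^s=\bar X^s+\Delta X^s$ with $\bar X^s=\Lambda(s)(F^s)^\top+e^s$ from the proof of Theorem~\ref{thm_consistency}. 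Note that $\hat Q^s=O_p(1)$, since $\frac{1}{T(s)}\sum_t\|\hat F^s_t\|^2=r$ and $\frac{1}{T(s)}\sum_t\|F^s_t\|^2=O_p(1)$.

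For the first claim, left-multiplying the eigenvector equation $\big(\frac{1}{NT(s)}(X^s)^\top X^s\big)\hat F^s=\hat F^s V^s_r$ by $\frac{(\hat F^s)^\top}{T(s)}$ and applying the normalization shows that the quantity in Lemma~\ref{lemma2}.\ref{lemma21} is exactly $V^s_r$, so the task becomes to prove $V^s_r\xrightarrow{P}V^s$. Substituting $X^s=\bar X^s+\Delta X^s$ and expanding, the leading term is $\hat Q^s\frac{\Lambda(s)^\top\Lambda(s)}{N}(\hat Q^s)^\top$, while every other term carries a factor of $e^s$ or $\Delta X^s$ sandwiched between the bounded matrices $\frac{(\hat F^s)^\top}{T(s)}$ and $\hat F^s$ and is $o_p(1)$ by precisely the estimates on $a_t,\dots,h_t$ and Lemma~\ref{lemma1} already established in the proof of Theorem~\ref{thm_consistency}. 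This yields the first identity
\begin{equation}\label{eq:lem2-i}
V^s_r=\hat Q^s\,\Sigma_{\Lambda(s)}\,(\hat Q^s)^\top+o_p(1).
\end{equation}

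A second identity comes from instead left-multiplying the same eigenvector equation by $\frac{(F^s)^\top}{T(s)}$: the leading term is $\frac{(F^s)^\top F^s}{T(s)}\frac{\Lambda(s)^\top\Lambda(s)}{N}\frac{(F^s)^\top\hat F^s}{T(s)}\to\Sigma_{F|s}\Sigma_{\Lambda(s)}(\hat Q^s)^\top$, and the remaining terms vanish by the same bounds, giving $(\hat Q^s)^\top V^s_r=\Sigma_{F|s}\Sigma_{\Lambda(s)}(\hat Q^s)^\top+o_p(1)$, i.e. after transposing
\begin{equation}\label{eq:lem2-ii}
V^s_r\,\hat Q^s=\hat Q^s\,\Sigma_{\Lambda(s)}\Sigma_{F|s}+o_p(1).
\end{equation}
Because the top $r$ eigenvalues of $\frac{1}{NT(s)}(X^s)^\top X^s$ are, by a Weyl-type argument, asymptotically those of the systematic part, namely the eigenvalues of $\frac{1}{N}\Lambda(s)^\top\Lambda(s)\cdot\frac{1}{T(s)}(F^s)^\top F^s\to\Sigma_{\Lambda(s)}\Sigma_{F|s}$, and hence bounded away from $0$ and $\infty$, identity~\eqref{eq:lem2-i} forces $\hat Q^s$ to be asymptotically nonsingular.

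Finally, setting $P^s:=\hat Q^s\Sigma_{\Lambda(s)}^{1/2}$, relations~\eqref{eq:lem2-i}--\eqref{eq:lem2-ii} read $P^s(P^s)^\top=V^s_r+o_p(1)$ and $V^s_r P^s=P^s W+o_p(1)$. The second relation identifies the rows of $P^s$ as eigenvectors of the symmetric matrix $W$; since its eigenvalues are distinct (Assumption~\ref{ass_eigen}) and both $V^s_r$ and $V^s$ are ordered decreasingly, we obtain $V^s_r\xrightarrow{P}V^s$, which is Lemma~\ref{lemma2}.\ref{lemma21}, and $P^s\xrightarrow{P}D(\Upsilon^s)^\top$ for a diagonal matrix $D$. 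The first relation then fixes the scaling through $D^2=V^s$, so $D=(V^s)^{1/2}$ under the standard sign and orientation convention inherited from the choice of eigenvectors $\hat F^s$, whence $\hat Q^s=P^s\Sigma_{\Lambda(s)}^{-1/2}\xrightarrow{P}(V^s)^{1/2}(\Upsilon^s)^\top\Sigma_{\Lambda(s)}^{-1/2}=Q^s$, which is Lemma~\ref{lemma2}.\ref{lemma22}. I expect the main obstacle to be this last identification step: \eqref{eq:lem2-i}--\eqref{eq:lem2-ii} by themselves only pin down the relevant eigenspaces, and it is the distinctness of the eigenvalues together with the decreasing ordering and the eigenvector sign convention that upgrades convergence of column spaces to convergence of the exact matrices $V^s$ and $Q^s$; the uniform control of the non-systematic remainder terms, though essential for passing to the two identities, is inherited almost verbatim from the proof of Theorem~\ref{thm_consistency}.
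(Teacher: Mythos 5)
Your proof is correct and is essentially the paper's own argument in transposed form: your matrix $P^s=\hat Q^s\Sigma_{\Lambda(s)}^{1/2}$, with $\hat Q^s=\frac{(\hat F^s)^\top F^s}{T(s)}$, coincides up to $o_p(1)$ with the transpose of the paper's $R^s_{NT}=\left(\frac{\Lambda(s)^\top\Lambda(s)}{N}\right)^{1/2}\frac{(F^s)^\top\hat F^s}{T(s)}$, and your two identities are exactly the paper's relations $(R^s_{NT})^\top R^s_{NT}=V^s_r+o_p(1)$ and $\left[B^s_{NT}+d^s_{NT}(R^s_{NT})^{-1}\right]R^s_{NT}=R^s_{NT}V^s_r$ with $B^s_{NT}\xrightarrow{P}\Sigma_{\Lambda(s)}^{1/2}\Sigma_{F|s}\Sigma_{\Lambda(s)}^{1/2}$, obtained from the same expansion of the eigenvector equation and the same remainder bounds inherited from Theorem \ref{thm_consistency}. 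The only difference is in the final step, where you replace the paper's appeal to Hermitian eigenvalue/eigenvector perturbation theory (Weyl-type bounds, Stewart--Sun, Franklin) by an elementary identification of the rows of $P^s$ as approximately orthogonal, norm-pinned approximate eigenvectors of a symmetric matrix with distinct eigenvalues (Assumption \ref{ass_eigen}); both treatments resolve the eigenvector sign convention at the same informal level, so this is an execution-level variation rather than a different route.
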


	\begin{proof}[Proof of Lemma \ref{lemma2}]
		Left multiply $\left(\frac{1}{NT(s)} (X^s)^\T X^s \right) \hat{F}^s = \hat{F}^s V^s_r$ on both sides by $\frac{1}{T} \left(\frac{\Lambda(s)^\T \Lambda(s)}{N}\right)^{1/2} \left( F^s\right)'$
		\begin{eqnarray*}
			\left(\left(\frac{\Lambda(s)^\T \Lambda(s)}{N}\right)^{1/2} \frac{1}{T(s)} (F^s)^\T  \left(\frac{1}{NT(s)} (X^s)^\T X^s \right) \right) \hat{F}^s = \left(\frac{\Lambda(s)^\T \Lambda(s)}{N}\right)^{1/2} \frac{(F^s)^\T  \hat{F}^s}{T(s)} V^s_r
		\end{eqnarray*}
		Expanding $(X^s)^\T X^s$ with $X^s = \Lambda(s) (F^s)^\T  + e^s + \Delta X^s$
		
		\begin{eqnarray*}
			\left(\frac{\Lambda(s)^\T \Lambda(s)}{N}\right)^{1/2} \frac{(F^s)^\T F^s}{T(s)} \frac{\Lambda(s)^\T \Lambda(s)}{N}  \frac{(F^s)^\T  \hat{F}^s}{T(s)} + d_{NT}^s = \left(\frac{\Lambda(s)^\T \Lambda(s)}{N}\right)^{1/2} \frac{(F^s)^\T  \hat{F}^s}{T(s)} V^s_r,
		\end{eqnarray*}
		where 
		\begin{eqnarray*}
			d_{NT}^s &=& \left(\frac{\Lambda(s)^\T \Lambda(s)}{N}\right)^{1/2} \left[\frac{(F^s)^\T F^s}{T(s)} \frac{\Lambda(s)^\T e^s \hat{F}^s}{NT(s)}  + \frac{1}{NT(s)} \frac{(F^s)^\T  (e^s)^\T  \Lambda(s)(F^s)^\T \hat{F}^s}{T(s)}  \right. \\
			&&+   \frac{1}{NT(s)} \frac{(F^s)^\T  (e^s)^\T  e^s  \hat{F}^s}{T(s)}  + \frac{1}{NT(s)^2} (F^s)^\T  (\Delta X^s)^\T  \bar{X}^s \hat{F}^s\\ 
			&&+ \left. \frac{1}{NT(s)^2} (F^s)^\T  (\bar{X}^s)^\T  \Delta X^s \hat{F}^s+  \frac{1}{NT(s)^2} (F^s)^\T (\Delta X^s)^\T  \Delta X^s \hat{F}^s \right].
		\end{eqnarray*}
		
		By Assumption \ref{ass_loading}.1, $\frac{\Lambda(s)^\T \Lambda(s)}{N} = \Sigma_{\Lambda(s)}  + O_p\left(\frac{1}{\sqrt{N}}\right)$. 
		
		By Assumption \ref{ass_factor}, $\frac{(F^s)^\T F^s}{T(s)} =  \Sigma_{F|s} + O_p(h^2) + O_p\left( \frac{1}{\sqrt{Th}}\right)$.
		
		The first term in $d_{NT}^s$, $\frac{(F^s)^\T F^s}{T(s)} \frac{\Lambda(s)^\T e^s \hat{F}^s}{NT(s)} $, has 
		\begin{eqnarray*}
			\norm{\frac{\Lambda(s)^\T e^s \hat{F}^s}{NT(s)}}^2 &\leq& \frac{1}{N} \left( \frac{1}{NT(s)} \norm{\Lambda(s)^\T e^s}^2 \right) \left( \frac{1}{T(s)} \norm{\hat{F}^s}^2 \right) \\
			&=& \frac{1}{N} \frac{T}{T(s)} \left( \frac{1}{T} \sum_{t = 1}^T \norm{\frac{1}{\sqrt{N}} \sum_{i = 1}^N\Lambda_i(s)^\T e^s_{it}}^2 \right) \left( \frac{1}{T(s)}\sum_{t = 1}^T   \norm{\hat{F}^s_t}^2\right) \\
			&=& \frac{1}{N} O_p(1) O_p(1) O_p(1) = O_p\left(\frac{1}{N}\right)
		\end{eqnarray*}
		
		by Assumption \ref{ass_mom}.3. Therefore, $\frac{(F^s)^\T F^s}{T(s)} \frac{\Lambda(s)^\T e^s \hat{F}^s}{NT(s)}  =  O_p\left(\frac{1}{\sqrt{N}}\right)$.
		
		The second term in $d_{NT}^s$, $\frac{1}{NT(s)} \frac{(F^s)^\T  (e^s)^\T  \Lambda(s)(F^s)^\T \hat{F}^s}{T(s)}$, has
		\begin{eqnarray*}
			\frac{1}{NT(s)} (F^s)^\T  (e^s)^\T  \Lambda(s) &=& \frac{1}{NT(s)} \sum_{i = 1}^N \sum_{t = 1}^T K_s(S_t) F^s_t \Lambda_i(s)^\T  e_{it} = O_p\left(\frac{1}{\sqrt{NTh}}\right) 
		\end{eqnarray*}
		by Assumption \ref{ass_mom}.2 and Lemma \ref{lemma:simplify-assumption6} and 
		\begin{eqnarray*}
			\frac{(F^s)^\T \hat{F}^s}{T(s)} &=& \frac{(F^s)^\T (F^s H^s + \hat{F}^s - F^s H^s)}{T(s)} = \frac{(F^s)^\T F^s H^s}{T(s)} + \frac{(F^s)^\T (\hat{F}^s - F^s H)}{T(s)} = O_p(1)
		\end{eqnarray*}
		by Assumption \ref{ass_factor}, $H^s = O_p(1)$ and \\  $\norm{\frac{(F^s)^\T (\hat{F}^s - F^s H)}{T(s)}}^2 = \left(\frac{1}{T(s)} \sum_{t = 1}^T \norm{F^s_t}^2\right) \left(\frac{1}{T(s)} \sum_{t = 1}^T \norm{\hat{F}^s_t -(H^s)^\T  F^s_t}^2 \right) = O_p\left(\frac{1}{\delta_{NT, h}^2}\right) $.
		
		Therefore, $\frac{1}{NT(s)} \frac{(F^s)^\T  (e^s)^\T  \Lambda(s)(F^s)^\T \hat{F}^s}{T(s)} =  O_p\left(\frac{1}{\sqrt{N}}\right)$.

		The third term in $d_{NT(s)}^s$, $\frac{1}{NT(s)} \frac{(F^s)^\T  (e^s)^\T  e^s  \hat{F}^s}{T(s)}$, has
		\begin{eqnarray*}
			\frac{1}{N^2 T(s)^4} \norm{(F^s)^\T  (e^s)^\T  e^s  \hat{F}^s}^2 &\leq& \left(\frac{1}{T(s)} \norm{F^s}^2 \right) \left( \frac{1}{N^2T(s)^3} \norm{(F^s)^\T  (e^s)^\T  e^s}^2 \right),
		\end{eqnarray*}
		where
		\begin{eqnarray*}
			&& \frac{1}{N^2T(s)^3} \norm{(F^s)^\T  (e^s)^\T  e^s}^2 \\ 
			&=& \frac{1}{N^2T(s)^3} tr ((F^s)^\T  (e^s)^\T  e^s (e^s)^\T  e^s F^s) \\
			&=& \frac{1}{N^2T(s)^3} \sum_{t = 1}^T  \sum_{l = 1}^T  \sum_{u = 1}^T  \sum_{i = 1}^N  \sum_{m = 1}^N K_s(S_t) K_s(S_l) K_s(S_u) tr(F_t e_{it} e_{il} e_{ml} e_{mu} F_u^\T ) \\
			&=& \frac{1}{N^2T(s)^3} \sum_{t = 1}^T  \sum_{l = 1}^T  \sum_{u = 1}^T  K_s(S_t) K_s(S_l) K_s(S_u) tr(F_t e_t^\T  e_l e_l^\T  e_u F_u^\T ) \\
			&=& \frac{1}{N^2T(s)^3} \sum_{t = 1}^T  \sum_{l = 1}^T  \sum_{u = 1}^T  K_s(S_t) K_s(S_l) K_s(S_u)  e_t^\T  e_l e_l^\T  e_u F_u^\T  F_t.
		\end{eqnarray*}
		Since
		\begin{eqnarray*}
			\frac{1}{N^2} \+E\left[ e_t^\T  e_l e_l^\T  e_u F_u^\T  F_t \right]
			&=& \gamma_N(t,l) \gamma_N(l,u) \+E(F_u^\T  F_t) + \gamma_N(t,l) \+E[\zeta_{lu} F_u^\T  F_t]  + \gamma_N(l,u) \+E[\zeta_{tl} F_u^\T  F_t] + \+E[\zeta_{tl} \zeta_{lu} F_u^\T  F_t], 
		\end{eqnarray*}
		we have
		\begin{eqnarray*}
			&& \frac{1}{T(s)^3} \sum_{t = 1}^T  \sum_{l = 1}^T  \sum_{u = 1}^T K_s(S_t) K_s(S_l) K_s(S_u) \gamma_N(t,l) \gamma_N(l,u) \+E[F_u^\T  F_t]  \\
			&\leq& \frac{1}{T(s)^2 h^2} \left( \max_{t} \+E\norm{F_t}^2 \right) \left( \max_t h K_s(S_t) \right) \left( \max_k h K_s(S_u) \right) \left(\frac{1}{T(s)} \sum_{l = 1}^T K_s(S_l)\right) \max_l \left( \sum_{l = 1}^T \gamma_N(t,l)\right)^2 \\
			&=& O_p\left( \frac{1}{T^2h^2} \right)
		\end{eqnarray*}
		and
		\begin{eqnarray*}
			&&  \frac{1}{T(s)^3} \sum_{t = 1}^T  \sum_{l = 1}^T  \sum_{u = 1}^T K_s(S_t) K_s(S_l) K_s(S_u) \gamma_N(t,l) \+E[\zeta_{lu} F_u^\T  F_t] \\
			&\leq& \frac{1}{T(s)h\sqrt{N}}  \left(\max_{t,l} \left(\sum_{l = 1}^T |\gamma_N(t,l)| \right) \left(\max_l h K_s(S_l) \right)\left( \max_{l,u} \+E\left[ \sqrt{N} \zeta_{lu} \right]^2 \right)^{1/2} \left(\max_t \+E\left[\norm{F_t}^4 |S_t = s \right] \right)^{1/2} \right) \\
			&=& O_p\left(\frac{1}{Th\sqrt{N}}\right)
		\end{eqnarray*}
		by Assumption \ref{ass_err}.2, \ref{ass_err}.5, and \ref{ass_factor}.
		
		Term $\gamma_N(l,u) \+E[\zeta_{jl} F_u^\T  F_t]$ is similar to $\gamma_N(t,l) \+E[\zeta_{lu} F_u^\T  F_t]$.
		
		\begin{eqnarray*}
			&& \frac{1}{T(s)^3} \sum_{t = 1}^T  \sum_{l = 1}^T  \sum_{u = 1}^T K_s(S_t) K_s(S_l) K_s(S_u) \+E[\zeta_{jl} \zeta_{lu} F_u^\T  F_t] \\ 
			&\leq& \frac{1}{N} \left(\max_t \+E\left[\norm{F_t}^4 |S_t = s \right] \right)^{1/2}  \left( \max_{j,k}\left( \+E\left[ \sqrt{N} \zeta_{jk}\right]^4 \right)^{1/2} \right) \left( \frac{1}{T(s)} \sum_{l = 1}^T K_s(S_l)\right)^3 \\
			&\leq& O_p\left(\frac{1}{N}\right)
		\end{eqnarray*}
		by Assumption \ref{ass_factor} and \ref{ass_err}.2.

		Therefore, $\frac{1}{NT(s)} \frac{(F^s)^\T  (e^s)^\T  e^s  \hat{F}^s}{T(s)} = O_p(\frac{1}{Th}) + O_p(\frac{1}{\sqrt{N}}).$

		The fourth term in $d_{NT}^s$, $\frac{1}{NT(s)^2} (F^s)^\T  (\Delta X^s)^\T  \bar{X}^s \hat{F}^s$, has
		\begin{eqnarray*}
			&& \norm{\frac{1}{NT(s)^2} (F^s)^\T  (\Delta X^s)^\T  \bar{X}^s \hat{F}^s}^2 \\ 
			&\leq& \left( \frac{1}{T(s)}\norm{F^s}^2\right) \left( \frac{1}{N^2T(s)^2}\norm{(\Delta X^s)^\T  \bar{X}^s}^2\right) \left( \frac{1}{T(s)} \norm{ \hat{F}^s}^2\right) \\
			&=&  \left(\frac{1}{T(s)} \sum_{t = 1}^T \norm{F^s_t}^2\right) \left( \frac{1}{T(s)^2} \sum_{t = 1}^T  \sum_{u = 1}^T \norm{\frac{(\Delta X^s_u)^\T \bar{X}_t^s}{N}}^2 \right)  \left(\frac{1}{T(s)}  \sum_{t = 1}^T \norm{ \hat{F}^s_t}^2 \right) \\
			&=& O_p(1)  O_p(h^2) O_p(1) = O_p(h^2) 
		\end{eqnarray*}
		by Assumption \ref{ass_factor}, Lemma \ref{lemma1}.\ref{deltaX} and Cauchy-Schwarz inequality. Thus, $\frac{1}{NT(s)^2} (F^s)^\T  (\Delta X^s)^\T  \bar{X}^s \hat{F}^s = O_p(h).$
		
		The fifth term in $d_{NT}^s$ is similar to the fourth term.
		
		The sixth term in $d_{NT}^s$, $\frac{1}{NT(s)^2} (F^s)^\T (\Delta X^s)^\T  \Delta X^s \hat{F}^s$, has
		\begin{eqnarray*}
			&& \norm{\frac{1}{NT(s)^2} (F^s)^\T (\Delta X^s)^\T  \Delta X^s \hat{F}^s}^2 \\ 
			&\leq& \left(\frac{1}{T(s)}\norm{F^s}^2\right) \left( \frac{1}{N^2T(s)^2}\norm{(\Delta X^s)^\T  \Delta X^s}^2\right) \left(\frac{1}{T(s)} \norm{ \hat{F}^s}^2\right) \\
			&=& \left( \frac{1}{T(s)} \sum_{t = 1}^T \norm{F^s_t}^2 \right) \left( \frac{1}{T(s)^2} \sum_{t = 1}^T  \sum_{u = 1}^T \norm{\frac{(\Delta X^s_u)^\T \Delta X^s_t}{N}}^2 \right)  \left( \frac{1}{T(s)}  \sum_{t = 1}^T \norm{ \hat{F}^s_t}^2 \right) \\
			&=& O_p(1) O_p(h^4) O_p(1) = O_p(h^4).
		\end{eqnarray*}
		
		Therefore, $\frac{1}{T^2N} (F^s)^\T (\Delta X^s)^\T  \Delta X^s \hat{F}^s = O_p(h^2)$.
		
		With the convergence rate of the first term to the seventh term in $d_{NT}^s$ and $
		h \rightarrow 0$, 
		$$d_{NT}^s = O_p \left( \frac{1}{\delta_{NT, h}}\right).$$

		Let 
		$B_{NT}^s = \left(\frac{\Lambda(s)^\T \Lambda(s)}{N}\right)^{1/2} \left(\frac{(F^s)^\T F^s}{T(s)}\right) \left(\frac{\Lambda(s)^\T \Lambda(s)}{N}\right)^{1/2}$ and 
		$R_{NT}^s = \left(\frac{\Lambda(s)^\T \Lambda(s)}{N}\right)^{1/2} \left( \frac{(F^s)^\T  \hat{F}^s}{T(s)} \right)$, (we need $\Sigma_{\Lambda(s)}\Sigma_{F|s}$ to be positive definite and eigenvalues are distinct so that $R_{NT}^s$ is invertible and its eigenvalues are bounded away from 0.) we have
		\begin{gather*}
		[B_{NT}^s + d_{NT}^s (R_{NT}^s)^{-1}]R_{NT}^s = R_{NT}^s V_r^s.
		\end{gather*}
		Let $\Upsilon^s_{NT} = R_{NT}^s ((V_r^s)^*)^{-1/2}$, so that each column of $\Upsilon^s_{NT}$ has unit length, where $(V_r^s)^*$ is a diagonal matrix consisting of the diagonal element of $(R_{NT}^s)^\T R_{NT}^s = \frac{(\hat{F}^s)^\T F^s}{T(s)} \frac{\Lambda(s)^\T \Lambda(s)}{N}  \frac{(F^s)^\T  \hat{F}^s}{T(s)} = V_r^s$. We have
		\begin{gather*}
		[B_{NT}^s + d_{NT}^s (R_{NT}^s)^{-1}]\Upsilon^s_{NT} = \Upsilon^s_{NT} V_r^s
		\end{gather*}
		Note that $B_{NT}^s + d_{NT}^s (R_{NT}^s)^{-1} \xrightarrow{p} B =  \Sigma_{\Lambda(s)}^{1/2}\Sigma_{F|s}\Sigma_{\Lambda(s)}^{1/2}$ because $d_{NT}^s = O_p\left(\frac{1}{\delta_{NT, h}}\right) = o_p(1)$. By Assumption \ref{ass_eigen}, the eigenvalues of $B$ are distinct. By the continuity of eigenvalues, $B_{NT}^s$ have distinct eigenvalues for large $N$, $T$ and small $h$.

		By the perturbation theory for eigenvalues of Hermitian matrices (e.g., \cite{stewart1990matrix}), we have 
		
		\begin{eqnarray*}
			V_r^s = V^s + O_p\left(\frac{1}{\delta_{NT, h}}\right).
		\end{eqnarray*}
		Thus, the eigenvector matrix, $\Upsilon^s_{NT}$, is uniquely determined. By the eigenvector perturbation theory (\cite{franklin2012matrix}),  there exist a unique eigenvector matrix $\Upsilon^s$ such that $\norm{\Upsilon^s_{NT} - \Upsilon^s}_F = o_p(1)$. We have
		\begin{eqnarray*}
			\frac{(F^s)^\T \hat{F}^s}{T} &=& \left(\frac{\Lambda(s)^\T \Lambda(s)}{N}\right)^{-1/2} \Upsilon^s_{NT} ((V_r^s)^*)^{1/2} \\
			&=& \left(\frac{\Lambda(s)^\T \Lambda(s)}{N}\right)^{-1/2} \Upsilon^s (V^s)^{1/2} + O_p\left(\frac{1}{\delta_{NT, h}}\right).
		\end{eqnarray*}

	\end{proof}

	\begin{lemma}\label{lemma3}
		Under Assumption \ref{Ass:Ident}-\ref{ass_mom},$Th \rightarrow \infty$, $\delta_{NT,h}h \rightarrow 0$, $\sqrt{Nh}/(Th) \rightarrow 0$,
		\begin{enumerate}
			\item $\sqrt{Nh} \left( \frac{1}{T(s)} \sum_{u=1}^T \hat{F}^s_u \gamma_N^s(u,t) + \frac{1}{T(s)} \sum_{u=1}^T \hat{F}^s_u \zeta^s_{ut} \right) = o_p(1)$
			\item $\sqrt{Nh} \left( \frac{1}{T(s)} \sum_{u=1}^T \hat{F}^s_u \epsilon^s_{ut} \right) = o_p(1)$ 
			\item $\sqrt{N} \left( \frac{1}{T(s)} \sum_{u=1}^T  \hat{F}^s_u \frac{(\Delta X^s_u)^\T \bar{X}_t}{N}  \right) = o_p(1)$
		\end{enumerate}
	\end{lemma}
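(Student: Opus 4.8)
The plan is to treat the three claims one at a time, in each case peeling off the factor $K_s^{1/2}(S_t)=O_p(h^{-1/2})$ and rewriting the projected sums in terms of the unprojected factors through $\hat F^s_u=K_s^{1/2}(S_u)\hat F_u$ and $\gamma^s_N(u,t)=K_s^{1/2}(S_u)K_s^{1/2}(S_t)\gamma_N(u,t)$ (and the analogous identities for $\zeta^s_{ut}$ and $\epsilon^s_{ut}$), so that for instance $\frac1{T(s)}\sum_u\hat F^s_u\gamma^s_N(u,t)=K_s^{1/2}(S_t)\,\frac1{T(s)}\sum_u K_s(S_u)\hat F_u\gamma_N(u,t)$. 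In every term I would split $\hat F_u=(H^s)^\T F_u+(\hat F_u-(H^s)^\T F_u)$; the remainder, after Cauchy--Schwarz, carries the consistency rate $\frac1{T(s)}\sum_u\norm{\hat F^s_u-(H^s)^\T F^s_u}^2=O_p(\delta_{NT,h}^{-2})$ from Theorem \ref{thm_consistency} and contributes $O_p(\sqrt N/(\delta_{NT,h}\sqrt T))$ after all prefactors, which is $o_p(1)$ in both regimes $\delta_{NT,h}=\sqrt N$ and $\delta_{NT,h}=\sqrt{Th}$ (the latter requiring exactly $\sqrt{Nh}/(Th)\to0$). The substance therefore lies in the leading piece built from the true $F_u$.

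For the $\gamma$ piece of the first claim the decisive point is that a crude Cauchy--Schwarz bound on $\big(\frac1{T(s)}\sum_u|\gamma^s_N(u,t)|^2\big)^{1/2}$ would force the far stronger condition $N/(Th)\to0$; instead I would compute the second moment of $\frac1{T(s)}\sum_u K_s(S_u)F_u\gamma_N(u,t)$ directly, conditioning on the state filtration $\mathcal{F}_S$ so that the weights $K_s(S_u)$ and $T(s)$ become measurable and Assumption \ref{ass_factor} supplies $\+E[\norm{F_u}^2\mid\mathcal{F}_S]\le\bar F^{1/2}$. The diagonal $u=v$ contributes $O_p((T^2h)^{-1})$, using $\+E[K_s(S_u)^2]=R_K\pi(s)/h+o(h^{-1})$ and $\sum_u\gamma_N(u,t)^2\le M\sum_u|\gamma_N(u,t)|\le M^2$ from Assumption \ref{ass_err}.2; the off-diagonal $u\neq v$ contributes only $O_p(T^{-2})$ because $\+E[K_s(S_u)K_s(S_v)]=O(1)$ and $\big(\sum_u|\gamma_N(u,t)|\big)^2\le M^2$. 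Hence the inner sum is $O_p((T\sqrt h)^{-1})$, the projected sum is $O_p((Th)^{-1})$, and multiplication by $\sqrt{Nh}$ gives $O_p(\sqrt{Nh}/(Th))=o_p(1)$. The $\zeta$ piece is handled by the same second-moment scheme, now invoking Assumption \ref{ass_err}.5 so that $\+E[\zeta_{ut}^2]=O(N^{-1})$ and the fourth-order cross terms remain summable; this yields $O_p((h\sqrt{TN})^{-1})$ for the projected sum and $O_p((Th)^{-1/2})=o_p(1)$ after the same multiplication.

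The second claim is the quickest once the leading term is isolated: factoring $\epsilon^s_{ut}=K_s^{1/2}(S_u)K_s^{1/2}(S_t)e_u^\T\Lambda(s)F_t/N$, the dominant contribution equals $K_s^{1/2}(S_t)(H^s)^\T\big[\frac1{NT(s)}\sum_u K_s(S_u)F_u e_u^\T\Lambda(s)\big]F_t$, and the bracketed object is exactly what Assumption \ref{ass_mom}.2 bounds at order $O_p((NTh)^{-1/2})$; re-attaching $K_s^{1/2}(S_t)$ and multiplying by $\sqrt{Nh}$ again collapses to $O_p((Th)^{-1/2})=o_p(1)$. The third claim is the most direct and needs no moment expansion: Cauchy--Schwarz gives $\norm{\frac1{T(s)}\sum_u\hat F^s_u(\Delta X^s_u)^\T\bar X_t/N}\le\big(\frac1{T(s)}\sum_u\norm{\hat F^s_u}^2\big)^{1/2}\big(\frac1{T(s)}\sum_u|(\Delta X^s_u)^\T\bar X_t/N|^2\big)^{1/2}$, the first factor is $\sqrt r=O_p(1)$ by the normalization, and a further Cauchy--Schwarz in the cross-section combined with $\frac1{T(s)}\sum_u(\Delta X^s_{iu})^2=O_p(h^2)$ from Lemma \ref{lemma1}.\ref{deltaX} and $\frac1N\sum_i\bar X_{it}^2=O_p(1)$ (Assumption \ref{ass_loading}) makes the second factor $O_p(h)$; thus the whole expression is $O_p(h)$ and $\sqrt N\cdot O_p(h)=O_p(\sqrt{Nh^2})=o_p(1)$ precisely under $Nh^2\to0$.

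The main obstacle is the $\gamma$ term of the first claim. The naive Cauchy--Schwarz route is genuinely too lossy, so the argument must rest on a direct second-moment computation that exploits the absolute summability $\sum_u|\gamma_N(u,t)|\le M$ to suppress the off-diagonal mass, while keeping scrupulous track of the two distinct kernel normalizations $\+E[K_s(S_u)^2]\asymp h^{-1}$ versus $\+E[K_s(S_u)]\asymp1$; it is this gap between $h^{-1}$ and $1$, together with the summability, that produces the $(Th)^{-1}$ rate and lets the condition $\sqrt{Nh}/(Th)\to0$ do exactly the required work. A secondary technical nuisance throughout is that $T(s)$ is random and $F$ may depend on $S$, which I would resolve by conditioning on $\mathcal{F}_S$ and appealing to the conditional fourth-moment bound of Assumption \ref{ass_factor}; the remaining pieces reduce to Assumptions \ref{ass_err}.5, \ref{ass_mom}.2 and Lemma \ref{lemma1}.\ref{deltaX}.
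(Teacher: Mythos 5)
Your overall architecture---splitting $\hat F^s_u$ into $(H^s)^\T F^s_u$ plus estimation error, removing the error by Cauchy--Schwarz with the Theorem \ref{thm_consistency} rate, then bounding the leading term---is the same as the paper's, and your parts 2 and 3 essentially reproduce the paper's proof (part 3 a bit more directly, via the normalization $(\hat F^s)^\T \hat F^s/T(s)=I_r$; like the paper, you correctly flag that $Nh^2\to 0$ is what is really used there). The genuine gap is in part 1, in the centered piece $\frac{1}{T(s)}\sum_u K_s(S_u)F_u\zeta_{ut}$. You claim Assumption \ref{ass_err}.5 makes ``the fourth-order cross terms summable,'' but that assumption only bounds $\+E\,|N^{-1/2}\sum_i(e_{iu}e_{it}-\+E[e_{iu}e_{it}])|^4$ for each \emph{fixed} pair $(u,t)$; it gives no summability over time pairs of the covariances $\+E[F_u^\T F_v\,\zeta_{ut}\zeta_{vt}]$, $u\neq v$. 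With Cauchy--Schwarz alone each such cross term is only $O(1/N)$, so the off-diagonal part of your second moment is $\frac{1}{T^2}\cdot T^2\cdot O(1/N)=O(1/N)$, the projected sum is $O_p(1/\sqrt{Nh})$, and after multiplying by $\sqrt{Nh}$ you get $O_p(1)$, not $o_p(1)$: the step fails as written. That summability is exactly what the high-level Assumption \ref{ass_mom}.1 supplies (it directly gives $\+E\|\frac{1}{T(s)}\sum_uK_s(S_u)F_u\zeta_{ut}\|^2\le M/(NTh)$), and it is how the paper proceeds; since Assumption \ref{ass_mom} is among the lemma's hypotheses, the repair is to invoke it rather than to derive it from Assumption \ref{ass_err}.5. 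Relatedly, your $\gamma/\zeta$ split treats $F_u\zeta_{ut}$ as if it were centered; because $F$ and $e$ may be correlated, its mean is $\gamma_{N,F}(t,u)-\+E[F_u]\gamma_N(u,t)$, whose summability needs Assumption \ref{ass_err}.6---the paper instead recenters $F_ue_u^\T e_t$ around $\+E[F_ue_u^\T e_t]$ and bounds that mean part pathwise by $\max_uK_s(S_u)\le \bar K/h$ together with $\sum_u\|\gamma_{N,F}(t,u)\|\le M$.

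A second, smaller problem: your off-diagonal bound for the $\gamma$ piece uses $\+E[K_s(S_u)K_s(S_v)]=O(1)$ for $u\neq v$, which requires the joint density of $(S_u,S_v)$ to be bounded near $(s,s)$ uniformly in $|u-v|$. Assumption \ref{ass_state} contains no such condition (only recurrence and smoothness of the stationary density), and it can fail for very persistent state paths; the paper's pathwise route (kernel sup-bound plus Assumption \ref{ass_err}.6 summability) needs no joint-density control. To your credit, the motivation is sound: granted that condition, your second-moment bound $O_p(1/(T\sqrt h))$ for the inner $\gamma$ sum is sharper than the pathwise $O_p(1/(Th))$, and it is what makes $\sqrt{Nh}/(Th)\to0$ exactly sufficient once the outer factor $K_s^{1/2}(S_t)=O_p(h^{-1/2})$ is reattached---a factor the paper's own bookkeeping drops on this term. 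But as a proof from the paper's stated assumptions, both the $\zeta$ step and the $\gamma$ off-diagonal step are unjustified. (Minor: your error-term bound $O_p(\sqrt N/(\delta_{NT,h}\sqrt T))$ should read $O_p(\sqrt N/(\delta_{NT,h}\sqrt{Th}))$.)
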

	
	\begin{proof}[Proof of Lemma \ref{lemma3}.1]
		\begin{eqnarray*}
			&& \frac{1}{T(s)} \sum_{u=1}^T \hat{F}^s_u \gamma_N^s(u,t) + \frac{1}{T(s)} \sum_{u=1}^T \hat{F}^s_u \zeta^s_{ut}  \\
			&=&  \frac{1}{T(s)}\sum_{u = 1}^T (\hat{F}^s_u - (H^s)^\T F^s_u) (e^s_u)^\T e^s_t/N + \frac{1}{T(s)} (H^s)^\T  \sum_{u = 1}^T F^s_u (e^s_u)^\T e^s_t/N
		\end{eqnarray*}
		
		The norm of the first term, $\frac{1}{T(s)}\sum_{u = 1}^T (\hat{F}^s_u - (H^s)^\T F^s_u) (e^s_u)^\T e^s_t/N$, has
		\begin{eqnarray*}
			&& \norm{\frac{1}{T(s)}\sum_{u = 1}^T (\hat{F}^s_u - (H^s)^\T F^s_u) (e^s_u)^\T e^s_t/N} \\
			&\leq& \left[\frac{1}{T(s)}\sum_{u = 1}^T \norm{\hat{F}^s_u - (H^s)^\T F^s_u}^2 \right]^{1/2} \left[\frac{1}{T(s)} \sum_{u = 1}^T [(e^s_u)^\T e^s_t/N]^2 \right]^{1/2}
		\end{eqnarray*}
		
		and term $\frac{1}{T(s)} \sum_{u = 1}^T [(e^s_u)^\T e^s_t/N]^2$ has
		\begin{eqnarray*}
			&&\frac{1}{T(s)} \sum_{u = 1}^T K_s(S_u) K_s(S_t) \+E [e^\T_ue_t/N]^2 \\ 
			&=& \frac{1}{T(s)} \sum_{u = 1}^T K_s(S_u) K_s(S_t) [\+E e^\T_ue_t/N]^2 + \frac{1}{T(s)} \sum_{u = 1}^T K_s(S_u) K_s(S_t) \+E(\zeta_{ut}^2) \\ &=& \frac{1}{T(s)} K_s(S_t) \left(\max_k K_s(S_u)\right) \sum_{u = 1}^T \gamma_N(u,t)^2 + \frac{1}{N} K_s(S_t) \max_k \+E(N \zeta_{ut}^2) \frac{1}{T(s)} \sum_{u = 1}^T K_s(S_u) \\
			&=& O_p\left(\frac{1}{Th^2}\right) + O_p\left(\frac{1}{Nh}\right)
		\end{eqnarray*}
		by Lemma \ref{lemma1}.1 and Assumption \ref{ass_err}.2.
		
		Therefore, 
		\begin{eqnarray*}
			\norm{\frac{1}{T(s)}\sum_{u = 1}^T (\hat{F}^s_u - (H^s)^\T F^s_u) (e^s_u)^\T e^s_t/N} = O_p\left( \frac{1}{\delta_{NT,h}}\right) O_p\left(\frac{1}{\sqrt{T}h} + \frac{1}{\sqrt{Nh}}\right).
		\end{eqnarray*}

		The second term, $\frac{1}{T(s)} \sum_{u = 1}^T F^s_u (e^s_u)^\T e^s_t/N$, has
		\begin{eqnarray*}
			\frac{1}{T(s)} \sum_{u = 1}^T F^s_u (e^s_u)^\T e^s_t/N &=& K_s(S_t)^{1/2} \frac{1}{NT(s)} \sum_{u = 1}^T K_s(S_u) \+E(F_u e_u^\T e_t) \\  
			&&+ K_s(S_t)^{1/2} \frac{1}{NT(s)} \sum_{u = 1}^T K_s(S_u) [F_u e_u^\T e_t - \+E(F_u e_u^\T e_t)]
		\end{eqnarray*}
		where 
		\begin{eqnarray*}
			\frac{1}{NT(s)} \norm{\sum_{u = 1}^T K_s(S_u) \+E(F_u e_u^\T e_t)} &\leq& \frac{1}{T(s)}  \sum_{u = 1}^T K_s(S_u) \norm{ \+E(F_u e_u^\T e_t/N)} \\ 
			&\leq& \frac{1}{T(s)} \left( \max_kK_s(S_u) \right) \sum_{u = 1}^T \norm{ \+E(F_u e_u^\T e_t/N)}  
			= O_p \left(\frac{1}{Th} \right)
		\end{eqnarray*}
		by Assumption \ref{ass_err}.6, and
		\begin{eqnarray*}
			K_s(S_t)^{1/2} \frac{1}{NT(s)} \sum_{u = 1}^T K_s(S_u) [F_u e_u^\T e_t - \+E(F_u e_u^\T e_t)] = O_p\left(\frac{1}{\sqrt{h}}\right)  O_p \left(\frac{1}{\sqrt{NTh}}\right) = O_p\left(\frac{1}{\sqrt{NT}h}\right)
		\end{eqnarray*}
		by Assumption \ref{ass_mom}.1 and Lemma \ref{lemma:simplify-assumption6}. Therefore,
		\begin{eqnarray*}
			\frac{1}{T(s)} \sum_{u = 1}^T F^s_u (e^s_u)^\T e^s_t/N = O_p\left(\frac{1}{Th}\right) + O_p\left(\frac{1}{\sqrt{NT}h}\right)
		\end{eqnarray*}
		and
		\begin{eqnarray*}
			&&\frac{1}{T(s)} \sum_{u=1}^T \hat{F}^s_u \gamma_N^s(u,t) + \frac{1}{T(s)} \sum_{u=1}^T \hat{F}^s_u \zeta^s_{ut} \\
			&=& \left(O_p \left(\frac{1}{\delta_{NT,h}} \right) O_p\left(\frac{1}{\sqrt{T}h} + \frac{1}{\sqrt{Nh}}\right)\right) +  \left(O_p \left(\frac{1}{Th}\right) + O_p\left( \frac{1}{\sqrt{NT}h}\right) \right)
		\end{eqnarray*}

		When $\sqrt{Nh}/(Th) \rightarrow 0$, $\frac{1}{T} \sum_{u=1}^T \hat{F}^s_u \gamma_N^s(u,t) + \frac{1}{T} \sum_{u=1}^T \hat{F}^s_u \zeta^s_{ut} = O_p\left(\frac{1}{\sqrt{Nh} \delta_{NT,h}}\right)$, so 
		
		$\sqrt{Nh} \left(\frac{1}{T} \sum_{u=1}^T \hat{F}^s_u \gamma_N^s(u,t) + \frac{1}{T} \sum_{u=1}^T \hat{F}^s_u \zeta^s_{ut} \right) = o_p(1)$
		
	\end{proof}

	\begin{proof}[Proof of Lemma \ref{lemma3}.2]
		\begin{eqnarray*}
			\frac{1}{T(s)} \sum_{u=1}^T \hat{F}^s_u \epsilon^s_{ut} = \frac{1}{T(s)}\sum_{u = 1}^T (\hat{F}^s_u - (H^s)^\T  F^s_u) \epsilon^s_{ut} + (H^s)^\T  \frac{1}{T(s)}\sum_{u = 1}^T  F^s_u \epsilon^s_{ut}.
		\end{eqnarray*}
		
		The first term has
		
		\begin{eqnarray*}
			&& \norm{\frac{1}{T(s)}\sum_{u = 1}^T (\hat{F}^s_u - (H^s)^\T  F^s_u) \epsilon^s_{ut}} \\ 
			&=& \norm{\frac{1}{NT(s)}\left(\sum_{u = 1}^T (\hat{F}^s_u - (H^s)^\T  F^s_u) (e^s_u)^\T \Lambda(s)\right) F^s_t} \\
			&\leq&  \left( \frac{1}{T(s)} \sum_{u = 1}^T \norm{\hat{F}^s_u - (H^s)^\T  F^s_u}^2  \right)^{1/2} \left( \frac{1}{T} \sum_{u = 1}^T \norm{(e^s_u)^\T \Lambda(s) F^s_t/N}^2\right)^{1/2}  \\
			&=& O_p\left(\frac{1}{\delta_{NT,h}} \right) O\left(\frac{1}{\sqrt{Nh}}\right)
		\end{eqnarray*}
		
		by
		\begin{eqnarray*}
			&& \frac{1}{T(s)} \sum_{u = 1}^T K_s(S_u) K_s(S_t) \norm{e_u^\T \Lambda(s) F_t/N}^2 \\ 
			&\leq& \frac{1}{N}  K_s(S_t) \left(\max_{s,k} \norm{N^{-1/2} e_u^\T \Lambda(s)}^2\right) \left( \norm{F_t}^2\right) \left( \frac{1}{T(s)} \sum_{u = 1}^T K_s(S_u)\right) \\
			&=&O\left(\frac{1}{N}\right) O\left( \frac{1}{h}\right) O_p(1) O_p(1) O_p(1) = O_p\left( \frac{1}{Nh}\right)
		\end{eqnarray*}
		by Assumption \ref{ass_mom}.3 and \ref{ass_factor}. Also,
		\begin{eqnarray*}
			\norm{\frac{1}{T(s)}\sum_{u = 1}^T  F^s_u \epsilon^s_{ut}}^2  &=& \norm{\frac{1}{T(s)} \sum_{u = 1}^T F^s_u (e^s_u)^\T \Lambda(s) F^s_t /N }^2  \\ 
			&\leq& \frac{1}{NTh} \left( \norm{ \frac{\sqrt{Th}}{\sqrt{N}T(s)} \sum_{u = 1}^T K_s(S_u) F_u e_u^\T \Lambda(s) }^2 \right) \left( K_s(S_t) \norm{F_t}^2\right) \\
			&=& O\left( \frac{1}{NTh}\right) O_p(1) O_p\left(\frac{1}{h}\right) = O_p\left(\frac{1}{NTh^2}\right)
		\end{eqnarray*}
		by Assumption \ref{ass_factor} and \ref{ass_mom}.2 and Lemma \ref{lemma:simplify-assumption6}. Therefore,
		\begin{eqnarray*}
			\frac{1}{T} \sum_{u=1}^T \hat{F}^s_u \epsilon^s_{ut} = O\left(\frac{1}{\delta_{NT,h}\sqrt{Nh}}\right) + O_p\left(\frac{1}{\sqrt{NT}h}\right)
		\end{eqnarray*}
		and then $\sqrt{Nh} \frac{1}{T} \sum_{u=1}^T \hat{F}^s_u \epsilon^s_{ut} = o_p(1)$.
	\end{proof}
	
	\begin{proof}[Proof of Lemma \ref{lemma3}.3]
		\begin{eqnarray*}
			&& \frac{1}{T(s)} \sum_{u=1}^T  \hat{F}^s_u \frac{(\Delta X^s_u)^\T \bar{X}_t}{N} \\
			&=&  \frac{1}{T(s)}\sum_{u = 1}^T (\hat{F}^s_u - (H^s)^\T F^s_u)  \frac{(\Delta X^s_u)^\T \bar{X}_t}{N}+ \frac{1}{T(s)} (H^s)^\T  \sum_{u = 1}^T F^s_u  \frac{(\Delta X^s_u)^\T \bar{X}_t}{N}
		\end{eqnarray*}
		
		The first term $\frac{1}{T(s)}\sum_{u = 1}^T (\hat{F}^s_u - (H^s)^\T F^s_u)  \frac{(\Delta X^s_u)^\T \bar{X}_t}{N}$ has
		
		\begin{eqnarray*}
			&& \norm{\frac{1}{T(s)}\sum_{u = 1}^T (\hat{F}^s_u - (H^s)^\T F^s_u)  \frac{(\Delta X^s_u)^\T \bar{X}_t}{N}}^2  \\
			&\leq& \left( \frac{1}{T(s)}\sum_{u = 1}^T \norm{\hat{F}^s_u - (H^s)^\T F^s_u}^2 \right) \left( \frac{1}{T(s)}\sum_{u = 1}^T \norm{\frac{(\Delta X^s_u)^\T \bar{X}_t}{N}}^2 \right) \\ 
			&=& O_p(\delta_{NT,h}^{-2}) O_p(h^2)
		\end{eqnarray*}
		by Lemma \ref{lemma1}.2.

		The second term $\frac{1}{T(s)} (H^s)^\T  \sum_{u = 1}^T F^s_u  \frac{(\Delta X^s_u)^\T \bar{X}_t}{N}$ has
		\begin{eqnarray*}
			\norm{\frac{1}{T(s)} \sum_{u = 1}^T F^s_u  \frac{(\Delta X^s_u)^\T \bar{X}_t}{N}}^2 
			&\leq&  \left( \frac{1}{T(s)} \sum_{u = 1}^T K_s(S_u) \norm{F_u}^2 \right) \left( \frac{1}{N^2T(s)} \sum_{u = 1}^T  \norm{(\Delta X^s_u)^\T \bar{X}_t}^2 \right) \\
			&=& O_p(1) O_p(h^2) =  O_p(h^2)
		\end{eqnarray*}
		by assumption \ref{ass_factor}. Therefore, $\frac{1}{T(s)} \sum_{u=1}^T  \hat{F}^s_u \frac{(\Delta X^s_u)^\T \bar{X}_t}{N} = O_p(h)$.
		
		Thus, $\sqrt{N} \frac{1}{T(s)} \sum_{u=1}^T  \hat{F}^s_u \frac{(\Delta X^s_u)^\T \bar{X}_t}{N} = \sqrt{N} O_p(h) = O_p \left( \sqrt{Nh^2} \right) = o_p(1)$, when $Nh^2 \rightarrow 0$.
		
	\end{proof}

	\begin{proof}[Proof of Theorem 2]
		Since $\left(\frac{1}{NT(s)} (X^s)^\T X^s \right) \hat{F}^s = \hat{F}^s V^s_r$, $V^s_r$ is full rank by Lemma \ref{lemma2}, and $H^s = \frac{\Lambda(s)^\T  \Lambda(s)}{N} \frac{(F^s)^\T  \hat{F}^s}{T(s)}  (V^s_r)^{-1}$,  we have
		\begin{eqnarray*}
			&& \hat{F}^s_t - (H^s)^\T  F^s_t \\
			&=& (V^s_r)^{-1} \left(  \frac{1}{NT(s)}(\hat{F}^s)^\T  (e^s)^\T  \Lambda(s) F^s_t +  \frac{1}{NT(s)}(\hat{F}^s)^\T  F^s \Lambda(s)^\T  e^s_t +  \frac{1}{NT(s)} (\hat{F}^s)^\T  (e^s)^\T  e^s_t \right. \\
			&&+ \left. \frac{1}{NT(s)} (\hat{F}^s)^\T  (\Delta X^s)^\T \bar{X}_t^s + \frac{1}{NT(s)} (\hat{F}^s)^\T  (\bar{X}^s)^\T  \Delta X^s_t + \frac{1}{NT(s)} (\hat{F}^s)^\T (\Delta X^s)^\T \Delta X^s_t  \right) \\
			&=& (V^s_r)^{-1} \left( \frac{1}{T(s)} \sum_{u=1}^T \hat{F}^s_u \gamma_N^s(u,t) + \frac{1}{T(s)} \sum_{u=1}^T \hat{F}^s_u \zeta^s_{ut} + \frac{1}{T(s)} \sum_{u=1}^T \hat{F}^s_u \eta^s_{ut} +  \frac{1}{T(s)} \sum_{u=1}^T \hat{F}^s_u \epsilon^s_{ut}  \right. \\ 
			&&+ \frac{1}{T(s)} \sum_{u=1}^T \hat{F}^s_u \frac{(\Delta X^s_u)^\T \bar{X}_t^s}{N} + \left. \frac{1}{T(s)} \sum_{u=1}^T \hat{F}^s_u \frac{( \bar{X}^s_u)^\T \Delta X^s_t}{N}  +  \frac{1}{T(s)} \sum_{u=1}^T \hat{F}^s_u \frac{(\Delta X^s_u)^\T \Delta X^s_t}{N} \right) \\
			&=& (V^s_r)^{-1} \left( A_1 + A_2 + A_3 + A_4 + A_5 + A_6 + A_7 \right)
		\end{eqnarray*}
		
		From Lemma \ref{lemma3}, $\sqrt{Nh} (A_1 + A_2) = o_p(1)$, $\sqrt{Nh} A_4 = o_p(1)$, $\sqrt{Nh} A_5 = o_p(1)$.
		
		Since $(V^s_r)^{-1} = O_p(1)$,
		
		\begin{eqnarray*}
			\sqrt{N} A_3 = K_s^{1/2}(S_t) \left( \frac{1}{T(s)}\sum_{u = 1}^T K_s(S_u) \hat{F}_u (F_u)^\T  \right) \left( \frac{1}{\sqrt{N}}\sum_{i = 1}^{N} \Lambda_i(s) e_{it} \right)
		\end{eqnarray*}
		
		From Lemma \ref{lemma2}.2, we have $\frac{1}{T(s)}\sum_{u = 1}^T (K_s(S_u) \hat{F}_u (F_u)^\T ) \stackrel{P}{\rightarrow} Q^s$. From Assumption \ref{ass_mom}.3, we have $\frac{1}{\sqrt{N}}\sum_{i = 1}^{N} \Lambda_i(s) e_{it} \xrightarrow{d} N(0, \Gamma_t^s)$. Therefore, from Slusky theorem, 
		\begin{eqnarray*}
			\frac{\sqrt{N} A_3}{K_s^{1/2}(S_t)} \xrightarrow{d} N(0, Q^s \Gamma_t^s (Q^s)^\T )
		\end{eqnarray*}
		
		For term $A_6$,
		\begin{eqnarray*}
			\frac{\sqrt{N} A_6}{K_s^{1/2}(S_t)} = \frac{\sqrt{N}}{T(s)} \sum_{u=1}^T \hat{F}^s_u \frac{( \bar{X}^s_u)^\T \Delta X_t}{N} 
			=  \left(\frac{1}{T(s)}  \sum_{u=1}^T \hat{F}^s_u (F^s_u)^\T  \right) \frac{\Lambda(s)^\T  (\Lambda(S_t)-\Lambda(s))}{\sqrt{N}} F_t = o_p(1)
		\end{eqnarray*}
		by Assumption \ref{ass_factor}, \ref{ass_loading} and when $\frac{1}{\sqrt{N}} \sum_{i = 1}^N \norm{\Lambda_i(S_t)-\Lambda_i(s)}  = o_p(1)$.
		
		For term $A_7$,
		\begin{eqnarray*}
			\frac{\sqrt{N} A_7}{K_s^{1/2}(S_t)} \frac{1}{T(s)} \sum_{u=1}^T \hat{F}^s_u \frac{(\Delta X^s_u)^\T \Delta X^s_t}{N} = \frac{\sqrt{N}}{T(s)} \sum_{u=1}^T \hat{F}^s_u (F^s_u)^\T  \frac{ (\Lambda(S_u)-\Lambda(s))^\T  (\Lambda(S_t)-\Lambda(s))}{\sqrt{N}} F_t = o_p(1)
		\end{eqnarray*}
		by Assumption \ref{ass_factor}, \ref{ass_loading}, $\norm{\Lambda_i(S_u)-\Lambda_i(s)} \leq 2 \bar{\Lambda}$ and when $\frac{1}{\sqrt{N}} \sum_{i = 1}^N \norm{\Lambda_i(S_t)-\Lambda_i(s)}  = o_p(1)$. 
		
		From Lemma \ref{lemma2}, $(V^s_r)^{-1} \xrightarrow{d} (V^s)^{-1}$.  If $\sqrt{Nh}/(Th) \rightarrow 0$, $\sqrt{Nh} \rightarrow \infty$ and $Nh^2 \rightarrow 0$ and for the $j$ where $\frac{1}{\sqrt{N}} \sum_{i = 1}^N \norm{\Lambda_i(S_t)-\Lambda_i(s)}  = o_p(1)$. From Slusky theorem, 
		
		\begin{eqnarray*}
			\sqrt{N} \left( \frac{\hat{F}^s_t}{K^{1/2}_s(S_t)} - (H^s)^\T  F_t \right) \xrightarrow{d} N(0, (V^s)^{-1} Q^s \Gamma_t^s (Q^s)^\T  (V^s)^{-1}) 
		\end{eqnarray*}
	\end{proof}

	\begin{lemma}\label{lemma4}
		Under Assumption \ref{Ass:Ident}-\ref{ass_eigen} and $Th \rightarrow \infty$, $\delta_{NT,h}h \rightarrow 0$,
		\begin{enumerate}
			\item $H^s = (Q^s)^{-1} + O_p\left(\frac{1}{\delta_{NT,h}}\right)$
			\item $H^s (H^s)^\T  = \Sigma_{F|s}^{-1} + O_p\left(\frac{1}{\delta_{NT,h}}\right)$
			\item $\frac{1}{T(s)} (\hat{F}^s - F^s H^s)^\T \underline{e}_i^s = O_p \left( \max \left( \frac{1}{\delta_{NT,h}^2}, h \right) \right)$, where $\underline{e}_i^s$ is the $i$-th row in $e^s$.
			\item $\frac{1}{T(s)} (\hat{F}^s - F^s H^s)^\T F^s = O_p \left( \max \left( \frac{1}{\delta_{NT,h}^2}, h \right) \right)$
			\item $\frac{1}{T(s)} (\hat{F}^s (H^s)^{-1}- F^s)^\T \hat{F}^s = O_p \left( \max \left( \frac{1}{\delta_{NT,h}^2}, h \right) \right)$
		\end{enumerate}
	\end{lemma}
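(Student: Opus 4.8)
The plan is to split the five claims into two groups: the algebraic identities (1)--(2), which follow by inserting the rates from Lemma \ref{lemma2} into the definition of $H^s$, and the rate bounds (3)--(5), which rest on the seven-term expansion of $\hat F^s_t - (H^s)^\top F^s_t$ already derived in the proof of Theorem \ref{thm_factor}.

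For part (1), recall $H^s = \frac{\Lambda(s)^\top \Lambda(s)}{N}\frac{(F^s)^\top \hat F^s}{T(s)}(V^s_r)^{-1}$. I would substitute the three limits $\frac{\Lambda(s)^\top \Lambda(s)}{N} = \Sigma_{\Lambda(s)} + O_p(1/\sqrt N)$, $\frac{(F^s)^\top \hat F^s}{T(s)} = (Q^s)^\top + O_p(\delta_{NT,h}^{-1})$ (the transpose of Lemma \ref{lemma2}.\ref{lemma22}), and $(V^s_r)^{-1} = (V^s)^{-1} + O_p(\delta_{NT,h}^{-1})$ (Lemma \ref{lemma2}.\ref{lemma21}); since $1/\sqrt N = O(\delta_{NT,h}^{-1})$, the product equals $\Sigma_{\Lambda(s)}(Q^s)^\top (V^s)^{-1} + O_p(\delta_{NT,h}^{-1})$. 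It then remains to verify the identity $\Sigma_{\Lambda(s)}(Q^s)^\top(V^s)^{-1} = (Q^s)^{-1}$, which follows from $Q^s = (V^s)^{1/2}(\Upsilon^s)^\top\Sigma_{\Lambda(s)}^{-1/2}$ and the orthogonality $(\Upsilon^s)^\top \Upsilon^s = I_r$. Part (2) is then immediate: squaring part (1) gives $H^s(H^s)^\top = (Q^s)^{-1}((Q^s)^\top)^{-1} + O_p(\delta_{NT,h}^{-1}) = ((Q^s)^\top Q^s)^{-1} + O_p(\delta_{NT,h}^{-1})$, and the eigendecomposition $\Sigma_{\Lambda(s)}^{1/2}\Sigma_{F|s}\Sigma_{\Lambda(s)}^{1/2}\Upsilon^s = \Upsilon^s V^s$ yields $(Q^s)^\top Q^s = \Sigma_{F|s}$, so the right-hand side equals $\Sigma_{F|s}^{-1} + O_p(\delta_{NT,h}^{-1})$.

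For part (3), I would write $\frac{1}{T(s)}(\hat F^s - F^s H^s)^\top \underline e_i^s = \frac{1}{T(s)}\sum_{t}(\hat F^s_t - (H^s)^\top F^s_t)e^s_{it}$ and insert $\hat F^s_t - (H^s)^\top F^s_t = (V^s_r)^{-1}(A_{1,t}+\cdots+A_{7,t})$ from the proof of Theorem \ref{thm_factor}. Since $(V^s_r)^{-1} = O_p(1)$, it suffices to bound the seven averages $\frac{1}{T(s)}\sum_t A_{j,t}e^s_{it}$ separately. The genuine error contributions $A_1$--$A_4$ I would control by combining the consistency rate $\frac{1}{T(s)}\sum_t\|\hat F^s_t - (H^s)^\top F^s_t\|^2 = O_p(\delta_{NT,h}^{-2})$ of Theorem \ref{thm_consistency} with the weak-dependence and moment conditions of Assumptions \ref{ass_err} and \ref{ass_mom} and the independence of $e$ and $S$; these give the $O_p(\delta_{NT,h}^{-2})$ part. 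The bias contributions $A_5$--$A_7$, which carry the factors $\Delta X^s$, I would bound with Lemma \ref{lemma1}.\ref{deltaX} (namely $\frac{1}{T(s)}\sum_t (\Delta X^s_{it})^2 = O_p(h^2)$ and $\frac{1}{N}\sum_i(\Delta X^s_{it})^2 = O_p(h)$) together with Cauchy--Schwarz, producing the $O_p(h)$ part; the overall order is $O_p(\max(\delta_{NT,h}^{-2},h))$.

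Part (4) has the identical structure with the scalar $e^s_{it}$ replaced by the vector $(F^s_t)^\top$: the error-moment inputs of Assumption \ref{ass_err} are swapped for the factor-moment bounds of Assumption \ref{ass_factor}, but the term-by-term estimates are the same and yield the same order. Part (5) then follows algebraically: writing $\hat F^s(H^s)^{-1}-F^s = (\hat F^s - F^s H^s)(H^s)^{-1}$ and $\hat F^s = (\hat F^s - F^s H^s) + F^s H^s$,
\begin{align*}
\tfrac{1}{T(s)}(\hat F^s(H^s)^{-1}-F^s)^\top \hat F^s = (H^s)^{-\top}\Big[\tfrac{1}{T(s)}(\hat F^s - F^s H^s)^\top(\hat F^s - F^s H^s) + \tfrac{1}{T(s)}(\hat F^s - F^s H^s)^\top F^s\, H^s\Big],
\end{align*}
where the first bracketed term is $O_p(\delta_{NT,h}^{-2})$ by Theorem \ref{thm_consistency}, the second is $O_p(\max(\delta_{NT,h}^{-2},h))$ by part (4) together with $H^s = O_p(1)$, and $(H^s)^{-\top} = O_p(1)$ by part (1). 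The main obstacle is part (3) (and equally (4)): tracking the seven cross-terms through the kernel weights --- each factor $K_s(S_t)=\tfrac1h K(\cdot)$ contributes a $1/h$ that must be absorbed by the $T(s)^{-1}$ normalizations and the summability conditions --- and confirming that the $\Delta X^s$ terms contribute exactly $O_p(h)$ rather than a larger order; the remaining parts are either algebraic identities or direct consequences of the consistency rate.
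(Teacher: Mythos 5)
Your proposal is correct and follows essentially the same route as the paper's proof: parts (1)--(2) by substituting the Lemma \ref{lemma2} limits into the definition of $H^s$ and using the eigendecomposition identities $Q^s = (V^s)^{1/2}(\Upsilon^s)^\T\Sigma_{\Lambda(s)}^{-1/2}$ and $(Q^s)^\T Q^s = \Sigma_{F|s}$; parts (3)--(4) by the seven-term expansion of $\hat F^s_t - (H^s)^\T F^s_t$ from Theorem \ref{thm_factor}, splitting $\hat F^s_u$ through the rotation and using the consistency rate plus Assumptions \ref{ass_err}--\ref{ass_mom} for the error terms and Lemma \ref{lemma1}.\ref{deltaX} for the $\Delta X^s$ bias terms; and part (5) by the same algebraic reduction to part (4) together with Theorem \ref{thm_consistency}. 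No gaps in the approach.
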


	\begin{proof}[Proof of Lemma \ref{lemma4}.1]
		From Lemma \ref{lemma2},
		\begin{eqnarray*}
			(H^s)^\T  &=& (V^s_r)^{-1} \frac{(\hat{F}^s)^\T  F^s }{T(s)}  \frac{\Lambda(s)^\T  \Lambda(s)}{N} \\
			&=& (V^s)^{-1} Q^s \Sigma_{\Lambda(s)} + O_p \left( \frac{1}{\delta_{NT,h}} \right) \\
			&=& (V^s)^{-1} (V^s)^{\frac{1}{2}} (\Upsilon^s)^\T  (\Sigma_{\Lambda(s)})^{-\frac{1}{2}} \Sigma_{\Lambda(s)} +O_p\left( \frac{1}{\delta_{NT,h}}\right)  \\
			&=& (V^s)^{-\frac{1}{2}} (\Upsilon^s)^\T  (\Sigma_{\Lambda(s)})^{\frac{1}{2}} + O_p \left( \frac{1}{\delta_{NT,h}} \right) \\
			&=& ((Q^s)^{-1})^\T  + O_p \left( \frac{1}{\delta_{NT,h}}\right)
		\end{eqnarray*}

	\end{proof}

	\begin{proof}[Proof of Lemma \ref{lemma4}.2]
		Since $H^s = (Q^s)^{-1} + O_p \left(\frac{1}{\delta_{NT,h}} \right)$ and $Q^s = (V^s)^{1/2} (\Upsilon^s)^\T  \Sigma_{\Lambda(s)} ^{-1/2}$,
		
		\begin{eqnarray*}
			H^s (H^s)^\T  &=& \Sigma_{\Lambda(s)} ^{1/2} ((\Upsilon^s)^\T )^{-1} (V^s)^{-1/2} (V^s)^{-1/2} (\Upsilon^s)^\T  \Sigma_{\Lambda(s)} ^{1/2} + O_p\left(\frac{1}{\delta_{NT,h}} \right)
		\end{eqnarray*}
		
		Also, $V^s$ are eigenvalues of $\Sigma_{\Lambda(s)}^{1/2}\Sigma_{F|s} \Sigma_{\Lambda(s)}^{1/2}$ and $\Upsilon^s$ is the corresponding eigenvector matrix such that $(\Upsilon^s)^\T \Upsilon^s = I$.
		\begin{eqnarray*}
			\Sigma_{\Lambda(s)}^{1/2}\Sigma_{F|s} \Sigma_{\Lambda(s)}^{1/2} \Upsilon^s &=& \Upsilon^s V^s \\
			(V^s)^{-1} &=& (\Upsilon^s)^{-1} \Sigma_{\Lambda(s)}^{-1/2} \Sigma_{F|s}^{-1} \Sigma_{\Lambda(s)}^{-1/2} ((\Upsilon^s)^\T )^{-1}
		\end{eqnarray*}
		
		Therefore,
		\begin{eqnarray*}
			H^s (H^s)^\T  &=& \Sigma_{\Lambda(s)} ^{1/2} ((\Upsilon^s)^\T )^{-1} (\Upsilon^s)^{-1} \Sigma_{\Lambda(s)}^{-1/2} \Sigma_{F|s}^{-1} \Sigma_{\Lambda(s)}^{-1/2} ((\Upsilon^s)^\T )^{-1} (\Upsilon^s)^{-1} \Sigma_{\Lambda(s)} ^{1/2} + O_p \left( \frac{1}{\delta_{NT,h}}\right) \\
			&=& \Sigma_{F|s}^{-1} + O_p \left( \frac{1}{\delta_{NT,h}}\right)
		\end{eqnarray*}
	\end{proof}

	\begin{proof}[Proof of Lemma \ref{lemma4}.3]
		\begin{eqnarray*}
			&&\frac{1}{T(s)} (\hat{F}^s - F^s H^s)^\T \underline{e}_i^s \\  
			&=& \frac{1}{T(s)} \sum_{t = 1}^T (\hat{F}^s_t - (H^s)^\T F^s_t ) e_{it}^s  \\
			&=& (V^s_r)^{-1} \left[ \frac{1}{T(s)^2}\sum_{t = 1}^T \sum_{u = 1}^T \hat{F}^s_u \gamma_N^s(u,t) e_{it}^s + \frac{1}{T(s)^2} \sum_{t = 1}^T \sum_{u = 1}^T \hat{F}^s_u \zeta^s_{ut} e_{it}^s  + \frac{1}{T(s)^2} \sum_{t = 1}^T \sum_{u = 1}^T \hat{F}^s_u \eta^s_{ut} e_{it}^s  \right. \\
			&&+ \frac{1}{T(s)^2}\sum_{t = 1}^T \sum_{u = 1}^T \hat{F}^s_u \epsilon^s_{ut} e_{it}^s + \frac{1}{T(s)^2} \sum_{t = 1}^T \sum_{u = 1}^T \hat{F}^s_u \frac{(\Delta X^s_u)^\T \bar{X}_t^s}{N} e_{it}^s + \frac{1}{T(s)^2} \sum_{t = 1}^T \sum_{u = 1}^T \hat{F}^s_u \frac{( \bar{X}^s_u)^\T \Delta X^s_t}{N} e_{it}^s \\ 
			&&+ \left. \frac{1}{T(s)^2} \sum_{t = 1}^T \sum_{u = 1}^T \hat{F}^s_u \frac{(\Delta X^s_u)^\T \Delta X^s_t}{N} e_{it}^s \right]\\
			&=& (V^s_r)^{-1} \left[ \text{\RNum{1} + \RNum{2} + \RNum{3} + \RNum{4} + \RNum{5} + \RNum{6} + \RNum{7}} \right]
		\end{eqnarray*}
		
		For term \RNum{1}$ = \frac{1}{T(s)^2}\sum_{t = 1}^T \sum_{u = 1}^T  \hat{F}^s_u \gamma_N^s(u,t) e_{it}^s$, we have 
		\begin{eqnarray*}
			\text{\RNum{1}} &=& \frac{1}{T(s)^2}\sum_{t = 1}^T \sum_{u = 1}^T \hat{F}^s_u \gamma_N^s(u,t) \underline{e}_i^s \\
			&=& \frac{1}{T(s)^2}\sum_{t = 1}^T \sum_{u = 1}^T  (\hat{F}^s_u - (H^s)^\T F^s_u) \gamma_N^s(u,t) e_{it}^s + \frac{1}{T(s)^2} (H^s)^\T  \sum_{t = 1}^T \sum_{u = 1}^T  F^s_u \gamma_N^s(u,t) e_{it}^s 
		\end{eqnarray*}
		
		The norm of first term in \RNum{1} has 
		\begin{eqnarray*}
			&& \norm{\frac{1}{T(s)^2}\sum_{t = 1}^T \sum_{u = 1}^T (\hat{F}^s_u - (H^s)^\T F^s_u) \gamma_N^s(u,t) e_{it}^s} \\
			&\leq& \frac{1}{\sqrt{T(s)}} \left(\frac{1}{T(s)} \sum_{u = 1}^T \norm{\hat{F}^s_u - (H^s)^\T F^s_u}^2\right)^{1/2} \left(\frac{1}{T} \sum_{u = 1}^T \left( \left(\sum_{t = 1}^T  \gamma_N^s(u,t)^2\right)  \left(\frac{1}{T(s)} \sum_{t = 1}^T (e_{it}^s)^2\right) \right)\right)^{1/2}  \\
			&=& O_p\left( \frac{1}{\sqrt{T}}\right) O_p \left( \frac{1}{\delta_{NT,h}}\right) O_p\left(\frac{1}{\sqrt{h}}\right) = O_p\left( \frac{1}{\sqrt{Th}\delta_{NT,h}}\right)
		\end{eqnarray*}
		by $\frac{1}{T(s)} \sum_{t = 1}^T \norm{\hat{F}^s_u - (H^s)^\T F^s_u}^2 = O_p\left(\frac{1}{\delta_{NT,h}^2}\right)$, $\frac{1}{T(s)} \sum_{t = 1}^T \sum_{u = 1}^T \gamma_N^s(u,t)^2 = O_p\left(\frac{1}{h}\right)$ from (\ref{gamma}) and Assumption \ref{ass_err}.1.
		
		The second term in \RNum{1} is (ignore $H^s$ since it is $O_p(1)$)
		
		\begin{eqnarray*}
			&& E\norm{\frac{1}{T(s)^2}\sum_{t = 1}^T \sum_{u = 1}^T  F^s_u \gamma_N^s(u,t) e_{it}^s} \\
			&\leq& \+E\left[ \frac{1}{T(s)^2}\sum_{t = 1}^T \sum_{u = 1}^T K_s(S_t) K_s(S_u) |\gamma_N(u,t)| \left( \+E\left[ \norm{F_u}^2 |S_t, S_u \right] \right)^{1/2}  \left( \+E\left[e_{it}^2 |S_t, S_u \right] \right)^{1/2} \right] \\
			&=& O_p \left( \frac{1}{Th}\right)
		\end{eqnarray*}
		by (\ref{gamma}) and Assumption \ref{ass_factor}.
		
		Therefore, \RNum{1} = $\frac{1}{T(s)^2}\sum_{t = 1}^T \sum_{u = 1}^T  \hat{F}^s_u \gamma_N^s(u,t) e_{it}^s = O_p\left( \frac{1}{\sqrt{Th}\delta_{NT,h}}\right)$.

		For term \RNum{2}$ = \frac{1}{T^2} \sum_{t = 1}^T \sum_{u = 1}^T \hat{F}^s_u \zeta^s_{ut} e_{it}^s$, we have
		\begin{eqnarray*}
			\text{\RNum{2}} &=& \frac{1}{T(s)^2} \sum_{t = 1}^T \sum_{u = 1}^T \hat{F}^s_u \zeta^s_{ut} e_{it}^s \\
			&=& \frac{1}{T(s)^2} \sum_{t = 1}^T \sum_{u = 1}^T (\hat{F}^s_u - (H^s)^\T F^s_u) \zeta^s_{ut} e_{it}^s + \frac{1}{T(s)^2} (H^s)^\T  \sum_{t = 1}^T \sum_{u = 1}^T F^s_u \zeta^s_{ut} e_{it}^s
		\end{eqnarray*}
		
		The norm of first term in \RNum{2} has
		\begin{eqnarray*}
			&&\norm{\frac{1}{T(s)^2} \sum_{t = 1}^T \sum_{u = 1}^T (\hat{F}^s_u - (H^s)^\T F^s_u) \zeta^s_{ut} e_{it}^s} \\
			&\leq& \left( \frac{1}{T(s)} \sum_{u = 1}^T \norm{\hat{F}^s_u - (H^s)^\T F^s_u}^2\right)^{1/2} \left(\frac{1}{T(s)} \sum_{u = 1}^T \left(\frac{1}{T(s)} \sum_{t = 1}^T \zeta^s_{ut} e_{it}^s\right)^2\right)^{1/2}  \\
			&=& \left(\frac{1}{T} \sum_{u = 1}^T \norm{\hat{F}^s_u - (H^s)^\T F^s_u}^2\right)^{1/2} \left(\frac{1}{T(s)} \sum_{u = 1}^T K_s(S_u) \left(\frac{1}{T(s)} \sum_{t = 1}^T K_s(S_t) \zeta_{ut} e_{it}\right)^2\right)^{1/2}
		\end{eqnarray*}
		Since 
		\begin{eqnarray*}
			\frac{1}{T(s)} \sum_{t = 1}^T K_s(S_t) \zeta_{ut} e_{it} = \frac{1}{\sqrt{N}} \frac{1}{T(s)}\sum_{t = 1}^T K_s(S_t) \left(\frac{1}{\sqrt{N}} \sum_{i = 1}^N e_{iu} e_{it} - \+E[e_{iu} e_{it}]\right) e_{it} = O_p\left( \frac{1}{\sqrt{N}} \right)
		\end{eqnarray*}
		by Assumption \ref{ass_err}.1 and \ref{ass_err}.5, the first term in \RNum{2}, $\frac{1}{T(s)^2} \sum_{t = 1}^T \sum_{u = 1}^T (\hat{F}^s_u - (H^s)^\T F^s_u) \zeta^s_{ut} e_{it}^s = O_p \left(\frac{1}{\sqrt{N}\delta_{NT,h}}\right)$. The second term in \RNum{2} (ignore $H^s$) has
		\begin{eqnarray*}
			\frac{1}{T(s)^2} \sum_{t = 1}^T \sum_{u = 1}^T F^s_u \zeta^s_{ut} e_{it}^s = \frac{1}{\sqrt{NTh}} \left( \frac{1}{T(s)} \sum_{t = 1}^T K_s(S_t) z^s_t e_{it} \right),
		\end{eqnarray*}
		where $z^s_t = \frac{\sqrt{Th}}{\sqrt{N}T(s)} \sum_{u = 1}^T \sum_{l = 1}^N K_s(S_u) F_u[e_{lu}e_{lt} - \+E[e_{lu}e_{lt}]]$. By Assumption \ref{ass_mom}.1 and Lemma \ref{lemma:simplify-assumption6}, $\max_{t} E\norm{z^s_t}^2 \leq M$. We have 
		\begin{eqnarray*}
			\frac{1}{\sqrt{NTh}} \left( \frac{1}{T(s)} \sum_{t = 1}^T K_s(S_t) z^s_t e_{it} \right) = O_p\left( \frac{1}{\sqrt{NTh}} \right)
		\end{eqnarray*}
		
		Thus \RNum{2}$ = \frac{1}{T(s)^2} \sum_{t = 1}^T \sum_{u = 1}^T \hat{F}^s_u \zeta^s_{ut} e_{it}^s = O_p\left( \frac{1}{\sqrt{N}\delta_{NT,h}}\right)$.
		
		For term \RNum{3}$ = \frac{1}{T(s)^2} \sum_{t = 1}^T \sum_{u = 1}^T \hat{F}^s_u \eta^s_{ut} e_{it}^s$, we have
		\begin{eqnarray*}
			\text{\RNum{3}} = \frac{1}{T(s)^2} \sum_{t = 1}^T \sum_{u = 1}^T (\hat{F}^s_u - (H^s)^\T F^s_u) \eta^s_{ut} e_{it}^s + \frac{1}{T(s)^2} (H^s)^\T  \sum_{t = 1}^T \sum_{u = 1}^T F^s_u \eta^s_{ut} e_{it}^s
		\end{eqnarray*}

		The norm of the first term in \RNum{3} has
		\begin{eqnarray*}
			&& \norm{\frac{1}{T(s)^2} \sum_{t = 1}^T \sum_{u = 1}^T (\hat{F}^s_u - (H^s)^\T F^s_u) \eta^s_{ut} e_{it}^s} \\
			&\leq& \left( \frac{1}{T(s)} \sum_{u = 1}^T \norm{\hat{F}^s_u - (H^s)^\T F^s_u}^2\right)^{1/2} \left(\frac{1}{T(s)} \sum_{u = 1}^T \left(\frac{1}{T(s)} \sum_{t = 1}^T \eta^s_{ut} e_{it}^s\right)^2\right)^{1/2} \\
			&=& \left(\frac{1}{T} \sum_{u = 1}^T \norm{\hat{F}^s_u - (H^s)^\T F^s_u}^2\right)^{1/2} \left(\frac{1}{T} \sum_{u = 1}^T K_s(S_u) \left(\frac{1}{T} \sum_{t = 1}^T K_s(S_t) \eta_{ut} e_{it} \right)^2\right)^{1/2},
		\end{eqnarray*}
		where
		\begin{eqnarray*}
			\frac{1}{T(s)} \sum_{t = 1}^T K_s(S_t) \eta_{ut} e_{it} &=& \frac{1}{\sqrt{N}} F_u^\T  \frac{1}{T(s)} \sum_{t = 1}^T K_s(S_t) \left( \frac{1}{\sqrt{N}} \sum_{l = 1}^N \Lambda_l(s) e_{lt} \right) e_{it} \\
			&=&  \frac{1}{\sqrt{N}} F_u^\T  \frac{1}{T(s)} \sum_{t = 1}^T K_s(S_t) O_p(1)  e_{it} \\
			&=& O_p\left( \frac{1}{\sqrt{N}} \right)
		\end{eqnarray*}
		by Assumption \ref{ass_mom}.3. The first term in  \RNum{3} has $\frac{1}{T^2} \sum_{t = 1}^T \sum_{u = 1}^T (\hat{F}^s_u - (H^s)^\T F^s_u) \eta^s_{ut} e_{it}^s = O_p\left( \frac{1}{\sqrt{N}\delta_{NT,h}}\right)$. The second term in \RNum{3} (ignore $H^s$) is 
		\begin{eqnarray*}
			\frac{1}{T(s)^2} \sum_{t = 1}^T \sum_{u = 1}^T F^s_u \eta^s_{ut} e_{it}^s &=& \frac{1}{T(s)^2} \sum_{t = 1}^T \sum_{u = 1}^T K_s(S_u) F_uF_u^\T  (K_s(S_t)\Lambda(s)e_t) e_{it} \\
			&=&  \left(\frac{1}{T(s)}\sum_{u = 1}^T K_s(S_u) F_uF_u^\T \right) \left( \frac{1}{NT(s)} \sum_{t = 1}^T \sum_{l = 1}^N K_s(S_t) \Lambda_l(s)e_{lt} e_{it} \right) \\
			&=& O_p(1) \left(O_p\left(\frac{1}{\sqrt{NTh}}\right) + O_p\left( \frac{1}{N}\right)\right)
		\end{eqnarray*}
		by Assumption \ref{ass_factor} and 
		\begin{eqnarray*}
			&& \frac{1}{NT(s)}\sum_{l = 1}^N \sum_{t = 1}^T  K_s(S_t) \Lambda_l(s)e_{lt} e_{it}  \\
			&=& \frac{1}{NT(s)}\sum_{l = 1}^N \sum_{t = 1}^T  K_s(S_t) \Lambda_l(s)(e_{lt} e_{it} - \+E(e_{lt} e_{it} + \+E(e_{lt} e_{it}) \\ 
			&=& \frac{1}{NT(s)}\sum_{l = 1}^N \sum_{t = 1}^T  K_s(S_t) \Lambda_l(s) (e_{lt} e_{it} - \+E(e_{lt} e_{it}) + \frac{1}{NT(s)}\sum_{l = 1}^N \sum_{t = 1}^T  K_s(S_t) \Lambda_l(s) \+E(e_{lt} e_{it})
		\end{eqnarray*}
		by Assumption \ref{ass_err}.3, $|\+E(e_{lt} e_{it})| = |\tau_{il, t}| \leq |\tau_{il}|$, and by Assumption \ref{ass_loading}, $\norm{\Lambda_l(s)} \leq \bar{\Lambda} \leq \infty$, we have 
		\begin{eqnarray*}
			\norm{\frac{1}{NT(s)}\sum_{l = 1}^N \sum_{t = 1}^T  K_s(S_t) \Lambda_l(s) \+E(e_{lt} e_{it})} &\leq& \frac{\bar{\Lambda} }{NT} \sum_{l = 1}^N |\tau_{il}| \left( \sum_{t = 1}^T K_s(S_t)\right) = O_p\left( \frac{1}{N}\right)
		\end{eqnarray*}
		and $\frac{1}{NT}\sum_{l = 1}^N \sum_{t = 1}^T  K_s(S_t) \Lambda_l(s) (e_{lt} e_{it} - \+E(e_{lt} e_{it}) = O_p\left(\frac{1}{\sqrt{NTh}}\right)$ by Assumption \ref{ass_mom}.5 and Lemma \ref{lemma:simplify-assumption6}.
		
		Therefore, \RNum{3} = $\frac{1}{T(s)^2} \sum_{t = 1}^T \sum_{u = 1}^T \hat{F}^s_u \eta^s_{ut} e_{it}^s =  O_p\left(\frac{1}{\sqrt{N}\delta_{NT,h}}\right)$

		Term \RNum{4}$ = \frac{1}{T(s)^2}\sum_{t = 1}^T \sum_{u = 1}^T \hat{F}^s_u \epsilon^s_{ut} e_{it}^s$ can be proved in a similar way as \RNum{4} and has $O_p\left(\frac{1}{\sqrt{N}\delta_{NT,h}}\right)$.
		
		The term \RNum{5} = $\frac{1}{T(s)^2} \sum_{t = 1}^T \sum_{u = 1}^T \hat{F}^s_u \frac{(\Delta X^s_u)^\T \bar{X}_t^s}{N} e_{it}^s$ has

		\begin{eqnarray*}
			\text{ \RNum{5}} &=&  \frac{1}{T(s)^2} \sum_{t = 1}^T \sum_{u = 1}^T \hat{F}^s_u \frac{(\Delta X^s_u)^\T \bar{X}_t^s}{N}e_{it}^s \\
			&=& \frac{1}{T(s)^2} \sum_{t = 1}^T \sum_{u = 1}^T (\hat{F}^s_u - (H^s)^\T F^s_u) \frac{(\Delta X^s_u)^\T \bar{X}_t^s}{N} e_{it}^s + (H^s)^\T  \frac{1}{T(s)^2} \sum_{t = 1}^T \sum_{u = 1}^T F^s_u \frac{(\Delta X^s_u)^\T \bar{X}_t^s}{N}e_{it}^s.
		\end{eqnarray*}

		Proven in a similar approach as the first term \RNum{5} and $\frac{1}{T}\sum_{t=1}^T f_t$ in the proof Theorem 1,  The norm of first term in \RNum{5} 
		\begin{eqnarray*}
			\frac{1}{T(s)^2} \sum_{t = 1}^T \sum_{u = 1}^T (\hat{F}^s_u - (H^s)^\T F^s_u) \frac{(\Delta X^s_u)^\T \bar{X}_t^s}{N} e_{it}^s &=& O_p\left(\frac{h}{\delta_{NT,h}}\right).
		\end{eqnarray*}

		The second term in \RNum{5} (ignore $H^s$) has 
		\begin{eqnarray*}
			&& \norm{\frac{1}{T(s)^2} \sum_{t = 1}^T \sum_{u = 1}^T F^s_u \frac{(\Delta X^s_u)^\T \bar{X}_t^s}{N} e_{it}^s }^2  \\
			&\leq& \left(\frac{1}{T(s)} \sum_{u = 1}^T \norm{F^s_u}^2 \right) \left(\frac{1}{N^2T(s)^2} \sum_{t = 1}^T \sum_{u = 1}^T \norm{(\Delta X^s_u)^\T \bar{X}_t^s}^2 \right) \left(\frac{1}{T(s)} \sum_{t = 1}^T (e_{it}^s)^2\right) \\
			&=& O_p(h^2)
		\end{eqnarray*}
		from Assumption \ref{ass_factor}, \ref{ass_loading}, \ref{ass_err}.1 and Lemma \ref{lemma1}.2. Therefore, \RNum{5} = $\frac{1}{T(s)^2} \sum_{t = 1}^T \sum_{u = 1}^T \hat{F}^s_u \frac{(\Delta X^s_u)^\T \bar{X}_t^s}{N} e_{it}^s = O_p(h).$
		
		Term \RNum{6} = $\frac{1}{T(s)^2} \sum_{t = 1}^T \sum_{u = 1}^T \hat{F}^s_u \frac{( \bar{X}^s_u)^\T \Delta X^s_t}{N}e_{it}^s = O_p(h)$, which can be shown in a similar way as \RNum{5}.
		
		The term \RNum{7} = $\frac{1}{T(s)^2} \sum_{t = 1}^T \sum_{u = 1}^T \hat{F}^s_u \frac{(\Delta X^s_u)^\T \Delta X^s_t}{N} e_{it}^s$ has  
		
		\begin{eqnarray*}
			&& \frac{1}{T(s)^2} \sum_{t = 1}^T \sum_{u = 1}^T \hat{F}^s_u \frac{(\Delta X^s_u)^\T \Delta X^s_t}{N} e_{it}^s \\
			&=&  \frac{1}{NT(s)^2} \sum_{t = 1}^T \sum_{u = 1}^T (\hat{F}^s_u - (H^s)^\T  F^s_u ) (\Delta X^s_u)^\T  \Delta X^s_t e_{it}^s + \frac{1}{NT(s)^2} (H^s)^\T  \sum_{t = 1}^T \sum_{u = 1}^T F^s_u (\Delta X^s_u)^\T  \Delta X^s_t  F^s_t e_{it}^s = O_p(h^2)
		\end{eqnarray*}
		which can be proven in a similar approach as the first term in \RNum{5} and by Lemma \ref{lemma1}.2. Therefore,
		\begin{eqnarray*}
			\frac{1}{T(s)} (\hat{F}^s - F^s H^s)^\T \underline{e}_i^s &=& O_p\left( \frac{1}{\sqrt{Th}\delta_{NT,h}}\right) + O_p \left( \frac{1}{\sqrt{N}\delta_{NT,h}}\right) + O_p(h) = O_p \left( \max \left( \frac{1}{\delta_{NT,h}^2}, h \right) \right)
		\end{eqnarray*}
	\end{proof}
	
	\begin{proof}[Proof of Lemma \ref{lemma4}.4]
		\begin{eqnarray*}
			&& \frac{1}{T(s)} (\hat{F}^s - F^s H^s)^\T F^s \\  
			&=& \frac{1}{T(s)} \sum_{t = 1}^T (\hat{F}^s_t - (H^s)^\T F^s_t ) (F^s_t)^\T   \\
			&=& (V^s_r)^{-1} \left[\frac{1}{T(s)^2}\sum_{t = 1}^T \sum_{u = 1}^T \hat{F}^s_u (F^s_t)^\T  \gamma_N^s(u,t)  + \frac{1}{T(s)^2} \sum_{t = 1}^T \sum_{u = 1}^T \hat{F}^s_u (F^s_t)^\T  \zeta^s_{ut} + \frac{1}{T(s)^2} \sum_{t = 1}^T \sum_{u = 1}^T \hat{F}^s_u (F^s_t)^\T  \eta^s_{ut}  \right. \\
			&&+ \frac{1}{T(s)^2}\sum_{t = 1}^T \sum_{u = 1}^T \hat{F}^s_u (F^s_t)^\T  \epsilon^s_{ut} + \frac{1}{T(s)^2} \sum_{t = 1}^T \sum_{u = 1}^T \hat{F}^s_u (F^s_t)^\T \frac{(\Delta X^s_u)^\T \bar{X}_t^s}{N} + \frac{1}{T(s)^2} \sum_{t = 1}^T \sum_{u = 1}^T \hat{F}^s_u (F^s_t)^\T  \frac{( \bar{X}^s_u)^\T \Delta X^s_t}{N} \\ 
			&&+ \left. \frac{1}{T(s)^2} \sum_{t = 1}^T \sum_{u = 1}^T \hat{F}^s_u (F^s_t)^\T  \frac{(\Delta X^s_u)^\T \Delta X^s_t}{N} \right] \\ 
			&=& (V^s_r)^{-1} \left[\text{\RNum{1} + \RNum{2} + \RNum{3} + \RNum{4} + \RNum{5} + \RNum{6} + \RNum{7}} \right].
		\end{eqnarray*}
		
		The term \RNum{1}$ = \frac{1}{T(s)^2}\sum_{t = 1}^T \sum_{u = 1}^T \hat{F}^s_u (F^s_t)^\T  \gamma_N^s(u,t)$,  we have 
		\begin{eqnarray*}
			\text{\RNum{1}} &=& \frac{1}{T(s)^2}\sum_{t = 1}^T \sum_{u = 1}^T \hat{F}^s_u  (F^s_t)^\T  \gamma_N^s(u,t) \\
			&=& \frac{1}{T(s)^2}\sum_{t = 1}^T \sum_{u = 1}^T  (\hat{F}^s_u - (H^s)^\T F^s_u)  (F^s_t)^\T  \gamma_N^s(u,t) + \frac{1}{T(s)^2} (H^s)^\T  \sum_{t = 1}^T \sum_{u = 1}^T  F^s_u  (F^s_t)^\T  \gamma_N^s(u,t)
		\end{eqnarray*}
		
		The norm of first term in \RNum{1} has 
		\begin{eqnarray*}
			&& \norm{\frac{1}{T(s)^2}\sum_{t = 1}^T \sum_{u = 1}^T (\hat{F}^s_u - (H^s)^\T F^s_u)(F^s_t)^\T  \gamma_N^s(u,t)} \\
			&\leq& \frac{1}{\sqrt{T(s)}} \left( \frac{1}{T(s)} \sum_{u = 1}^T \norm{\hat{F}^s_u - (H^s)^\T F^s_u}^2\right)^{1/2} \left[ \left(\frac{1}{T(s)} \sum_{u = 1}^T  \sum_{t = 1}^T  \gamma_N^s(u,t)^2\right) \left(\frac{1}{T(s)} \sum_{t = 1}^T\norm{F^s_t}^2\right) \right]^{1/2}  \\
			&=& O_p\left(\frac{1}{\sqrt{T}}\right) O_p \left( \frac{1}{\delta_{NT,h}}\right) O_p \left( \frac{1}{\sqrt{h}} \right) = O_p \left( \frac{1}{\sqrt{Th}\delta_{NT,h}} \right)
		\end{eqnarray*}
		by $\frac{1}{T(s)} \sum_{t = 1}^T \norm{\hat{F}^s_u - (H^s)^\T F^s_u}^2 = O_p\left(\frac{1}{\delta_{NT,h}^2}\right)$, $\frac{1}{T(s)} \sum_{t = 1}^T \sum_{u = 1}^T \gamma_N^s(u,t)^2 = O_p\left(\frac{1}{h}\right)$ from (\ref{gamma}) and Assumption 3 (b). 
		
		The second term in \RNum{1} is (ignore $H^s$ since it is $O_p(1)$)
		
		\begin{eqnarray*}
			&& \+E \norm{\frac{1}{T(s)^2}\sum_{t = 1}^T \sum_{u = 1}^T  F^s_u (F^s_t)^\T  \gamma_N^s(u,t)} \\
			&\leq& \+E\left[ \frac{1}{T(s)^2}\sum_{t = 1}^T \sum_{u = 1}^T K_s(S_t) K_s(S_u) |\gamma_N(u,t)| \left( \+E\left[ \norm{F_u}^2 |S_t, S_u \right] \right)^{1/2}  \left( \+E\left[\norm{F_t}^2 |S_t, S_u \right] \right)^{1/2} \right] \\
			&=& O_p\left(\frac{1}{Th}\right)
		\end{eqnarray*}
		by (\ref{gamma}) and Assumption \ref{ass_factor}.
		
		Therefore, \RNum{1} = $\frac{1}{T(s)^2}\sum_{t = 1}^T \sum_{u = 1}^T  \hat{F}^s_u  (F^s_t)^\T \gamma_N^s(u,t) = O_p\left(\frac{1}{\sqrt{Th}\delta_{NT,h}}\right)$.

		The term \RNum{2}$ = \frac{1}{T(s)^2} \sum_{t = 1}^T \sum_{u = 1}^T \hat{F}^s_u (F^s_t)^\T  \zeta^s_{ut}$ has
		\begin{eqnarray*}
			\text{\RNum{2}} &=& \frac{1}{T(s)^2} \sum_{t = 1}^T \sum_{u = 1}^T \hat{F}^s_u (F^s_t)^\T  \zeta^s_{ut} \\ 
			&=&\frac{1}{T(s)^2} \sum_{t = 1}^T \sum_{u = 1}^T (\hat{F}^s_u - (H^s)^\T ) F^s_u (F^s_t)^\T  \zeta^s_{ut} + \frac{1}{T(s)^2} (H^s)^\T  \sum_{t = 1}^T \sum_{u = 1}^T F^s_u (F^s_t)^\T  \zeta^s_{ut}.
		\end{eqnarray*}
		
		The norm of first term in \RNum{2} has
		\begin{eqnarray*}
			&&\norm{\frac{1}{T(s)^2} \sum_{t = 1}^T \sum_{u = 1}^T (\hat{F}^s_u - (H^s)^\T F^s_u) (F^s_t)^\T  \zeta^s_{ut}} \\
			&\leq& \left( \frac{1}{T(s)} \sum_{u = 1}^T \norm{\hat{F}^s_u - (H^s)^\T F^s_u}^2\right)^{1/2} \left(\frac{1}{T(s)} \sum_{u = 1}^T K_s(S_u) \norm{\frac{1}{T} \sum_{t = 1}^T K_s(S_t) F_t \zeta_{ut}  }^2\right)^{1/2}  \\
			&=& O_p\left(\frac{1}{\delta_{NT,h}}\right) O_p\left(\frac{1}{\sqrt{NTh}}\right) = O_p\left(\frac{1}{\sqrt{NTh}\delta_{NT,h}}\right)
		\end{eqnarray*}
		by Assumption \ref{ass_mom}.1 and Lemma \ref{lemma:simplify-assumption6}.

		Therefore, the first term in \RNum{2}, $\frac{1}{T(s)^2} \sum_{t = 1}^T \sum_{u = 1}^T (\hat{F}^s_u - (H^s)^\T F^s_u) (F^s_t)^\T  \zeta^s_{ut} = O_p\left( \frac{1}{\sqrt{NTh}\delta_{NT,h}}\right)$.
		
		The second term in \RNum{2} (ignore $H^s$) is 
		\begin{eqnarray*}
			\frac{1}{T(s)^2}\sum_{t = 1}^T \sum_{u = 1}^T F^s_u (F^s_t)^\T  \zeta^s_{ut} = \frac{1}{\sqrt{NTh}} \left( \frac{1}{T(s)} \sum_{t = 1}^T K_s(S_t) z^s_t F_t^\T  \right) = O_p \left( \frac{1}{\sqrt{NTh}} \right)
		\end{eqnarray*}
		where $z^s_t = \frac{\sqrt{Th}}{\sqrt{N}T(s)} \sum_{u = 1}^T \sum_{l = 1}^N K_s(S_u) F_u(e_{lu} e_{lt} - \+E[e_{lu}e_{lt}])$, by assumption \ref{ass_mom}.1 and Lemma \ref{lemma:simplify-assumption6}.

		Thus \RNum{2}$ = \frac{1}{T(s)^2} \sum_{t = 1}^T \sum_{u = 1}^T \hat{F}^s_u (F^s_t)^\T  \zeta^s_{ut} = O_p\left(\frac{1}{\sqrt{NTh}}\right)$.
		
		The term \RNum{3}$ = \frac{1}{T(s)^2} \sum_{t = 1}^T \sum_{u = 1}^T \hat{F}^s_u (F^s_t)^\T  \eta^s_{ut}$ has
		\begin{eqnarray*}
			\text{\RNum{3}} &=& \frac{1}{T(s)^2} \sum_{t = 1}^T \sum_{u = 1}^T \hat{F}^s_u (F^s_t)^\T  \zeta^s_{ut} \\
			&=& \frac{1}{T(s)^2} \sum_{t = 1}^T \sum_{u = 1}^T (\hat{F}^s_u - (H^s)^\T  F^s_u) (F^s_t)^\T  \eta^s_{ut} + \frac{1}{T(s)^2} (H^s)^\T  \sum_{t = 1}^T \sum_{u = 1}^T F^s_u (F^s_t)^\T  \eta^s_{ut}
		\end{eqnarray*}
		
		The norm of the first term in \RNum{3} has
		\begin{eqnarray*}
			&& \norm{\frac{1}{T(s)^2} \sum_{t = 1}^T \sum_{u = 1}^T (\hat{F}^s_u - (H^s)^\T F^s_u) (F^s_t)^\T  \eta^s_{ut}} \\
			&\leq& \left( \frac{1}{T(s)}\sum_{u = 1}^T \norm{ (\hat{F}^s_u - (H^s)^\T F^s_u)}^2\right)^{1/2} \left( \frac{1}{T(s)} \sum_{u = 1}^T K_s(S_u) \norm{ \frac{1}{T} \sum_{t = 1}^T K_s(S_t) F_t^\T  \eta_{ut}}^2\right)^{1/2} \\
			&=& O_p\left(\frac{1}{\delta_{NT,h}}\right) O_p\left(\frac{1}{\sqrt{NTh}}\right)
		\end{eqnarray*}
		by Assumption \ref{ass_mom}.2 and Lemma \ref{lemma:simplify-assumption6}.

		For the second term in \RNum{3},
		\begin{eqnarray*}
			\frac{1}{T(s)^2} \sum_{t = 1}^T \sum_{u = 1}^T F^s_u (F^s_t)^\T  \eta^s_{ut} &=& \frac{1}{T(s)^2} \sum_{t = 1}^T \sum_{u = 1}^T F^s_u (F^s_t)^\T  \frac{(F^s_u)^\T  \Lambda(s)^\T  e^s_t}{N} \\
			&=& \left( \frac{1}{T(s)} \sum_{u = 1}^T K_s(S_u) F_u F_u^\T  \right) \left( \frac{1}{NT(s)} \sum_{t = 1}^T \sum_{l = 1}^N K_s(S_t) F_t^\T  \Lambda_l(s) e_{lt} \right) \\
			&=& O_p(1) O_p\left( \frac{1}{\sqrt{NTh}} \right)
		\end{eqnarray*}
		by Assumption \ref{ass_mom}.2 and Lemma \ref{lemma:simplify-assumption6}. Therefore, \RNum{3} = $\frac{1}{T(s)^2} \sum_{t = 1}^T \sum_{u = 1}^T \hat{F}^s_u (F^s_t)^\T  \eta^s_{ut} = O_p\left( \frac{1}{\sqrt{NTh}}\right)$.
		
		$\text{\RNum{4}} = O_p\left(\frac{1}{\sqrt{NTh}}\right)$ can be proved in a similar way. 
		
		The term \text{\RNum{5}} = $\frac{1}{T(s)^2} \sum_{t = 1}^T \sum_{u = 1}^T \hat{F}^s_u (F^s_t)^\T \frac{(\Delta X^s_u)^\T \bar{X}_t^s}{N}$ has

		\begin{eqnarray*}
			\text{ \RNum{5}} &=&  \frac{1}{T(s)^2} \sum_{t = 1}^T \sum_{u = 1}^T \hat{F}^s_u \frac{(\Delta X^s_u)^\T \bar{X}_t^s}{N} (F^s_t)^\T  \\
			&=& \frac{1}{T(s)^2} \sum_{t = 1}^T \sum_{u = 1}^T (\hat{F}^s_u - (H^s)^\T F^s_u) \frac{(\Delta X^s_u)^\T \bar{X}_t^s}{N} (F^s_t)^\T  + (H^s)^\T  \frac{1}{T(s)^2} \sum_{t = 1}^T \sum_{u = 1}^T F^s_u \frac{(\Delta X^s_u)^\T \bar{X}_t^s}{N} (F^s_t)^\T 
		\end{eqnarray*}

		Similarly as the term \RNum{5} in the proof of Lemma \ref{lemma4}.3, the first term in \RNum{5} has
		\begin{eqnarray*}
			\frac{1}{T(s)^2} \sum_{t = 1}^T \sum_{u = 1}^T (\hat{F}^s_u - (H^s)^\T F^s_u) \frac{(\Delta X^s_u)^\T \bar{X}_t^s}{N} (F^s_t)^\T  &=& O_p\left(\frac{h}{\delta_{NT,h}}\right)
		\end{eqnarray*}
		
		and the second term in \RNum{5} (ignore $H^s$) has 
		\begin{eqnarray*}
			&& \norm{\frac{1}{T(s)^2} \sum_{t = 1}^T \sum_{u = 1}^T F^s_u \frac{(\Delta X^s_u)^\T \bar{X}_t^s}{N} (F^s_t)^\T  }^2  \\
			&\leq& \left(\frac{1}{T(s)} \sum_{u = 1}^T \norm{F^s_u}^2\right) \left(\frac{1}{N^2T(s)^2} \sum_{t = 1}^T \sum_{u = 1}^T \norm{(\Delta X^s_u)^\T \bar{X}_t^s}^2\right) \left( \frac{1}{T} \sum_{t = 1}^T \norm{F^s_t}^2\right) \\
			&=& O_p(h^2)
		\end{eqnarray*}
		by Assumption \ref{ass_factor} and the proof of $\frac{1}{T}\sum_{t=1}^T f_t$ in Theorem \ref{thm_consistency}.
		
		Therefore, \text{\RNum{5}} = $\frac{1}{T(s)^2} \sum_{t = 1}^T \sum_{u = 1}^T \hat{F}^s_u (F^s_t)^\T \frac{(\Delta X^s_u)^\T \bar{X}_t^s}{N} = O_p(h)$.
		
		The term \text{\RNum{6}} = $\frac{1}{T^2} \sum_{t = 1}^T \sum_{u = 1}^T \hat{F}^s_u (F^s_t)^\T  \frac{( \bar{X}^s_u)^\T \Delta X^s_t}{N} = O_p(h)$ similarly as \text{\RNum{5}}.
		
		The term \text{\RNum{7}} = $\frac{1}{T^2} \sum_{t = 1}^T \sum_{u = 1}^T \hat{F}^s_u (F^s_t)^\T  \frac{(\Delta X^s_u)^\T \Delta X^s_t}{N}$ has
		\begin{eqnarray*}
			\text{\RNum{7}} &=& \frac{1}{T^2} \sum_{t = 1}^T \sum_{u = 1}^T \hat{F}^s_u (F^s_t)^\T  \frac{(\Delta X^s_u)^\T \Delta X^s_t}{N} \\
			&=&  \frac{1}{NT^2} \sum_{t = 1}^T \sum_{u = 1}^T (\hat{F}^s_u - (H^s)^\T  F^s_u ) (\Delta X^s_u)^\T \Delta X^s_t  (F^s_t)^\T  + \frac{1}{NT^2} (H^s)^\T  \sum_{t = 1}^T \sum_{u = 1}^T F^s_u (\Delta X^s_u)^\T \Delta X^s_t (F^s_t)^\T  \\
			&=& O_p(h^2)
		\end{eqnarray*}
		similarly as the term \RNum{7} in the proof of lemma \ref{lemma4}.3.
		
		Therefore,
		\begin{eqnarray*}
			\frac{1}{T} (\hat{F}^s - F^s H^s)^\T F^s &=& (V^s_r)^{-1} [\text{\RNum{1} + \RNum{2} + \RNum{3} + \RNum{4}  + \RNum{5} + \RNum{6} + \RNum{7}}] \\
			&=&  O_p\left(\frac{1}{\sqrt{Th}\delta_{NT,h}}\right) + O_p\left(\frac{1}{\sqrt{NTh}}\right) + O_p(h) \\
			&=& O_p \left( \max \left( \frac{1}{\delta_{NT,h}^2}, h \right) \right)
		\end{eqnarray*}

	\end{proof}
	
	\begin{proof}[Proof of Lemma \ref{lemma4}.5]
		\begin{eqnarray*}
			&& \frac{1}{T(s)} (\hat{F}^s (H^s)^{-1}- F^s)^\T \hat{F}^s \\ 
			&=& ((H^s)^{-1})^\T  \frac{1}{T(s)} (\hat{F}^s- F^s H^s)^\T \hat{F}^s \\
			&=& ((H^s)^{-1})^\T  \frac{1}{T(s)} (\hat{F}^s- F^s H^s)^\T  (\hat{F}^s- F^s H^s) + ((H^s)^{-1})^\T   \frac{1}{T(s)} (\hat{F}^s- F^s H^s)^\T  F^s H^s \\
			&=& O_p \left( \max \left( \frac{1}{\delta_{NT,h}^2}, h \right) \right)
		\end{eqnarray*}
		from lemma \ref{lemma4}.4 when $\delta_{NT,h}^2 h \rightarrow 0$.
		
	\end{proof}
	
	\begin{proof}[Proof of Theorem 3]
		From $\hat{\Lambda}(s)  = \frac{X^s \hat{F}^s}{T(s)} $ and $X^s = \Lambda(s) (F^s)^\T  + e^s + \Delta X^s$, we have $\hat{\Lambda}(s) = \frac{1}{T(s)} \Lambda(s)  (F^s)^\T  \hat{F}^s + \frac{1}{T(s)} e^s \hat{F}^s + \frac{1}{T(s)} \Delta X^s  \hat{F}^s $, so $\hat{\Lambda}_i(s) = \frac{1}{T} (\hat{F}^s)^\T  F^s \Lambda_i(s) + \frac{1}{T} (\hat{F}^s)^\T  \underline{e}_i^s +  \frac{1}{T} (\hat{F}^s)^\T  \Delta X^s_i  $. Writing $F^s = F^s - \hat{F}^s (H^s)^{-1} + \hat{F}^s (H^s)^{-1}$ and $\hat{F}^s = \hat{F}^s - F^s H^s + F^s H^s$, we have
		\begin{eqnarray*}
			\hat{\Lambda}_i(s) &=& \frac{1}{T(s)} (\hat{F}^s)^\T  F^s \Lambda_i(s) + \frac{1}{T(s)} (\hat{F}^s)^\T  \underline{e}_i^s +  \frac{1}{T(s)} (\hat{F}^s)^\T  \Delta X^s_i \\  
			&=& \frac{1}{T(s)} (\hat{F}^s)^\T  (F^s - \hat{F}^s (H^s)^{-1} + \hat{F}^s (H^s)^{-1}) \Lambda_i(s) \\
			&& + \frac{1}{T(s)} (\hat{F}^s - F^s H^s + F^s H^s)^\T  \underline{e}_i^s + \frac{1}{T(s)} (\hat{F}^s - F^s H^s + F^s H^s)^\T  \Delta X^s_i  \\
			&=& (H^s)^{-1} \Lambda_i(s) + \frac{1}{T(s)} (\hat{F}^s)^\T  (F^s - \hat{F}^s (H^s)^{-1}) \Lambda_i(s) \\ 
			&&+ \frac{1}{T(s)} (\hat{F}^s - F^s H^s)^\T \underline{e}_i^s + \frac{1}{T(s)} (H^s)^\T  (F^s)^\T  \underline{e}_i^s +  \frac{1}{T(s)} (\hat{F}^s - F^s H^s)^\T \Delta X^s_i  + \frac{1}{T(s)} (H^s)^\T  (F^s)^\T  \Delta X^s_i 
		\end{eqnarray*}
		and
		
		\begin{eqnarray*}
			\hat{\Lambda}_i(s) - (H^s)^{-1} \Lambda_i(s)  &=& \frac{1}{T(s)} (H^s)^\T  (F^s)^\T  \underline{e}_i^s  + \frac{1}{T(s)} (\hat{F}^s)^\T  (F^s - \hat{F}^s (H^s)^{-1}) \Lambda_i(s) + \frac{1}{T(s)} (\hat{F}^s - F^s H^s)^\T \underline{e}_i^s \\
			&&+ \frac{1}{T(s)} (\hat{F}^s - F^s H^s)^\T \Delta X^s_i  + \frac{1}{T(s)} (H^s)^\T  (F^s)^\T  \Delta X^s_i \\
			&=& \frac{1}{T(s)} (H^s)^\T  (F^s)^\T  \underline{e}_i^s + O_p \left( \max \left( \frac{1}{\delta_{NT,h}^2}, h \right) \right) + O_p(h)
		\end{eqnarray*}
		by Lemma \ref{lemma4} and  $\norm{\frac{1}{T}(F^s)^\T  \Delta X^s_i }^2 = \norm{\frac{1}{T} \sum_{t = 1}^T F^s_t \Delta X^s_{it} }^2 \leq (\frac{1}{T} \sum_{t = 1}^T \norm{F^s_t}^2) (\frac{1}{T} \sum_{t = 1}^T \norm{\Delta X^s_{it}}^2) = O_p(h^2) $.

		If $\sqrt{Th}/N \rightarrow 0$, $Th \rightarrow \infty$, $Th^3 \rightarrow 0$, we have
		\begin{eqnarray*}
			\sqrt{Th} (\hat{\Lambda}_i(s) - (H^s)^{-1} \Lambda_i(s)) &=& \frac{\sqrt{Th}}{T(s)} (H^s)^\T  (F^s)^\T  \underline{e}_i^s + o_p(1) \\
			&=& (H^s)^\T  \frac{\sqrt{Th}}{T(s)} \sum_{t = 1}^T K_s(S_t) F_t e_{it} + o_p(1)
		\end{eqnarray*}
		
		From Lemma \ref{lemma3}.1, Assumption \ref{ass_mom}.4 and slusky theorem,
		
		\begin{eqnarray*}
			\sqrt{Th}(\hat{\Lambda}_i(s) - (H^s)^{-1} \Lambda_i(s)) \xrightarrow{d} N(0, ((Q^s)^\T )^{-1} \Phi^s_i (Q^s)^{-1})
			.\end{eqnarray*}

	\end{proof}

	\begin{proof}[Proof of Theorem 4]
		Since $C_{it,s} = F_t^\T \Lambda_i(s)$ and $\hat{C}_{it,s} = \hat{F}_t\hat{\Lambda}_i(s) = \left(\frac{\hat{F}^s_t}{K^{1/2}_s(S_t)}\right)'\hat{\Lambda}_i(s)$,
		\begin{eqnarray*}
			\hat{C}_{it,s} - C_{it,s} &=&  \left( \frac{\hat{F}^s_t}{K^{1/2}_s(S_t)}\right)^\T  \hat{\Lambda}_i(s) - F_t^\T \Lambda_i(s) \\
			&=& \Lambda_i(s)^\T  ((H^s)^{-1})^\T  \left( \frac{\hat{F}^s_t}{K^{1/2}_s(S_t)} - (H^s)^\T  F_t \right) + F_t^\T  H^s (\hat{\Lambda}_i(s) - (H^s)^{-1} \Lambda_i(s)) \\ 
			&&+ (\hat{\Lambda}_i(s) - (H^s)^{-1} \Lambda_i(s))^\T \left(\frac{\hat{F}^s_t}{K^{1/2}_s(S_t)} - (H^s)^\T  F_t \right)
		\end{eqnarray*}
		
		From the limiting distribution of estimated factors and the limiting distribution of estimated factor loadings, $\hat{\Lambda}_i(s) - (H^s)^{-1} \Lambda_i(s) = O_p\left( \frac{1}{\sqrt{Th}}\right)$ for all $i$ and $\frac{\hat{F}^s_t}{K^{1/2}_s(S_t)} - (H^s)^\T  F_t = O_p\left( \frac{1}{\sqrt{N}}\right)$ for the $t$ where $\frac{1}{\sqrt{N}} \sum_{i = 1}^N \norm{\Lambda_i(S_t)-\Lambda_i(s)}  = o_p(1)$. We have $$\left(\hat{\Lambda}_i(s) - (H^s)^{-1} \Lambda_i(s)\right)^\T \left(\frac{\hat{F}^s_t}{K^{1/2}_s(S_t)} - (H^s)^\T  F_t\right) = O_p\left(\frac{1}{\delta_{NT,h}^2}\right),$$
		$$ \delta_{NT,h} \left(\frac{\hat{F}^s_t}{K^{1/2}_s(S_t)} - (H^s)^\T  F_t\right) = \frac{\delta_{NT,h}}{\sqrt{N}} (V_r^s)^{-1} \frac{(\hat{F}^s)^\T  F^s}{T(s)}  \left( \frac{1}{\sqrt{N}}\sum_{i = 1}^{N} \Lambda_i(s) e_{it}\right) + O_p\left(\frac{1}{\delta_{NT,h}}\right),$$
		and 
		$$
		\delta_{NT,h} (\hat{\Lambda}_i(s) - (H^s)^{-1} \Lambda_i(s)) =\frac{\delta_{NT,h}}{\sqrt{Th}} (H^s)^\T  \frac{\sqrt{Th}}{T(s)} \sum_{t = 1}^T F^s_t e^s_{it} + O_p\left(\frac{1}{\delta_{NT,h}}\right).$$
		Combining these three equalities, we have 
		\begin{eqnarray*}
			\delta_{NT,h}  (\hat{C}_{it,s} - C_{it,s}) &=& \frac{\delta_{NT,h}}{\sqrt{N}}\Lambda_i(s)^\T  ((H^s)^{-1})^\T  (V_r^s)^{-1} \frac{(\hat{F}^s)^\T  F^s}{T(s)}  \left( \frac{1}{\sqrt{N}}\sum_{i = 1}^{N} \Lambda_i(s) e_{it}\right) \\
			&& + \frac{\delta_{NT,h}}{\sqrt{Th}} F_t^\T  H^s (H^s)^\T  \frac{\sqrt{Th}}{T(s)} \sum_{t = 1}^T F^s_t e^s_{it} + O_p\left(\frac{1}{\delta_{NT,h}}\right) 
		\end{eqnarray*}
		Given $H^s = \frac{\Lambda(s)^\T  \Lambda(s)}{N} \frac{(F^s)^\T  \hat{F}^s}{T(s)}  (V^s_r)^{-1}$, $((H^s)^{-1})^\T  (V_r^s)^{-1} \frac{(\hat{F}^s)^\T  F^s}{T(s)}  = \left(\frac{\Lambda(s)^\T \Lambda(s)}{N}\right)^{-1} = \Sigma_{\Lambda(s)}^{-1} + O_p(1/\sqrt{N})$. From Lemma \ref{lemma4}.2, we have $H^s (H^s)^\T  = \Sigma_{F|s}^{-1} + O_p\left(\frac{1}{\delta_{NT,h}}\right)$. Thus, 
		\begin{eqnarray*}
			\delta_{NT,h}  (\hat{C}_{it,s} - C_{it,s}) = \frac{\delta_{NT,h}}{\sqrt{N}}\Lambda_i(s)^\T  \Sigma_{\Lambda(s)}^{-1} \left( \frac{1}{\sqrt{N}}\sum_{i = 1}^{N} \Lambda_i(s) e_{it}\right) + \frac{\delta_{NT,h}}{\sqrt{Th}} F_t^\T   \Sigma_{F|s}^{-1} \frac{\sqrt{Th}}{T(s)} \sum_{t = 1}^T F^s_t e^s_{it} + O_p\left(\frac{1}{\delta_{NT,h}}\right) 
		\end{eqnarray*}
		Let $\xi_{NT,h} = \Lambda_i(s)^\T  \Sigma_{\Lambda(s)}^{-1} \left( \frac{1}{\sqrt{N}}\sum_{i = 1}^{N} \Lambda_i(s) e_{it}\right) $ and $\zeta_{NT,h} = F_t^\T   \Sigma_{F|s}^{-1} \frac{\sqrt{Th}}{T(s)} \sum_{t = 1}^T F^s_t e^s_{it}$. From Slusky's Theorem and Assumption \ref{ass_mom}, we have $\xi_{NT,h}\xrightarrow{d} \xi \stackrel{d}{=} N(0, V_{it,s})$, where $V_{it,s} = \Lambda_i(s)^\T  \Sigma_{\Lambda(s)}^{-1} \Gamma_{j,s} \Sigma_{\Lambda(s)}^{-1}  \Lambda_i(s)$; $\zeta_{NT,h} \xrightarrow{d} \zeta \stackrel{d}{=} N(0, W_{it,s})$, where $W_{it,s} = F_t^\T  \Sigma_{F|s}^{-1} \Phi_{i,s} \Sigma_{F|s}^{-1} F_t$. Since $\xi_{NT,h}$ is sum of cross sectional random variables and $\zeta_{NT,h}$ is the sum of serial random variables, $\xi_{NT,h}$ and $\zeta_{NT,h}$ are asymptotically independent. Also, $(\xi_{NT,h},\zeta_{NT,h})$ jointly converge to a bivariate normal distribution. Let $a_{NT,h} = \delta_{NT,h}/\sqrt{N}$ and $b_{NT,h} = \delta_{NT,h}/\sqrt{Th}$, we have 
		$$
		\delta_{NT,h}(\hat{C}_{it,s} - C_{it,s}) = a_{NT,h} \xi_{NT,h} + b_{NT,h}\zeta_{NT,h} + O_p\left(\frac{1}{\delta_{NT,h}}\right)
		$$
		$a_{NT,h}$ and $b_{NT,h}$ are bounded and nonrandom sequences. If they converge to some constants, $\delta_{NT,h}(\hat{C}_{it,s} - C_{it,s})$ is asymptotic normal from Slustky's Theorem. Otherwise, if $a_{NT,h}$ and $b_{NT,h}$ are not convergent sequences, we can use the almost sure representation theory as \cite{bai2003inferential}. Since $(\xi_{NT,h}, \zeta_{NT,h}) \xrightarrow{d} (\xi, \zeta)$, there exist random vectors $(\xi^\ast_{NT,h}, \zeta^\ast_{NT,h}) $ and $(\xi^\ast, \zeta^\ast)$ with the same distribution as $(\xi_{NT,h}, \zeta_{NT,h}) $ and $(\xi, \zeta)$, and $(\xi^\ast_{NT,h}, \zeta^\ast_{NT,h}) \xrightarrow{a.s.} (\xi, \zeta)$. We have
		$$
		a_{NT,h} \xi^\ast_{NT,h} + b_{NT,h}\zeta^\ast_{NT,h} = a_{NT,h} \xi^\ast + b_{NT,h} \zeta^\ast + a_{NT,h} (\xi^\ast_{NT,h} - \xi^\ast) + b_{NT,h} (\zeta^\ast_{NT,h}-\zeta^\ast)
		$$
		Because of the almost sure convergence, $a_{NT,h} (\xi^\ast_{NT,h} - \xi^\ast) = o_p(1)$ and $ b_{NT,h} (\zeta^\ast_{NT,h}-\zeta^\ast) = o_p(1)$. $\xi^\ast$ and $\zeta^\ast$ are independent normal random variables with variances $V_{it,s}$ and $W_{it,s}$.  We have $a_{NT,h} \xi^\ast + b_{NT,h} \zeta^\ast  \stackrel{d}{=} N(0, a_{NT,h}^2 V_{it,s} + b_{NT,h}^2 W_{it,s})$. Thus, 
		$$
		\frac{\delta_{NT,h}(\hat{C}_{it,s} - C_{it,s}) }{(a_{NT,h}^2 V_{it,s} + b_{NT,h}^2 W_{it,s})^{1/2}} \xrightarrow{d} N(0,1),
		$$
		which is equivalent to 
		$$
		\frac{\hat{C}_{it,s} - C_{it,s}}{\left( \frac{1}{N} V_{it,s} + \frac{1}{Th}W_{it,s}\right)^{1/2}} \xrightarrow{d} N(0,1).
		$$
	\end{proof}
	
	\subsection{Proof of Test for Constant Factor Loading }
	
	\begin{proof}[Proof of Lemma \ref{lemma_leqr}]
		Define $\rho = \tr \left\lbrace \left( \frac{1}{N}\Lambda^\T_1 \Lambda_1 \right)^{-1} \left( \frac{1}{N}\Lambda^\T_1 \Lambda_2 \right) \left( \frac{1}{N}\Lambda^\T_2 \Lambda_2 \right)^{-1} \left( \frac{1}{N} \Lambda^\T_2 \Lambda_1 \right) \right\rbrace$, let $\tilde{\Lambda}_1 =   \frac{1}{\sqrt{N}} \Lambda_1  \left( \frac{1}{N}\Lambda^\T_1 \Lambda_1 \right)^{-1/2}$ and $\tilde{\Lambda}_2 = \frac{1}{\sqrt{N}} \Lambda_2  \left( \frac{1}{N}\Lambda^\T_2 \Lambda_2 \right)^{-1/2}$, and therefore, $\tilde{\Lambda}_1^T \tilde{\Lambda}_1 = I_{k_1}$ and $\tilde{\Lambda}_2^T \tilde{\Lambda}_2 = I_{k_2}$, $I_{k_1} \in \mathbb{R}^{k_1 \times k_1}$ and $I_{k_2} \in \mathbb{R}^{k_2 \times k_2}$ are identity matrices.  
		
		As a result, $\rho = \tr \left\lbrace \left( \tilde{\Lambda}^\T_1 \tilde{\Lambda}_2 \right) \left(  \tilde{\Lambda}^\T_2 \tilde{\Lambda}_1 \right) \right\rbrace$. Without loss of generalization, we assume $k_1 \geq k_2$, $\tr \left(\tilde{\Lambda}_2  \tilde{\Lambda}^\T_2   \right) = \tr \left(\tilde{\Lambda}^\T_2  \tilde{\Lambda}_2   \right) = k_2$.\footnote{If $k_2 \geq k_1$, since $\rho = \tr \left\lbrace \left( \tilde{\Lambda}^\T_1 \tilde{\Lambda}_2 \right) \left( \tilde{\Lambda}^\T_2 \tilde{\Lambda}_1 \right) \right\rbrace = \tr \left\lbrace \left(\tilde{\Lambda}^\T_2 \tilde{\Lambda}_1 \right) \left( \tilde{\Lambda}^\T_1 \tilde{\Lambda}_2 \right)  \right\rbrace$, we study $\rho = \tr \left\lbrace \tilde{\Lambda}^\T_2 \left( \tilde{\Lambda}_1 \tilde{\Lambda}^\T_1  \right) \tilde{\Lambda}_2 \right\rbrace$. } Since $\tilde{\Lambda}^\T_2  \tilde{\Lambda}_2$ is a real symmetric matrix, $\tilde{\Lambda}^\T_2  \tilde{\Lambda}_2$ can be decomposed as
		$\tilde{\Lambda}^\T_2  \tilde{\Lambda}_2 = Q \Sigma Q^\T $
		where $Q\in \mathbb{R}^{N \times N}$ is an orthonormal matrix, and $\Sigma$ is a diagonal matrix whose entries are the eigenvalues of $\tilde{\Lambda}^\T_2  \tilde{\Lambda}_2$. We have $\tr (\Sigma) = \tr(\tilde{\Lambda}^\T_2  \tilde{\Lambda}_2) = k_2$
		
		Since $Q$ is an orthonormal basis, there exists $A = [\alpha_1, \alpha_2, \cdots, \alpha_{k_1}]$, such that $\tilde{\Lambda}_1 = Q A$. The columns of $A$ are orthogonal. The reason is that $A = Q^\T  \tilde{\Lambda}_1$, $\alpha_i^\T  \alpha_i = \tilde{\lambda}_{1i}^T Q Q^\T  \tilde{\lambda}_{1i}^T = 1$, where $\tilde{\lambda}_{1i}$ is the $i$-th column in $\tilde{\Lambda}_1$. In addition, $\alpha_i^\T  \alpha_t = \tilde{\lambda}_{1i}^T Q Q^\T  \tilde{\lambda}_{1j}^T = 0$. 
		
		\begin{eqnarray*}
			\rho &=& \tr \left\lbrace  \tilde{\Lambda}^\T_1 \left(\tilde{\Lambda}_2  \tilde{\Lambda}^\T_2\right) \tilde{\Lambda}_1  \right\rbrace \\ 
			&=& \tr(A^\T  Q^\T  Q \Sigma Q^\T  Q A) = \tr (A^\T  \Sigma A) \\
			&=& \sum_{j=1}^N \left( \sum_{i=1}^{k_1} \alpha^2_{ij} \right) \sigma_{jj} \\
			&\leq& \sum_{j=1}^N  \sigma_{jj} = \tr(\Sigma) = k_2 = k,
		\end{eqnarray*}
		where $\alpha_{ij}$ is the $j$-th element in $\alpha_{i}$, $\sigma_{jj}$ is the $j$-th diagonal element of $\Sigma$ and $k = \min(k_1, k_2) = k_2$. The last inequality is derived from $\sum_{i=1}^{k_1} \alpha^2_{ij}  \leq 1, \forall j$. The reason is that the columns of $A$ are orthogonal, there exists $\bar{A} \in \mathbb{R}^{N \times (N-k_1)}$, such that let $\tilde{A} = [A, \bar{A}]$, we have $\tilde{A}$ is an orthonormal matrix. We have $\tilde{A} \tilde{A}^\T  = I_N$. Therefore, $\forall j$, $1 = \sum_{i=1}^{k_1} \alpha^2_{ij} + \sum_{i=1}^{N - k_1} \bar{\alpha}^2_{ij} \geq  \sum_{i=1}^{k_1} \alpha^2_{ij}$, where $\bar{\alpha}_{ij}$ is the $j$-th element in $i$-th column in $\bar{A}$.

	\end{proof}
	
	\begin{proof}[Proof of Theorem \ref{thm_rho}]
		From the proof of Theorem 1 and Theorem 2, and denote $e^{s_l}_i \in \mathbb{R}^{N \times 1}$ as the $i$-th row in $e^{s_l}$, we have
		
		\begin{eqnarray}
		\nonumber \bar{\Lambda}_i(s_l) - (H^{s_l})^\T  \Lambda_i(s_l) &=& \frac{1}{N} \left[ \sum_{k=1}^N (V_r^{s_l})^{-1} \bar{\Lambda}_k(s_l) \frac{\+E\left[(e^{s_l}_k)^\T  e^{s_l}_i\right]}{T(s_l)} + \sum_{k=1}^N (V_r^{s_l})^{-1} \bar{\Lambda}_k(s_l) \frac{(e^{s_l}_k)^\T  e^{s_l}_i - \+E\left[(e^{s_l}_k)^\T  e^{s_l}_i\right]}{T(s_l)} \right. \\
		\nonumber &+& \left. \sum_{k=1}^N (V_r^{s_l})^{-1} \bar{\Lambda}_k(s_l) \frac{\Lambda^\T_k(s_l)(F^{s_l})^\T  e_i^{s_l}}{T(s_l)} + \sum_{k=1}^N (V_r^{s_l})^{-1} \bar{\Lambda}_k(s_l) \frac{(e^{s_l}_k)^\T  F^{s_l} \Lambda_i(s_l)}{T(s_l)} \right] + O_p(h) \\
		&=& \frac{1}{N}\sum_{k=1}^N \gamma^{s_l}_{N}(k,i) +  \frac{1}{N}\sum_{k=1}^N \zeta^{s_l}_{ki} +  \frac{1}{N}\sum_{k=1}^N \eta^{s_l}_{ki} +  \frac{1}{N}\sum_{k=1}^N \epsilon^{s_l}_{ki}  + O_p(h) \label{eqn-decom-lambda}
		\end{eqnarray}
		From the proof of Theorem \ref{ass_factor} and Theorem \ref{ass_loading}, and similar to \cite{pelger2017}, we have 
		\begin{eqnarray*}
			u_{li1} &=& \frac{1}{N}\sum_{k=1}^N \gamma^{s_l}_{N}(k,i) = O_p\left( \frac{1}{\sqrt{N \delta_{NT,h}}} \right) \\
			u_{li2} &=& \frac{1}{N}\sum_{k=1}^N \zeta^{s_l}_{ki} = O_p\left( \frac{1}{\sqrt{Th \delta_{NT,h}}} \right) \\
			u_{li3} &=& \frac{1}{N}\sum_{k=1}^N \eta^{s_l}_{ki} = O_p\left( \frac{1}{\sqrt{Th}} \right) \\
			u_{li4} &=& \frac{1}{N}\sum_{k=1}^N \epsilon^{s_l}_{ki} = O_p\left( \frac{1}{\sqrt{Th \delta_{NT,h}}} \right)
		\end{eqnarray*}
		
		Under $\sqrt{Th}/N \rightarrow 0$, $ \frac{1}{N}\sum_{k=1}^N \eta^{s_l}_{ki}$ is the dominate term in the asymptotic distribution of $\bar{\Lambda}_i(s_l)$. We have
		
		\begin{eqnarray*}
			\frac{1}{N}\bar{\Lambda}^\T_l\bar{\Lambda}_{l'} &=& \frac{1}{N}\sum_{i=1}^N  \left( H^{s_l\prime } \lambda_{li} + u_{li1} + u_{li2} + u_{li3} + u_{li4}  \right) \left( (H^{s_{l'}})^T \lambda_{l'i} + u_{l'i1} + u_{l'i2} + u_{l'i3} + u_{l'i4} \right)^\T  + O_p(h) \\
			&=& \left( \frac{1}{N}\sum_{i=1}^N  (H^{s_l})^\T  \lambda_{li} \lambda_{l'i}^\T  H^{s_{l'}} +  \sum_{a=1}^4 \frac{1}{N}\sum_{i=1}^N  (H^{s_l})^\T  \lambda_{li} u_{l'ia}^\T  + \sum_{a=1}^4 \frac{1}{N}\sum_{i=1}^N u_{lia}\lambda_{l'i}^\T  H^{s_{l'}} + \Upsilon_{ll'}  \right)   + O_p(h) \\
		\end{eqnarray*}
		where $\lambda_{li} = \Lambda_i(s_l) $ and $\Upsilon_{ll'}=\frac{1}{N} \sum_{i=1}^N \left( u_{li1} + u_{li2} + u_{li3} + u_{li4} \right) \left( u_{l'i1} + u_{l'i2} + u_{l'i3} + u_{l'i4} \right)^\T $. 
		
		Next is to analyze $ \frac{1}{N}\sum_{i=1}^N u_{l'ia}\lambda_{li}^\T $
		, $a=1,2,3,4$. Let $v_{li} = (H^{s_l})^\T  \frac{\sqrt{Th}}{T(s_l)} \left(\frac{1}{N} \sum_{k = 1}^N \lambda_{lk} \lambda^\T_{lk} \right) \left((F^{s_l})^\T  e_i^{s_l} \right)$.
		
		\begin{eqnarray*}
			\frac{1}{N}\sum_{i=1}^N  u_{li1} \lambda_{l'i}^\T  &=&  (V_r^{s_l})^{-1} \frac{1}{N^2}\sum_{i=1}^N \sum_{k=1}^N  \left(  \bar{\Lambda}_k(s_l) - (H^{s_l})^\T  \lambda_{lk} \right) \frac{\+E\left[(e^{s_l}_k)^\T  e^{s_l}_i\right]}{T(s_l)}   \lambda^\T_{l'i}    \\
			&+&   (V_r^{s_l})^{-1} (H^{s_l})^\T   \frac{1}{N^2}\sum_{i=1}^N \sum_{k=1}^N \lambda_{lk} \frac{\+E\left[(e^{s_l}_k)^\T  e^{s_l}_i\right]}{T(s_l)}   \lambda^\T_{l'i} \\
			&=& \left( (V_r^{s_l})^{-1} \frac{1}{N^2}\sum_{i=1}^N \sum_{k=1}^N  \left( \frac{1}{\sqrt{Th}} v_{lk} \right) \frac{\+E\left[(e^{s_l}_k)^\T  e^{s_l}_i\right]}{T(s_l)}  \lambda^\T_{l'i} \right. \\
			&+& \left. (V_r^{s_l})^{-1} (H^{s_l})^\T   \frac{1}{N^2}\sum_{i=1}^N \sum_{k=1}^N \lambda_{lk} \frac{\+E\left[(e^{s_l}_k)^\T  e^{s_l}_i\right]}{T(s_l)}   \lambda^\T_{l'i} \right) (1+o_p(1)) \\
			&=& \left( (V_r^{s_l})^{-1} (H^{s_l})^\T   \frac{1}{N^2}\sum_{i=1}^N \sum_{k=1}^N \lambda_{lk} \frac{\+E\left[(e^{s_l}_k)^\T  e^{s_l}_i\right]}{T(s_l)}   \lambda^\T_{l'i} \right) (1+o_p(1)) = O_p\left( \frac{1}{N} \right).
		\end{eqnarray*}
		The third equality follows from $\frac{1}{N^2}\sum_{i=1}^N \sum_{k=1}^N  \left( \frac{1}{\sqrt{Th}} v_{lk} \right) \frac{\+E\left[(e^{s_l}_k)^\T  e^{s_l}_i\right]}{T(s_l)}  \lambda^\T_{l'i} $\\ $=  O_p(1) \cdot \frac{1}{N}\sum_{i=1}^N \left\lbrace \frac{1}{N T(s_l)}  \sum_{k=1}^N  \left((F^{s_l})^\T  e_k^{s_l} \right) \frac{\+E\left[(e^{s_l}_k)^\T  e^{s_l}_i\right]}{T(s_l)} \right\rbrace \lambda^\T_{l'i}$ $=O_p\left( \frac{1}{N\sqrt{T(s_l)}} \right)$ by Assumption \ref{ass_loading}.1, Assumption \ref{ass_err}.3 ($\frac{\+E\left[(e^{s_l}_k)^\T  e^{s_l}_i\right]}{T(s_l)} \rightarrow \+E[e_{kt}e_{it}|S_t=s_l] = \tau_{ki,t}$) and Assumption \ref{ass_mom}.4. Also, \\ $ \frac{1}{N^2}\sum_{i=1}^N \sum_{k=1}^N \lambda_{lk} \frac{\+E\left[(e^{s_l}_k)^\T  e^{s_l}_i\right]}{T(s_l)}   \lambda^\T_{l'i} = \frac{1}{N^2}\sum_{i=1}^N\left\lbrace \sum_{k=1}^N \frac{\+E\left[(e^{s_l}_k)^\T  e^{s_l}_i\right]}{T(s_l)} \lambda_{lk} \right\rbrace \lambda^\T_{l'i}  = O_p\left( \frac{1}{N} \right)$ by Assumption \ref{ass_err}.3 and the fact that $\lambda_{lk}$ as a function of $s_l$ is independent of $e^{s_l}_k$ and $e^{s_l}_i$.

		\begin{eqnarray*}
			\frac{1}{N}\sum_{i=1}^N  u_{li2} \lambda_{l'i}^\T  &=&  (V_r^{s_l})^{-1} \frac{1}{N^2}\sum_{i=1}^N \sum_{k=1}^N  \left(  \bar{\Lambda}_k(s_l) - (H^{s_l})^\T  \lambda_{lk} \right) \frac{(e^{s_l}_k)^\T  e^{s_l}_i - \+E\left[(e^{s_l}_k)^\T  e^{s_l}_i\right]}{T(s_l)}  \lambda^\T_{l'i}   \\
			&+& (V_r^{s_l})^{-1} (H^{s_l})^\T  \frac{1}{N^2}\sum_{i=1}^N \sum_{k=1}^N \lambda_{lk}  \frac{(e^{s_l}_k)^\T  e^{s_l}_i - \+E\left[(e^{s_l}_k)^\T  e^{s_l}_i\right]}{T(s_l)}   \lambda^\T_{l'i} \\
			&=& \left( (V_r^{s_l})^{-1} \frac{1}{N^2}\sum_{i=1}^N \sum_{k=1}^N  \left( \frac{1}{\sqrt{Th}} v_{lk} \right) \frac{(e^{s_l}_k)^\T  e^{s_l}_i - \+E\left[(e^{s_l}_k)^\T  e^{s_l}_i\right]}{T(s_l)}  \lambda^\T_{l'i} \right.  \\
			&+& \left. (V_r^{s_l})^{-1} (H^{s_l})^\T  \frac{1}{N^2}\sum_{i=1}^N \sum_{k=1}^N \lambda_{lk}  \frac{(e^{s_l}_k)^\T  e^{s_l}_i - \+E\left[(e^{s_l}_k)^\T  e^{s_l}_i\right]}{T(s_l)}   \lambda^\T_{l'i} \right) (1+o_p(1)) \\
			&=& O_p\left( \frac{1}{N\sqrt{Th}} \right)
		\end{eqnarray*}
		by Assumption \ref{doublesum}.\ref{veeminueeelam} and \ref{doublesum}.\ref{lameeminuslam}.
		
		\begin{eqnarray*}
			\frac{1}{N}\sum_{i=1}^N  u_{li3} \lambda_{l'i}^\T  &=& (V_r^{s_l})^{-1} \frac{1}{N^2}\sum_{i=1}^N \sum_{k=1}^N  \left(  \bar{\Lambda}_k(s_l) - (H^{s_l})^\T  \lambda_{lk} \right) \frac{\lambda^\T_{lk} (F^{s_l})^\T  e_i^{s_l}}{T(s_l)} \lambda^\T_{l'i}   \\
			&+&  (V_r^{s_l})^{-1} (H^{s_l})^\T  \frac{1}{N^2}\sum_{i=1}^N \sum_{k=1}^N \lambda_{lk}  \frac{\lambda^\T_{lk} (F^{s_l})^\T  e_i^{s_l}}{T(s_l)} \lambda^\T_{l'i} \\
			&=& \left( (V_r^{s_l})^{-1} \frac{1}{\sqrt{Th}}   \left( \frac{1}{N} \sum_{k=1}^N v_{lk}\lambda^\T_{lk} \right)   \left( \frac{1}{N}\sum_{i=1}^N \frac{ (F^{s_l})^\T  e_i^{s_l}\lambda^\T_{l'i}}{T(s_l)} \right) \right. \\
			&+& \left. (V_r^{s_l})^{-1} (H^{s_l})^\T  \left(\frac{1}{N}\sum_{k=1}^N   \lambda_{lk}\lambda^\T_{lk}  \right) \left( \frac{1}{N}\sum_{i=1}^N \frac{(F^{s_l})^\T  e_i^{s_l}\lambda^\T_{l'i} }{T(s_l)}  \right) \right) (1+o_p(1))\\
			&=& \left( (V_r^{s_l})^{-1} (H^{s_l})^\T  \left(\frac{1}{N}\sum_{k=1}^N   \lambda_{lk}\lambda^\T_{lk}  \right) \left( \frac{1}{N}\sum_{i=1}^N \frac{(F^{s_l})^\T  e_i^{s_l}\lambda^\T_{l'i} }{T(s_l)}  \right)\right) (1+o_p(1)) \\
			&=& O_p\left( \frac{1}{\sqrt{NTh}} \right) 
		\end{eqnarray*}
		by Assumption \ref{doublesum}.\ref{vlamfelam} and Assumption \ref{ass_loading}.1 and $(V_r^{s_l})^{-1} \frac{1}{\sqrt{Th}}   \left( \frac{1}{N} \sum_{k=1}^N v_{lk}\lambda^\T_{lk} \right)   \left( \frac{1}{N}\sum_{i=1}^N \frac{ (F^{s_l})^\T  e_i^{s_l}\lambda^\T_{l'i}}{T(s_l)} \right) = O_p\left( \frac{1}{NTh} \right) $.
		
		\begin{eqnarray*}
			\frac{1}{N}\sum_{i=1}^N  u_{li4} \lambda_{l'i}^\T  &=& (V_r^{s_l})^{-1} \frac{1}{N^2}\sum_{i=1}^N \sum_{k=1}^N  \left(  \bar{\Lambda}_k(s_l) - (H^{s_l})^\T  \lambda_{lk} \right)\frac{(e_k^{s_l})^\T  F^{s_l} \lambda_{li} }{T(s_l)} \lambda^\T_{l'i}   \\
			&+& (V_r^{s_l})^{-1} (H^{s_l})^\T  \frac{1}{N^2}\sum_{i=1}^N \sum_{k=1}^N \lambda_{lk}  \frac{(e_k^{s_l})^\T  F^{s_l} \lambda_{li} }{T(s_l)}  \lambda^\T_{l'i} \\
			&=& \left( (V_r^{s_l})^{-1} (H^{s_l})^\T   \left(\frac{1}{N} \sum_{j = 1}^N \lambda_{lj} \Lambda^\T_{lj} \right) \left(  \frac{1}{N T(s_l)^2}  \sum_{k=1}^N  (F^{s_l})^\T  e_k^{s_l} (e_k^{s_l})^\T  F^{s_l}  \right)  \left( \frac{1}{N} \sum_{i=1}^N \lambda_{li}  \lambda^\T_{l'i} \right) \right.  \\
			&+& \left. (V_r^{s_l})^{-1} (H^{s_l})^\T   \left( \frac{1}{N} \sum_{k=1}^N  \frac{\lambda_{lk} (e_k^{s_l})^\T  F^{s_l}  }{T(s_l)} \right) \left( \frac{1}{N} \sum_{i=1}^N \lambda_{li}  \lambda^\T_{l'i} \right) \right) (1+o_p(1)) \\ 
			&=& O_p\left( \frac{1}{Th} \right), 
		\end{eqnarray*}
		where the first term is $O_p\left( \frac{1}{Th} \right)$ and the second term is $O_p\left( \frac{1}{\sqrt{NTh}} \right) $.
		
		Derivations for $\frac{1}{N}\sum_{i=1}^N   \lambda_{li}u_{l^{\prime}ia}^T$, $a=1,2,3,4$ are similar. 
		
		In $\Upsilon_{ll'}$, the leading term is 
		\begin{eqnarray*}
			\frac{1}{N}\sum_{i=1}^N  u_{li3}  u_{l'i3}^T  &=& \left(  (V_r^{s_l})^{-1} (H^{s_l})^\T  \left(\frac{1}{N}\sum_{i=1}^N   \lambda_{lk}\lambda^\T_{lk}  \right) \left( \frac{1}{N}\sum_{i=1}^N \frac{(F^{s_l})^\T  e_i^{s_l} }{T(s_l)} \left(\frac{ (F^{s_{l'}})^T e_i^{s_{l'}} }{T(s_{l'})}  \right)^T \right) \right. \\
			&& \times \left. \left(\frac{1}{N}\sum_{i=1}^N   \lambda_{l'k}\Lambda^\T_{l'k}  \right)H^{s_{l'}}(V_r^{s_{l'}})^{-1} \right) (1+o_p(1)) \\
			&=& O_p\left(\frac{1}{Th}\right)
		\end{eqnarray*}
		
		Let 
		\begin{align*}
		w_{u_l, \lambda_{l'}, 1} =&  (V_r^{s_l})^{-1} (H^{s_l})^\T  \left(\frac{1}{N}\sum_{k=1}^N   \lambda_{lk}\lambda^\T_{lk}  \right) \left( \frac{1}{N}\sum_{i=1}^N \frac{ (F^{s_l})^\T  e_i^{s_l}\lambda^\T_{l'i} }{T(s_l)}  \right)  H^{s_{l'}} , \\
		w_{u_l, \lambda_{l'}, 2} =&  (V_r^{s_l})^{-1} (H^{s_l})^\T  \left( \frac{1}{N} \sum_{k=1}^N  \frac{\lambda_{lk} (e_k^{s_l})^\T  F^{s_l}  }{T(s_l)} \right) \left( \frac{1}{N} \sum_{i=1}^N \lambda_{li}  \lambda^\T_{l'i} \right)  H^{s_{l'}}, \\
		w_{\lambda_l, u_{l'}, 1} =&  (H^{s_l})^\T  \left( \frac{1}{N} \sum_{k=1}^N  \frac{\lambda_{lk} (e_k^{s_{l'}})^\T  F^{s_{l'}}  }{T(s_{l'})} \right)  \left(\frac{1}{N}\sum_{k=1}^N   \lambda_{l'k}\Lambda^\T_{l'k}  \right) H^{s_{l'}} (V_r^{s_{l'}})^{-1},\\ 
		w_{\lambda_l, u_{l'}, 2} =&  (H^{s_l})^\T    \left(\frac{1}{N}\sum_{k=1}^N   \lambda_{lk}\Lambda^\T_{l'k}  \right) \left( \frac{1}{N}\sum_{i=1}^N \frac{ (F^{s_{l'}})^T e_i^{s_{l'}}\lambda^\T_{l'i} }{T(s_{l'})}  \right)  H^{s_{l'}} (V_r^{s_{l'}})^{-1},\\
		x_{u,v,p,w} =&(V_r^{s_p})^{-1} (H^{s_p})^T \left(\frac{1}{N}\sum_{i=1}^N   \lambda_{pk}\Lambda^\T_{uk}  \right) \left( \frac{1}{N}\sum_{i=1}^N \frac{ (F^{s_{u}})^T e_i^{s_{u}} }{T(s_{u})} \left(\frac{(F^{s_{v}})^T e_i^{s_{v}} }{T(s_{v})}  \right)^T \right) \left(\frac{1}{N}\sum_{i=1}^N   \lambda_{vk}\Lambda^\T_{wk}  \right) \\&H^{s_{w}}(V_r^{s_{w}})^{-1},\\
		z_{p,w} =& (V_r^{s_p})^{-1} (H^{s_p})^T \frac{1}{N^2}\sum_{i=1}^N \sum_{k=1}^N \lambda_{pk}  \frac{(e^{s_p}_k)^T e^{s_p}_i}{T(s_p)}   \Lambda^\T_{wi} H^{s_w}, \\
		x_{l,l'} =& x_{l, l', l, l'}  + x_{l,l,l, l'} + x_{l, l',l',l'},\\
		y_{l,l'} =& z_{l,l'}+z_{l',l}.
		\end{align*}
		
		Given $\sqrt{N}/(Th) \rightarrow 0$, $\sqrt{Th}/N \rightarrow 0$, $Nh^2 \rightarrow 0$ and $Th^3 \rightarrow 0$ (Theorem \ref{thm_loading} holds and $O_p(h)$ in Equation (\ref{eqn-decom-lambda}) does not dominate), the limiting distribution of $\frac{1}{N}\bar{\Lambda}^\T_l\bar{\Lambda}_{l'} $ ($l, l'=1,2$) is
		\begin{eqnarray*}
			\frac{1}{N}\bar{\Lambda}^\T_l\bar{\Lambda}_{l'} - \left(  \frac{1}{N}\sum_{i=1}^N (H^{s_l})^\T  \lambda_{li} \lambda_{l'i}^\T  H^{s_{l'}} + x_{l, l'} + y_{l, l'}  \right) = \left( w_{u_l, \lambda_{l'}, 1} + w_{u_l, \lambda_{l'}, 2} + w_{ \lambda_{l}, u_{l'}, 1} + w_{\lambda_{l}, u_{l'},  2} \right) (1+o_p(1))
		\end{eqnarray*}
		and we have
		\begin{eqnarray*}
			w_{u_l, \lambda_{l'}, 1} &\xrightarrow{p}& \left(V^{s_l} \right)^{-1}  ((Q^{s_l})^\T )^{-1} \Sigma_{\lambda_l,\lambda_l} \mu_{l,l'} (Q^{s_{l'}} )^{-1} \\
			w_{u_l, \lambda_{l'}, 2} &\xrightarrow{p}&  \left(V^{s_l} \right)^{-1} ((Q^{s_l})^\T )^{-1} \mu^\T_{l,l} \Sigma_{\lambda_l,\lambda_{l'}} (Q^{s_{l'}} )^{-1} \\
			w_{\lambda_{l}, u_{l'}, 1} &\xrightarrow{p}& ((Q^{s_l})^\T )^{-1} \mu^\T_{l^{\prime},l} \Sigma_{\lambda_{l'},\lambda_{l'}} (Q^{s_{l'}})^{-1} \left( V^{s_{l'}} \right)^{-1}  \\
			w_{\lambda_{l}, u_{l'}, 2} &\xrightarrow{p}& ((Q^{s_l})^\T )^{-1}  \Sigma_{\lambda_l,\lambda_{l'}} \mu_{l',l'} (Q^{s_{l'}})^{-1} \left( V^{s_{l'}} \right)^{-1},  
		\end{eqnarray*}
		where $\mu_{l,l'} = \frac{1}{N T(s_l)}(F^{s_l})^\T  e^{s_l}\Lambda^\T_{l'} = \frac{1}{N}\sum_{i=1}^N \frac{(F^{s_l})^\T  e_i^{s_l}\lambda^\T_{l'i} }{T(s_l)}$.
		
		For $w_{u_l, \lambda_{l'}, 1} \xrightarrow{p} \left(V^{s_l} \right)^{-1}  ((Q^{s_l})^\T )^{-1} \Sigma_{\lambda_l,\lambda_l} \mu_{l,l'} (Q^{s_{l'}} )^{-1}$, let $M_{l,l',1} =  \left(V^{s_l} \right)^{-1}  ((Q^{s_l})^\T )^{-1} \Sigma_{\lambda_l,\lambda_l} $ and $M_{l,l',2} = (Q^{s_{l'}} )^{-1}$, we have
		
		\begin{eqnarray*}
			\tvec \left( w_{u_l, \lambda_{l'}, 1} \right) \xrightarrow{p} \left( M_{l,l',2}^\T  \otimes M_{l,l',1}  \right) \tvec \left( \mu_{l,l'} \right)
		\end{eqnarray*}
		
		For $w_{u_l, \lambda_{l'}, 2} \xrightarrow{p} \left(V^{s_l} \right)^{-1} ((Q^{s_l})^\T )^{-1} \mu^\T_{l,l} \Sigma_{\lambda_l,\lambda_{l'}} (Q^{s_{l'}} )^{-1} $, let $M_{l,l',3} = \left(V^{s_l} \right)^{-1} ((Q^{s_l})^\T )^{-1} $ and $M_{l,l',4} = \Sigma_{\lambda_l,\lambda_{l'}} (Q^{s_{l'}} )^{-1} $, we have
		\begin{eqnarray*}
			\tvec \left( w_{u_l, \lambda_{l'}, 2} \right) \xrightarrow{p} \left( M_{l,l',3} \otimes M_{l,l',4}^\T   \right) \tvec \left( \mu_{l,l} \right)
		\end{eqnarray*}
		
		For $w_{\lambda_{l}, u_{l'}, 1} \xrightarrow{p}((Q^{s_l})^\T )^{-1}  \mu^\T_{l^{\prime},l} \Sigma_{\lambda_{l'},\lambda_{l'}} (Q^{s_{l'}})^{-1} \left( V^{s_{l'}} \right)^{-1}  $, let $M_{l,l',5} = ((Q^{s_l})^\T )^{-1}$ and \\ $M_{l,l',6} = \Sigma_{\lambda_{l'},\lambda_{l'}} (Q^{s_{l'}})^{-1} \left( V^{s_{l'}} \right)^{-1}  $,  we have
		\begin{eqnarray*}
			\tvec \left( w_{\lambda_{l}, u_{l'}, 1} \right) \xrightarrow{p} \left( M_{l,l',5} \otimes M_{l,l',6}^\T   \right) \tvec \left( \mu_{l',l} \right)
		\end{eqnarray*}
		
		For $w_{\lambda_{l}, u_{l'}, 2} \xrightarrow{p}  ((Q^{s_l})^\T )^{-1}  \Sigma_{\lambda_l,\lambda_{l'}} \mu_{l',l'} (Q^{s_{l'}})^{-1} \left( V^{s_{l'}} \right)^{-1}   $, let $M_{l,l',7} = ((Q^{s_l})^\T )^{-1} \Sigma_{\lambda_l,\lambda_{l'}} $ and $M_{l,l',8} = (Q^{s_{l'}})^{-1} \left( V^{s_{l'}} \right)^{-1}  $, we have
		\begin{eqnarray*}
			\tvec \left( w_{\lambda_{l}, u_{l'}, 2} \right) \xrightarrow{p} \left( M_{l,l',8}^\T \otimes M_{l,l',7}  \right) \tvec \left( \mu_{l',l'} \right) 
		\end{eqnarray*}
		
		As a result, 
		\begin{eqnarray*}
			&& \tvec \left( w_{u_l, \lambda_{l'}, 1} + w_{u_l, \lambda_{l'}, 2} + w_{ \lambda_{l}, u_{l'}, 1} + w_{\lambda_{l}, u_{l'},  2} \right) \\ 
			&\xrightarrow{p}& \left( M_{l,l',2}^\T  \otimes M_{l,l',1}  \right) \tvec \left( \mu_{l,l'} \right) + \left( M_{l,l',3} \otimes M_{l,l',4}^\T   \right) \tvec \left( \mu_{l,l} \right) \\
			&& + \left( M_{l,l',5} \otimes M_{l,l',6}^\T   \right) \tvec \left( \mu_{l',l} \right) + \left( M_{l,l',8}^\T \otimes M_{l,l',7}  \right) \tvec \left( \mu_{l',l'} \right) \\
			&\triangleq&  C_{l,l'}B
		\end{eqnarray*}
		where $B$ is defined as $B = \begin{bmatrix}
		\tvec\left( \mu_{1,1} \right) \\ \tvec\left( \mu_{1,2} \right) \\ \tvec\left( \mu_{2,1} \right) \\ \tvec\left( \mu_{2,2} \right) 
		\end{bmatrix}$ and
		
		\begin{eqnarray*}
			C_{1,1} &=& \begin{bmatrix}
				M_{1,1,2}^\T  \otimes M_{1,1,1} + M_{1,1,3} \otimes M_{1,1,4}^\T  + M_{1,1,5} \otimes M_{1,1,6}^\T  + M_{1,1,8}^\T  \otimes M_{1,1,7} & 0 & 0 & 0   
			\end{bmatrix} \\
			C_{1,2} &=& \begin{bmatrix}
				M_{1,2,3} \otimes M_{1,2,4}^\T  & M_{1,2,2}^\T  \otimes M_{1,2,1} & M_{1,2,5} \otimes M_{1,2,6}^\T  & M_{1,2,8}^\T  \otimes M_{1,2,7}
			\end{bmatrix} \\
			C_{2,1} &=& \begin{bmatrix}
				M_{2,1,8}^\T  \otimes M_{2,1,7} & M_{2,1,5} \otimes M_{2,1,6}^\T  & M_{2,1,2}^\T  \otimes M_{2,1,1} & M_{2,1,3} \otimes M_{2,1,4}^\T 
			\end{bmatrix} \\
			C_{2,2} &=& \begin{bmatrix}
				0 & 0 & 0& M_{2,2,2}^\T  \otimes M_{2,2,1} + M_{2,2,3} \otimes M_{2,2,4}^\T  + M_{2,2,5} \otimes M_{2,2,6}^\T  + M_{2,2,8}^\T  \otimes M_{2,2,7}.
			\end{bmatrix}
		\end{eqnarray*}
		
		From assumption in theorem 5 that $\sqrt{NTh}B = \xrightarrow{d} N(0, \Sigma_{B, B})$, and let $D = \begin{bmatrix}
		C_{1,1} \\ C_{1,2} \\ C_{2,1} \\ C_{2,2}
		\end{bmatrix}$,
		
		\begin{eqnarray*}
			\sqrt{NTh} \left( 
			\begin{bmatrix}
				\tvec\left(  \frac{1}{N} \bar{\Lambda}^\T_1\bar{\Lambda}_{1}\right)  \\ \tvec\left( \frac{1}{N}\bar{\Lambda}^\T_1\bar{\Lambda}_{2} \right)  \\ \tvec\left(  \frac{1}{N}\bar{\Lambda}^\T_2\bar{\Lambda}_{1} \right)  \\ \tvec\left(  \frac{1}{N}\bar{\Lambda}^\T_2\bar{\Lambda}_{2} \right) 
			\end{bmatrix} - 
			\begin{bmatrix}
				\tvec\left( \frac{1}{N} (H^{s_1})^\T  \Lambda^\T_1 \Lambda_1 H^{s_{1}} \right) \\ \tvec\left( \frac{1}{N} (H^{s_1})^\T  \Lambda^\T_1 \Lambda_2 H^{s_{2}} \right) \\ \tvec\left( \frac{1}{N} (H^{s_2})^\T  \Lambda^\T_2 \Lambda_1  H^{s_{1}} \right) \\ \tvec\left( \frac{1}{N} (H^{s_1})^\T  \Lambda^\T_2 \Lambda_2 H^{s_{2}} \right)
			\end{bmatrix} - \begin{bmatrix}
				\tvec\left(  x_{1,1} + y_{1,1}\right) \\ \tvec\left(  x_{1,2}+ y_{1,2} \right) \\ \tvec\left( x_{2,1} + y_{2,1}\right) \\ \tvec\left(   x_{2,2}+ y_{2,2} \right)
			\end{bmatrix} \right)  &\xrightarrow{d}& N\left( 0, D \Sigma_{B,B} D^\T \right)
		\end{eqnarray*}
		
		Let $G_1 = \frac{1}{N} (H^{s_1})^\T  \Lambda^\T_1 \Lambda_1 H^{s_{1}} $, $G_2 =  \frac{1}{N} (H^{s_1})^\T  \Lambda^\T_1 \Lambda_2 H^{s_{2}} $, $G_3 = \frac{1}{N} (H^{s_2})^\T  \Lambda^\T_2 \Lambda_1  H^{s_{1}} $, $G_4 =  \frac{1}{N} (H^{s_2})^\T  \Lambda^\T_2 \Lambda_2 H^{s_{2}} $ and define function $f$ to be
		\begin{eqnarray*}
			f \left( \begin{bmatrix}
				G_1 \\ G_2 \\ G_3 \\ G_4 
			\end{bmatrix} \right) = \tr\left( G_1^{-1} G_2 G_4^{-1} G_3 \right).
		\end{eqnarray*}
		$f(\cdot)$ has partial derivative
		\begin{eqnarray*}
			\frac{\partial f}{\partial G_1} &=& -(G_1^{-1} G_2 G_4^{-1} G_3 G_1^{-1} )^\T \\
			\frac{\partial f}{\partial G_2} &=& G_1^{-1} G_2 G_4^{-1} \\
			\frac{\partial f}{\partial G_3} &=& G_4^{-1} G_3 G_1^{-1} \\
			\frac{\partial f}{\partial G_4} &=& -(G_4^{-1} G_3 G_1^{-1} G_2 G_4^{-1} )^\T
		\end{eqnarray*}
		
		Let 
		$\xi = \begin{bmatrix}
		\tvec\left( \frac{\partial f}{\partial G_1} \right) \\ \tvec\left( \frac{\partial f}{\partial G_2} \right) \\ \tvec\left( \frac{\partial f}{\partial G_3} \right) \\ \tvec\left( \frac{\partial f}{\partial G_4} \right)
		\end{bmatrix}$, $\hat{\rho}$ is defined as 
		$\hat{\rho} = \tr \left\lbrace \left( \frac{1}{N}\bar{\Lambda}^\T_1\bar{\Lambda}_1 \right)^{-1} \left( \frac{1}{N}\bar{\Lambda}^\T_1\bar{\Lambda}_2 \right) \left( \frac{1}{N}\bar{\Lambda}^\T_2\bar{\Lambda}_2 \right)^{-1} \left( \frac{1}{N}\bar{\Lambda}^\T_2\bar{\Lambda}_1 \right) \right\rbrace $, $b$ is defined as $b = \begin{bmatrix}
		\tvec\left(  x_{1,1} + y_{1,1}\right) \\ \tvec\left(  x_{1,2}+ y_{1,2} \right) \\ \tvec\left( x_{2,1} + y_{2,1}\right) \\ \tvec\left(   x_{2,2}+ y_{2,2} \right)
		\end{bmatrix}$, and $\rho$ is defined as $\rho = \tr( G_1^{-1} G_2 G_4^{-1} G_3)$.\footnote{When $\Lambda_1, \Lambda_2 \in \mathbb{R}^{N \times r}$, and $\Lambda_1 = \Lambda_2 G$ for some matrix $G$, $\rho = \tr( G_1^{-1} G_2 G_4^{-1} G_3) = \tr((\Lambda^\T_1 \Lambda_1)^{-1} (\Lambda^\T_1 \Lambda_2) (\Lambda^\T_2 \Lambda_2)^{-1} (\Lambda^\T_2 \Lambda_1)) = \tr(I_r) = r$.}

		If $\sqrt{N}/(Th) \rightarrow 0$, $\sqrt{Th}/N \rightarrow 0$, $Nh^2 \rightarrow 0$ and $Th^3 \rightarrow 0$, we have 
		\begin{eqnarray*}
			\sqrt{NTh} (\hat{\rho} - r - \xi^\T  b) \xrightarrow{d} N(0, \xi^\T  D \Sigma_{B, B} D^\T \xi)
		\end{eqnarray*}

		The feasible test statistic is
		\begin{align*}
		\sqrt{NTh} \frac{(\hat{\rho} - r - \hat \xi^\T  \hat b)}{\sqrt{\hat \xi^\T  \hat D \hat \Sigma_{B, B} \hat D^\T \hat \xi} },
		\end{align*}
		where $\hat \xi$, $\hat b$, $\hat D$ and $\hat \Sigma_{B, B}$ are defined in Lemma \ref{consistentcovariance}. It is also shown in Lemma \ref{consistentcovariance} that $\hat{\xi} = \xi + O_p(1/\delta_{NT,h})$, $\hat b = b + O_p(\max(1/(T\delta_{NT,h}), 1/(N\delta_{NT,h})))$,  $\hat D = D + o_p(1)$ and $\hat \Sigma_{B, B} = \Sigma_{B, B} + o_p(1)$. We have 
		\[\hat \xi \hat b = \xi b + b O_p(1/\delta_{NT,h}) + \xi O_p(\max(1/(T \delta_{NT, h}), 1/(N \delta_{NT, h}))) + o_p(\max(1/(T \delta_{NT, h}), 1/(N \delta_{NT, h}))).\]
		
		Since $b = O_p(\max(1/N, 1/(Th)))$, when $\sqrt{NTh} O_p(\max(1/(T\delta_{NT,h}), 1/(N\delta_{NT,h}))) = o_p(1)$, equivalent to, $\sqrt{Th}/N \rightarrow 0$  and $\sqrt{N}/T \rightarrow 0$ (satisfied by assumptions about the rate conditions), the estimation errors multiplied  by $\sqrt{NTh}$ are $o_p(1)$. Thus, we have the feasible test statistic is asymptotically $N(0,1)$ distributed under $\mathcal{H}_0$. From Lemma \ref{lemma:rho}, the feasible test statistic diverges to $-\infty$ with probability 1 under $\mathcal{H}_1$.
		
	\end{proof}

	\begin{proof}[Proof of Lemma \ref{lemma:rho}]
		From the proof of Theorem \ref{thm_rho} and $NTh^3 \rightarrow 0$, it is shown that 
		$$
		\hat{\rho} = \tilde{\rho} + O_p\left( \frac{1}{\sqrt{NTh}} \right) + O_p\left( \frac{1}{Th} \right) + O_p\left( \frac{1}{N} \right),
		$$
		where $\tilde{\rho} = tr\left( \left( \frac{1}{N} (H^{s_1})^\T  \Lambda^\T_1 \Lambda_1 H^{s_{1}} \right)^{-1} \left(\frac{1}{N} (H^{s_1})^\T  \Lambda^\T_1 \Lambda_2 H^{s_{2}} \right) \left(  \frac{1}{N} (H^{s_2})^\T  \Lambda^\T_2 \Lambda_2 H^{s_{2}}\right)^{-1} \left( \frac{1}{N} (H^{s_2})^\T  \Lambda^\T_2 \Lambda_1  H^{s_{1}}\right) \right)$. 
		From the property of trace,
		$$\tilde{\rho} = tr\left( \left( \frac{1}{N}  \Lambda^\T_1 \Lambda_1  \right)^{-1} \left(\frac{1}{N}  \Lambda^\T_1 \Lambda_2  \right) \left(  \frac{1}{N} \Lambda^\T_2 \Lambda_2 \right)^{-1} \left( \frac{1}{N} \Lambda^\T_2 \Lambda_1 \right) \right).$$
		
		By Assumption (\ref{ass_lamjoint}), and delta method, similar as the proof of Theorem \ref{thm_rho}, 
		we have 
		$$
		\sqrt{N} \left(\tilde{\rho} - \bar{\rho} \right) \rightarrow N(0, \xi^\T  \Pi \xi),
		$$
		where $\xi = \begin{bmatrix}
		vec\left( -(G_1^{-1} G_2 G_4^{-1} G_3 G_1^{-1})^{\top} \right) \\ vec\left( G_1^{-1} G_2 G_4^{-1} \right) \\ vec\left( G_4^{-1} G_3 G_1^{-1} \right) \\ vec\left( -(G_4^{-1} G_3 G_1^{-1} G_2 G_4^{-1} )^{\top} \right) \end{bmatrix}$, $G_1 = \Sigma_{\Lambda_1, \Lambda_1}$, $G_2 = \Sigma_{\Lambda_1, \Lambda_2}$, $G_3 = \Sigma_{\Lambda_2, \Lambda_1}$, $G_4 = \Sigma_{\Lambda_2, \Lambda_2}$. Together with $$
		\hat{\rho} = \tilde{\rho} + O_p\left( \frac{1}{\sqrt{NTh}} \right) + O_p\left( \frac{1}{Th} \right) + O_p\left( \frac{1}{N} \right),
		$$ we have
		$$
		\sqrt{N} \left(\hat{\rho} - \bar{\rho} \right) \rightarrow N(0, \xi^\T  \Pi \xi)
		$$
	\end{proof}
	
	\subsection{Consistent Estimators for Terms in Theorem 2-4}
	Let the estimators for the projected and unprojected idiosyncratic components be $\hat e_{it}^s = X_{it}^s - \hat C_{it,s}^s = X_{it} - \hat \Lambda_i(s) \hat F_t^s$ and $\hat e_{it} = X_{it} - \hat C_{it,s} = X_{it} - \hat \Lambda_i(s) \hat F_t$. In Lemmas \ref{lemma:estimator-thm2-4}.1 and \ref{lemma:estimator-thm2-4}.3, we only consider the $t$s that satisfy $\frac{1}{\sqrt{N}} \sum_{i=1}^N \norm{\Lambda_i(S_t) - \Lambda_i(s)} = o_p(1)$. Thus, from (the proof of) Theorem \ref{thm_common}, $\hat e_{it}^s$ is a consistent estimator for $e_{it}^s$ for all $i$ and $t$ and $\hat e_{it}$ is a consistent estimator for $e_{it}$ for all $i$ and for $t$ where $t$ satisfies $\frac{1}{\sqrt{N}} \sum_{i=1}^N \norm{\Lambda_i(S_t) - \Lambda_i(s)} = o_p(1)$. 
	\begin{lemma}\label{lemma:estimator-thm2-4}
		As $N, Th \rightarrow \infty$, $h \rightarrow 0$, $Nh^2 \rightarrow 0$, $Th^3 \rightarrow 0$, under Assumption \ref{Ass:Ident}-\ref{ass_eigen},
		\begin{enumerate}
			\item \label{pi} Let $\Sigma_{e_t} = \+E[e_t e_t^T]$. Assume there are finitely many nonzeros in each row of $\Sigma_{e_t}$ and we know the set $\Omega_{e_t}$ of nonzero indices in $\Sigma_{e_t}$.  In Theorem \ref{thm_factor}, the consistent estimator for $\Pi_t = (V^s)^{-1}Q^s \Gamma_t^s (Q^s)^\T (V^s)^{-1}$, which is
			
			$$
			\Pi_t = plim\, (V_r^s)^{-1} \left(\frac{(\hat{F}^s)^\T F^s}{T(s)}  \right) \left( \frac{1}{N} \sum_{(i,j) \in \Omega_{e_t}} \Lambda_i(s)\Lambda^\T_j(s) \+E(e_{it}e_{jt}) \right) \left(\frac{(F^s)^\T \hat{F}^s }{T(s)} \right)  (V_r^s)^{-1},
			$$
			is 
			$$
			\hat{\Pi}_t = (V_r^s)^{-1} \left( \frac{1}{N} \sum_{(i,j) \in \Omega_{e_t}} \hat{\Lambda}_i(s)\hat{\Lambda}^\T_j(s) \hat{e}_{it} \hat{e}_{jt} \right) (V_r^s)^{-1}. 
			$$
			
			\item \label{theta} Let $\Sigma_{\und{e}_i} = \+E[\und{e}_i \und{e}_i^T]$. Assume there are finitely many nonzeros in each row of $\Sigma_{\und{e}_i}$ and we know the set $\Omega_{\und{e}_i}$ of nonzero indices in $\Sigma_{\und{e}_i}$.  In Theorem \ref{thm_loading}, the consistent estimator for $\Theta^s_i =  ((Q^s)^\T )^{-1} \Phi^s_i (Q^s)^{-1}$, which is 
			$$
			\Theta_i^s = plim\, ((Q^s)^\T )^{-1} \left( \frac{1}{T} \sum_{t = 1}^T \frac{R_K}{\pi(s)} \gamma_{FF}^s(t, t)   + \frac{h}{T} \sum_{(t,u) \in \Omega_{\und{e}_i}, t\neq u} \gamma_{FF}^s(t, u) \right) (Q^s)^{-1}
			$$
			is 
			$$
			\hat{\Theta}_i = \frac{Th}{T(s)^2}  \sum_{(t,u) \in \Omega_{\und{e}_i}}\hat{F}^s_t (\hat{F}^s_u)^\T  \hat{e}^s_{it} \hat{e}^s_{iu}.
			$$\footnote{We could also use HAC estimator for $\Phi_i^s$ on $\hat{F}^s_t \hat{e}^s_{it}$ similar as \cite{bai2003inferential} and proved similar to \cite{newey1994automatic}.}
			\item Assume there are finitely many nonzeros in each row of $\Sigma_{e_t}$ and $\Sigma_{\und{e}_i}$ and we know the sets $\Omega_{e_t}$ and $\Omega_{\und{e}_i}$ of nonzero indices in $\Sigma_{e_t}$ and $\Sigma_{\und{e}_i}$. In Theorem 4, the consistent estimator for $V_{it,s}$ is 
			$$
			\hat{V}_{it,s} = (\hat{\Lambda}_i(s))^\T  \left(\frac{1}{N} \sum_{i=1}^N (\hat{\Lambda}_i(s))^\T  \hat{\Lambda}_i(s)\right)^{-1} \left( \frac{1}{N} \sum_{l = 1}^N \hat{\Lambda}_i(s)\hat{\Lambda}^\T_i(s) \hat{e}^2_{it} \right)  \left(\frac{1}{N} \sum_{i=1}^N (\hat{\Lambda}_i(s))^\T  \hat{\Lambda}_i(s)\right)^{-1}  \hat{\Lambda}_i(s)
			$$
			and the consistent estimator for $W_{it,s}$ is 
			$$
			\hat{W}_{it,s} = (\hat{F}_t)^\T \hat{\Theta}_i (\hat{F}_t)^\T,
			$$
			where $\hat{\Theta}_i$ is defined in Lemma \ref{lemma:estimator-thm2-4}.\ref{theta}.

		\end{enumerate}
	\end{lemma}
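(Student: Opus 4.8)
The plan is to prove all three parts by a single plug-in strategy: replace the unobserved population objects — the loadings $\Lambda_i(s)$, the factors $F_t$, and the idiosyncratic terms $e_{it}$, $e_{it}^s$ — by their feasible counterparts $\hat\Lambda_i(s)$, $\hat F_t$, $\hat e_{it}$, $\hat e_{it}^s$ together with the appropriate rotations, and then show that the substitution error is asymptotically negligible. The first ingredient I would establish is that $\hat e_{it}$ and $\hat e_{it}^s$ are uniformly good proxies: writing $\hat e_{it} = e_{it} - (\hat C_{it,s} - C_{it,s})$ and invoking Theorem \ref{thm_common} (which gives $\hat C_{it,s} - C_{it,s} = O_p(\delta_{NT,h}^{-1})$ for the relevant $t$), together with the consistency rates of Theorems \ref{thm_factor} and \ref{thm_loading} for $\hat F_t$ and $\hat\Lambda_i(s)$ and the rotation identities $H^s = (Q^s)^{-1} + O_p(\delta_{NT,h}^{-1})$, $V_r^s \xrightarrow{P} V^s$, $\frac{(\hat F^s)^\top F^s}{T(s)}\xrightarrow{P}Q^s$ from Lemmas \ref{lemma2} and \ref{lemma4}. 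The decisive structural assumption is sparsity: because each row of $\Sigma_{e_t}$ (resp.\ $\Sigma_{\und e_i}$) has only finitely many nonzero entries and the index set $\Omega_{e_t}$ (resp.\ $\Omega_{\und e_i}$) is known, the sums defining the estimators run over a bounded number of terms, so the plug-in errors do not accumulate and a term-by-term law of large numbers applies.

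For part 1, I would substitute $\hat\Lambda_i(s) = (H^s)^{-1}\Lambda_i(s) + o_p(1)$ and $\hat e_{it} = e_{it} + o_p(1)$ into $\hat\Pi_t$, factor the rotation $(H^s)^{-1}$ out of the double sum, and use the sparsity-restricted law of large numbers to show $\frac{1}{N}\sum_{(i,j)\in\Omega_{e_t}}\Lambda_i(s)\Lambda_j^\top(s)e_{it}e_{jt}\xrightarrow{P}\frac{1}{N}\sum_{(i,j)\in\Omega_{e_t}}\Lambda_i(s)\Lambda_j^\top(s)\+E[e_{it}e_{jt}]=\Gamma_t^s$, the last equality holding because sparsity kills all off-support cross-covariances. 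Combining $(H^s)^{-1}\xrightarrow{P}Q^s$ with $V_r^s\xrightarrow{P}V^s$ then yields $\hat\Pi_t\xrightarrow{P}(V^s)^{-1}Q^s\Gamma_t^s(Q^s)^\top(V^s)^{-1}=\Pi_t$, matching the intermediate plim expression in the statement through Lemma \ref{lemma2}.

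Part 2 is where the real work lies. After substituting $\hat F_t^s = (H^s)^\top F_t^s + o_p(1)$ and $\hat e_{it}^s = e_{it}^s + o_p(1)$ and factoring out $(H^s)^\top(\cdot)H^s$, the estimator reduces, up to this rotation, to $\frac{Th}{T(s)^2}\sum_{(t,u)\in\Omega_{\und e_i}}K_s(S_t)K_s(S_u)F_t F_u^\top e_{it}e_{iu}$. The main obstacle is the kernel asymptotics that turn this doubly kernel-weighted discrete sum into the correct integral limit. Using $T(s)/T\xrightarrow{P}\pi(s)$ (Lemma \ref{lemma1}.\ref{ts}) and a change of variables $S_t=s+uh$, I would show that the diagonal block $t=u$, where $K_s^2(S_t)=h^{-2}K^2(\frac{S_t-s}{h})$, contributes $\frac{R_K}{\pi(s)}\cdot\frac{1}{T}\sum_t\gamma_{FF}^s(t,t)$, while for the off-diagonal indices the single surviving power of $h$ in $\frac{Th}{T(s)^2}$ against the two kernels leaves exactly $h\sum_{t\neq u}\gamma_{FF}^s(t,u)$; together these reproduce $\Phi_i^s$. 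Sandwiching with $(H^s)^\top\xrightarrow{P}((Q^s)^\top)^{-1}$ then gives $\hat\Theta_i\xrightarrow{P}((Q^s)^\top)^{-1}\Phi_i^s(Q^s)^{-1}=\Theta_i^s$, and I would separately verify that the kernel-bias pieces carried into $\hat e_{it}^s$ (of order $O_p(h)$) are annihilated under $Th^3\to 0$ and $Nh^2\to 0$.

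For part 3 the rotations cancel by construction, since the common component $C_{it,s}$ is identified without rotation. I would show $\frac{1}{N}\sum_i\hat\Lambda_i(s)\hat\Lambda_i(s)^\top\xrightarrow{P}Q^s\Sigma_{\Lambda(s)}(Q^s)^\top$ and, via the sparse cross-sectional sum, $\frac{1}{N}\sum_l\hat\Lambda_l(s)\hat\Lambda_l(s)^\top\hat e_{lt}^2\xrightarrow{P}Q^s\Gamma_t^s(Q^s)^\top$, so that the $Q^s$ factors in $\hat V_{it,s}$ cancel against the two inverse-moment sandwiches, leaving $\Lambda_i(s)^\top\Sigma_{\Lambda(s)}^{-1}\Gamma_t^s\Sigma_{\Lambda(s)}^{-1}\Lambda_i(s)=V_{it,s}$; similarly $\hat W_{it,s}=\hat F_t^\top\hat\Theta_i\hat F_t$ inherits consistency from part 2 together with $\hat F_t=(H^s)^\top F_t+o_p(1)$ and $H^s(H^s)^\top\xrightarrow{P}\Sigma_{F|s}^{-1}$ (Lemma \ref{lemma4}.2), yielding $F_t^\top\Sigma_{F|s}^{-1}\Phi_i^s\Sigma_{F|s}^{-1}F_t=W_{it,s}$. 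Across all three parts the single genuinely delicate point is controlling the feasible-residual substitution inside the kernel-weighted sums of part 2, where the interaction of the estimation error with the $h^{-1}$ blow-up of the kernel weights must be shown to stay $o_p(1)$ under the stated rate conditions.
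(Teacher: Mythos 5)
Your proposal is correct and follows essentially the same route as the paper's own proof: plug-in substitution of $\hat{\Lambda}_i(s)$, $\hat{F}^s_t$, $\hat{e}_{it}$ for their rotated population counterparts using the rates from Theorems \ref{thm_factor}--\ref{thm_common} and the rotation identities $H^s = (Q^s)^{-1} + O_p(1/\delta_{NT,h})$ and $H^s(H^s)^\T \rightarrow \Sigma_{F|s}^{-1}$ of Lemma \ref{lemma4}, sparsity of $\Omega_{e_t}$ and $\Omega_{\und{e}_i}$ to keep substitution errors from accumulating, and a kernel law of large numbers to identify $\Phi_i^s$. The only cosmetic difference is that the paper outsources the cross-sectional and kernel LLN steps to \cite{hansen2007asymptotic}, whereas you sketch the change-of-variables computation (diagonal terms giving $R_K/\pi(s)$, off-diagonal terms giving the $h\sum_{t\neq u}$ piece) directly; your explicit rotation-cancellation algebra in part 3 is exactly what the paper's terse final paragraph relies on.
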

	
	\begin{proof}[Proof of Lemma \ref{lemma:estimator-thm2-4}.1]
		For all $u$, $\hat{F}^s_u$ is the consistent estimator of $(H^s)^\T  F^s_u$ (without dividing $K^{1/2}_s(S_t)$) from the proof of Theorem \ref{thm_factor}; for all $i$,  $\hat{\Lambda}_i(s)$ is the consistent estimator for $(H^s)^{-1} \Lambda_i(s)$ from Theorem \ref{thm_loading}; $V_r^s$ is the consistent estimator for  $V^s$ from Lemma \ref{lemma3}.1; $\hat{e}_{it}$ is the consistent estimator for $e_{it}$ by $\hat{e}_{it} = X_{it} - \hat{C}_{it,s}$ and Theorem \ref{thm_common}. Moreover, the asymptotic convergence rate is $\delta_{NT,h}$. Similar as  Theorem 6 in \cite{bai2003inferential}, we can show
		\begin{enumerate}
			\item $\frac{1}{N} \sum_{(i,j) \in \Omega_{e_t}} \hat{\Lambda}_i(s)\hat{\Lambda}^\T_j(s) \hat{e}_{it} \hat{e}_{jt} - \frac{1}{N} \sum_{(i,j) \in \Omega_{e_t}} \hat{\Lambda}_i(s)\hat{\Lambda}^\T_j(s) e_{it} e_{jt} = O_p(1/\delta_{NT,h})$
			\item $\frac{1}{N} \sum_{(i,j) \in \Omega_{e_t}} \hat{\Lambda}_i(s)\hat{\Lambda}^\T_j(s) e_{it} e_{jt} - (H^s)^{-1} \frac{1}{N} \sum_{(i,j) \in \Omega_{e_t}} \hat{\Lambda}_i(s)\hat{\Lambda}^\T_j(s) e_{it} e_{jt} ((H^s)^{-1})^\T  = O_p(1/\delta_{NT,h})$
			\item  $\frac{1}{N} \sum_{(i,j) \in \Omega_{e_t}} \hat{\Lambda}_i(s)\hat{\Lambda}^\T_j(s) e_{it} e_{jt} -  \frac{1}{N} \sum_{(i,j) \in \Omega_{e_t}} \hat{\Lambda}_i(s)\hat{\Lambda}^\T_j(s) \+E[e_{it} e_{jt}] = O_p(1/\delta_{NT,h})$,
		\end{enumerate}
		where the last one is a special case of Theorem 2 in \cite{hansen2007asymptotic}. Together with $H^s = (Q^s)^{-1} + O_p(1/\delta_{NT,h})$ shown in Lemma \ref{lemma4}.1. From continuous mapping theorem, we have $\hat{\Pi}_t = \Pi_t + O_p(1/\delta_{NT,h})$. 
	\end{proof}
	
	\begin{proof}[Proof of Lemma \ref{lemma:estimator-thm2-4}.2]
		From the asymptotic distribution of kernel estimator (Section 3.2 in  \cite{hansen2007asymptotic}), we have 
		\begin{eqnarray*}
			&& \frac{T}{T(s)^2} \sum_{(t,u) \in \Omega_{\und{e}_i}}  K_s(S_t) K_s(S_u) F_t F_u^\T e_{it}e_{iu}\\
			&=& \left(\frac{1}{T} \sum_{t = 1}^T \frac{R_K}{\pi(s)} \gamma_{FF}^s(t, t)   + \frac{h}{T} \sum_{(t,u) \in \Omega_{\und{e}_i}, t\neq u} \gamma_{FF}^s(t, u) \right) + O(h^2) + O_p(1/\sqrt{Th}).
		\end{eqnarray*}
		Furthermore, similar as Lemma \ref{lemma:estimator-thm2-4}.\ref{pi}, we have 
		\begin{enumerate}
			\item $\frac{T}{T(s)^2} \sum_{(t,u) \in \Omega_{\und{e}_i}} \hat{F}^s_t (\hat{F}^s_u)^\T  \hat{e}^s_{it} \hat{e}^s_{iu} - \frac{T}{T(s)^2} \sum_{(t,u) \in \Omega_{\und{e}_i}} \hat{F}^s_t (\hat{F}^s_u)^\T  e^s_{it} e^s_{iu}  = O_p(1/\delta_{NT,h})$
			\item  $\frac{T}{T(s)^2} \sum_{(t,u) \in \Omega_{\und{e}_i}} \hat{F}^s_t (\hat{F}^s_u)^\T  e^s_{it} e^s_{iu} - \frac{T}{T(s)^2} \sum_{(t,u) \in \Omega_{\und{e}_i}} (H^s)^\T  F^s_t (F^s_u)^\T H^s e^s_{it} e^s_{iu} = O_p(1/\delta_{NT,h})$
		\end{enumerate}
		Recall $1/\delta_{NT,h} = \min\left(1/\sqrt{N}, 1/\sqrt{Th}, h \right)$, together with $H^s = (Q^s)^{-1} + O_p(1/\delta_{NT,h})$ and continuous mapping theorem, we have $\hat{\Theta}_i^s = \Theta_i^s + O_p(1/\delta_{NT,h})$.

	\end{proof}
	
	\begin{proof}[Proof of Lemma \ref{lemma:estimator-thm2-4}.3]
		$\hat{V}_{it,s}$ is a consistent estimator for $V_{it,s}$ followed from $\hat{\Lambda}_i(s)$ and $\hat{e}_{it}$ are the consistent estimator for $(H^s)^{-1} \Lambda_i(s)$ and $e_{it}$, and the proof of Lemma \ref{lemma:estimator-thm2-4}.\ref{pi}. $\hat{W}_{it,s}$ is a consistent estimator for $W_{it,s}$ followed from $\hat{F}^s_t$ and $\hat{e}^s_{it}$ are consistent estimators for $(H^s)^\T  F^s_t$ and $e_{it}$ the proof of Lemma \ref{lemma:estimator-thm2-4}.\ref{theta}.
	\end{proof}
	
	\subsection{Consistent Estimators for Terms in Theorem 5}
	Let the estimators for the projected idiosyncratic components be $\bar e^{s_l}_t = X^{s_l}_t - \bar \Lambda(s_l)\bar F^{s_l}_t$ and $\bar{\und{e}}^{s_l}_i = X^{s_l}_i - \bar F^{s_l} \bar \lambda_{li}$. From the proof of Theorem \ref{thm_common}, $\bar e^{s_l}_{it}$ is a consistent estimator for $e^{s_l}_{it}$ for all $i$ and $t$. 
	\begin{lemma}\label{consistentcovariance}
		As $N, Th \rightarrow \infty$, $h \rightarrow 0$, $\sqrt{N}/(Th) \rightarrow 0$, $\sqrt{Th}/N \rightarrow 0$, $Nh^2 \rightarrow 0$ and $Th^3 \rightarrow 0$, under Assumption \ref{Ass:Ident}-\ref{ass_mom}
		\begin{enumerate}
			\item Let $\hat G_1= \frac{1}{N}\bar \Lambda^{\top}_1 \bar \Lambda_1, \hat G_2= \frac{1}{N} \bar \Lambda_1^{\top} \bar \Lambda_2, \hat G_3= \frac{1}{N} \bar \Lambda_2^{\top} \bar \Lambda_1, \hat G_4= \frac{1}{N}\bar \Lambda_2^{\top} \bar \Lambda_2$ and plug $\hat{G_1}, \hat{G_2}, \hat{G_3}, \hat{G_4}$ in $\xi$ to get $\hat{\xi}$. We have $\hat G_i = G_i + O_p(1/\delta_{NT,h})$ for $i = 1, 2, 3, 4$ and therefore $\hat \xi = \xi + O_p(1/\delta_{NT,h})$.
			\item \label{xuvpw} Let $\Sigma_{e_T} = \+E[e^\T  e/N]$ and $\Sigma_{e_N} = \+E[e e^\T /T]$. Assume there are finitely many nonzeros in each row of $\Sigma_{e_T}$ and $\Sigma_{e_N}$  and we know the sets $\Omega_{e_T}$ and $\Omega_{e_N}$ of nonzero indices in $\Sigma_{e_T}$ and $\Sigma_{e_N}$. Let  
			\begin{align*}
			\hat x_{u,v,p,w} =& (\bar{V}_r^{s_p})^{-1}  \left(\frac{1}{N}\sum_{i=1}^N   \bar \lambda_{pi} \bar \Lambda^\T_{ui}  \right) \left( \frac{1}{N T(s_u) T(s_{v})} \sum_{(t_1, t_2) \in  \Omega_{e_T}} \bar{F}^{s_u}_{t_1} (\bar{F}^{s_{v}}_{t_2})^\T  (\bar{e}_{t_1}^{s_u})^\T  \bar{e}_{t_2}^{s_{v}} \right) \\
			& \left(\frac{1}{N}\sum_{i=1}^N   \bar \lambda_{vi} \bar \Lambda^\T_{wi}  \right)(\bar{V}_r^{s_{w}})^{-1}.
			\end{align*}
			and 
			$$\hat z_{p,w} = (\bar{V}_r^{s_p})^{-1} \frac{1}{N^2 T(s_p)} \sum_{(i,j) \in \Omega_{e_N}} \bar \lambda_{pi}  (\bar{\und{e}}^{s_p}_i)^\T \bar{\und{e}}^{s_p}_j  \bar \Lambda^\T_{wj} $$ 
			We have $\hat x_{u,v,p,w} =  x_{u,v,p,w} + O_p(1/(T\delta_{NT,h}))$ and $\hat z_{p,w} =  z_{p,w} + O_p(1/(N\delta_{NT,h}))$
			\item Let $\Sigma_{e} = \+E[\tvec(e) \tvec(e)^\T ]$. Assume there are finitely many nonzeros in each row of $\Sigma_{e}$ and we know the set $\Omega_{e}$ of nonzero indices in $\Sigma_{e}$, i.e., $\Omega_{e} = \{((i,t), (j,u))|\+E[e_{it}e_{ju}] \neq 0 \}$.  The consistent estimator for $\Sigma_{B,B} = \begin{bmatrix}
			\tvec(\mu_{1,1})\\\tvec(\mu_{1,2})\\\tvec(\mu_{2,1})\\\tvec(\mu_{2,2})
			\end{bmatrix}$, where $\mu_{l,l'} = \frac{1}{N T(s_l)}\sum_{i=1}^N \sum_{j=1}^{T} K_{s_l}(S_t) F_t e_{it}\lambda^\T_{l'i} $,  is 
			$$
			(NTh) \hat cov(\mu_{u,v,k,m}, \mu_{p,q,k',m'}) = \frac{Th}{N T(s_u) T(s_p)} \sum_{((i,t), (i',t')) \in \Omega_{e}} \bar F_{tk}^{s_u} \bar F_{t'k'}^{s_p} \bar \lambda_{u,im} \bar \lambda_{v,i'm'}  \bar e_{it}^{s_u} \bar e_{i't'}^{s_p}, 
			$$
			where $\mu_{u,v,k,m}$ is the $(k,m)$-th entry in $\mu_{u,v}$ and $\bar{\lambda}_{u,im}$ is the $(i,m)$-th entry in $\bar{\lambda}$.
			\item The consistent estimator for
			$D = \begin{bmatrix}
			C_{1,1} \\ C_{1,2} \\ C_{2,1} \\ C_{2,2}
			\end{bmatrix}$, where
			\begin{eqnarray*}
				C_{1,1} &=& \begin{bmatrix}
					M_{1,1,2}^\T  \otimes M_{1,1,1} + M_{1,1,3} \otimes M_{1,1,4}^\T  + M_{1,1,5} \otimes M_{1,1,6}^\T  + M_{1,1,8}^\T  \otimes M_{1,1,7} & 0 & 0 & 0   
				\end{bmatrix} \\
				C_{1,2} &=& \begin{bmatrix}
					M_{1,2,3} \otimes M_{1,2,4}^\T  & M_{1,2,2}^\T  \otimes M_{1,2,1} & M_{1,2,5} \otimes M_{1,2,6}^\T  & M_{1,2,8}^\T  \otimes M_{1,2,7}
				\end{bmatrix} \\
				C_{2,1} &=& \begin{bmatrix}
					M_{2,1,8}^\T  \otimes M_{2,1,7} & M_{2,1,5} \otimes M_{2,1,6}^\T  & M_{2,1,2}^\T  \otimes M_{2,1,1} & M_{2,1,3} \otimes M_{2,1,4}^\T 
				\end{bmatrix} \\
				C_{2,2} &=& \begin{bmatrix}
					0 & 0 & 0& M_{2,2,2}^\T  \otimes M_{2,2,1} + M_{2,2,3} \otimes M_{2,2,4}^\T  + M_{2,2,5} \otimes M_{2,2,6}^\T  + M_{2,2,8}^\T  \otimes M_{2,2,7}
				\end{bmatrix}
			\end{eqnarray*}
			is $\hat D$ that plugs $\hat{M}_{l,l',1} =  \left(\bar{V}_r^{s_l} \right)^{-1}  \frac{1}{N} \sum_{i=1}^N \bar{\lambda}_{li}\bar{\lambda}_{li}^\T $, $\hat{M}_{l,l',2} = I_r$, $\hat{M}_{l,l',3} =\left(\bar{V}_r^{s_l} \right)^{-1} $, $\hat{M}_{l,l',4} = \frac{1}{N} \sum_{i=1}^N \bar{\lambda}_{li}\bar{\lambda}_{l'i}^\T $, $\hat{M}_{l,l',5} = I_r$, $\hat{M}_{l,l',6} = \frac{1}{N} \sum_{i=1}^N \bar{\lambda}_{l'i}\bar{\lambda}_{l'i}^\T $, $\hat{M}_{l,l',7}= \frac{1}{N} \sum_{i=1}^N \bar{\lambda}_{li}\bar{\lambda}_{l'i}^\T $, $\hat{M}_{l,l',8} = \left(V_r^{s_{l'}} \right)^{-1}  $ in $M_{l,l',j}$ for $j = 1, \cdots, 8$.
		\end{enumerate}
	\end{lemma}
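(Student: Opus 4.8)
The plan is to treat all four claims as instances of one common technique used throughout this appendix: write each plug-in estimator as its population target plus a telescoping sum of terms in which a single estimated object---a barred loading $\bar\Lambda_i(s_l)$, a barred factor $\bar F^{s_l}_t$, an estimated eigenvalue matrix $\bar{V}_r^{s_l}$, or an estimated residual $\bar e^{s_l}_{it}$---is replaced by its probability limit, and then bound each replacement error with the rates already established in Theorems \ref{thm_consistency}--\ref{thm_common} and Lemma \ref{lemma4}, invoking the sparsity assumption to control the double sums. The only inputs I will use repeatedly are $\bar\Lambda_i(s_l) - (H^{s_l})^\top\Lambda_i(s_l) = O_p(1/\sqrt{Th})$ (from the footnoted modification of Theorem \ref{thm_loading}), $\bar e^{s_l}_{it} = e^{s_l}_{it} + o_p(1)$ (from the proof of Theorem \ref{thm_common}, as noted just before the statement), and $H^{s_l}\xrightarrow{P}(Q^{s_l})^{-1}$ (recorded in the footnote to Assumption \ref{doublesum}).

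For part 1, I would substitute $\bar\Lambda_i(s_l) = (H^{s_l})^\top\lambda_{li} + O_p(1/\sqrt{Th})$ into each $\hat G_i$. The leading term averages to $(H^{s_l})^\top\left(\frac{1}{N}\Lambda_l^\top\Lambda_{l'}\right)H^{s_{l'}} = G_i$ by Assumption \ref{ass_loading}.1, while the cross terms and squared-error terms are $O_p(1/\delta_{NT,h})$ after a Cauchy--Schwarz bound using $\frac{1}{N}\sum_i\|\bar\Lambda_i(s_l) - (H^{s_l})^\top\lambda_{li}\|^2 = O_p(\delta_{NT,h}^{-2})$. Since the $G_i$ have eigenvalues bounded away from zero, the map defining $\xi$ (products and inverses of the $G_i$) is continuously differentiable there, so $\hat\xi = \xi + O_p(1/\delta_{NT,h})$ follows by the continuous mapping theorem and the delta method. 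Part 4 is handled identically: each block $\hat M_{l,l',j}$ is a continuous function of $\bar{V}_r^{s_l}$ (consistent by Lemma \ref{lemma2}.1) and the empirical second moments $\frac{1}{N}\sum_i\bar\lambda_{li}\bar\lambda_{l'i}^\top$, where the barred normalization collapses the factors $(Q^{s_{l'}})^{-1}$ and $((Q^{s_l})^\top)^{-1}\Sigma$ appearing in $M_{l,l',j}$ into identity or empirical-moment plug-ins, so that $\hat D = D + o_p(1)$ again by continuous mapping applied to the Kronecker products and sums.

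Parts 2 and 3 are the substantive ones, and here the sparsity assumption does the real work. For $\hat x_{u,v,p,w}$ and $\hat z_{p,w}$ I would first replace the outer factors $(\bar{V}_r^{s_p})^{-1}$ and $\frac{1}{N}\sum_i\bar\lambda_{pi}\bar\Lambda_{ui}^\top$ by their limits at cost $O_p(1/\delta_{NT,h})$, then treat the central double sum $\frac{1}{NT(s_u)T(s_v)}\sum_{(t_1,t_2)\in\Omega_{e_T}}\bar F^{s_u}_{t_1}(\bar F^{s_v}_{t_2})^\top(\bar e^{s_u}_{t_1})^\top\bar e^{s_v}_{t_2}$ by swapping $\bar e$ for $e$ and $\bar F$ for $(H)^\top F$. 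Because $\Omega_{e_T}$ (resp. $\Omega_{e_N}$) has only finitely many nonzero entries per row, each such swap is controlled exactly as in the proof of Lemma \ref{lemma:estimator-thm2-4}, an argument of the type of Theorem 6 in \cite{Bai2003} and the kernel-variance result of \cite{hansen2007asymptotic}, and since the bias terms $x$, $z$ are themselves $O_p(1/(Th))$ and $O_p(1/N)$, the replacement errors enter multiplicatively to yield the stated relative rates $O_p(1/(T\delta_{NT,h}))$ and $O_p(1/(N\delta_{NT,h}))$. The estimator of $\Sigma_{B,B}$ in part 3 is the sparse sample analogue of $\mathrm{Cov}(vec\,\mu_{u,v},vec\,\mu_{p,q})$, whose consistency I would prove the same way, using Assumption \ref{doublesum}.\ref{CLT} for the $\sqrt{NTh}$ scaling and the sparsity of $\Omega_e$ to reduce the quadruple sum to a bounded count of nonzero cross-products.

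The hard part will be the precise rate bookkeeping in part 2: because the targets $x$ and $z$ are higher-order bias quantities rather than $O_p(1)$ objects, consistency must be shown in the sharper relative sense claimed, which forces me to track which of the four nested estimated factors contributes the $1/\delta_{NT,h}$ factor and to verify that the sparsity-restricted double sums do not inflate the error beyond $O_p(1/(T\delta_{NT,h}))$ and $O_p(1/(N\delta_{NT,h}))$. Keeping the rotation matrices $H^{s_l}$ versus their limits $(Q^{s_l})^{-1}$ consistent across all four barred objects, and confirming that the modified normalization indeed collapses the relevant $M_{l,l',j}$ to identities, is the main place where an error could creep in.
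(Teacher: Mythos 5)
Your proposal tracks the paper's own proof structurally: parts 1 and 4 by substituting $\bar\lambda_{li}=(H^{s_l})^\top\lambda_{li}+O_p(1/\delta_{NT,h})$ (Theorem \ref{thm_loading}) and $\bar V_r^{s_l}=V_r^{s_l}+O_p(1/\delta_{NT,h})$ into fixed-dimensional matrix expressions and applying continuous mapping, and parts 2--3 by swapping barred for unbarred factors and residuals inside sums restricted to the sparse index sets, invoking exactly the tools the paper invokes (Lemma \ref{lemma:estimator-thm2-4}, Theorem 6 of \cite{Bai2003}, and \cite{hansen2007asymptotic}).

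The one concrete gap is the rate bookkeeping for $\hat x_{u,v,p,w}$ in part 2, which you flag as the hard part but then resolve with a heuristic that does not give the stated answer. You argue that since $x_{u,v,p,w}=O_p(1/(Th))$, plug-in errors of size $O_p(1/\delta_{NT,h})$ ``enter multiplicatively''; but that arithmetic gives $O_p(1/(Th\,\delta_{NT,h}))$, which exceeds the claimed $O_p(1/(T\delta_{NT,h}))$ by the diverging factor $1/h$. The same heuristic is fine for $\hat z_{p,w}$ only because $z_{p,w}=O_p(1/N)$ carries no kernel factor. The paper reaches the $h$-free rate for the central double sum by an \emph{additive per-term count}, not a multiplicative one: each summand (normalized by $N$) changes by $O_p(1/\delta_{NT,h})$ when the bars are removed, the sparse set has $|\Omega_{e_T}|=O(T)$ elements, and the prefactor $1/(T(s_u)T(s_v))\asymp 1/T^2$ then yields a swap error of order $T\cdot O_p(1/\delta_{NT,h})/T^2=O_p(1/(T\delta_{NT,h}))$.

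So to prove the lemma as stated you must run this counting argument for the double sum (and justify that the per-term swap error is $O_p(1/\delta_{NT,h})$ uniformly over $\Omega_{e_T}$, which is the delicate point since the summands themselves carry kernel weights). Note also that the outer plug-in factors $(\bar V_r^{s_p})^{-1}$ and $N^{-1}\sum_i\bar\lambda_{pi}\bar\Lambda^\top_{ui}$ multiply the $O_p(1/(Th))$ central term, so their contribution is genuinely only the $O_p(1/(Th\,\delta_{NT,h}))$ of your heuristic. That weaker bound still satisfies $\sqrt{NTh}\cdot O_p(1/(Th\,\delta_{NT,h}))=o_p(1)$ under the maintained conditions $\sqrt{N}/(Th)\rightarrow 0$ and $\sqrt{Th}/N\rightarrow 0$, so the only downstream use of the lemma---the feasible test statistic in Theorem \ref{thm_rho}---would be unaffected; but a complete write-up must either sharpen the double-sum step as above or state the weaker rate and verify explicitly that it suffices.
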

	
	\begin{proof}[Proof of Lemma \ref{consistentcovariance}.1]
		From Theorem \ref{thm_loading}, $\bar{\lambda}_{1i} = (H^{s_1})^\T  \lambda_{1i} + O_p(1/\delta_{NT,h})$ and $\bar{\lambda}_{2i} = (H^{s_2})^\T  \lambda_{2i} + O_p(1/\delta_{NT,h})$. Thus, for $v, w = 1,2$
		\begin{eqnarray*}
			\frac{1}{N}\bar{\Lambda}_v \bar{\Lambda}_{w} &=& \frac{1}{N}\sum_{i = 1}^N ((H^{s_v})^T \lambda_{vi} + O_p(1/\delta_{NT,h})) ((H^{s_w})^\T \lambda_{wi} + O_p(1/\delta_{NT,h})) \\ &=& \frac{1}{N}(H^{s_v})^T\Lambda^\T_v \Lambda_w H^{s_w} + O_p(1/\delta_{NT,h})),
		\end{eqnarray*}
		so $\hat{G} = G_i + O_p(1/\delta_{NT,h})$ for $i = 1, 2, 3, 4$. Note that $\xi = \begin{bmatrix}
		\tvec\left( -(G_1^{-1} G_2 G_4^{-1} G_3 G_1^{-1})^{\top} \right) \\ \tvec\left( G_1^{-1} G_2 G_4^{-1} \right) \\ \tvec\left( G_4^{-1} G_3 G_1^{-1} \right) \\ \tvec\left( -(G_4^{-1} G_3 G_1^{-1} G_2 G_4^{-1} )^{\top} \right) \end{bmatrix}$, given $\hat{G}_i = G_i + O_p(1/\delta_{NT,h})$ and the dimension of $G_i$ is fixed as $N$ and $T$ grow, we have $\hat G_1^{-1} \hat G_2 \hat G_4^{-1} \hat G_3 \hat G_1^{-1} = G_1^{-1} G_2 G_4^{-1} G_3 G_1^{-1} + O_p(1/\delta_{NT,h})$, $\hat G_1^{-1} \hat G_2 \hat G_4^{-1} = G_1^{-1} G_2 G_4^{-1} + O_p(1/\delta_{NT,h})$, $\hat G_4^{-1} \hat G_3 \hat G_1^{-1} = G_4^{-1} G_3 G_1^{-1} + O_p(1/\delta_{NT,h})$ and $\hat G_4^{-1} \hat G_3 \hat G_1^{-1} \hat G_2 \hat G_4^{-1} = G_4^{-1} G_3 G_1^{-1} G_2 G_4^{-1}+ O_p(1/\delta_{NT,h})$. Thus, $\hat \xi = \xi + O_p(1/\delta_{NT,h})$.
	\end{proof}
	
	\begin{proof}[Proof of Lemma \ref{consistentcovariance}.2]
		We first show $\hat x_{u,v,p,w} =  x_{u,v,p,w} + O_p(1/\delta_{NT,h})$. From Theorem \ref{thm_factor}, $\bar{F}^{s_u}_{t} = F^{s_u}_{t} + O_p(1/\delta_{NT,h})$ and $\bar{e}^{s_u}_{it} = e^{s_u}_{it} + O_p(1/\delta_{NT,h})$. Thus, $(\bar{e}_{t_1}^{s_u})^\T  \bar{e}_{t_2}^{s_{v}}/N = (e^{s_u}_{t_1})^\T  e_{t_2}^{s_{v}}/N + O_p(1/\delta_{NT,h})$ and   $ \bar{F}^{s_u}_{t_1} (\bar{F}^{s_{v}}_{t_2})^\T  (\bar{e}_{t_1}^{s_u})^\T  \bar{e}_{t_2}^{s_{v}} =  F^{s_u}_{t_1} (F^{s_{v}}_{t_2})^\T  (e_{t_1}^{s_u})^\T  e_{t_2}^{s_{v}} + O_p(1/\delta_{NT,h})$ for all $(t_1, t_2) \in \Omega_{e_T}$. Note that $|\Omega_{e_T}| = O(T)$ and $T(s_u)/T = O_p(1)$, we have
		\begin{eqnarray*}
			&& \frac{1}{N T(s_u) T(s_{v})} \sum_{(t_1, t_2) \in  \Omega_{e_T}} \bar{F}^{s_u}_{t_1} (\bar{F}^{s_{v}}_{t_2})^\T  (\bar{e}_{t_1}^{s_u})^\T  \bar{e}_{t_2}^{s_{v}}  \\
			&=& \frac{1}{N T(s_u) T(s_{v})} \sum_{(t_1, t_2) \in  \Omega_{e_T}} \left( F^{s_u}_{t_1} (F^{s_{v}}_{t_2})^\T  (e_{t_1}^{s_u})^\T  e_{t_2}^{s_{v}} + O_p(1/\delta_{NT,h}) \right) \\
			&=&\left( \frac{1}{N T(s_u) T(s_{v})} \sum_{(t_1, t_2) \in  \Omega_{e_T}} F^{s_u}_{t_1} (F^{s_{v}}_{t_2})^\T  (e_{t_1}^{s_u})^\T  e_{t_2}^{s_{v}}\right)  + O_p(1/T\delta_{NT,h})
		\end{eqnarray*}
		From Lemma \ref{consistentcovariance}.1, $\frac{1}{N}\sum_{i=1}^N   \bar \lambda_{pi} \bar \Lambda^\T_{ui} = \frac{1}{N}\sum_{i=1}^N \lambda_{pi} \Lambda^\T_{ui} + O_p(1/\delta_{NT,h}) $. From Lemma \ref{lemma3}.1, $\bar{V}_r^{s_p} = V_r^{s_p}  + O_p(1/\delta_{NT,h})$. Recall $ \frac{1}{N T(s_u) T(s_{v})} \sum_{(t_1, t_2) \in  \Omega_{e_T}} F^{s_u}_{t_1} (F^{s_{v}}_{t_2})^\T  (e_{t_1}^{s_u})^\T  e_{t_2}^{s_{v}} = O_p(1/(Th))$, we have $\hat x_{u,v,p,w} =  x_{u,v,p,w} + O_p(1/(T\delta_{NT,h}))$. 
		
		Next we show $\hat z_{p,w} = z_{p,w} + O_p(1/\delta_{NT,h})$. Note that $\frac{1}{T(p)} (\bar{\und{e}}^{s_p}_i)^\T \bar{\und{e}}^{s_p}_j = \frac{1}{T(p)} (\und{e}^{s_p}_i)^\T  \und{e}^{s_p}_j + O_p(1/\delta_{NT,h})$. Since  $\bar \lambda_{pi} = \lambda_{pi}+ O_p(1/\delta_{NT,h})$, and $\bar{V}_r^{s_p} = V_r^{s_p}  + O_p(1/\delta_{NT,h})$, we have 
		\begin{eqnarray*}
			&& \frac{1}{N^2 T(s_p)} \sum_{(i,j) \in \Omega_{e_N}} \bar \lambda_{pi}  (\bar{\und{e}}^{s_p}_i)^\T \bar{\und{e}}^{s_p}_j  \bar \Lambda^\T_{wj} \\
			&=& \frac{1}{N^2 T(s_p)} \sum_{(i,j) \in \Omega_{e_N}} \left( \lambda_{pi}  (\und{e}^{s_p}_i)^\T  \und{e}^{s_p}_j \Lambda^\T_{wj}+ O_p(1/\delta_{NT,h}) \right) \\
			&=& \left( \frac{1}{N^2 T(s_p)} \sum_{(i,j) \in \Omega_{e_N}}  \lambda_{pi}  (\und{e}^{s_p}_i)^\T  \und{e}^{s_p}_j \Lambda^\T_{wj}\right) + O_p(1/(N\delta_{NT,h})). 
		\end{eqnarray*}
		Together with $\bar V^{s_p}_r = V^{s_p}_r+ O_p(1/\delta_{NT,h})$, we have $\hat z_{p,w} = z_{p,w} = O_p(1/(N\delta_{NT,h}))$.
	\end{proof}
	
	\begin{proof}[Proof of Lemma \ref{consistentcovariance}.3]
		The argument is similar as Lemma \ref{lemma:estimator-thm2-4}.2. From the asymptotic distribution of kernel estimator (Section 3.2 in \citeauthor{hansen2007asymptotic}), we have
		\begin{eqnarray*}
			\frac{Th}{N T(s_u) T(s_p)} \sum_{((i,t), (i',t')) \in \Omega_{e}} F_{tk}^{s_u} F_{t'k'}^{s_p} \lambda_{u,im} \lambda_{v,i'm'}  e_{it}^{s_u} e_{i't'}^{s_p} = (NTh) cov(\mu_{u,v,k,m}, \mu_{p,q,k',m'}) + O(h^2) + O_p(1/\sqrt{Th})
		\end{eqnarray*}
		
		
		\begin{enumerate}
			\item $\frac{Th}{N T(s_u) T(s_p)} \sum_{((i,t), (i',t')) \in \Omega_{e}} \bar F_{tk}^{s_u} \bar F_{t'k'}^{s_p} \bar \lambda_{u,im} \bar \lambda_{v,i'm'}  \bar e_{it}^{s_u} \bar e_{i't'}^{s_p}$\\ $ -  \frac{Th}{N T(s_u) T(s_p)} \sum_{((i,t), (i',t')) \in \Omega_{e}} \bar F_{tk}^{s_u} \bar F_{t'k'}^{s_p} \bar \lambda_{u,im} \bar \lambda_{v,i'm'} e_{it}^{s_u} e_{i't'}^{s_p} = o_p(1)$
			\item $\frac{Th}{N T(s_u) T(s_p)} \sum_{((i,t), (i',t')) \in \Omega_{e}} \bar F_{tk}^{s_u} \bar F_{t'k'}^{s_p} \bar \lambda_{u,im} \bar \lambda_{v,i'm'}   e_{it}^{s_u}  e_{i't'}^{s_p}$\\ $ -  \frac{Th}{N T(s_u) T(s_p)} \sum_{((i,t), (i',t')) \in \Omega_{e}} \bar F_{tk}^{s_u} \bar F_{t'k'}^{s_p}  \lambda_{u,im}  \lambda_{v,i'm'} e_{it}^{s_u} e_{i't'}^{s_p} = o_p(1)$
			\item $\frac{Th}{N T(s_u) T(s_p)} \sum_{((i,t), (i',t')) \in \Omega_{e}} \bar F_{tk}^{s_u} \bar F_{t'k'}^{s_p} \bar \lambda_{u,im} \bar \lambda_{v,i'm'}   e_{it}^{s_u}  e_{i't'}^{s_p}$\\ $ -  \frac{Th}{N T(s_u) T(s_p)} \sum_{((i,t), (i',t')) \in \Omega_{e}}  F_{tk}^{s_u} F_{t'k'}^{s_p}  \lambda_{u,im}  \lambda_{v,i'm'} e_{it}^{s_u} e_{i't'}^{s_p} = o_p(1)$
		\end{enumerate}
		Recall $1/\delta_{NT,h} = \min\left(1/\sqrt{N}, 1/\sqrt{Th} \right)$, together with $H^s = (Q^s)^{-1} + o_p(1)$, we have \\ $(NTh) \hat cov(\mu_{u,v,k,m}, \mu_{p,q,k',m'})= (NTh) cov(\mu_{u,v,k,m}, \mu_{p,q,k',m'})+ o_p(1)$.
	\end{proof}
	
	\begin{proof}[Proof of Lemma \ref{consistentcovariance}.4]
		Recall $M_{l,l',1} =  \left(V_r^{s_l} \right)^{-1}  ((Q^{s_l})^\T )^{-1} \Sigma_{\lambda_l,\lambda_l} $, $M_{l,l',2} = (Q^{s_{l'}} )^{-1}$,\\ $M_{l,l',3} = \left(V_r^{s_l} \right)^{-1} ((Q^{s_l})^\T )^{-1} $, $M_{l,l',4} = \Sigma_{\lambda_l,\lambda_{l'}} (Q^{s_{l'}} )^{-1} $, $M_{l,l',5} = ((Q^{s_l})^\T )^{-1}$, \\ $M_{l,l',6} = \Sigma_{\lambda_{l'},\lambda_{l'}} (Q^{s_{l'}})^{-1} \left( V^{s_{l'}} \right)^{-1}  $, $M_{l,l',7} = ((Q^{s_l})^\T )^{-1} \Sigma_{\lambda_l,\lambda_{l'}} $, and $M_{l,l',8} = (Q^{s_{l'}})^{-1} \left( V_r^{s_{l'}} \right)^{-1}  $. 
		Note that $H^{s_l} = (Q^{s_l})^{-1} + O_p(1/\delta_{NT,h})$ shown in Lemma \ref{lemma4}.1 and $\bar \lambda_{l'i} =\lambda_{l'i} + O_p(1/\delta_{NT,h})$ from Theorem \ref{thm_loading}. The consistency of $\hat D$ holds following the same argument as Lemma \ref{consistentcovariance}.1.
	\end{proof}

	\subsection{Proof of Noisy State Process Model}
	As $\varepsilon_{it}$ is a vector while $e_{it}$ is a scalar, we state Assumption \ref{assumption_noisy}.1 for completeness: 
	\begin{customassumption}{8.1}\label{ass_varepsilon}
		There exists a positive constant $M < \infty$ such that for all $N$ and $T$:
		\begin{enumerate}
			\item \sloppy $\+E[ \varepsilon_{it}] = 0$, $\+E[ \norm{\varepsilon_{it}}^8] \leq M$ and $\varepsilon_{it}$ is independent of $S_t$, $F_t$ and $e_{it}$ for all $i$ and $t$.
			\item Weak time-series dependence: \sloppy $\+E\left[\norm{\mathcal{E}^\top_t \mathcal{E}_u/N}  \right] = \+E\left[\norm{\frac{1}{N} \sum_{i=1}^N \varepsilon^\top_{it} \varepsilon_{iu}} \right] = \widetilde \gamma_N(t,u)$. $
			\widetilde \gamma_N(t,t) \leq M$ for all $t$, $\widetilde \gamma_N(t,u) \leq M$ for all $t$ and $u$, and $\sum_{u=1}^T \norm{\widetilde \gamma_N(t,u)} \leq M$ for all $t$. 
			\item Weak cross-sectional dependence: \sloppy $\+E\left[\norm{\varepsilon_{it} \varepsilon^\top_{lt}} \right] = \widetilde \tau_{il, t}$, with $\widetilde \tau_{il, t} \leq \widetilde \tau_{il}$ for some $\widetilde \tau_{il}$ and $ \sum_{l = 1}^N \widetilde \tau_{il} \leq M$ for all $i$. 
			\item Weak total dependence: $\+E\left[\norm{\varepsilon_{it} \varepsilon^\top_{lu}}  \right] = \tau_{il, tu}$ and $\frac{1}{NT} \sum_{i = 1}^N \sum_{l = 1}^N \sum_{t = 1}^T \sum_{u = 1}^T \widetilde  \tau_{il,tu} \leq M$. 
			\item Bounded cross-sectional fourth moment correlation: \\ For every (t,u), $\+E \norm{N^{-1/2} \sum_{i=1}^N (\varepsilon_{it} \varepsilon^\top_{iu} - \+E[\varepsilon_{it} \varepsilon^\top_{iu}])}^4 \leq M$. 
		\end{enumerate}
	\end{customassumption}
	\begin{proof}[Proof of Corollary \ref{cor_noisy_ass_hold}]
		\begin{enumerate}
			\item 
			\begin{enumerate}
				\item We first show that under Assumption \ref{assumption_noisy}.1, Assumption \ref{ass_err} holds with $e_{it}$ replaced by $\varepsilon^\top_{it}F_t$. 
				
				Note that $\norm{\cdot}$ is the matrix Frobenius norm/vector 2-norm, from Cauchy-Schwartz inequality, we have $\norm{xy} \leq \norm{x}\norm{y}$ for any matrices/vectors $x$ and $y$.
				
				Assumption \ref{ass_err}.1 holds with $e_{it}$ replaced by $\varepsilon_{it}F_t$ because $\+E[\varepsilon_{it}F_t] = 0$ and 
				\[\+E[(F_t^\top \varepsilon_{it})^8] \leq \+E[\norm{F_t}^8 \norm{\varepsilon_{it}}^8] = \+E[\norm{F_t}^8 ] \+E[\norm{\varepsilon_{it}}^8]. \]
				
				Assumption \ref{ass_err}.2-4 holds with $e_{it}$ replaced by $\varepsilon_{it}F_t$ because
				\begin{eqnarray*}
					\+E[F_t^\top \varepsilon_{it} \varepsilon^\top_{ju} F_u ] &=& \+E[ \tr( \varepsilon_{it} \varepsilon^\top_{ju} F_u F_t^\top) ]  = \tr(\+E[\varepsilon_{it} \varepsilon^\top_{ju}] \+E[F_u F_t^\top])  \\
					&\leq& \norm{\+E[\varepsilon_{it} \varepsilon^\top_{ju}] }  \norm{\+E[F_u F_t^\top]} \leq \+E\left[\norm{\varepsilon_{it} \varepsilon^\top_{ju}}\right]  \+E[\norm{F_t} \norm{F_u} ]
				\end{eqnarray*}
				by the convexity of $\norm{\cdot}$ and the boundness of $E\left[\norm{\varepsilon_{it} \varepsilon^\top_{ju}}\right]$ from Assumption \ref{ass_err}.3 with $e_{it}$ replaced by $\varepsilon_{it}$.
				
				Assumption \ref{ass_err}.5 holds with $e_{it}$ replaced by $\varepsilon_{it}F_t$ because
				\begin{eqnarray*}
					&& \+E \left[N^{-1/2} \sum_{i=1}^N [F^\top_t \varepsilon_{it} \varepsilon^\top_{iu} F_u - \+E(F^\top_t\varepsilon_{it} \varepsilon^\top_{iu}F_u)] \right]^4 \\
					&=& \+E \left[ \+E \left[ \left( N^{-1/2} \sum_{i=1}^N [F^\top_t \varepsilon_{it} \varepsilon^\top_{iu} F_u - \+E(F^\top_t\varepsilon_{it} \varepsilon^\top_{iu}F_u)]\right)^4 |F_t, F_u \right] \right]\\
					&=& \+E \left[ \+E \left[ \left( F^\top_t \left(N^{-1/2} \sum_{i=1}^N [ \varepsilon_{it} \varepsilon^\top_{iu}  - \+E(\varepsilon_{it} \varepsilon^\top_{iu})] \right) F_u\right)^4 |F_t, F_u \right] \right]\\
					&\leq&  \+E \left[\norm{F_t}^4 \norm{F_u}^4\right] \+E \left[ \norm{ N^{-1/2} \sum_{i=1}^N [ \varepsilon_{it} \varepsilon^\top_{iu} - \+E(\varepsilon_{it} \varepsilon^\top_{iu})] }^4\right],
				\end{eqnarray*}
				where $\+E \left[ \norm{ N^{-1/2} \sum_{i=1}^N [ \varepsilon_{it} \varepsilon^\top_{iu} - \+E(\varepsilon_{it} \varepsilon^\top_{iu})] }^4\right]$ is bounded by Assumption \ref{ass_err}.5 with $e_{it}$ replaced by $\varepsilon_{it}$.
				
				Assumption \ref{ass_err}.6 holds with $e_{it}$ replaced by $\varepsilon_{it}F_t$ because
				\begin{eqnarray*}
					\norm{\+E[F_u F^\top_u \varepsilon_u \varepsilon_t F_t/N ]} &\leq&  \+E\left[ \norm{F_u F^\top_u \varepsilon_u \varepsilon_t F_t/N }\right] \leq  \+E\left[ \norm{F_u F^\top_u} \norm{\varepsilon_u \varepsilon^\top_t/N}  \norm{F_t}  \right]  \\
					&=& \+E[ \norm{F_u}^2\norm{F_t} ] \+E\left[ \norm{\varepsilon_u \varepsilon^\top_t/N  }\right],     
				\end{eqnarray*}
				where $\+E\left[ \norm{\varepsilon_u \varepsilon^\top_t/N  }\right]$ is bounded by Assumption \ref{ass_err}.2 with $e_{it}$ replaced by $\varepsilon_{it}$. Also,
				\begin{eqnarray*}
					\norm{\+E[F_u F^\top_u \varepsilon_u \varepsilon_t F_t/N |S_t, S_u]} &\leq&  \+E\left[ \norm{F_u F^\top_u \varepsilon_u \varepsilon_t F_t/N } |S_t, S_u\right]  \\
					&\leq& \+E\left[ \norm{F_u F^\top_u} \norm{\varepsilon_u \varepsilon^\top_t/N}  \norm{F_t} |S_t, S_u \right]  \\
					&=& \+E[ \norm{F_u}^2\norm{F_t} |S_t, S_u] \+E\left[ \norm{\varepsilon_u \varepsilon^\top_t/N  }\right],     
				\end{eqnarray*}
				where $\+E[ \norm{F_u}^2\norm{F_t} |S_t, S_u]$ is bounded by Assumption \ref{ass_factor}.

				\item Second is to show under Assumption \ref{assumption_noisy}.1, Assumption \ref{ass_err} holds with $e_{it}$ replaced by $\widetilde e_{it}$. 
				
				Assumption \ref{ass_err}.1 holds with $e_{it}$ replaced by $\widetilde e_{it}$ because
				\[\+E[\widetilde e^8_{it}] \leq 2^7 (\+E[(F^\top_t \varepsilon_{it})^8 ]+\+E[e^8_{it}]) \leq M\]
				
				Note that $\varepsilon_{it}$ is independent of $e_{ju}$ for all $i,j,t$ and $u$. Thus, \\ $\+E[F^\top_t \varepsilon_{it} e_{ju}] = \+E[F^\top_t \+E[ \varepsilon_{it}|F_t, e_{ju}]e_{ju}] = 0$. Employing this, Assumptions \ref{ass_err}.2, \ref{ass_err}.3, \ref{ass_err}.4 and \ref{ass_err}.6 hold. 
				
				Next is to verify Assumption \ref{ass_err}.5 holds with $e_{it}$ replaced by either $\widetilde e_{it}$. Denote $\psi_{it} = F^\top_t \varepsilon_{it}$, $y_1=\sum_{i=1}^N (\psi_{it}\psi_{iu}- \+E[\psi_{it}\psi_{iu}])$, $y_2=\sum_{i=1}^N (e_{it}e_{iu}- \+E[e_{it}e_{iu}])$, $y_3=\sum_{i=1}^N\psi_{it}e_{iu}$ and $y_4 = \sum_{i=1}^Ne_{it}\psi_{iu}$. 
				\begin{eqnarray*}
					&& \+E \left[N^{-1/2} \sum_{i=1}^N \left[(\psi_{it}+e_{it})(\psi_{iu}+e_{iu}) - \+E[(\psi_{it}+e_{it})(\psi_{iu}+e_{iu})]\right] \right]^4 \\
					&=& \+E \left[N^{-1/2} \sum_{i=1}^N \left[\psi_{it}\psi_{iu}- \+E[\psi_{it}\psi_{iu}] + e_{it}e_{iu}- \+E[e_{it}e_{iu}] + \psi_{it}e_{iu} + e_{it}\psi_{iu}\right] \right]^4 \\
					&\leq& 4^3/N^2 \cdot\left( \+E[y_1^4] + \+E[y_2^4] + \+E[y_3^4] + \+E[y_4^4]\right),
				\end{eqnarray*}
				where 
				\[N^{-2} \+E[y_3^4] = N^{-2} \left( \sum_{i=1}^N \+E[\psi_{it}^4] \+E[e_{iu}^4] + \sum_{i\neq j}\+E[\psi_{it}^2] \+E[e_{iu}^2] \+E[\psi_{jt}^2] \+E[e_{ju}^2] \right) \leq M \]
				and similarly $N^{-2} \+E[y_4^4] \leq M$ given $\+E[\psi_{it}^8]$ and $\+E[e_{it}^8]$ are bounded.

			\end{enumerate}
			\item 
			\begin{enumerate}
				\item Assumptions \ref{ass_mom}.2 and \ref{doublesum}.3 hold by the norm inequality $\norm{x+y} \leq \norm{x}+\norm{y}$.
				\item Assumptions \ref{ass_mom}.1, \ref{ass_mom}.5, \ref{doublesum}.1 and \ref{doublesum}.2, hold by $\psi_{it} = F^\top_t \varepsilon_{it}$ to be uncorrelated with $e_{ju}$, $\+E \norm{y_1 + y_2 + y_3 + y_4}^2\leq 4\left( \+E\norm{y_1}^2 +\+E\norm{y_2}^2+\+E\norm{y_3}^2 + \+E\norm{y_4}^2\right)$ for any matrices/vectors $y_1, y_2, y_3$ and $y_4$, and the same proof method as that to verify Assumption \ref{ass_err}.5 holds with $e_{it}$ replaced by either $\widetilde e_{it}$. 
				\item Since Assumptions \ref{ass_mom}.3, \ref{ass_mom}.4 and \ref{doublesum}.4 hold with $e_{it}$ replaced by $\psi_{it} = \varepsilon_{it}^\top F_t$, denote the asymptotic variances as 
				$$ \frac{1}{\sqrt{N}}\sum_{i = 1}^N \Lambda_i(s) F_t^\top \varepsilon_{it}  \ab  \xrightarrow{d} N(0, \Gamma_{\psi,t}^s),$$
				$$\frac{\sqrt{Th}}{T(s)} \sum_{t = 1}^T K_s(S_t) F_t F_t^\top \varepsilon_{it} \xrightarrow{d} N(0, \Phi_{\psi,i}^s),$$
				and 
				$$\sqrt{NTh} (B_{\psi}-0) \xrightarrow{d} N(0, \Sigma_{B_{\psi}, B_{\psi}}),$$
				where $B_\psi = \begin{bmatrix}
				\tvec\left( \mu_{\psi,1,1} \right) \\ \tvec\left( \mu_{\psi,1,2} \right) \\ \tvec\left( \mu_{\psi,2,1} \right) \\ \tvec\left( \mu_{\psi,2,2} \right) 
				\end{bmatrix} $ and $\mu_{\psi, l,l'} = \frac{1}{N T(s_l)}\sum_{i=1}^N \sum_{j=1}^{T} K_{s_l}(S_t) F_t F_t^\top \varepsilon_{it} \lambda^\T_{l'i}$.
				Since $\+E [\psi_{it}e_{it}] = 0$, we have
				\[ \frac{1}{\sqrt{N}}\sum_{i = 1}^N \Lambda_i(s) \widetilde e_{it}  \ab  \xrightarrow{d} N(0, \widetilde \Gamma_{t}^s),\]
				where $\widetilde \Gamma_{t}^s = \Gamma_{\psi,t}^s+\Gamma_{t}^s$, 
				\[\frac{\sqrt{Th}}{T(s)} \sum_{t = 1}^T K_s(S_t) F_t \widetilde e_{it} \xrightarrow{d} N(0, \widetilde \Phi_{i}^s),\]
				where $\widetilde \Phi_{i}^s =  \Phi_{\psi,i}^s + \Phi_{i}^s$,
				\[\sqrt{NTh} (\widetilde B-0) \xrightarrow{d} N(0, \widetilde \Sigma_{B, B}),\]
				where $\widetilde B = \begin{bmatrix}
				\tvec\left(\widetilde \mu_{1,1} \right) \\ \tvec\left( \widetilde \mu_{1,2} \right) \\ \tvec\left( \widetilde \mu_{2,1} \right) \\ \tvec\left( \widetilde \mu_{2,2} \right) 
				\end{bmatrix} $ and $\widetilde \mu_{l,l'} = \frac{1}{N T(s_l)}\sum_{i=1}^N \sum_{j=1}^{T} K_{s_l}(S_t) F_t \widetilde e_{it} \lambda^\T_{l'i}$ and $\widetilde \Sigma_{B, B} = \Sigma_{B_{\psi}, B_{\psi}} + \Sigma_{B, B}$.
			\end{enumerate}
		\end{enumerate}
	\end{proof}

\end{small}

\end{document}